\newcommand{\aka}{a.k.a.\ }
\newcommand{\cf}{cf.\ }
\newcommand{\st}{s.t.\ }
\newcommand{\etc}{etc.\xspace}
\newcommand{\eg}{e.g.,\xspace}
\newcommand{\ie}{i.e.,\xspace}
\newcommand{\defeq}{\mathrel{:=}}
\newcommand{\defiff}{\mbox{:iff}}
\newcommand{\bnfeq}{\mathrel{::=}}
\newcommand{\bnfor}{\ \big| \ }
\newcommand{\set}[1]{\{#1\}}
\newcommand{\setst}[2]{\{\ #1\ \boldsymbol{|}\ #2\ \}}
\newcommand{\equivclass}[2]{[#1]_{#2}}
\newcommand{\powerset}[1]{2^{#1}}%{\mathrm{p}}
\newcommand{\powersetFinite}[1]{2_{\text{finite}}^{#1}}%{\mathrm{p_{f}}}
\newcommand{\nn}[1]{\makebox[2em][r]{#1}\ \ }
\newcommand{\true}{\text{true}}
\newcommand{\false}{\text{false}}
\newcommand{\Scomb}[1]{\mathtt{S}_{#1}}
\newcommand{\Kcomb}[1]{\mathtt{K}_{#1}}
\newcommand{\reduc}[1]{\xlongrightarrow{#1}}
\newcommand{\gscc}[2]{\mathtt{succ}_{#1}^{#2}}
\newcommand{\CM}{\mathtt{CM}}
\newcommand{\agents}{\mathcal{A}}
\newcommand{\community}{\mathcal{C}}
\newcommand{\messages}{\mathcal{M}}
\newcommand{\data}{\mathcal{D}}
\newcommand{\pair}[2]{(#1,#2)}
\newcommand{\sign}[2]{{\{\negmedspace[#1]\negmedspace\}}_{#2}}
\newcommand{\encrypt}[2]{[#1]_{#2}}
\newcommand{\hash}[1]{\lceil#1\rceil}
\newcommand{\quoted}[1]{\ulcorner#1\urcorner}
\newcommand{\derives}[3]{#3\vdash_{#1}#2}
\newcommand{\denotation}[4]{{_{#2}^{}{\llbracket#1\rrbracket}_{#3}^{#4}}}
\newcommand{\provesEq}[3]{\mathrel{_{#1}{\equiv}_{#2}^{#3}}}
\newcommand{\states}{\mathcal{S}}
\newcommand{\msgs}[1]{\mathrm{msgs}_{#1}}
\newcommand{\preorder}[1]{\leq_{#1}}
\newcommand{\indist}[3]{#2\equiv_{#1}#3}
\newcommand{\pAccess}[3]{\mathrel{_{#1}\negthinspace\mathrm{R}_{#2}^{#3}}}
\newcommand{\access}[3]{\mathrel{_{#1}\negthinspace\mathcal{R}_{#2}^{#3}}}
\newcommand{\clo}[2]{\mathrm{cl}_{#1}^{#2}}
\newcommand{\Clo}[2]{\mathrm{Cl}_{#1}^{#2}}
\newcommand{\LiP}{\mathrm{LiP}}
\newcommand{\pFormulas}{\mathcal{L}}
\renewcommand{\true}{\top}
\renewcommand{\false}{\bot}
\newcommand{\relsym}[1]{\thinspace{#1}\thinspace}
\newcommand{\knows}[2]{#1\relsym{\mathsf{k}}#2}
\newcommand{\limp}{\rightarrow}
\newcommand{\lequiv}{\leftrightarrow}
\newcommand{\K}[1]{\mathsf{K}_{#1}}
\newcommand{\CK}[1]{\mathsf{CK}_{#1}}
\newcommand{\proves}[4]{#1\relsym{:_{#3}^{#4}}#2}
\newcommand{\refutes}[4]{#1\relsym{{\div}_{#3}^{#4}}#2}
\newcommand{\proofdiamond}[4]{#1\relsym{{\diamond}_{#3}^{#4}}#2}%{#1\relsym{{\cdot\thinspace\cdot}_{#3}^{#4}}#2}
\newcommand{\decides}[4]{#1\relsym{{\veebar}_{#3}^{#4}}#2}
\newcommand{\undecides}[4]{#1\relsym{{\barwedge}_{#3}^{#4}}#2}
\newcommand{\Limp}{\Rightarrow}
\newcommand{\Lequiv}{\Leftrightarrow}
\newcommand{\aModalFrame}{\mathfrak{S}}
\newcommand{\LPded}{\vdash_{\mathrm{LP}}}
\newcommand{\LiPded}{\vdash_{\LiP}}
\newcommand{\LiPdedBis}{\mathrel{{\dashv}{\vdash}_{\LiP}}}
\theoremstyle{definition}
\newtheorem{definition}{Definition}
\newtheorem{remark}{Remark}
\newtheorem*{example}{Example}
\theoremstyle{theorem}
\newtheorem{fact}{Fact}
\newtheorem{proposition}{Proposition}
\newtheorem{lemma}{Lemma}
\newtheorem{theorem}{Theorem}
\newtheorem{corollary}{Corollary}
\theoremstyle{definition}
\newtheorem*{acknowledgements}{Acknowledgements}
\begin{document}

\title{A Logic of Interactive Proofs\\ 
		(Formal Theory of Knowledge Transfer)\thanks{%
			Work funded mostly with Grant P~08742 from the Japan Society for the Promotion of Science, and 
			to a lesser extent with Grant AFR~894328 from the National Research Fund Luxembourg  
				cofunded under the Marie-Curie Actions of the European Commission (FP7-COFUND).
		A preliminary version of  
			Section~\ref{section:MotivationGoalProblem} and \ref{section:SolutionMethodology}  
				appeared in \cite[not peer-reviewed]{iFoundationiComputation:CiE} and 
		another of Sections~\ref{section:MotivationGoalProblem}---\ref{section:IntroductionLP}
				in \cite[peer-reviewed]{iFoundationiComputation:AiML}.}}
\author{Simon Kramer\\[\jot]
		SK-R\&D Ltd liab.\ Co\\
		\texttt{simon.kramer@a3.epfl.ch}}

\maketitle

\begin{abstract}
	We propose a \emph{logic of interactive proofs} as a framework for 
		an intuitionistic foundation for interactive computation, which we construct via 
			an interactive analog 
				of the G\"odel-McKinsey-Tarski-Art\"{e}mov definition of 
					Intuitionistic Logic as embedded into 
					a classical modal logic of proofs, and
				of the Curry-Howard isomorphism between
					intuitionistic proofs and
					typed programs.
	Our interactive proofs effectuate
		a persistent epistemic \emph{impact} in their intended communities of peer reviewers that
			consists in the induction of the (propositional) knowledge of their proof goal by 
				means of the (individual) knowledge of the proof with the interpreting reviewer.
	That is,
		interactive proofs effectuate  
			a \emph{transfer} of propositional knowledge (knowable facts) via 
				the transmission of certain individual knowledge (knowable proofs) in 
					multi-agent distributed systems.	
	In other words, 
		we as a community can have the formal common knowledge that 
			a proof is that which 
			if known to one of our peer members 
			would induce the knowledge of its proof goal with that member.
	Last but not least, 
		we prove non-trivial interactive computation as definable within our 
			simply typed \emph{interactive Combinatory Logic}
				to be nonetheless equipotent to non-interactive computation as defined by 
					simply typed Combinatory Logic.

\medskip
\noindent
\textbf{Keywords\ }
	agents as 
		knowledge processors, 
		proof- and signature-checkers, and 
		reduction relations; 
	communication channels as variable bindings;
	constructive Kripke-semantics; 
	cryptographic and interpreted communication; 
	interactive and oracle computation; 
	proof equality;
	proofs as certificates, messages, and sufficient evidence; 
	propositions as types.
\end{abstract}

\tableofcontents

\section{Introduction}
The subject matter of this paper is a formal logic of interactive proofs serving as a framework for 
	an intuitionistic foundation of interactive computation between 
		multiple agents in distributed systems.	
Due to its ambitious goal, 
	it cannot be light-weight, and comprehending it may well 
		resemble the experience described by the Indian parable of the blind men and an elephant.
In order to help our readers perceive and find again the various parts of our elephant, 
	we have chosen to prefer serious structuring and (perhaps over-)emphasising over 
		more light-weight linear chatting, which would be more entertaining but 
			in our opinion also distracting (important points and structure could go unnoticed or get lost).

\subsection{Motivation, Goal \& Problem}\label{section:MotivationGoalProblem}
\subsubsection{Motivation}\label{section:Motivation}
In \cite{InteractiveComputation}, 
	interactive computation is proposed 
		as the new, \emph{to-be-defined} paradigm of computation,
		as opposed to the old paradigm of non-interactive computation in the sense of 
			the old sages like Turing and others.
The motivation for this paper is the consensus of the contributors to \cite{InteractiveComputation}, which is that 
	the purpose of interactive computation ultimately is 
		not the computation of result values, to which we consent, but 
		the possibly unending interaction \emph{itself}, from which we dissent. 
Interaction may well be unending, but 
	it cannot be a self-purpose because 
	if it were then all interactive programs would be quines---rhetorically exaggerated.
(A quine [re]produces itself and only itself.)

\subsubsection{Goal}\label{section:Goal}
Our goal is to reach consensus with the reader that
	values are only the means---not the ends---of interactive computation, and that
	\textbf{the purpose of interactive computation is \emph{interpreted communication} between 
		distributed man or machine agents interacting via message passing.}
Note that the communication medium can be modelled as a machine agent.
For example in communication security, which is an important application of interactive computation, 
	the communication medium \emph{is} an adversarial agent \cite{LIiP:ACMTOCL}.

\subsubsection{Problem}\label{section:Problem}
So what is interpreted communication?
According to Shannon \cite{ShannonCommunication}:\footnote{The standard typographic convention of 
		brackets occurring within in-lined or displayed quoted text indicates that
			the text within the brackets does not occur in the original text.
So,
	``[\emph{un}interpreted]'' indicates that ``\emph{un}interpreted'' does not occur in the original text.}
	\begin{quote}
		The fundamental problem of [\emph{un}interpreted] communication is that of reproducing at one point 
		either exactly or approximately a message selected at another point.
	\end{quote}
In analogy, we declare:
	\begin{quote}
		The fundamental problem of \emph{interpreted} communication is that of [re]pro\-ducing at one point 
		either exactly or approximately the \emph{intended meaning} of a message selected at another point.
	\end{quote}
Note that
	due to the distribution of the different agents in a communication network,
			who may have different views of the system, 
				the agents constitute different message \emph{interpretation contexts}.
Hence, 
	identical messages may well 
		be interpreted differently in different contexts, and thus 
		have different meanings to different agents.
As a matter of fact,
	message misinterpretations are ubiquitous in man or machine communications, \eg 
		in communication protocols \cite[Chapter~3]{SecurityEngineering}, and
	may have serious or even catastrophic consequences, \eg 
		in the context of nuclear command and control \cite[Chapter~13]{SecurityEngineering}.
Indeed, 
	[re]producing intended message meaning across interpretation contexts is a highly critical and non-trivial problem.
But what is message meaning?

In \cite{skramer2007FCS-ARSPA-LORI}, we argue that 
	the (denotational) meaning of a message in a given interpretation context is  
		the \emph{propositional knowledge} that the \emph{individual knowledge} of that message induces in that context
			(\cf Section~\ref{section:EpistemicExplication}).
(See \cite{ParikhRamanujam} for a related notion of message meaning.)
By individual knowledge we mean
		knowledge in the sense of the transitive use of the verb ``to know,''
			here to know a message, such as the plaintext of an encrypted message.
Notation: $\knows{a}{M}$ for ``agent $a$ knows message $M$'' 
	(\cf Definition~\ref{definition:LiPLanguage}).
This is the classic concept of knowledge \emph{de re} (``of a thing'') 
	made explicit for message things and 
	meaning here taking them apart (analysing) and putting them again together (synthesising).
Whereas by propositional knowledge we mean
	knowledge in the sense of the use of the verb ``to know'' with a clause,  
		here to know that a statement is true, 
			such as that the plaintext of an encrypted message is (individually) unknown to potential adversaries.
Notation: $\K{a}(\phi)$ for ``agent $a$ knows that $\phi$ [is true]''
	(\cf Section~\ref{section:EpistemicExplication}).
This is the classic concept of knowledge \emph{de dicto} (``of a fact'').\footnote{
	In a first-order setting, 
		knowledge \emph{de re} and \emph{de dicto} can
			be related with Barcan-laws.}
Notice that we make 
	the distinction between individual and propositional knowledge with respect to the \emph{``object''} of knowledge (the know\emph{n}),
		that is, with respect to a message and clause, respectively.
However, individual as well as propositional knowledge can both be individual with respect to the \emph{subject} of knowledge (the know\emph{er}), 
	that is, an (individual) agent.

Hence, an agent-centric paraphrase of our previous problem statement is:
	\begin{quote}\label{page:TheFundamentalProblemofCommunication}
		The fundamental problem of communication is that of inducing at one point 
		either an intended knowledge or an intended belief with
			a message selected at another point
				(\cf Section~\ref{section:EpistemicExplication} for formal meanings).
	\end{quote}
With this paper,
	we 
		intend to induce (necessarily true) knowledge, and
		leave induction of (possibly false) belief for further work.
(For our standard notions of belief and knowledge, see \cite{MultiAgents}.)
Here,
	interactive computations compute 
		propositional \emph{knowledge} 
			(\eg that the goal of this paper has been achieved), and
				they do so by passing as messages pieces of interactively or non-interactively computed individual knowledge 
					(\eg this paper).
Again, result values are only the means---not the ends---of interactive computations.

\subsection{Solution \& Methodology}\label{section:SolutionMethodology}
\subsubsection{Solution}\label{section:Solution}
Our problem statement contains 
	an inceptive solution and defining principle for interactive computation, namely  
		\emph{induction of knowledge} (\cf Section~\ref{section:EpistemicExplication}).
Our task is thus to make this principle precise.
This in turn 
	leads us to defining the concept of an \emph{interactive proof} (or \emph{certificate}) whose 
		effect is to induce the knowledge of its proof goal (or statement of certification) in 
			the intended interpretation context 
				(\cf Section~\ref{section:EpistemicExplication}).
Thereby, we identify proofs (and thus certificates) with messages 
	(for more views on these three concepts, 
		see \cite{Asperti:ProofMessageCertificate})):
	\begin{description}
		\item[\textbf{Messages as proofs}] Any (well-formed) message, 
				cryptographic or not, can prove something (cf.\ Theorem~\ref{theorem:SomeUsefulDeducibleLogicalLaws}.15) and thus is 
					a (potential) proof.
		\item[\textbf{Proofs as messages}] Any proof (a finite syntactic object) can obviously be transmitted and thus is 
				a (potential) message.
	\end{description}
The present paper is intended to be such an interactive proof: 
	its proof goal is the goal stated in Section~\ref{section:Goal}, and
	its intended interpretation context is the set of logically educated readers fluent in English.
Note that 
	our interactive proofs, just 
		like proof constructions defined in terms of interactive Turing-machines \cite{ZeroKnowledge}, 
			need not have a proof-tree structure in the sense of classical proof theory \cite{HBProofTheory}.
Our interactive proofs are also \emph{formal social} proofs in that 
	they partially reconcile 
		two distinct viewpoints on mathematical proofs \cite{ProofTheoryIntro}:
	\begin{quotation}
		The first view is that proofs are social conventions by which 
			mathematicians convince one another of the truth of theorems.
		That is to say, a proof is expressed in natural language plus
			possibly symbols and figures, and is sufficient to 
			convince an expert of the correctness of a theorem.
		Examples of social proofs include the kinds of proofs that
			are presented in conversations or published in articles.
		Of course, it is impossible to precisely define what
			constitutes a valid proof in this social sense; and,
			the standards for valid proofs may vary with the audience and over time.
		The second view of proofs is more narrow in scope: in this view,
			a proof consists of a string of symbols which satisfy 
			some precisely stated set of rules and which prove a theorem,
			which itself must also be expressed as a string of symbols.
		According to this view, mathematics can be regarded as
			a `game' played with strings of symbols according to
			some precisely defined rules. 
		Proofs of the latter kind are called ``formal'' proofs 
			to distinguish them from ``social'' proofs.
	\end{quotation}
Note that 
	a theorem known by one (say $a$) but not by another mathematician  (say $b$)
		is a local truth from the viewpoint of 
			an audience (say $\set{a,b}$).
An example of a social convention is a work contract 
	(\cf Lemma~\ref{lemma:GettierSigning} and 
		Corollary~\ref{corollary:GettierSigning}).

\subsubsection{Methodology}
Our methodology for defining interactive computation emerges as 
	an interactive variant of a classical construction that consists 
		in a ``horizontal'' transitive \emph{embedding} of programs into proofs and
		in a ``vertical'' homomorphing of each non-interactive structure into its interactive counterpart
			(\cf Figure~\ref{figure:Methodology}).
We will argue that 
	the lower right-most ``vertical'' homomorphism (without $\subset$-tail)  
	cannot be an embedding (with $\subset$-tail) and
		that this reflects the essential difference between  
			interactivity and non-interactivity here.
So here, we shall present:\label{page:Methodology}
\begin{enumerate}
	\item a classical modal logic (LiP) of \emph{interactive proofs} that
		\begin{enumerate}
			\item are agent-centric generalisations of non-interactive proofs such that
					the agents are resource-\emph{un}bounded with respect to 
						individual and thus also propositional knowledge 
							(\cf Section~\ref{section:LogicalLaws}), though 
								our agents here are still unable 
									to guess individual (and thus also propositional) knowledge;
			\item induce the knowledge of their proof goal with 
					their intended interpreting agent(s) such that
						the induced knowledge is propositional in the sense of 
							the standard modal logic of knowledge S5
								\cite{Epistemic_Logic,MultiAgents,EpistemicLogicFiveQuestions}. 
		\end{enumerate}
		(See \cite{skramerPhDthesis,skramerCPLDolevYao} and \cite{skramerIMLA2008} for 
			preliminary, non-axiomatic explorations within different, non-standard semantics, but 
				in \cite{skramerIMLA2008} already with pairing and signing as proof-term constructors.)
	\item a classical modal logic (iS4) of \emph{interactive provability} via 
			an embedding into an epistemically guarded first-order extension (egFOLiP) of LiP in 
				loose analogy with Art\"{e}mov's embedding of the standard modal logic of 
					non-arithmetic\footnote{\ie not internalising provability of a formal system that 
												includes Peano Arithmetic} 
						provability S4 into his Logic of Proofs LP 
							\cite{LP,ArtemovBSL,ModalLogicInMathematics}.
			(To make the analogy tighter, one could equip LiP with term variables as LP is equipped with,
				and attempt a corresponding embedding. However, we find our present embedding a bit more natural in our setting.)	
	\item \emph{interactive Intuitionistic Logic} (iIL) via an embedding into iS4 in analogy with 
				the G\"odel-McKinsey-Tarski embedding of Intuitionistic Logic IL into S4 
					\cite{ModalLogicInMathematics}, which will turn out to be the non-modal fragment of
						\cite{LIiP:ACMTOCL}.
	\item \emph{(untyped) interactive Combinatory Logic} (iCL) as a multi-agent distributed generalisation
			of the classic Combinatory Logic (CL) \cite{LambdaCalculusAndCombinators} and as a parent formalism for its following simply typed variant.
	\item \emph{typed interactive Combinatory Logic} (TiCL) via an isomorphism from iIL in analogy with 
				the Curry-Howard isomorphism between IL and typed Combinatory Logic (TCL) 
					\cite{CurryHoward,LecturesOnTheCurryHowardIsomorphism}.
\end{enumerate}
We will deploy our methodology from right to left.
		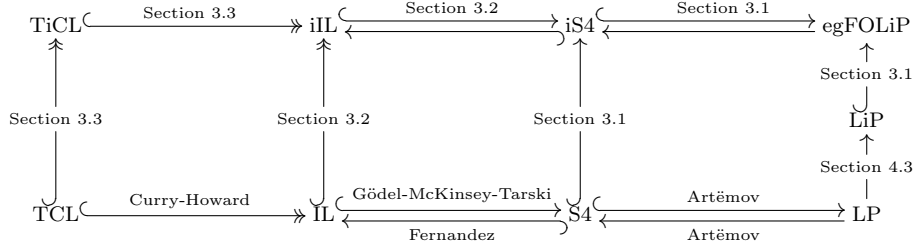
\begin{figure*}[t!]
			\centering
			\caption{Typed interactive programs from interactive proofs}
			\medskip
			\footnotesize
			$\xymatrixcolsep{80pt}\xymatrix{%
				\text{TiCL} 
					\ar@{^{(}->>}[r]^{\text{Section~\ref{section:TiCL}}} &
				\text{iIL} 
					\ar@<2pt>@{^{(}->}[r]^{\text{Section~\ref{section:iIL}}} &
				\text{iS4} 
					\ar@<2pt>@{^{(}->}[l] 
					\ar@<2pt>@{^{(}->}[r]^{\text{Section~\ref{section:iS4}}} &
				\text{egFOLiP} 
					\ar@<2pt>@{^{}->}[l]\\
				&&& 
				\text{LiP} 
					\ar@{^{(}->}[u]|-{\strut\text{Section~\ref{section:iS4}}}\\
					\text{TCL} \ar@{^{(}->>}[r]^{\text{Curry-Howard}} 
					\ar@{^{(}->>}[uu]|-{\strut\text{Section~\ref{section:TiCL}}} &
				\text{IL} 
					\ar@<2pt>@{^{(}->}[r]^{\text{G\"odel-McKinsey-Tarski}} 
					\ar@{^{(}->>}[uu]|-{\strut\text{Section~\ref{section:iIL}}} &
				\text{S4} 
					\ar@<2pt>@{^{(}->}[l]^{\text{Fernandez}} 
					\ar@<2pt>@{^{(}->}[r]^{\text{Art\"{e}mov}} 
					\ar@{^{(}->}[uu]|-{\strut\text{Section~\ref{section:iS4}}} &
				\text{LP} 
					\ar@<2pt>@{^{}->}[l]^{\text{Art\"{e}mov}} 
					\ar@{^{}->}[u]|-{\strut\text{Section~\ref{section:LPmapstoLiP}}}}$
		\label{figure:Methodology}
		\end{figure*}
egFOLiP (LP) is the richest among all the (non-)interactive structures in the sense that
	all other (non-)interactive structures embed into egFOLiP (LP).
In result, 
	terms viewed as proofs are descriptions of constructive deductions, 
	terms viewed as programs are prescriptions for interactive computations, 
	LiP-formulas viewed as propositions are proof goals, and
	LiP-formulas viewed as types are program properties.
To agents,
	interactive proofs are message terms that
		induce the propositional knowledge of their proof goal with their intended interpreters, and
	interactive computations are message communications between distributed interlocutors that
		compute that knowledge from the meaning of the communicated messages.
In sum,
	\textbf{\emph{the purpose of interactive proofs is 
		the transfer of propositional knowledge (knowable facts) via 
			the transmission of certain individual knowledge (knowable proofs) in 
				multi-agent distributed systems}} 
				(for example: editorial boards, 
					scientific communities, 
					social networks and
					other virtualised societies---even 
					the whole Internet).
(In that,
	LiP can also be viewed as an inductive but truthful and thus really a deductive logic, which 
		is about drawing conclusions [obtaining factual knowledge] from observed data [individual knowledge]
			\cite{sep-logic-inductive}.)
That is, \textbf{\emph{LiP is a formal theory of knowledge transfer.}}\footnote{Knowledge being power,
	knowledge transfer between agents becomes agent empowerment.}
In contrast, Shannon's theory is about the 
	(error-correcting) transmission of individual knowledge (data) only.

\subsection{The Logic of Proofs (LP)}\label{section:IntroductionLP}
The language of Art\"{e}mov's Logic of Proofs (LP) 
	(\cf \cite{LP,ArtemovBSL} and \cite[Section~5]{ModalLogicInMathematics}) is 
	the language of classical propositional logic enriched with
	formulas $p{:}F$, where
		$F$ denotes formulas and
		$p$ so-called proof polynomials.\footnote{%
			In that, LP (and LiP) is an example of Gabbay's labelled deductive systems (LDS), and
				its (and LiP's) labels can be appreciated as more concrete (binder-free) implementations of 
					abstract (Kripke-world-quantifying) labels \emph{\`{a} la}  
						\cite{TheFunctionalInterpretationOfLogicalDeduction}.}
Proof polynomials are terms built from 
	proof variables $x,y,z,\ldots$ and 
	proof constants $a,b,c,\ldots$ by means of 
		three operations: 
			application `${\cdot}$' (binary), 
			sum `$+$' (binary), and 
			proof checker `$!$' (unary).
According to Art\"{e}mov,
	proof polynomials represent the whole set of possible operations on (non-interactive) proofs for a propositional language.

Then, the following proof system defines (the non-normal modal logic) LP:
\begin{enumerate}\setcounter{enumi}{-1}
	\item all axioms of classical propositional logic
	\item $\LPded ((p{:}F)\lor q{:}F)\limp (p{+}q){:}F$\quad(sum)
	\item $\LPded (p{:}(F\limp G))\limp((q{:}F)\limp (p{\cdot}q){:}G)$\quad(application)
	\item $\LPded (p{:}F)\limp F$\quad(reflection)
	\item $\LPded (p{:}F)\limp (!p){:}(p{:}F)$\quad(proof checker)
	\item $\set{F\limp G, F}\LPded G$\quad(\emph{modus ponens})
	\item $\LPded c{:}A,$ for any axiom $A$ and proof constant $c$\quad(constant specification\footnote{%
		Constant specification is a somewhat flexible concept (\cf 
			\cite{JustificationLogic} for four variations).}),
\end{enumerate}
where 	
	$\set{F\limp G, F}\LPded G$ abbreviates ``if $\LPded F\limp G$ and $\LPded F$ then $\LPded G$'' in 
		horizontal Hilbert style, and
	`$!$' is interpreted as a primitive-recursive program for checking the correctness of proofs
		which given a proof of $p$ produces a proof that $p$ proves $F$.
The application axiom internalises the \emph{modus ponens} rule.

Note that
	LP does not explicate beyond the formula $p{:}F$ what it means for $p$ to prove $F$, but
		rather attempts to characterise axiomatically this relation.
Indeed, $p{:}F$ really stands for an atomic concept.
Arguably, 
	the standard semantics of LP does not fully explicate the concept either:
		that semantics actually merely re-stipulates each axiom of LP as 
			a corresponding condition on the model
				in set-theoretic language 
					(\cf \cite[Section~5.3]{ModalLogicInMathematics} and \cite{LPsemantics}).
In that, it is rather a convenient semantic \emph{interface} than a semantics proper.
Here it is.
Given 
	\begin{itemize}
		\item a frame $(W,R)$ with a reflexive and transitive relation $R\subseteq W\times W$
		\item an abstractly constrained \emph{evidence mapping} $\mathcal{E}$ 
			from 
				worlds $u$ and 
				proof polynomials $p$
			to sets of formulas $F$
	\end{itemize}
such that:
\begin{enumerate}
	\item if $uRv$ then $\mathcal{E}(u,p)\subseteq\mathcal{E}(v,p)$\quad(monotonicity)
	\item (closure)
		\begin{itemize}
			\item if $F\rightarrow G\in\mathcal{E}(u,p)$ and $F\in\mathcal{E}(u,q)$ then $G\in\mathcal{E}(u,p{\cdot}q)$\quad(application)
			\item if $F\in\mathcal{E}(u,p)$ then $p{:}F\in\mathcal{E}(u,!p)$\quad(proof checker)
			\item $\mathcal{E}(u,p)\cup\mathcal{E}(u,q)\subseteq\mathcal{E}(u,p{+}q)$\quad(sum),
		\end{itemize}
\end{enumerate}
and a usual valuation mapping $\mathcal{V}$ from atomic propositions to sets of worlds,
satisfaction for the LP-modality in a model $(W,R,\mathcal{E},\mathcal{V})$ at a world $u$ is so that
\begin{multline*}
	\text{$(W,R,\mathcal{E},\mathcal{V}),u\Vdash p{:}F$ iff}\\
			\text{$F\in\mathcal{E}(u,p)$ and for every $v\in W$,
					if $uRv$ then $(W,R,\mathcal{E},\mathcal{V}),v\Vdash F$.}
\end{multline*}
Notice the additional constraint $F\in\mathcal{E}(u,p)$, which
	a standard Kripke-seman\-tics format would not allow.
A more serious criticism than the one of not being a semantics proper is that
	in a truly interactive setting, 
		the reflection axiom is unsound (\cf Section~\ref{section:Concepts}).
By a truly interactive setting we mean a multi-agent distributed system 
	where not all proofs are known by all agents, that is,  
		a setting with a non-trivial distribution of information 
			(in the sense of Dana Scott, \cf Proposition~\ref{proposition:PropertiesOfDerivability}).

In contrast:
	\begin{enumerate}
		\item LiP will give an \emph{epistemic explication} of proofs, that is, 
			an explication of proofs in terms of the epistemic impact that 
					they effectuate with their intended interpreting agents 
						(\ie the knowledge of their proof goal).
			
			Technically,
				we will endow the proof modality with a \emph{standard} Kripke-semantics, whose
					accessibility relation we 
						\begin{enumerate}
							\item define \emph{constructively}, in terms of 
								elementary set-theoretic constructions 
									(in loose analogy with 
										the constructive rather than 
										the purely axiomatic definition of 
											ordered pairs [\eg Kuratowski's] or 
											numbers \cite{TheNumberSystems});
							\item match to an abstract semantic interface in standard form 
									(which abstractly stipulates the characteristic properties of 
										the accessibility relation \cite{ModalProofTheory}), which 
											entails 
										\begin{enumerate}
											\item the absorption of 
												the evidence mapping into the accessibility relation
													(and thus the absorption of the corresponding conjunctive constraint on
														the truth condition of the proof modality),  
											\item the elimination of the monotonicity constraint on 
													the evidence mapping (in the sense that 
														the constraint will become a property), which is
															a nice side-effect of the previous simplification.
										\end{enumerate}												
						\end{enumerate}
		\item LiP only validates a corresponding \emph{conditional reflection principle}, that is, 
				a reflection principle that 
					is conditioned on the (individual) knowledge of the proof mentioned by the principle 
						(\eg the above $p$ in LP).
		\item LiP is, technically speaking, a \emph{normal} modal logic, which 
				brings all the benefits of the existing standard techniques of normal modal logics to LiP.
	\end{enumerate}
Hence,
	we beg to differ with Art\"{e}mov and Nogina, who, like Aristotle and Plato, 
		define (propositional) knowledge as justified true belief, but unlike Aristotle and Plato,
			admit as admissible justifications for such knowledge only proofs in 
				the sense of at least LP \cite{JTB,JustificationLogic}.
As a counter-example to \emph{Art\"{e}mov and Nogina's provability explication of knowledge,} consider that
	an agent may know that a certain state of affairs is the case from the \emph{observation of a physical event} 
		(\eg a message input/output),
	yet not be able to prove her (propositional) knowledge to the non-observers (\eg an absent peer or judge)
	for lack of \emph{sufficient evidence} (\ie proof).
Whereas in \emph{our epistemic explication of provability,}
	provability possibly implies propositional knowledge, \eg with the (individual) knowledge of a proof, but
		propositional knowledge does not necessarily imply provability, \eg without such a proof.
The technical difference between the two philosophies may be subtle but nevertheless is serious 
	(\ie not a mere technicality)---especially for applications to 
		truly distributed computer systems.

\subsection{Contribution \& Roadmap}
\subsubsection{Contribution}\label{section:Contribution}
The contribution of this paper is, first, \emph{a formal theory of knowledge transfer,} that is, 
	our classical normal modal \emph{Logic of interactive Proofs (LiP),}  
	serving as a definitional framework for interactive computation %, or interpreted communication, 
	via a construction analogous to G\"odel-McKinsey-Tarski-Art\"{e}mov's, and second, 
		our resulting interactive structures iS4, iIL, and iCL and TiCL (the main computational structures), 
			as well as their interconnections.
More precisely, our main contributions are: 
	\begin{enumerate}
		\item a \emph{constructive Kripke-semantics} for LiP's proof modality 
				(\cf Section~\ref{section:Concretely});
		\item a stateful notion of transmittable \emph{interactive proofs} that
			\begin{enumerate}
				\item are \emph{agent-centric generalisations} of non-interactive proofs such that
						the agents are, as said, still resource-\emph{un}bounded with respect to 
							individual and thus also propositional knowledge,
				\item have intuitive \emph{epistemic explications} in that 
					\begin{enumerate}
						\item they effectuate (\cf Section~\ref{section:EpistemicExplication})
							\begin{enumerate}
								\item a persistent epistemic \emph{impact} in 
										their intended communities of peer reviewers that
											consists in the induction of the (propositional) knowledge of 
												their proof goal by means of the (individual) knowledge of 
													the proof with the interpreting reviewer,
								\item a \emph{transfer} of propositional knowledge (knowable facts) via 
										the transmission of certain individual knowledge (knowable proofs) in
											 multi-agent distributed systems,
							\end{enumerate}
						\item the individual proof knowledge can be thought of as being provided by an imaginary 
							\emph{computation oracle} 
								(\cf Section~\ref{section:OracleComputationalExplication}),
					\end{enumerate}
				\item are \emph{falsifiable} in a communal sense of Popper's 
						(\cf Theorem~\ref{theorem:Falsifiability}),
				\item can be 
						constructed with only two operations, namely 
							\emph{pairing} and  
							\emph{signing}, and 
						freely combined with other term operations (\eg \emph{encryption}),
				\item happen to have an 
					\emph{information-theoretic explication} in terms of 
						Scott's information systems (\cf Proposition~\ref{proposition:PropertiesOfDerivability});
			\end{enumerate}
		\item a stateful notion of 
				\emph{proof equality} in an idempotent commutative monoid capturing 
					\textbf{\emph{equality of epistemic impact}} (\cf Corollary~\ref{corollary:ProofEquality});
		\item a \emph{novel modal rule of logical modularity,} called \emph{epistemic antitonicity,} for 
			the class of justification logics \cite{JustificationLogic} including LP, which  
					allows the partial, or even total and thus modular generation of 
						the structural modal laws from 
							the laws of a separate (\eg application-specific) term theory 
					(\cf Page~\pageref{page:EpistemicAntitonicity} and Section~\ref{section:StructuralLaws});
		\item a \emph{sound and complete axiomatisation} for LiP (\cf Theorem~\ref{theorem:Adequacy});
		\item a proof of the \emph{finite-model property} (\cf Theorem~\ref{theorem:FMP}) and 
				\emph{decidability} (\cf Corollary~\ref{corollary:Decidability}) of LiP;
		\item the main interactive computational calculi of 
				Turing-powerful \emph{interactive Combinatory Logic (iCL)} and 
				its variant, \emph{typed interactive Combinatory Logic (TiCL),} 
					\begin{enumerate}
						\item which are distributed multi-agent generalisations of 
								their classic counterparts and enable us to view 
									\textbf{\emph{agents as reduction relations}} 
										(agent-centric reduction relation) and 
									\textbf{\emph{communication channels as variable bindings}} 
										(definable agent-centric lambda-operator),
						\item by means of which \textbf{\emph{interactive computation}} can be 
								defined and programmed, and 
								shown to be  
									\textbf{\emph{equipotent to non-interactive computation at 
										the level of simple types}} 
											(\cf Corollary~\ref{corollary:Equipotency})!
					\end{enumerate}
	\end{enumerate}
In sum, LiP 
	is a \emph{minimal modular extension} of propositional logic with 
	\begin{enumerate}
		\item an interactively generalised additional operator (the proof modality);
		\item a simplified and then interactively generalised 
			\begin{enumerate}
				\item proof-term language (only two instead of three constructors,
					\textbf{\emph{agents as proof- as well as signature-checkers}}),
				\item constructive Kripke-semantics 
					(including evidence-mapping absorption and monotonicity-condition elimination). 
			\end{enumerate}
		\item sufficient content to serve as a definitional framework for interactive computation.
	\end{enumerate}
With our contribution,
	we mean to concur with \cite[Page~viii]{NotesOnSetTheory}, where
	\begin{quote}  
	computation theory is viewed as part of the mathematics ``to be founded,''
	\end{quote} 
	since Kripke-models such as ours for LiP---conceived 
	as a foundation for \emph{interactive} computation theory---are 
	relational models of the meaning of modal languages in the language of set theory, which in turn \cite[Page~vii]{NotesOnSetTheory}
	\begin{quote} 
		is the official language of mathematics, just as mathematics is the official language of science.
	\end{quote}

\subsubsection{Roadmap}
In the next section,
	we introduce our Logic of interactive Proofs (LiP) axiomatically by means of 
		a compact closure operator 
			that induces the Hilbert-style proof system that we seek (\cf Proposition~\ref{proposition:Hilbert}) and
			that allows the simple generation of application-specific extensions of LiP (\cf Page~\pageref{page:DolevYao}).
We then prove some useful (further-used), deducible laws within the obtained system, 
	as well as a correspondence theorem (Theorem~\ref{theorem:EXIN}).
Next, we introduce the constructive semantics and the semantic interface for LiP.
For the construction of the semantics,
	we again make use of a closure operator, but this time on sets of messages to be used as interactive proofs.
In Section~\ref{section:EpistemicExplication},
	we present the promised 
		epistemic explication and 
in Section~\ref{section:OracleComputationalExplication} 
	the promised oracle-computational explication of our interactive proofs.
In Section~\ref{section:ImportantProperties}, 
	we demonstrate  
		the adequacy of our proof system and the decidability of its set of generated theorems (the logic), and
		present our notion of proof equality for LiP.
In Section~\ref{section:TiCLfromLiP},
	we present our resulting interactive structures iS4, iIL, iCL and TiCL, 
			as well as their interconnections.
Finally,
	we relate LiP to LP-like systems in Section~\ref{section:RelationToLP}.

\section{Basic Logic of interactive Proofs (LiP)}\label{section:LiP}
\subsection{Syntactically}\label{section:Syntactically}
The basic Logic of interactive Proofs (LiP) provides a modal \emph{formula language} over 
	a generic message or proof \emph{term language}.
The formula language offers 
	the classical propositional connectives, 
	a relational symbol `$\knows{}{}$' for 
		constructing atomic propositions about individual (term) knowledge, and
	a parameterised unary modal constructor `$\proves{}{}{}{}$' for 
		propositions about proofs.
The term language offers constructors for 
	\emph{pairing} and (possibly cryptographically implemented) \emph{signing}.
(Signature creation and verification is in polynomial time \cite{DigitalSignatures}.)
\begin{definition}[The language of LiP]\label{definition:LiPLanguage}
	Let
	\begin{itemize}
		\item $\agents\neq\emptyset$ designate a non-empty finite set of 
				\emph{agent names} $a$, $b$, $c$, \etc such that $\CM\in\agents$,
					where $\CM$ designates the communication medium;
		\item $\community\subseteq\agents$ denote 
				(finite and not necessarily disjoint) communities (sets) of 
					agents $a\in\agents$ (referred to by their name);
		\item $\messages \ni M\ \bnfeq\ 
				a \bnfor
				\Kcomb{a} \bnfor 
				\Scomb{a} \bnfor 
				\pair{M}{M} \bnfor 
				\sign{M}{a} \bnfor
				B$
			
			designate our language of \emph{message terms} $M$ \emph{over} $\agents$ with 
				constants $a\in\agents$, $\Kcomb{a}$, and $\Scomb{a}$, 
				message pairs $\pair{M}{M}$, 
				signed messages $\sign{M}{a}$, and
				application-specific data $B$ (left blank here);
			note that 
				messages must be grammatically well-formed, which yields an induction principle, and that 
				the meta-variable $B$ just signals the possibility of an extended term language $\messages$;
		\item $\mathcal{P}$ designate a set of atomic propositions $P$, 
				first, constrained so that for all $a\in\agents$ and $M\in\messages$, 
						$(\knows{a}{M})\in\mathcal{P}$\quad(for \emph{individual} knowledge ``$a$ knows $M$'') 
					are propositional constants (not substitutable as such), and 
				second, containing countably infinitely many propositional variables (substitutable as such);
		\item $\pFormulas\ni\phi \bnfeq P \bnfor 
				\neg\phi \bnfor 
				\phi\land\phi \bnfor 
				\proves{M}{\phi}{a}{\community}$ 
				
				designate our language of \emph{logical formulas} $\phi$, where
					$\proves{M}{\phi}{a}{\community}$ means that
						``$M$ is a $\community\cup\set{a}$-\emph{reviewable proof} of $\phi$'' in the sense that 
						``$M$ can prove $\phi$ to $a$ (\eg a designated verifying judge) and this fact  
							is commonly known in the (pointed) community $\community\cup\set{a}$ 
								(\eg for $\community$ being a jury).''
	\end{itemize}
\end{definition}
\noindent
The purpose of our term constants $\Kcomb{a}$ and $\Scomb{a}$ is to act as 
	basic combinators in concert with term pairing (explicit application) in the sense of 
		Combinatory Logic (CL) \cite{LambdaCalculusAndCombinators}.
Thus,
	our message language $\messages$ contains a copy of 
		the language of closed (variable-free) CL-terms (combinators) per agent $a$ 
			(\cf Section~\ref{section:TiCL}).
Of course, we could also conceive of \emph{term forms,}\label{page:TermForms} that is,  
	terms containing \emph{free variables} $x,x',x''\in\mathcal{X}$ with 
		$\mathcal{X}$ a countably infinite set, and 
			thus have a copy of the full language of CL per agent as sublanguages of $\messages$.
However, in order to keep the introduction of our logic 
		as simple as possible and 
		as complicated as necessary, 
	we do not 
		introduce term forms here.
For a technical discussion of our set of atomic propositions, 
	see Appendix~\ref{appendix:AtomicPropositions}.
Now note the following macro-definitions: 
	$\true \defeq \knows{a}{a}$, 
	$\false \defeq \neg \true$, 
	$\phi \lor \phi' \defeq \neg (\neg \phi \land \neg \phi')$,
	$\phi \limp \phi' \defeq \neg \phi \lor \phi'$, 
	$\phi \lequiv \phi' \defeq (\phi \limp \phi') \land (\phi' \limp \phi)$, and, more interestingly
		those in Table~\ref{table:ImportantMacroDefinitions}.\footnote{%
			The problem of defining interactive refutations was suggested to me by Rajeev Gor\'{e}.} 
	\begin{table}[t!]
	\centering
	\caption{Some macro-definable proof concepts}
	\smallskip
	$\boxed{\begin{array}{@{}r@{\ \ }c@{\ \ }l@{}}
		\refutes{M}{\phi}{a}{\community} &\defeq& 
			\begin{array}[t]{@{}l@{}}
				\proves{M}{\neg\phi}{a}{\community}\\
				\text{($M$ is a $\community\cup\set{a}$-reviewable \emph{refutation} of $\phi$ to $a$)}
			\end{array}\\
		\proofdiamond{M}{\phi}{a}{\community} &\defeq&
			\begin{array}[t]{@{}l@{}}
				\neg(\refutes{M}{\phi}{a}{\community})\\
				\text{($M$ is a $\community\cup\set{a}$-reviewable \emph{proof diamond} of $\phi$ to $a$)}
			\end{array}\\				
		\decides{M}{\phi}{a}{\community} &\defeq&
			\begin{array}[t]{@{}l@{}}
				(\proves{M}{\phi}{a}{\community})\lor(\refutes{M}{\phi}{a}{\community})\\
				\text{($M$ is a $\community\cup\set{a}$-reviewable \emph{decider} of $\phi$ to $a$, see \cite{KramerIMLA2013})}
			\end{array}\\
		\undecides{M}{\phi}{a}{\community} &\defeq&
			\begin{array}[t]{@{}l@{}}
				\neg(\decides{M}{\phi}{a}{\community})\\
				\text{($M$ is a $\community\cup\set{a}$-reviewable \emph{non-decider} of $\phi$ to $a$)}
			\end{array}
	\end{array}}$
	\label{table:ImportantMacroDefinitions}
	\end{table}
	Variations on our notions of interactive proof can also be macro-defined, \eg  
		with respect to \emph{reviewer communities} (by 
			conjunction with respect to their members and based on a policy of 
				either one [dis]proof for \emph{all} members 
				or one [dis]proof for \emph{each} member) and
		with respect to \emph{exclusive communities} (members only).

LiP is defined by means of the following axiom and deduction-rule schemas, where 
	grey-shading highlights special, mostly modal interest.
\begin{definition}[The axioms and deduction rules of LiP]\label{definition:AxiomsRules}
	Let
	\begin{itemize}
		\item $\Gamma_{0}$ designate an adequate set of axioms for classical propositional logic
		\item $\Gamma_{1} \defeq \Gamma_{0} \cup \{$
			\begin{itemize}
				\item $\knows{a}{a}$\quad
						(knowledge of one's own name string)
				\item $\knows{a}{M}\limp\knows{a}{\sign{M}{a}}$\quad
						(\emph{personal} signature \emph{synthesis})
				\item $\knows{a}{\sign{M}{b}}\limp\knows{a}{\pair{M}{b}}$\quad
						(\emph{universal} signature \emph{analysis})
				\item $(\knows{a}{M}\land\knows{a}{M'})\lequiv\knows{a}{\pair{M}{M'}}$\quad
						([un]pairing)
				\item $(\proves{M}{(\phi\limp\phi')}{a}{\community})\limp
							((\proves{M}{\phi}{a}{\community})\limp
								\proves{M}{\phi'}{a}{\community})$\quad
						(Kripke's law, K)
				\item \colorbox[gray]{0.75}{%
						$(\proves{M}{\phi}{a}{\community})\limp(\knows{a}{M}\limp\phi)$}\quad
						(epistemic truthfulness)
				\item $(\proves{M}{\phi}{a}{\community})\limp
							\neg(\proves{M}{\neg\phi}{a}{\community})$\quad
						(proof consistency)
				\item \colorbox[gray]{0.75}{%
						$(\proves{M}{\phi}{a}{\community})\limp
							\bigwedge_{b\in\community\cup\set{a}}
								\proves{\sign{M}{a}}{(\knows{a}{M}\land\proves{M}{\phi}{a}{\community})}{b}{\community\cup\set{a}}$}\quad
						(peer review)
				\item \colorbox[gray]{0.75}{%
						$(\proves{M}{\phi}{a}{\community\cup\community'})\limp
							\proves{M}{\phi}{a}{\community}$}\quad
						(group decomposition) \}
			\end{itemize}
			designate a set of \emph{axiom schemas.}
	\end{itemize}
	Then, 
		$\colorbox[gray]{0.75}{%
			$\LiP\defeq\Clo{}{}(\emptyset)$}\defeq
				\bigcup_{n\in\mathbb{N}}\Clo{}{n}(\emptyset)$, where for all $\Gamma\subseteq\pFormulas$:
		\begin{eqnarray*}
			\Clo{}{0}(\Gamma) &\defeq& \Gamma_{1}\cup\Gamma\\
			\Clo{}{n+1}(\Gamma) &\defeq& 
				\begin{array}[t]{@{}l@{}}
					\Clo{}{n}(\Gamma)\ \cup\\
					\setst{\phi'}{\set{\phi,\phi\limp\phi'}\subseteq\Clo{}{n}(\Gamma)}\cup
						\quad\text{(\emph{modus ponens,} MP)}\\
					\setst{\proves{M}{\phi}{a}{\community}}{\phi\in\Clo{}{n}(\Gamma)}\cup
						\quad\text{(necessitation, N)}\\
					\negthickspace\colorbox[gray]{0.75}{%
						$\setst{(\proves{M'}{\phi}{a}{\community})\limp\proves{M}{\phi}{a}{\community}}
								{(\knows{a}{M}\limp\knows{a}{M'})\in\Clo{}{n}(\Gamma)}$}\\
									\quad\text{(epistemic antitonicity)}.\label{page:EpistemicAntitonicity}
						\end{array}
				\end{eqnarray*}
		We call $\LiP$ the \emph{base theory,} and
		$\Clo{}{}(\Gamma)$ an \emph{LiP-theory} for any $\Gamma\subseteq\pFormulas$.
\end{definition}
\noindent
This article is about the base theory (the logic), as suggested by the article title.
Notice the 
	logical order of LiP, which is,
		due to propositions about (proofs of) propositions, \emph{higher-order propositional}.
Further, observe that 
	we assume
		the existence of a dependable mechanism for signing messages, which we 
		model with the above synthesis and analysis axioms. 
In \emph{trusted} multi-agent distributed systems, signatures are unforg\emph{ed,} and thus
	such a mechanism is trivially given by 
		the inclusion of the sender's name in the sent message, or by
		the sender's sensorial impression on the receiver when communication is immediate.
In \emph{dis}trusted multi-agent distributed systems (\eg the open Internet, with a communication medium), 
	a practically unforge\emph{able} signature mechanism can be implemented with  
		classical \emph{certificate-based} or, more directly, with 
		\emph{identity-based} public-key cryptography \cite{DigitalSignatures}.
We also assume the existence of a pairing mechanism modelling finite sets via individual knowledge.
Such a mechanism is required by the important application of
	communication (not only cryptographic) protocols \cite[Chapter~3]{SecurityEngineering}, in which
		concatenation of high-level data packets is associative, commutative, and idempotent for
			an individual knower (\cf Corollary~\ref{corollary:ProofEquality}).
What is more, our peer-review axiom happens to be 
	a possible formalisation of one key property of Miller's foundational proof certificates, 
		which is the one of them being communicable as well as acceptable within 
			a group of peers \cite{FoundationalProofCertificates} 
				(see our Section~\ref{section:EpistemicExplication}).
	
As examples of application-specific data $B$ we conceive of:
		\begin{itemize}
			\item \emph{atomic data} other than the present term constants such as 
				random numbers (used in cryptographic communication),
				quoted formulas $\quoted{\phi}$ (\eg the G\"odel-number of $\phi$ in some G\"odel-numbering scheme)\footnote{%
					Quotation is a form of type down-casting in the sense that 
						data viewed as 
							compound at a certain logical level (here, at the formula-language level) is viewed as
							atomic at a lower level (here, at the term-language level), and
					thus is a form of encoding 
						meta-data (here, statements about messages) in 
						object data (here, messages).}, and others;
			\item \emph{compound data} such as 
				\begin{itemize}
					\item hashed\footnote{Cryptographic hash functions are one-way functions with
											certain cryptographically interesting properties
												such as collision and preimage resistance.} data $\hash{M}$, 
						for $M\in\messages$ and 
						with axiom $\knows{a}{M}\limp\knows{a}{\hash{M}}$
					\item encrypted data $\encrypt{M}{M'}$, 
						for 
							plaintext data $M\in\messages$ and 
							data used as a symmetric encryption key $M'\in\messages$, and
						with axioms 
						\begin{itemize}
							\item $\knows{a}{\pair{M}{M'}}\limp\knows{a}{\encrypt{M}{M'}}$\quad(encryption)
							\item $\knows{a}{\pair{\encrypt{M}{M'}}{M'}}\limp\knows{a}{M}$\quad(decryption)
						\end{itemize}
					This is the so-called Dolev-Yao conception of cryptography \cite{Dolev-Yao},\label{page:DolevYao} which 
						we could easily cast as the following LiP-theory\footnote{\label{footnote:CryptographyConceptions}%
						The integration of other conceptions such as
							the classical information-theoretic \cite{ShannonCryptography} and 
							the modern complexity-theoretic \cite{FoundationsCryptographyOne,FoundationsCryptographyTwo}
								will be presented in future work.}  
							$$\boxed{\LiP_{\text{DY}}\defeq\Clo{}{}(\set{\begin{array}[t]{@{}l@{}}
									\knows{a}{M}\limp\knows{a}{\hash{M}},\\ 
									\knows{a}{\pair{M}{M'}}\limp\knows{a}{\encrypt{M}{M'}},\\ 
									\knows{a}{\pair{\encrypt{M}{M'}}{M'}}\limp\knows{a}{M}}).
								\end{array}}$$
				\end{itemize}
		\end{itemize}

Finally,
	we could close individual knowledge under 
		certain \emph{equational theories} by adding 
			a relational symbol `$=$' for equality together with its standard axioms plus 
			the specific axiom schema
			$$(\knows{a}{M}\land M=M')\limp\knows{a}{M'}\quad\text{(epistemic equational closure)}.$$ 
An example of such a notion of equality is the one of 
	the equational theory of interactive combinators, which can be defined by 
		the following three additional axiom schemas 
			(adapting from classical, unsubscripted combinators \cite{LambdaCalculusAndCombinators}):
			\begin{itemize}
				\item $\pair{\pair{\Kcomb{a}}{M}}{M'}=M$
				\item $\pair{\pair{\pair{\Scomb{a}}{M}}{M'}}{M''}=\pair{\pair{M}{M''}}{\pair{M'}{M''}}$
				\item $\neg(\Kcomb{a}=\Scomb{b})$
			\end{itemize}
However, the epistemic equational closure resulting from this theory would be too strong:
	closing individual knowledge under $\pair{\pair{\Kcomb{a}}{M}}{M'}=M$ would give agents $a$ 
		arbitrary term-guessing power (of arbitrary terms $M'$)!
On the other hand, closing individual knowledge selectively just under 
	$\pair{\pair{\pair{\Scomb{a}}{M}}{M'}}{M''}=\pair{\pair{M}{M''}}{\pair{M'}{M''}}$
is not needed, since it already is (\cf Corollary~\ref{corollary:ScombProp}).

Some other candidate axioms the reader might want to consider are: 
	$(\Kcomb{a}=\Kcomb{b})\limp a=b$, 
	$(\Scomb{a}=\Scomb{b})\limp a=b$, 
	$(\knows{a}{\Kcomb{b}}\lor\knows{a}{\Scomb{b}})\limp\knows{a}{b}$, and 
	$\knows{a}{b}\limp\knows{a}{\pair{\Kcomb{b}}{\Scomb{b}}}$.
If added to LiP, 
	the last axiom would imply  
		the individual knowledge of all non-interactive programs and thus such algorithms (choose $b$ to be $a$)!

Note that in the sequel, ``:iff'' abbreviates ``by definition, if and only if''.
Logicians may want to skip the following proposition.
\begin{proposition}[Hilbert-style proof system]\label{proposition:Hilbert}
	Let 
		\begin{eqnarray*}
			\Phi\LiPded\phi &\defiff& \text{if $\Phi\subseteq\LiP$ then $\phi\in\LiP$}\\ 
			\phi\LiPdedBis\phi' &\defiff& \text{$\set{\phi}\LiPded\phi'$ and $\set{\phi'}\LiPded\phi$}\\
			\LiPded\phi &\defiff& \emptyset\LiPded\phi.
		\end{eqnarray*}
		In other words, ${\LiPded}\subseteq\powerset{\pFormulas}\times\pFormulas$ is a \emph{system of closure conditions} in the sense of 
				\cite[Definition~3.7.4]{PracticalFoundationsOfMathematics}.
		For example:
			\begin{enumerate}
				\item for all axioms $\phi\in\Gamma_{1}$, $\LiPded\phi$
				\item for \emph{modus ponens}, $\set{\phi,\phi\limp\phi'}\LiPded\phi'$
				\item for necessitation, $\set{\phi}\LiPded\proves{M}{\phi}{a}{\community}$
				\item for epistemic antitonicity, 
					$\set{\knows{a}{M}\limp\knows{a}{M'}}\LiPded
						(\proves{M'}{\phi}{a}{\community})\limp\proves{M}{\phi}{a}{\community}$.
			\end{enumerate}
		(In the space-saving, horizontal Hilbert-notation ``\,$\Phi\LiPded\phi$'', 
			$\Phi$ is not a set of hypotheses but a set of premises, see for example   
				\emph{modus ponens,} necessitation, and epistemic antitonicity.\footnote{%
					So for example \emph{modus ponens} can be presented on one line and even in-line as 
						$\set{\phi,\phi\limp\phi'}\LiPded\phi'$ rather than on two display lines as
							$$\frac{\phi\quad\phi\limp\phi'}{\phi'}.$$})

	Then $\LiPded$ can be viewed as being defined by 
		a $\Clo{}{}$-induced Hilbert-style proof system.
	In fact 
			${\Clo{}{}}:\powerset{\pFormulas}\rightarrow\powerset{\pFormulas}$ is a \emph{standard consequence operator,} that is, 
				a \emph{substitution-invariant compact closure operator:}
					\begin{enumerate}
						\item $\Gamma\subseteq\Clo{}{}(\Gamma)$\quad(extensivity)
						\item if $\Gamma\subseteq\Gamma'$ then $\Clo{}{}(\Gamma)\subseteq\Clo{}{}(\Gamma')$\quad(monotonicity)
						\item $\Clo{}{}(\Clo{}{}(\Gamma))\subseteq\Clo{}{}(\Gamma)$\quad(idempotency)
						\item $\Clo{}{}(\Gamma)=\bigcup_{\Gamma'\in\powersetFinite{\Gamma}}\Clo{}{}(\Gamma')$\quad(compactness)
						\item $\sigma[\Clo{}{}(\Gamma)]\subseteq\Clo{}{}(\sigma[\Gamma])$\quad(substitution invariance), 
					\end{enumerate}
					where $\sigma$ designates an arbitrary propositional $\pFormulas$-substitution.
\end{proposition}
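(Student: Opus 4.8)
The plan is to prove each of the five properties of $\Clo{}{}$ in turn, working directly from the inductive definition $\Clo{}{}(\Gamma)=\bigcup_{n\in\mathbb{N}}\Clo{}{n}(\Gamma)$ given in Definition~\ref{definition:AxiomsRules}. The key structural observation that I would establish first, as a preliminary, is that the stage operators are monotone in both the stage index and the argument set: for every $\Gamma$ and every $n$, $\Clo{}{n}(\Gamma)\subseteq\Clo{}{n+1}(\Gamma)$ (immediate from the definition, since each clause of $\Clo{}{n+1}$ explicitly re-includes $\Clo{}{n}(\Gamma)$), and if $\Gamma\subseteq\Gamma'$ then $\Clo{}{n}(\Gamma)\subseteq\Clo{}{n}(\Gamma')$ by a routine induction on $n$. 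These two facts are the workhorses for almost everything that follows.

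With these in hand, extensivity is the base case: $\Gamma\subseteq\Gamma_{1}\cup\Gamma=\Clo{}{0}(\Gamma)\subseteq\Clo{}{}(\Gamma)$. Monotonicity of $\Clo{}{}$ follows by taking unions over $n$ of the stage-wise monotonicity just noted. For compactness, the $\supseteq$ direction is monotonicity, and for $\subseteq$ I would argue that any $\phi\in\Clo{}{}(\Gamma)$ lies in some $\Clo{}{n}(\Gamma)$, and then show by induction on $n$ that membership in $\Clo{}{n}(\Gamma)$ is ``witnessed'' by finitely many elements of $\Gamma$: each generating clause (\emph{modus ponens}, necessitation, epistemic antitonicity) consumes only finitely many already-derived formulas, so unwinding the derivation collects a finite subset $\Gamma'\subseteq\Gamma$ with $\phi\in\Clo{}{}(\Gamma')$. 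Substitution invariance is likewise an induction on $n$: one checks that each clause is preserved under applying $\sigma$, using that $\Gamma_{1}$ is a set of axiom \emph{schemas} (so $\sigma[\Gamma_{1}]\subseteq\Gamma_{1}$) and that the syntactic shapes triggering \emph{modus ponens}, necessitation, and epistemic antitonicity — in particular the implicational premise $\knows{a}{M}\limp\knows{a}{M'}$ — are stable under substitution.

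The main obstacle will be idempotency, $\Clo{}{}(\Clo{}{}(\Gamma))\subseteq\Clo{}{}(\Gamma)$. The natural route is to prove the sharper absorption statement that for all $m,n$, $\Clo{}{n}(\Clo{}{m}(\Gamma))\subseteq\Clo{}{m+n}(\Gamma)$, by induction on $n$ with $m$ fixed (the base case $n=0$ being $\Clo{}{0}(\Clo{}{m}(\Gamma))=\Gamma_{1}\cup\Clo{}{m}(\Gamma)=\Clo{}{m}(\Gamma)$, since $\Gamma_{1}\subseteq\Clo{}{0}(\Gamma)$). The delicate point is the inductive step: a formula newly generated at stage $n+1$ over $\Clo{}{m}(\Gamma)$ arises by one of the three rules from premises already in $\Clo{}{n}(\Clo{}{m}(\Gamma))$; by the induction hypothesis those premises sit in $\Clo{}{m+n}(\Gamma)$, and then one reapplies the \emph{same} rule at stage $m+n+1$ over $\Gamma$ to place the conclusion in $\Clo{}{m+n+1}(\Gamma)$. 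I expect epistemic antitonicity to need the most care here, since its side condition is itself a derivability condition (the implication $\knows{a}{M}\limp\knows{a}{M'}$ must already be derived), so one must track that this side premise is carried along correctly by the induction rather than treated as an external hypothesis. Once the absorption inequality is established, idempotency follows by taking $m,n$ unbounded and unioning, completing the proof that $\Clo{}{}$ is a substitution-invariant compact closure operator and hence, via \cite[Definition~3.7.4]{PracticalFoundationsOfMathematics}, that $\LiPded$ is the advertised system of closure conditions.
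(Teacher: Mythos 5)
Your proposal is correct and takes essentially the same approach as the paper: the paper's own proof merely asserts that the closure-operator properties "can be verified by inspection of the inductive definition of $\Clo{}{}$" and that substitution invariance "follows from our definitional use of axiom schemas," and your argument is precisely that verification carried out in full (stage-wise monotonicity, the finite-witness argument for compactness, stability of the rule shapes and of $\Gamma_{1}$ under substitution, and the absorption lemma $\Clo{}{n}(\Clo{}{m}(\Gamma))\subseteq\Clo{}{m+n}(\Gamma)$ for idempotency). The one step to tighten is the final "unioning": since the inner argument of $\Clo{}{n}(\Clo{}{}(\Gamma))$ is the full union $\bigcup_{m}\Clo{}{m}(\Gamma)$ rather than a single stage, you need the observation that each stage operator is finitary over an increasing chain, so that $\Clo{}{n}(\Clo{}{}(\Gamma))=\bigcup_{m}\Clo{}{n}(\Clo{}{m}(\Gamma))$ --- but this is exactly the finite-witness argument you already supply for compactness, so no new idea is required.
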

\begin{proof}
	That a Hilbert-style proof system can be viewed as induced by 
		 a compact closure operator is well-known (\eg see \cite{WhatIsALogicalSystem});
	that $\Clo{}{}$ is indeed such an operator can be verified by 
		inspection of the inductive definition of $\Clo{}{}$; and
	substitution invariance follows from our definitional use of axiom \emph{schemas}.\footnote{%
		Alternatively to axiom schemas,
		we could have used 
			axioms together with an
			additional substitution-rule set
				$\setst{\sigma[\phi]}{\phi\in\Clo{}{n}(\Gamma)}$
		in the definiens of $\Clo{}{n+1}(\Gamma)$.}
\end{proof}

We are now going to present some useful (further-used), deducible \emph{structural} laws of LiP, including
	the deducible non-structural rule of epistemic bitonicity, used in the deduction of some of them.
Here, 
	``structural'' means 
	``deduced exclusively from term axioms.''
The laws are enumerated in a (total) order that respects (but cannot reflect) their respective proof prerequisites.
\begin{theorem}[Some useful deducible structural laws]\label{theorem:SomeUsefulDeducibleStructuralLaws}\ 
	\begin{enumerate}
		\item $\LiPded\knows{a}{\pair{M}{M'}}\limp\knows{a}{M}$
				\quad(left projection, \colorbox[gray]{0.75}{1-way $\Kcomb{}$-combinator property})
		\item $\LiPded\knows{a}{\pair{M}{M'}}\limp\knows{a}{M'}$
				\quad(right projection)
		\item $\LiPded\knows{a}{\pair{M}{M}}\lequiv\knows{a}{M}$
				\quad(pairing idempotency)
		\item $\LiPded\knows{a}{\pair{M}{M'}}\lequiv\knows{a}{\pair{M'}{M}}$
				\quad(pairing commutativity)
		\item $\LiPded(\knows{a}{M}\limp\knows{a}{M'})\lequiv(\knows{a}{\pair{M}{M'}}\lequiv\knows{a}{M})$
				
				(neutral pair elements)
		\item $\LiPded\knows{a}{\pair{M}{a}}\lequiv\knows{a}{M}$
				\quad(self-neutral pair element)
		\item $\LiPded\knows{a}{\pair{M}{\pair{M'}{M''}}}\lequiv\knows{a}{\pair{\pair{M}{M'}}{M''}}$
				\quad(pairing associativity)
		\item $\set{\knows{a}{M}\lequiv\knows{a}{M'}}\LiPded
					(\proves{M}{\phi}{a}{\community})\lequiv\proves{M'}{\phi}{a}{\community}$
				\quad(epistemic bitonicity)
		\item $\LiPded(\proves{M}{\phi}{a}{\community})\limp\proves{\pair{M'}{M}}{\phi}{a}{\community}$
				\quad(proof extension, left)
		\item $\LiPded(\proves{M}{\phi}{a}{\community})\limp\proves{\pair{M}{M'}}{\phi}{a}{\community}$
				\quad(proof extension, right)
		\item $\LiPded((\proves{M}{\phi}{a}{\community})\lor\proves{M'}{\phi}{a}{\community})\limp\proves{\pair{M}{M'}}{\phi}{a}{\community}$
				\quad(proof extension)
		\item $\LiPded(\proves{\pair{M}{M}}{\phi}{a}{\community})\lequiv\proves{M}{\phi}{a}{\community}$
				\quad(proof idempotency)
		\item $\LiPded(\proves{\pair{M}{M'}}{\phi}{a}{\community})\lequiv\proves{\pair{M'}{M}}{\phi}{a}{\community}$
				\quad(proof commutativity)
	\item $\set{\knows{a}{M}\limp\knows{a}{M'}}\LiPded(\proves{\pair{M}{M'}}{\phi}{a}{\community})\lequiv\proves{M}{\phi}{a}{\community}$
	
				(neutral proof elements)
		\item $\LiPded(\proves{\pair{M}{a}}{\phi}{a}{\community})\lequiv\proves{M}{\phi}{a}{\community}$
				\quad(self-neutral proof element)
		\item $\LiPded(\proves{\pair{M}{\pair{M'}{M''}}}{\phi}{a}{\community})\lequiv\proves{\pair{\pair{M}{M'}}{M''}}{\phi}{a}{\community}$
				\quad(proof associativity)
		\item $\LiPded(\proves{\sign{M}{a}}{\phi}{a}{\community})\limp\proves{M}{\phi}{a}{\community}$
				\quad(self-signing elimination)
		\item $\LiPded((\proves{M}{\phi}{a}{\community})\lor
						\proves{b}{\phi}{a}{\community})\limp\proves{\sign{M}{b}}{\phi}{a}{\community}$
				\quad(signing introduction)
		\item $\LiPded(\proves{\sign{M}{a}}{\phi}{a}{\community})\lequiv\proves{M}{\phi}{a}{\community}$
				\quad(self-signing idempotency)
		\item When $\agents=\set{a}$ (singleton society) and $\messages\setminus\set{\Kcomb{a},\Scomb{a}}$:
			\begin{enumerate}
				\item $\LiPded\knows{a}{M}$\quad(total knowledge)
				\item $\LiPded\knows{a}{M}\lequiv\knows{a}{M'}$\quad(epistemic indifference)	
				\item $\LiPded(\proves{M}{\phi}{a}{\community})\lequiv\proves{M'}{\phi}{a}{\community}$\quad(proof indifference).	
			\end{enumerate}
	\end{enumerate}
\end{theorem}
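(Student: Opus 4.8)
The plan is to prove the knowledge-level laws (items~1--7) first by pure propositional reasoning on top of the two governing axioms --- $\knows{a}{a}$ and the [un]pairing biconditional $(\knows{a}{M}\land\knows{a}{M'})\lequiv\knows{a}{\pair{M}{M'}}$ --- and then to lift each of them to its proof-level counterpart (items~9--19) through the single transfer mechanism supplied by epistemic antitonicity. Concretely, items~1 and~2 are just the two conjuncts of the right-to-left half of [un]pairing; idempotency (3), commutativity (4) and associativity (7) hold because each side is provably equivalent to the conjunction $\knows{a}{M}\land\cdots$, which is idempotent, commutative and associative as a propositional connective; the neutral-element law (5) is the propositional tautology $((p\land q)\lequiv p)\lequiv(p\limp q)$ read through [un]pairing; and (6) specialises (5) with $M'\defeq a$ using the axiom $\knows{a}{a}$.

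Next I would derive epistemic bitonicity (item~8) as a two-sided instance of epistemic antitonicity: from the premise $\knows{a}{M}\lequiv\knows{a}{M'}$ one extracts both implications $\knows{a}{M}\limp\knows{a}{M'}$ and $\knows{a}{M'}\limp\knows{a}{M}$, feeds each into the antitonicity rule, and conjoins the two resulting proof-implications. With antitonicity and bitonicity in hand, every proof-level law reduces to an already-established knowledge-implication, provided one keeps the orientation straight: antitonicity turns $\knows{a}{X}\limp\knows{a}{Y}$ into $(\proves{Y}{\phi}{a}{\community})\limp\proves{X}{\phi}{a}{\community}$. Thus proof extension (9,10) comes from right/left projection and (11) by disjunction elimination; proof idempotency, commutativity, the neutral and self-neutral element laws and associativity (12--16) come from their knowledge counterparts through bitonicity (with (14) conditioned on its knowledge premise and (15) unconditional via item~6); and self-signing elimination (17) is antitonicity applied to the synthesis axiom $\knows{a}{M}\limp\knows{a}{\sign{M}{a}}$.

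The one place needing slightly more assembly is signing introduction (18) and self-signing idempotency (19). For (18) I would first chain the universal analysis axiom $\knows{a}{\sign{M}{b}}\limp\knows{a}{\pair{M}{b}}$ with the two projections to obtain the auxiliary theorems $\knows{a}{\sign{M}{b}}\limp\knows{a}{M}$ and $\knows{a}{\sign{M}{b}}\limp\knows{a}{b}$; antitonicity then yields $(\proves{M}{\phi}{a}{\community})\limp\proves{\sign{M}{b}}{\phi}{a}{\community}$ and $(\proves{b}{\phi}{a}{\community})\limp\proves{\sign{M}{b}}{\phi}{a}{\community}$, which combine by disjunction elimination. Item~(19) is then the conjunction of (17) with the $b\defeq a$ instance of (18).

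Finally, the singleton-society block (item~20) is the only part that departs from the antitone-transfer schema. Total knowledge (20a), $\LiPded\knows{a}{M}$, I would prove by structural induction on the well-formed message $M$ (using the induction principle noted after the grammar): the base case $M=a$ is the own-name axiom, the signature case uses synthesis, and the pair case uses the right-to-left half of [un]pairing; crucially, when $\agents=\set{a}$ there is no foreign name to obstruct the induction. Epistemic indifference (20b) is then immediate, since $\knows{a}{M}$ and $\knows{a}{M'}$ are both theorems and hence equivalent, and proof indifference (20c) follows from (20b) by bitonicity. The main obstacle, such as it is, lies not in any individual law but in keeping the orientation of the antitone transfer correct throughout and in supplying the right auxiliary knowledge-implications (especially the analysis-plus-projection chains feeding the signature laws); the inductive proof of total knowledge is the sole step requiring a genuinely different technique.
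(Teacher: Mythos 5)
Your proposal is correct and takes essentially the same approach as the paper's own proof: the knowledge-level laws (items 1--7) are derived from the [un]pairing and own-name axioms by propositional logic, the proof-level laws (items 9--19) are obtained by lifting them through epistemic antitonicity/bitonicity with exactly the orientation you describe, and total knowledge (20a) is proved by structural induction on messages with epistemic and proof indifference following by propositional logic and bitonicity. The only negligible deviation is in signing introduction (item 18), where you apply antitonicity twice to the analysis-plus-projection chains rather than, as the paper does, once to the analysis axiom combined with the already-derived proof-extension law --- an immaterial variation.
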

\begin{proof} 
	See Appendix~\ref{appendix:StructuralProofs}.
\end{proof}
\noindent
Proof extension and idempotency jointly define \emph{proof redundancy}.
For a discussion of the hypothetical cases where 
	$\agents=\set{a}$ and $\messages\setminus\set{\Kcomb{a},\Scomb{a}}$ (thought experiment), 
		see Appendix~\ref{appendix:SingletonSociety}.
Next, the 1-way $\Kcomb{}$-combinator property and 
	the following simple corollary of Theorem~\ref{theorem:SomeUsefulDeducibleStructuralLaws} jointly establish the fact that
		our agents can be viewed as combinators in the sense of Combinatory Logic (CL) 
			viewed as a (non-equational) theory of term reduction \cite{LambdaCalculusAndCombinators}.
The implicational converse of the $\Kcomb{}$-combinator property, that is,  
	$\knows{a}{M}\limp\knows{a}{\pair{M}{M'}}$, must (and does) fail 
		(\ie $\not\LiPded\knows{a}{M}\limp\knows{a}{\pair{M}{M'}}$ by consistency), since otherwise 
			agents would have arbitrary term-guessing power (as already explained), which
				would invalidate our modelling of
					the non-trivial distribution of information among agents.
\begin{corollary}[$\Scomb{}$-combinator property]\label{corollary:ScombProp}\  
	\begin{enumerate}
		\item $\LiPded\knows{a}{\pair{\pair{M}{M'}}{M''}}\lequiv\knows{a}{\pair{\pair{M}{M''}}{\pair{M'}{M''}}}$
		\item $\LiPded(\proves{\pair{\pair{M}{M'}}{M''}}{\phi}{a}{\community})\lequiv
						\proves{\pair{\pair{M}{M''}}{\pair{M'}{M''}}}{\phi}{a}{\community}$
	\end{enumerate}
\end{corollary}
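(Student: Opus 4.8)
The plan is to exploit the fact, emphasized in the discussion following Theorem~\ref{theorem:SomeUsefulDeducibleStructuralLaws}, that individual knowledge of a pair forms an idempotent commutative monoid: under $\knows{a}{\cdot}$, a pairing term is determined up to provable equivalence by the \emph{set} of atomic components it carries, with repetitions absorbed by idempotency. Both sides of part~1 carry exactly the atoms of $M$, $M'$, and $M''$ (the right-hand term lists $M''$ twice, but the repetition collapses), so I expect them to be epistemically equivalent, whereupon part~2 follows mechanically from part~1 by epistemic bitonicity.

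Concretely, for part~1 I would unfold both sides with the [un]pairing axiom of $\Gamma_{1}$, namely $(\knows{a}{M}\land\knows{a}{M'})\lequiv\knows{a}{\pair{M}{M'}}$, together with propositional substitution of equivalents. On the left, $\knows{a}{\pair{\pair{M}{M'}}{M''}}$ reduces first to $\knows{a}{\pair{M}{M'}}\land\knows{a}{M''}$ and then to $(\knows{a}{M}\land\knows{a}{M'})\land\knows{a}{M''}$. On the right, $\knows{a}{\pair{M}{\pair{M''}{\pair{M'}{M''}}}}$ reduces successively to $\knows{a}{M}\land(\knows{a}{M''}\land(\knows{a}{M'}\land\knows{a}{M''}))$. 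Both formulas are classically equivalent to $\knows{a}{M}\land\knows{a}{M'}\land\knows{a}{M''}$, the duplicated conjunct $\knows{a}{M''}$ on the right being absorbed by idempotency of $\land$, so the biconditional of part~1 is a propositional consequence and hence $\LiPded$-derivable.

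For part~2 I would instantiate the epistemic-bitonicity rule (Theorem~\ref{theorem:SomeUsefulDeducibleStructuralLaws}, item~8) with the two terms of part~1. Since part~1 supplies $\LiPded\knows{a}{\pair{\pair{M}{M'}}{M''}}\lequiv\knows{a}{\pair{M}{\pair{M''}{\pair{M'}{M''}}}}$, the premise of epistemic bitonicity already belongs to $\LiP$, and its conclusion yields precisely $(\proves{\pair{\pair{M}{M'}}{M''}}{\phi}{a}{\community})\lequiv\proves{\pair{M}{\pair{M''}{\pair{M'}{M''}}}}{\phi}{a}{\community}$.

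I do not anticipate a serious obstacle: as the surrounding text already signals for the $\Kcomb$-combinator property, the whole result is a bookkeeping consequence of the monoid laws and epistemic bitonicity. The single point requiring a little care is recognizing that the second occurrence of $M''$ in the right-hand term is harmless; this is exactly where idempotency (of pairing under $\knows{a}{\cdot}$, equivalently of $\land$ after unfolding) does the work, and it is what lets a \emph{linear} term on the left match a \emph{duplicating} term on the right, reflecting the copying behaviour characteristic of the $\Scomb$-combinator.
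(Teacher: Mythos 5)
Your proposal is correct and matches the paper's own proof: the paper likewise obtains part~1 from the monoid laws of pairing under $\knows{a}{\cdot}$ (idempotency to duplicate $M''$, plus commutativity and associativity) and part~2 from part~1 via epistemic bitonicity, exactly as you do. The only cosmetic difference is that you inline the derivation of part~1 by unfolding both terms with the [un]pairing axiom down to conjunctions and absorbing the duplicate by idempotency of $\land$, whereas the paper simply cites the already-derived structural laws of Theorem~\ref{theorem:SomeUsefulDeducibleStructuralLaws}, which were themselves proved by precisely that unfolding.
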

\begin{proof}
	(1) follows from the idempotency (copy $M''$), commutativity, and associativity of pairing; and
	(2) follows from (1) and epistemic bitonicity.
\end{proof}
Note that thanks to the modular set-up of LiP,
	epistemic antitonicity would equally easily yield the application-specific modal laws for:
		\begin{itemize}
			\item hashing: $(\proves{\hash{M}}{\phi}{a}{\community})\limp\proves{M}{\phi}{a}{\community}$
			\item encryption:
					$(\proves{\encrypt{M}{M'}}{\phi}{a}{\community})\limp\proves{\pair{M}{M'}}{\phi}{a}{\community}$
			\item decryption: 
					$(\proves{M}{\phi}{a}{\community})\limp\proves{\pair{\encrypt{M}{M'}}{M'}}{\phi}{a}{\community}$
		\end{itemize}

We are continuing to present also some useful (further-used), deducible \emph{logical} laws of LiP.
Here, 
	``logical'' means 
	``not structural'' in the previously defined sense.
Also these laws are enumerated in an order that respects their respective proof prerequisites.
Our resulting library of structural and logical laws is, analogously to a programming library, 
	very useful when it comes to actually proving 
		more applied theorems of---rather than meta-theorems about---our logic, and 
	gives technical insights into its inner workings (consequences of axiom choices).
\begin{theorem}[Some useful deducible logical laws]\label{theorem:SomeUsefulDeducibleLogicalLaws}\  
	\begin{enumerate}
		\item $\LiPded(\proves{M}{(\phi\limp\phi')}{a}{\community})\limp
				((\proves{M'}{\phi}{a}{\community})\limp\proves{\pair{M}{M'}}{\phi'}{a}{\community})$\\
				\quad(\colorbox[gray]{0.75}{generalised Kripke-law, GK})
		\item $\set{\phi\limp\phi'}\LiPded(\proves{M}{\phi}{a}{\community})\limp
				\proves{M}{\phi'}{a}{\community}$
				\quad(regularity, R)
		\item $\set{\phi\lequiv\phi'}\LiPded(\proves{M}{\phi}{a}{\community})\lequiv
				\proves{M}{\phi'}{a}{\community}$\quad(R \emph{bis})
		\item $\set{\knows{a}{M}\limp\knows{a}{M'},\phi\limp\phi'}\LiPded
				(\proves{M'}{\phi}{a}{\community})\limp\proves{M}{\phi'}{a}{\community}$\quad(epistemic R, ER)
		\item $\set{\knows{a}{M}\lequiv\knows{a}{M'},\phi\lequiv\phi'}\LiPded
				(\proves{M'}{\phi}{a}{\community})\lequiv\proves{M}{\phi'}{a}{\community}$\quad(ER \emph{bis})
		\item $\LiPded((\proves{M}{\phi}{a}{\community})\land\proves{M'}{\phi'}{a}{\community})\limp
						\proves{\pair{M}{M'}}{(\phi\land\phi')}{a}{\community}$
				\quad(proof conjunctions)
		\item $\LiPded((\proves{M}{\phi}{a}{\community})\land
						\proves{M}{\phi'}{a}{\community})\lequiv\proves{M}{(\phi\land\phi')}{a}{\community}$
				\quad(proof conjunctions \emph{bis})
		\item $\LiPded((\proves{M}{\phi}{a}{\community})\lor\proves{M'}{\phi'}{a}{\community})\limp
						\proves{\pair{M}{M'}}{(\phi\lor\phi')}{a}{\community}$
				\quad(proof disjunctions)
		\item $\LiPded((\proves{M}{\phi}{a}{\community})\lor\proves{M}{\phi'}{a}{\community})\limp
						\proves{M}{(\phi\lor\phi')}{a}{\community}$
				\quad(proof disjunctions \emph{bis})
		\item $\LiPded\proves{M}{\true}{a}{\community}$
				\quad(anything can prove tautological truth)
		\item $\LiPded(\proves{a}{\phi}{a}{\community})\limp\phi$
				\quad(self-truthfulness)
		\item $\phi\LiPdedBis\proves{a}{\phi}{a}{\community}$
				\quad(self-truthfulness \emph{bis})
		\item $\LiPded\neg(\proves{M}{\false}{a}{\community})$
				\quad(nothing can prove falsehood)	
		\item $\LiPded\bigwedge_{b\in\community\cup\set{a}}
				(\proves{\sign{M}{a}}{\knows{a}{M}}{b}{\community\cup\set{a}})$
					\quad(\colorbox[gray]{0.75}{authentic knowledge})
		\item $\LiPded\proves{M}{\knows{a}{M}}{a}{\emptyset}$
				\quad(self-knowledge)
		\item $\knows{a}{M}\limp\phi\LiPdedBis
							\bigwedge_{b\in\community\cup\set{a}}
								(\proves{\sign{M}{a}}{\phi}{b}{\community\cup\set{a}})$
				\quad(\colorbox[gray]{0.75}{authentic epistemic N, AEN})
		\item $\knows{a}{M}\limp\knows{a}{M'}\LiPdedBis
				\bigwedge_{b\in\community\cup\set{a}}(\proves{\sign{M}{a}}{\knows{a}{M'}}{b}{\community\cup\set{a}})$
				\quad(AEN \emph{bis}) 
		\item $\knows{a}{M}\limp\phi\LiPdedBis\proves{M}{\phi}{a}{\emptyset}$
				\quad(self-epistemic N, SEN)
		\item $\knows{a}{M}\limp\knows{a}{M'}\LiPdedBis\proves{M}{\knows{a}{M'}}{a}{\emptyset}$
				\quad(SEN \emph{bis})
		\item $\LiPded\knows{a}{\sign{M}{b}}\limp\knows{b}{M}$\quad(\colorbox[gray]{0.75}{message attribution})
		\item $\LiPded(\proves{M}{\phi}{a}{\community})\limp
				\bigwedge_{b\in\community\cup\set{a}}(\proves{\sign{M}{a}}{\phi}{b}{\community\cup\set{a}})$
				\quad(simple peer review)
		\item $\LiPded(\proves{M}{\phi}{a}{\community\cup\community'})\limp
				((\proves{M}{\phi}{a}{\community})\land\proves{M}{\phi}{a}{\community'})$
				\quad(group decomposition \emph{bis})
		\item $\LiPded(\proves{M}{\phi}{a}{\community\cup\set{a}})\lequiv(\proves{M}{\phi}{a}{\community})$
				\quad(self-neutral group element)
		\item $\LiPded\proves{M}{((\proves{M}{\phi}{a}{\community})\limp\phi)}{a}{\community}$
				\quad(self-proof of truthfulness)
		\item $\LiPded\proves{M}{(\neg(\proves{M}{\false}{a}{\community}))}{a}{\community}$
				\quad(self-proof of proof consistency)
		\item $\LiPded(\proves{M}{\phi}{a}{\community})\limp
				\proves{M}{(\bigwedge_{b\in\community\cup\set{a}}(
					\proves{\sign{M}{a}}{\phi}{b}{\community\cup\set{a}}))}{a}{\community}$\\
				\quad(simple peer review \emph{bis})
		\item $\LiPded(\proves{M}{(\proves{M}{\phi}{a}{\community})}{a}{\community})\lequiv
				\proves{M}{\phi}{a}{\community}$
				\quad(modal idempotency)
		\item When $\agents=\set{a}$ (singleton society) and $\messages\setminus\set{\Kcomb{a},\Scomb{a}}$,
				\begin{enumerate} 
					\item $\LiPded(\proves{M}{\phi}{a}{\community})\limp\phi$\quad(truthfulness)
					\item $\phi\LiPdedBis\proves{M}{\phi}{a}{\community}$
				\quad(truthfulness \emph{bis})
				\end{enumerate} 
\end{enumerate}
\end{theorem}
\begin{proof}
	See Appendix~\ref{appendix:LogicalProofs}.	
\end{proof}
\noindent
Kripke's laws (K and GK) and the law of modal idempotency are discussed in Section~\ref{section:LogicalLaws}.
The key to their validity is that LiP-agents 
	are resource-unbounded (though are unable to guess) and 
	act themselves as proof checkers (no need for LP's `$!$').
Notice that
	regularity and epistemic antitonicity resemble each other in that
		both laws relate an implicational premise with an implicational conclusion about proof modalities, but
			while regularity relates the modality operands monotonically,
				epistemic antitonicity relates the proof parameters antitonically.
Both laws are combined in the law of epistemic regularity.
The following remark flags an important observation.
\begin{remark}[BHK- or \emph{realisability} interpretation of LiP-terms]\label{remark:BHK}
	The laws of proof conjunctions, proof disjunctions, and GK (implication) mean that
		the proof terms of LiP satisfy the construction interpretation of 
			(propositional) Intuitionistic Logic (IL) given by Brower-Heyting-Kolmogorow 
				(the so-called \emph{BHK-} or \emph{realisability} interpretation of IL)   
					with the notable (but semantically inessential \cite[Page~552 and 553]{Fine:TruthMakerSemanticsIL}) 
						difference that unordered (\cf Corollary~\ref{corollary:ProofEquality}) pairing 
							suffices as constructor (no need for choice constructors for disjunction).
								(Also, recall that negation is definable in terms of implication and falsehood.)
\end{remark}
\noindent
Given the classicality of our interactive proof terms, 
	their constructiveness in the sense of BHK-realisability 
		is a remarkable feature of them.
Thus, BHK-realisability does not characterise IL, since 
	not only IL satisfies it.

\begin{fact}[Normality]\label{fact:Normality}
		LiP is a normal modal logic.
\end{fact}
\begin{proof}
	By
		Kripke's law,
		\emph{modus ponens,}
		necessitation, and
		substitution invariance (\cf Proposition~\ref{proposition:Hilbert}).
\end{proof}
\noindent
In contrast, LP is, technically speaking, not a normal modal logic \cite[Section~5]{ModalLogicInMathematics}.

The following theorem asserts an important correspondence between classical propositional logic (PL) and LiP.
\begin{theorem}[Internalisation/Externalisation Property]\label{theorem:EXIN}
	For all formulas $\varphi$ of the language of classical propositional logic PL 
		with atomic propositions $\knows{a}{M}$:
	\begin{enumerate}
		\item $\set{\knows{a}{M}}\vdash_{\mathrm{PL}}\varphi$ if and only if 
				$\LiPded\bigwedge_{b\in\community\cup\set{a}}(\proves{\sign{M}{a}}{\varphi}{b}{\community\cup\set{a}})$\,;
		\item $\set{\knows{a}{M}}\vdash_{\mathrm{PL}}\varphi$ if and only if 
				$\LiPded\proves{M}{\varphi}{a}{\emptyset}$.
	\end{enumerate}
\end{theorem}
\begin{proof}
	Let $\varphi$ (and $\varphi'$ and $\varphi''$) designate a formula of the language of PL.
	Then, 
		for (1), apply 
			the Deduction Theorem (DT) of PL 
				($\set{\varphi'}\vdash_{\mathrm{PL}}\varphi$ iff $\vdash_{\mathrm{PL}}\varphi'\limp\varphi$) and 
			authentic epistemic necessitation (AEN) of LiP,
				both in both ways, and
			the definitional fact that LiP contains PL 
				($\vdash_{\mathrm{PL}}\varphi''$ iff $\LiPded\varphi''$); and
		for (2), proceed as with (1) but  
			apply self-epistemic necessitation (SEN) instead of AEN.
\end{proof}
The following corollary may be viewed as a corroboration of Remark~\ref{remark:BHK}.
\begin{corollary}[Internalisation Property of IL into LiP]\label{corollary:ILintoLiP}
	For all $\varphi$ of the language of intuitionistic propositional logic IL
		with atomic propositions $\knows{a}{M}$:\footnote{%
			The languages of IL and PL are the same, but IL and PL (the logics) are of course not.}
	\begin{enumerate}
		\item if $\set{\knows{a}{M}}\vdash_{\mathrm{IL}}\varphi$ then 
					$\LiPded\bigwedge_{b\in\community\cup\set{a}}
						(\proves{\sign{M}{a}}{\varphi}{b}{\community})$\,;
		\item if $\set{\knows{a}{M}}\vdash_{\mathrm{IL}}\varphi$ then
					$\LiPded\proves{M}{\varphi}{a}{\emptyset}$.
	\end{enumerate}
\end{corollary}
\begin{proof}
	Proceed as for the proof of Theorem~\ref{theorem:EXIN}, but   
		apply AEN only from left to right, and 
		additionally apply the fact that $\mathrm{IL}\subsetneq\mathrm{PL}$.
\end{proof}
\noindent
This corollary can be partially (for the communication medium $\CM$ only) strengthened in the sense of 
	Theorem~\ref{theorem:EXIN}.2 in the (intuitionistic) Logic of intuitionistic interactive Proofs (LIiP)
		\cite{LIiP:ACMTOCL}, so that for all $\varphi$ of the language of IL with atomic propositions $\knows{\CM}{M}$,
			$$\text{$\set{\knows{\CM}{M}}\vdash_{\mathrm{IL}}\varphi$ if and only if 
				$\vdash_{\mathrm{LIiP}} M\pm_{\CM}\varphi$.}$$

\subsection{Semantically}\label{section:Semantically}
\subsubsection{Concretely}\label{section:Concretely}
We now present 
	the concretely constructed semantics as well as 
	the standard abstract semantic interface for LiP, and 
prove the axiomatic adequacy of the proof system with respect to this interface.
The core ingredient of the concrete semantics of LiP are so-called \emph{input histories,} which
	were introduced in \cite{KramerIMLA2013}. 
Input histories 
	are finite words of input events and 
	serve as concrete states $s\in\states$ in 
		the state space $\states$, on which 
			the concrete and abstract accessibility relation 
				${\pAccess{M}{a}{\community}}\subseteq\states\times\states$ and
				${\access{M}{a}{\community}}\subseteq\states\times\states$ for LiP is defined, respectively.
\begin{definition}[Semantic ingredients]\label{definition:SemanticIngredients}
For the knowledge-constructive model-theoretic study of LiP,
	let 
\begin{itemize}
	\item $\states\ni s\bnfeq\mathtt{0}\bnfor \gscc{a}{M}(s)$ 
		designate the concrete state space $\states$ of
			\emph{input histories} $s$ constrained so that 
				only $a$ can generate $a$'s signature (\textbf{\emph{signature unforgeability}}), that is, 
				for all $s\in\states$, $a,b\in\agents$, and $\sign{M}{a}\in\clo{b}{s}(\emptyset)$,
					there are $s',s''\in\states$ such that 
						$s=s'\star s''$ and 
						for all $c\in\agents$, if $\sign{M}{a}\in\clo{c}{s'}(\emptyset)$ then $c=a$, where 
			$\mathtt{0}$ designates the empty input history  
				(\ie a zero data point, \eg an initial state) and 
			$\gscc{a}{M}$ reads as ``agent $a$ receives message $M$''  
				(from some other, oracle agent or from herself), 
			$\star:(\states\times\states)\to\states$ monoidal concatenation on $\states$
				(with neutral element $\mathtt{0}$), and
			$\clo{}{}$ is defined below; 
	\item $\pi_{a}:\states\rightarrow\states$ designate (local) \emph{state projection on $a$'s view} such that 
			\begin{align*}
				\pi_{a}(\mathtt{0}) &\defeq \mathtt{0}\\
				\pi_{a}(\gscc{b}{M}(s)) &\defeq 
					\begin{cases}
						\gscc{b}{M}(\pi_{a}(s)) & \text{if $a\in\set{b,\CM}$, and}\\
						\pi_{a}(s) & \text{otherwise;}
					\end{cases}
			\end{align*}
			(The communication medium $\CM$ sees any agent's $b$ [including its own] input events, that is, 
				$\CM$ has a global view on the current global state $s$.)
	\item $\msgs{}:\states\rightarrow\powerset{\messages}$ designate \emph{raw-data extraction} such that 
			\begin{align*}
				\msgs{}(\mathtt{0}) &\defeq \emptyset\\
				\msgs{}(\gscc{a}{M}(s)) &\defeq \msgs{}(s)\cup\set{M}\,;
			\end{align*}
	\item $\msgs{a}\defeq\msgs{}\circ\pi_{a}$ designate (local) \emph{raw-data extraction by $a$}
			(the set $\msgs{a}(s)$ can be viewed as $a$'s \emph{data base} in $s$\label{page:RawData});
		\item $\clo{a}{s}:\powerset{\messages}\rightarrow\powerset{\messages}$ designate a \emph{data-mining operator} such that \label{page:DataMining}
			$\clo{a}{s}(\data)\defeq\clo{a}{}(\msgs{a}(s)\cup\data)\defeq\bigcup_{n\in\mathbb{N}}\clo{a}{n}(\msgs{a}(s)\cup\data)$, where for all $\data\subseteq\messages$:
				\begin{eqnarray*}
					\clo{a}{0}(\data) &\defeq& \set{a}\cup\data\\
					\clo{a}{n+1}(\data) &\defeq& 
						\begin{array}[t]{@{}l@{}}
							\clo{a}{n}(\data)\ \cup\\
							\setst{\pair{M}{M'}}{\set{M,M'}\subseteq\clo{a}{n}(\data)}\cup\quad\text{(pairing)}\\
							\setst{M, M'}{\pair{M}{M'}\in\clo{a}{n}(\data)}\cup\quad\text{(unpairing)}\\
							\setst{\sign{M}{a}}{M\in\clo{a}{n}(\data)}\cup\quad\text{(\emph{personal} signature \emph{synthesis})}\\
							\setst{\pair{M}{b}}{\sign{M}{b}\in\clo{a}{n}(\data)}\quad\text{(\emph{universal} signature \emph{analysis})}
						\end{array}
				\end{eqnarray*}
				($\clo{a}{s}(\emptyset)$ can be viewed as $a$'s \emph{individual-knowledge base} in $s$. 
					For application-specific terms such as encryption, 
					we would have to add here the closure conditions corresponding to their characteristic term axioms.)
	\item ${\preorder{a}}\subseteq\states\times\states$ designate a \emph{data preorder} on states such that
		for all $s,s'\in\states$,
			$s\preorder{a}s'$ :iff $\clo{a}{s}(\emptyset)\subseteq\clo{a}{s'}(\emptyset)$\,;
			
			(The reader is invited to consider the effects of encryption on closure here.)
	\item ${\preorder{\community}}\defeq(\bigcup_{a\in\community}{\preorder{a}})^{*}$, where 
			`$^{*}$' designates the Kleene (\ie the reflexive transitive) closure operation on binary relations;
	\item ${\indist{a}{}{}}\defeq{\preorder{a}\cap(\preorder{a})^{-1}}$ designate an equivalence relation of 
		\emph{state indistinguishability}, where
			 `$^{-1}$' designates the converse operation on binary relations; 
	\item ${\pAccess{M}{a}{\community}}\subseteq\states\times\states$ designate 
		our \emph{concretely constructed accessibility relation}---short, \emph{concrete accessibility}---for 
			the proof modality so that for all $s,s'\in\states$, 
			\label{page:ProofAccessibility}
			\begin{eqnarray}
				s\pAccess{M}{a}{\community}s' &\defiff&
				s'\in\hspace{-5ex}\bigcup_{\scriptsize
					\mbox{
					$\begin{array}{@{}c@{}}
						\text{$s\preorder{\community\cup\set{a}}\check{s}$ and}\\[0.5\jot] 
						M\in\clo{a}{\check{s}}(\emptyset) 
					\end{array}$}
					}\hspace{-5ex}[\check{s}]_{\indist{a}{}{}}\\\notag
				&\text{(iff}& 
					\text{there is $\check{s}\in\states$ \st 
							$s\preorder{\community\cup\set{a}}\check{s}$ and
							$M\in\clo{a}{\check{s}}(\emptyset)$ and
							$\indist{a}{\check{s}}{s'}$)}
			\end{eqnarray}
					
					(See Section~\ref{section:EpistemicExplication} for an extensive
							explication of this elementary construction.)
\end{itemize}
\end{definition}

\begin{proposition}[Data closure]
	$\clo{a}{}:\powerset{\messages}\rightarrow\powerset{\messages}$ is a \emph{compact closure operator:}
		\begin{enumerate}
			\item $\data\subseteq\clo{a}{}(\data)$\quad(extensivity)
			\item if $\data\subseteq\data'$ then $\clo{a}{}(\data)\subseteq\clo{a}{}(\data')$\quad(monotonicity)
			\item $\clo{a}{}(\clo{a}{}(\data))\subseteq\clo{a}{}(\data)$\quad(idempotency)
			\item $\clo{a}{}(\data)=\bigcup_{\data'\in\powersetFinite{\data}}\clo{a}{}(\data')$\quad(compactness)
		\end{enumerate}
\end{proposition}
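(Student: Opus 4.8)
The plan is to verify the four properties directly from the inductive definition of $\clo{a}{}$, exactly as was done for $\Clo{}{}$ in Proposition~\ref{proposition:Hilbert}, exploiting two structural features of the definition. First, the approximating chain is nondecreasing, $\clo{a}{n}(\data)\subseteq\clo{a}{n+1}(\data)$, which is immediate since each $\clo{a}{n+1}(\data)$ is defined as a union having $\clo{a}{n}(\data)$ as a summand. Second, each generating clause (pairing, unpairing, personal signature synthesis, universal signature analysis) is a \emph{finitary} and \emph{monotone} inference rule, with at most two premises. Extensivity is then immediate: $\data\subseteq\set{a}\cup\data=\clo{a}{0}(\data)\subseteq\clo{a}{}(\data)$.

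For monotonicity, assuming $\data\subseteq\data'$, I would show $\clo{a}{n}(\data)\subseteq\clo{a}{n}(\data')$ by induction on $n$. The base case holds since $\set{a}\cup\data\subseteq\set{a}\cup\data'$, and the inductive step holds because the premises of each clause, being required to lie in $\clo{a}{n}(\data)$, also lie in $\clo{a}{n}(\data')$ by the induction hypothesis, so the same conclusions are generated at stage $n+1$ over $\data'$. Taking unions over $n$ yields $\clo{a}{}(\data)\subseteq\clo{a}{}(\data')$.

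Idempotency is the main obstacle and the only property requiring a genuine fixpoint argument; it suffices to prove the nontrivial inclusion $\clo{a}{}(\clo{a}{}(\data))\subseteq\clo{a}{}(\data)$. The key lemma is that $\mathcal{E}\defeq\clo{a}{}(\data)$ is already \emph{closed}, meaning $a\in\mathcal{E}$ and $\mathcal{E}$ is closed under all four clauses. Membership of $a$ holds since $a\in\clo{a}{0}(\data)$; closure under, say, pairing follows from the nondecreasing-chain property: if $M,M'\in\mathcal{E}$ then $M\in\clo{a}{m}(\data)$ and $M'\in\clo{a}{m'}(\data)$ for some $m,m'$, hence both lie in $\clo{a}{\max(m,m')}(\data)$ and $\pair{M}{M'}\in\clo{a}{\max(m,m')+1}(\data)\subseteq\mathcal{E}$; unpairing and the two signature clauses are analogous (and simpler, each having a single premise). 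I would then show by a second induction on $n$ that any closed set $\mathcal{E}$ satisfies $\clo{a}{n}(\mathcal{E})=\mathcal{E}$, whence $\clo{a}{}(\mathcal{E})=\mathcal{E}$. Instantiating $\mathcal{E}=\clo{a}{}(\data)$ gives $\clo{a}{}(\clo{a}{}(\data))=\clo{a}{}(\data)$, and in particular the desired inclusion.

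Finally, for compactness the inclusion $\supseteq$ is immediate from monotonicity, since $\clo{a}{}(\data')\subseteq\clo{a}{}(\data)$ for every finite $\data'\subseteq\data$. For $\subseteq$, I would take $M\in\clo{a}{}(\data)$, fix $n$ with $M\in\clo{a}{n}(\data)$, and argue by induction on $n$ that $M\in\clo{a}{}(\data_{0})$ for some finite $\data_{0}\subseteq\data$: at stage $0$ either $M=a$ (take $\data_{0}=\emptyset$) or $M\in\data$ (take $\data_{0}=\set{M}$), and at stage $n+1$ the finitary character of each clause lets me take $\data_{0}$ to be the union of the finitely many finite witnesses supplied by the induction hypothesis for the (at most two) premises. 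This confirms that $\clo{a}{}$ is a compact closure operator.
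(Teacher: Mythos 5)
Your proof is correct and takes essentially the same approach as the paper: the paper disposes of this proposition with the one-line justification ``by inspection of the inductive definition of $\clo{a}{}$,'' and your argument is precisely that inspection carried out in full (nondecreasing approximation chain, finitary monotone clauses, the closed-set lemma for idempotency, finite witnesses for compactness). Nothing in your proposal deviates from what that inspection requires, so it stands as a valid elaboration of the paper's proof.
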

\begin{proof}
	By inspection of the inductive definition of $\clo{a}{}$.
\end{proof}
\noindent
The operator $\clo{a}{}$ induces 
	a relation ${\derives{a}{}{}}\subseteq\powerset{\messages}\times\messages$ of \emph{data derivation} such that
			$$\text{$\derives{a}{M}{\data}$ :iff $M\in\clo{a}{}(\data)$.}$$
Hence, an agent $a$ can be viewed as a \emph{data miner} who 
			mines the data $\data$ by means of the \cite[\emph{association}]{DataMining} \emph{rules} for pairing and signing (and possibly other, application-specific constructors) that
				define the closure operator $\clo{a}{}$.

\begin{proposition}[Data derivation]\label{proposition:PropertiesOfDerivability}\  
	\begin{description}
		\item[Cut] 
					If $\derives{a}{M}{\data}$ and $\derives{a}{M'}{\set{M}}$ 
					then $\derives{a}{M'}{\data}$.
		\item[Compactness] 
					If $\derives{a}{M}{\data}$ 
					then there is a finite $\data'\subseteq\data$ such that 
					$\derives{a}{M}{\data'}$. 
		\item[Complexity] For all 
			finite $\data\subseteq\messages$,
			``\;$\derives{a}{M}{\data}$'' is decidable in deterministic polynomial time in 
				the size of $\data$ and $M$.
		\item[Connection to Scott information systems] Let for all $a\in\agents$, $s\in\states$, and $\data\subseteq\messages$,
				$$\mathcal{C}_{a}^{s}(\data) \defeq \setst{\data'\subseteq\data}{\clo{a}{s}(\data')=\data'}.$$
			Further, let 
				$$\mathit{Con}_{a}^{s} \defeq \bigcup_{\data\in\mathcal{C}_{a}^{s}(\messages)}\powersetFinite{\data}.$$
			Then, $$\langle\messages,\mathit{Con}_{a}^{s},\derives{a}{}{}\rangle$$ is
			a \emph{Scott information system,} 
				that is,  for all $M\in\messages$, $\data\in\mathit{Con}_{a}^{s}$, and $\data'\subseteq\messages$:
								\begin{enumerate}
									\item\label{corollary:closure:one:three:one} $\set{M}\in\mathit{Con}_{a}^{s}$
									\item\label{corollary:closure:one:three:two} if $M\in\data$ then $\derives{a}{M}{\data}$
									\item\label{corollary:closure:one:three:three} if $\data'\subseteq\data$ then $\data'\in\mathit{Con}_{a}^{s}$
									\item\label{corollary:closure:one:three:four} if $\derives{a}{M}{\data}$ then $\data\cup\set{M}\in\mathit{Con}_{a}^{s}$
									\item\label{corollary:closure:one:three:five} if $\data'\in\mathit{Con}_{a}^{s}$ and 
												$\derives{a}{\data'}{\data}$ 
												and $\derives{a}{M}{\data'}$
												then $\derives{a}{M}{\data}$,
												where $\derives{a}{\data'}{\data}$ :iff for all $M'\in\data'$, $\derives{a}{M'}{\data}$.
								\end{enumerate}
						(Message terms are \emph{information tokens} in the sense of Dana Scott 
							\cite[Chapter~9]{DaveyPriestley}.)
	\end{description}
\end{proposition}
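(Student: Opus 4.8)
The plan is to reduce every clause to the abstract properties of the compact closure operator $\clo{a}{}$ established in the preceding proposition, via the defining equivalence $\derives{a}{M}{\data}$ iff $M\in\clo{a}{}(\data)$. Under this translation the named property \textbf{Cut} and the Scott-transitivity clause (item~\ref{corollary:closure:one:three:five}) are one and the same fact about closure operators: from $\derives{a}{M}{\data}$ we get $\set{M}\subseteq\clo{a}{}(\data)$, whence monotonicity gives $\clo{a}{}(\set{M})\subseteq\clo{a}{}(\clo{a}{}(\data))$ and idempotency collapses the right-hand side to $\clo{a}{}(\data)$; so $\derives{a}{M'}{\set{M}}$, \ie $M'\in\clo{a}{}(\set{M})$, yields $M'\in\clo{a}{}(\data)$. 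The same monotonicity-then-idempotency step, applied to the hypothesis $\data'\subseteq\clo{a}{}(\data)$ (which is exactly what $\derives{a}{\data'}{\data}$ unfolds to), proves item~\ref{corollary:closure:one:three:five}. \textbf{Compactness} is literally the compactness clause of $\clo{a}{}$: $M\in\clo{a}{}(\data)=\bigcup_{\data'\in\powersetFinite{\data}}\clo{a}{}(\data')$ produces the required finite $\data'$.

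For the Scott information system I would first compute $\mathit{Con}_{a}^{s}$ explicitly. Since $\clo{a}{s}$ maps $\powerset{\messages}$ into $\powerset{\messages}$ and is extensive, $\clo{a}{s}(\messages)=\messages$, so $\messages$ itself is $\clo{a}{s}$-closed and hence $\messages\in\mathcal{C}_{a}^{s}(\messages)$; therefore $\mathit{Con}_{a}^{s}=\powersetFinite{\messages}$, the collection of \emph{all} finite subsets of $\messages$. (Intuitively: in this message algebra there is no notion of contradiction, so every finite set of tokens is consistent.) With this, items~\ref{corollary:closure:one:three:one}, \ref{corollary:closure:one:three:three} and \ref{corollary:closure:one:three:four} become pure finiteness bookkeeping---$\set{M}$, any subset of a finite set, and $\data\cup\set{M}$ for finite $\data$ are all finite---item~\ref{corollary:closure:one:three:two} is exactly extensivity of $\clo{a}{}$, and item~\ref{corollary:closure:one:three:five} is the closure-operator transitivity already used for \textbf{Cut}. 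Thus the whole clause records nothing more than that $\derives{a}{}{}$ is the compact, transitive entailment induced by a compact closure operator over a contradiction-free token set.

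The one genuinely non-trivial clause is \textbf{Complexity}, and this is where the work lies. The obstacle is that the generating rules of $\clo{a}{}$ include the \emph{synthetic} rules (pairing $\pair{M}{M'}$ and personal signing $\sign{M}{a}$), so $\clo{a}{}(\data)$ is infinite and a naive fixpoint does not terminate. The plan is a standard locality (normal-derivation) argument: let $S\defeq\mathrm{st}(\set{a}\cup\data\cup\set{M})$ be the finite, subterm-closed set of subterms of the data, the goal, and the constant $a$, whose cardinality is linear in the combined size of $\data$ and $M$. I would prove the key lemma that $\derives{a}{M}{\data}$ holds iff $M$ lies in the least subset of $S$ that contains $\set{a}\cup\data$ and is closed under the four rules \emph{restricted to conclusions lying in $S$}; concretely, any derivation can be rewritten so that no term outside $S$ ever occurs, by cancelling each pairing or personal signing that is immediately undone by a later unpairing or signature analysis and observing that every surviving synthesized term is a subterm of the data or the goal. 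Granting the lemma, the restricted least set is computed by saturation: each round adds at least one new member of $S$, so there are at most $|S|$ rounds, and each rule instance is checkable in polynomial time; membership of $M$ is therefore decided in deterministic polynomial time in the size of $\data$ and $M$. The main difficulty, and the only place where the argument is more than routine, is establishing this locality lemma, \ie that confining all intermediate terms to $\mathrm{st}(\set{a}\cup\data\cup\set{M})$ loses no derivations.
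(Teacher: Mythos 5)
Your proposal is correct, but it reaches the result by a genuinely different route than the paper in two places. Where the paper proves \textbf{Cut} ``by inspection of the defining cases of $\clo{a}{}$'' and then proves the Scott-transitivity clause (item~5) by exploiting finiteness of $\data'$ --- encoding it as a message pair of pairs and invoking Cut --- you derive both facts at once from the abstract closure-operator algebra of the preceding proposition: from $\data'\subseteq\clo{a}{}(\data)$, monotonicity and idempotency give $\clo{a}{}(\data')\subseteq\clo{a}{}(\clo{a}{}(\data))=\clo{a}{}(\data)$. This is cleaner and strictly more general (no finiteness of $\data'$ is needed, no encoding trick), at the price of leaning on the already-established operator properties rather than the inductive definition; your explicit identification $\mathit{Con}_{a}^{s}=\powersetFinite{\messages}$, via $\clo{a}{s}(\messages)=\messages$, likewise turns items 1, 3 and 4 into uniform finiteness bookkeeping, which is in essence what the paper does clause by clause (its proof of item 1 also invokes $\messages\in\mathcal{C}_{a}^{s}(\messages)$). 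The second divergence is \textbf{Complexity}: the paper discharges it by citation to the intruder-deduction literature for richer term languages, whereas you sketch a self-contained decision procedure via a locality lemma (all intermediate terms confined to subterms of $\set{a}\cup\data\cup\set{M}$) plus polynomial saturation. Your lemma is true here, and for a pleasant structural reason worth noting: the analysis rules (unpairing, signature analysis) need no auxiliary premises, so every derivation normalizes into an analysis phase producing only subterms of $\set{a}\cup\data$ followed by a synthesis phase producing only subterms of $M$; pairing-then-unpairing and signing-then-analyzing cancel. You state rather than carry out this normalization, so your complexity argument remains a sketch --- but it is no less complete than the paper's citation, and it buys self-containedness and an explicit algorithm.
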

\begin{proof}
	The cut and the compactness property follow by 
		inspection of the defining cases of $\clo{a}{}$.
	The complexity follows from the complexity of message derivation for 
		even more complex message languages 
			(\eg including encryption and other constructors \cite{TGJ} and \cite{BaskarJamSuresh2010}).
	Regarding the connection to Scott information systems:
		Property~1 follows from the fact that $\set{M}\in\powersetFinite{\messages}$ and $\messages\in\mathcal{C}_{a}^{s}(\messages)$,
		Property~2 from the definition of $\derives{a}{}{}$,
		Property~3 from the powerset construction,
		Property~4 from the definition of $\derives{a}{}{}$, and
		Property~5 jointly from 
				the finiteness of $\data'$ (which can be transformed into a message pair [of pairs]) and 
					the cut property of $\derives{a}{}{}$.
\end{proof}

\begin{proposition}[Concrete accessibility]\label{proposition:ConcreteAccessibility}\ 
\begin{enumerate}
	\item	If for all $b\in\community\cup\set{a}$, $s\pAccess{\sign{M}{a}}{b}{\community\cup\set{a}}s'$ 
	
			then $M\in\clo{a}{s'}(\emptyset)$\quad(signature property).
	\item If $M\in\clo{a}{s}(\emptyset)$
			then $s\pAccess{M}{a}{\community}s$\quad(conditional reflexivity).
	\item There is $s'\in\states$ such that 
			$s\pAccess{M}{a}{\community}s'$\quad(seriality).
	\item For all $b\in\community\cup\set{a}$, 
			$({\pAccess{\sign{M}{a}}{b}{\community\cup\set{a}}}\circ{\pAccess{M}{a}{\community}})\subseteq{\pAccess{M}{a}{\community}}$\quad(communal transitivity).
	\item If $\community\subseteq\community'$ 
			then ${\pAccess{M}{a}{\community}}\subseteq{\pAccess{M}{a}{\community'}}$\quad(communal monotonicity).
	\item If $M\leq_{a}M'$ then ${\pAccess{M}{a}{\community}}\subseteq{\pAccess{M'}{a}{\community}}$\quad(epistemic persistency).
\end{enumerate}
\end{proposition}
\begin{proof}
	For (1), suppose that for all $b\in\community\cup\set{a}$, 
		$s\pAccess{\sign{M}{a}}{b}{\community\cup\set{a}}s'$.
	Hence for all $b\in\community\cup\set{a}$,
		$\sign{M}{a}\in\clo{b}{s'}(\emptyset)$, by definition of $\pAccess{\sign{M}{a}}{b}{\community\cup\set{a}}$.
	Hence,
		$\sign{M}{a}\in\clo{a}{s'}(\emptyset)$ by signature unforgeability, and then
		$M\in\clo{a}{s'}(\emptyset)$ by signature analysis and unpairing.
	For (2), suppose that $\underline{M\in\clo{a}{s}(\emptyset)}$.
	Further, $\underline{s\preorder{\community\cup\set{a}}s}$ and $\underline{\indist{a}{s}{s}}$, by reflexivity.
	Hence $s\pAccess{M}{a}{\community}s$.
	For (3), let $s\in\states$.
	Then,
		$\underline{s\preorder{\community\cup\set{a}}\gscc{a}{M}(s)}$ and
		$\underline{M\in\clo{a}{\gscc{a}{M}(s)}(\emptyset)}$ and
		$\underline{\indist{a}{\gscc{a}{M}(s)}{\gscc{a}{M}(s)}}$.
	Hence $s\pAccess{M}{a}{\community}\gscc{a}{M}(s)$.
	Thus there is $s'\in\states$ such that $s\pAccess{M}{a}{\community}s'$.
	For (4), 
		let 
			$s,s',s''\in\states$ and 
			$b\in\community\cup\set{a}$ (thus $\community\cup\set{a}=\community\cup\set{a}\cup\set{b}$) and 
		suppose that 
			$s\pAccess{\sign{M}{a}}{b}{\community\cup\set{a}}s'$ and $s'\pAccess{M}{a}{\community}s''$.
	That is,
		there is $\check{s}\in\states$ such that
			$s\preorder{\community\cup\set{a}\cup\set{b}}\check{s}$ 
				(thus $s\preorder{\community\cup\set{a}}\check{s}$) and
			$\sign{M}{a}\in\clo{b}{\check{s}}(\emptyset)$ and
			$\indist{b}{\check{s}}{s'}$ 
				(thus $\check{s}\preorder{\community\cup\set{a}\cup\set{b}}s'$ and
				 thus $\check{s}\preorder{\community\cup\set{a}}s'$), and
		there is $\check{s}'\in\states$ such that
			$s'\preorder{\community\cup\set{a}}\check{s}'$ and
			$\underline{M\in\clo{a}{\check{s}'}(\emptyset)}$ and
			$\underline{\indist{a}{\check{s}'}{s''}}$.
	Hence	
		$s\preorder{\community\cup\set{a}}s'$ and then
		$\underline{s\preorder{\community\cup\set{a}}\check{s}'}$, both by transitivity, and
	thus $s\pAccess{M}{a}{\community}s''$.
	For (5), suppose that $\community\subseteq\community'$.
	Further, 
		let $s,s'\in\states$ and
		suppose that $s\pAccess{M}{a}{\community}s'$.
	That is,
		there is $\check{s}\in\states$ such that
			$s\preorder{\community\cup\set{a}}\check{s}$ and
			$\underline{M\in\clo{a}{\check{s}}(\emptyset)}$ and
			$\underline{\indist{a}{\check{s}}{s'}}$.
	Hence $\underline{s\preorder{\community'\cup\set{a}}\check{s}}$, and
	thus $s\pAccess{M}{a}{\community'}s'$.
	For (6), suppose that $M\leq_{a}M'$.
	Further, 
		let $s,s'\in\states$ and
		suppose that $s\pAccess{M}{a}{\community}s'$.
	That is,
		there is $\check{s}\in\states$ such that
			$\underline{s\preorder{\community\cup\set{a}}\check{s}}$ and
			$M\in\clo{a}{\check{s}}(\emptyset)$ and
			$\underline{\indist{a}{\check{s}}{s'}}$.
	Hence $\underline{M'\in\clo{a}{\check{s}}(\emptyset)}$ by the first hypothesis 
		(expanding the definition of ${\leq_{a}}\subseteq\messages\times\messages$).
	Thus $s\pAccess{M'}{a}{\community}s'$.
\end{proof}
\noindent
Note that we could have called the property of epistemic persistency also 
	epistemic \emph{mono}tonicity, but have not done so, 
		so as not to cause possible consternation with the reader about 
			the opposite naming of the corresponding axiom of epistemic \emph{anti}tonicity.
Axioms of modal operators and the corresponding properties of their respective accessibility relation have 
	the flipping tendency to have flipped senses.
(So is also the case for communal monotonicity.)

\subsubsection{Abstractly}
\begin{definition}[Kripke-model]\label{definition:KripkeModel}
We define the \emph{satisfaction relation} $\models$ for 
		LiP such that:
	\begin{eqnarray*}
		(\aModalFrame, \mathcal{V}), s\models P &\text{:iff}& s\in\mathcal{V}(P)\\[\jot]
		(\aModalFrame, \mathcal{V}), s\models\neg\phi &\text{:iff}& \text{not $(\aModalFrame, \mathcal{V}), s\models\phi$}\\[\jot]
		(\aModalFrame, \mathcal{V}), s\models\phi\land\phi' &\text{:iff}& \text{$(\aModalFrame, \mathcal{V}), s\models\phi$ and $(\aModalFrame, \mathcal{V}), s\models\phi'$}\\[\jot]
		(\aModalFrame, \mathcal{V}), s\models\proves{M}{\phi}{a}{\community} &\text{:iff}& 
			\begin{array}[t]{@{}l@{}}
				\text{for all $s'\in\states$, }
			 	\text{if $s\access{M}{a}{\community}s'$ then $(\aModalFrame, \mathcal{V}), s'\models\phi$\,,}
			\end{array}
	\end{eqnarray*}
where 
	\begin{itemize}
		\item $\mathcal{V}:\mathcal{P}\rightarrow\powerset{\states}$ designates a usual \emph{valuation function,} yet
			partially predefined such that for all $a\in\agents$ and $M\in\messages$,
				\begin{eqnarray*}
					\mathcal{V}(\knows{a}{M}) &\defeq& 
						\setst{s\in\states}{M\in\clo{a}{s}(\emptyset)}
				\end{eqnarray*}
				(If agents are Turing-machines 
					then $a$ knowing $M$ can be understood as $a$ being able to parse $M$ on its tape.)
		\item $\aModalFrame\defeq(\states,\set{\access{M}{a}{\community}}_{M\in\messages,a\in\agents,\community\subseteq\agents})$
			designates a (modal) \emph{frame} for LiP with 
				an \emph{abstract accessibility relation}---short, \emph{abstract accessibility}---${\access{M}{a}{\community}}\subseteq\states\times\states$ for 
			the proof modality such that \label{page:AbstractProofAccessibility} 
				\begin{itemize}
					\item	if for all $b\in\community\cup\set{a}$, $s\access{\sign{M}{a}}{b}{\community\cup\set{a}}s'$ 
						then $M\in\clo{a}{s'}(\emptyset)$
					\item if $M\in\clo{a}{s}(\emptyset)$
						then $s\access{M}{a}{\community}s$
					\item there is $s'\in\states$ such that 
						$s\access{M}{a}{\community}s'$
					\item for all $b\in\community\cup\set{a}$, 
						$({\access{\sign{M}{a}}{b}{\community\cup\set{a}}}\circ{\access{M}{a}{\community}})\subseteq{\access{M}{a}{\community}}$
					\item if $\community\subseteq\community'$ 
						then ${\access{M}{a}{\community}}\subseteq{\access{M}{a}{\community'}}$
					\item if $M\leq_{a}M'$ then ${\access{M}{a}{\community}}\subseteq{\access{M'}{a}{\community}}$
				\end{itemize}
	\item $(\aModalFrame,\mathcal{V})$ designates a (modal) \emph{model} for LiP.
	\end{itemize}
\end{definition}
\noindent
Looking back, 
	we recognise that Proposition~\ref{proposition:ConcreteAccessibility} actually establishes the important fact that
		our concrete accessibility $\pAccess{M}{a}{\community}$ in 
		Definition~\ref{definition:SemanticIngredients} realises 
			all the properties stipulated by  
		our abstract accessibility $\access{M}{a}{\community}$ in Definition~\ref{definition:KripkeModel};
		we say that 
		$$\text{$\pAccess{M}{a}{\community}$ \emph{exemplifies} (or \emph{realises}) $\access{M}{a}{\community}$\,.}$$
Further, observe that 
	LiP has a Herbrand-style semantics, that is, 
			logical constants (agent names) and 
			functional symbols (pairing, signing) are self-inter\-preted rather than 
		interpreted in terms of (other, semantic) constants and functions.
This simplifying design choice spares our framework from 
	the additional complexity that would arise from term-variable assignments \cite{FOModalLogic}, which in turn
		keeps our models propositionally modal.
Our choice is admissible because our individuals (messages) are finite.
(Infinitely long ``messages'' are non-messages; they can never be completely received, \eg
	transmitting irrational numbers as such is impossible.)

\begin{definition}[Truth \& Validity \cite{ModalLogicSemanticPerspective}]\label{definition:TruthValidity}\  
	\begin{itemize}
	\item The formula $\phi\in\pFormulas$ is \emph{true} (or \emph{satisfied}) 
		in the model $(\aModalFrame,\mathcal{V})$ at the state $s\in\states$ 
			:iff $(\aModalFrame,\mathcal{V}), s\models\phi$\,.
	\item The formula $\phi$ is \emph{satisfiable} in the model $(\aModalFrame,\mathcal{V})$ 
			:iff there is $s\in\states$ such that 
				$(\aModalFrame,\mathcal{V}), s\models\phi$\,.
	\item The formula $\phi$ is \emph{globally true} (or \emph{globally satisfied}) 
		in the model $(\aModalFrame,\mathcal{V})$, 
			written $(\aModalFrame,\mathcal{V})\models\phi$, :iff 
				for all $s\in\states$, $(\aModalFrame,\mathcal{V}),s\models\phi$\,.
	\item The formula $\phi$ is \emph{satisfiable}  
			:iff there is a model $(\aModalFrame,\mathcal{V})$ and a state $s\in\states$ such that 
				$(\aModalFrame,\mathcal{V}),s\models\phi$\,.	
	\item The formula $\phi$ is (\emph{universally true} or) \emph{valid}, written $\models\phi$, :iff 
			for all models $(\aModalFrame,\mathcal{V})$, $(\aModalFrame,\mathcal{V})\models\phi$\,.
	\end{itemize}
\end{definition}
\noindent
So we can paraphrase the law of epistemic antitonicity in Definition~\ref{definition:AxiomsRules} as:  
``Whatever a universally poorer message $M'$ can prove to $a$, 
any universally richer message $M$ can also prove to $a$, 
and this in all social contexts $\community\cup\set{a}$.''

\begin{proposition}[Admissibility of specific axioms and rules]\label{proposition:AxiomAndRuleAdmissibility}\    
	\begin{enumerate}
		\item $\models\knows{a}{a}$
		\item $\models\knows{a}{M}\limp\knows{a}{\sign{M}{a}}$
		\item $\models\knows{a}{\sign{M}{b}}\limp\knows{a}{\pair{M}{b}}$
		\item $\models(\knows{a}{M}\land\knows{a}{M'})\lequiv\knows{a}{\pair{M}{M'}}$
		\item $\models(\proves{M}{\phi}{a}{\community})\limp(\knows{a}{M}\limp\phi)$
		\item $\models(\proves{M}{\phi}{a}{\community})\limp
							\neg(\proves{M}{\neg\phi}{a}{\community})$
		\item $\models(\proves{M}{\phi}{a}{\community})\limp\bigwedge_{b\in\community\cup\set{a}}\proves{\sign{M}{a}}{(\knows{a}{M}\land\proves{M}{\phi}{a}{\community})}{b}{\community\cup\set{a}}$
		\item $\models(\proves{M}{\phi}{a}{\community\cup\community'})\limp\proves{M}{\phi}{a}{\community}$
		\item If $\models\knows{a}{M}\limp\knows{a}{M'}$ then 
					$\models(\proves{M'}{\phi}{a}{\community})\limp\proves{M}{\phi}{a}{\community}$\,.
	\end{enumerate}
\end{proposition}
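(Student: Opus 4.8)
The plan is to verify each of the ten clauses by unfolding the satisfaction relation of Definition~\ref{definition:KripkeModel} at an arbitrary model $(\aModalFrame,\mathcal{V})$ and state $s\in\states$, and then reading off the required property either from the predefined part of the valuation $\mathcal{V}$ or from one of the frame constraints that every LiP-model satisfies by definition. The genuine semantic content has already been packaged into Proposition~\ref{proposition:ConcreteAccessibility} and Proposition~\ref{proposition:SemanticInterface}, so what remains is careful bookkeeping: matching each axiom or rule to exactly one \emph{a priori} or \emph{a posteriori} constraint, or to one closure condition of $\clo{a}{}$.

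For the four knowledge axioms (1)--(4) I would unfold the stipulated value $\mathcal{V}(\knows{a}{M})=\setst{s\in\states}{M\in\clo{a}{s}(\emptyset)}$ and appeal to the closure conditions of Definition~\ref{definition:SemanticIngredients}: clause (1) is the base case $a\in\clo{a}{s}(\emptyset)$; clause (2) is personal signature synthesis; clause (3) is universal signature analysis (here one uses that the signature term $\sign{M}{b}$ discloses its signer $b$, so that both $M$ and $b$, hence $\pair{M}{b}$, lie in $\clo{a}{s}(\emptyset)$); and clause (4) is the pairing/unpairing pair of conditions. Each reduces to closure of $\clo{a}{s}(\emptyset)$ under the relevant constructor.

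For the four proof-modality axioms (5)--(8) I would fix $s$ and unfold $s\models\proves{M}{\phi}{a}{\community}$ as ``$\phi\in\denotation{M}{s}{a}{\community}$''. Clause (5) (GK) is then exactly the \emph{a posteriori} application constraint; clause (7) (peer review) is exactly the \emph{a posteriori} peer-review constraint, with the outer conjunction obtained by ranging $b$ over $\community\cup\set{a}$; clause (6) (epistemic truthfulness) follows from \emph{a priori} conditional reflexivity, since $s\models\knows{a}{M}$ gives $M\in\clo{a}{s}(\emptyset)$, hence $s\access{M}{a}{\community}s$, whence $s\models\phi$; and clause (8) (group decomposition) follows from \emph{a priori} communal monotonicity, since $\community\subseteq\community\cup\community'$ yields $\access{M}{a}{\community}\subseteq\access{M}{a}{\community\cup\community'}$, so every $\community$-successor of $s$ is a $(\community\cup\community')$-successor. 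All of these constraints hold in every model by Definition~\ref{definition:KripkeModel}, and are realised concretely by Proposition~\ref{proposition:ConcreteAccessibility} and Proposition~\ref{proposition:SemanticInterface}.

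The two rule clauses (9) and (10) need more care because their premises are themselves validities. For necessitation (9), global truth of $\phi$ in a fixed model makes $s'\models\phi$ hold at every $\access{M}{a}{\community}$-successor $s'$ of every $s$, giving $s\models\proves{M}{\phi}{a}{\community}$ at once. The main obstacle, such as it is, is clause (10) (epistemic antitonicity): I would first instantiate the hypothesis $\models\knows{a}{M}\limp\knows{a}{M'}$ at the model at hand and translate it into the model-local statement ``for all $s'\in\states$, $M\in\clo{a}{s'}(\emptyset)$ implies $M'\in\clo{a}{s'}(\emptyset)$'', which is precisely the antecedent of the \emph{a posteriori} antitonicity constraint; that constraint then yields $\denotation{M'}{s}{a}{\community}\subseteq\denotation{M}{s}{a}{\community}$, so that $s\models\proves{M'}{\phi}{a}{\community}$ implies $s\models\proves{M}{\phi}{a}{\community}$ for every $s$. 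The subtlety is purely the quantifier discipline: validity ranges over all models, so the premise must be re-read model by model before the (model-relative) frame constraint can be applied.
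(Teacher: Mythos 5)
Your proposal is correct and follows essentially the same route as the paper's own (very terse) proof: the paper disposes of clauses 1--4 and 9 as ``immediate'', derives clauses 5, 7, and 10 from Proposition~\ref{proposition:SemanticInterface}.2.b, 2.c, and 2.a respectively, clause 6 from reflexivity, and clause 8 from communal monotonicity (Proposition~\ref{proposition:ConcreteAccessibility}.2) --- exactly your clause-to-constraint mapping. If anything, your appeal to the \emph{a priori} constraints of Definition~\ref{definition:KripkeModel} for clauses 6 and 8 is the more scrupulous choice, since $\models$ quantifies over all models with \emph{abstract} accessibility, whereas the paper cites properties of the concrete relation; and your quantifier-discipline remark on clause 10 is precisely the point that makes the rule (as opposed to the axioms) non-trivial.

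One caveat deserves mention, and it concerns clause 3 --- the only place where your added detail outruns the paper's written definitions. You justify $\models\knows{a}{\sign{M}{b}}\limp\knows{a}{\pair{M}{b}}$ by saying that the signature ``discloses its signer'', so that both $M$ and $b$, hence $\pair{M}{b}$, lie in $\clo{a}{s}(\emptyset)$. But the closure condition labelled \emph{universal signature analysis} in Definition~\ref{definition:SemanticIngredients} is $\setst{M}{\sign{M}{b}\in\clo{a}{n}(\data)}$: it extracts $M$ only, not $b$. If $b\neq a$ and $b$ occurs nowhere else in $a$'s data base (take $\msgs{a}(s)=\set{\sign{c}{b}}$ with $a$, $b$, $c$ pairwise distinct), then $b\notin\clo{a}{s}(\emptyset)$, pairing cannot form $\pair{M}{b}$, and clause 3 fails at $s$. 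So the step as you justify it does not go through under the literal definition; it requires strengthening the closure condition to $\setst{M,b}{\sign{M}{b}\in\clo{a}{n}(\data)}$, which is the intended reading (signature verification reveals the signer) and the one that matches the term axiom. To be fair, the paper's own proof hides exactly the same gap under ``1--4 are immediate''; your proof simply makes the missing assumption visible rather than resolving it.
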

\begin{proof}
	1--4 are immediate, and 
	5--9 follow directly from 
		the conditional reflexivity,
		the seriality,
		the signature property and communal transitivity,
		the communal monotonicity, and
		the epistemic persistency of $\access{M}{a}{\community}$, respectively.
\end{proof}

\begin{definition}[Semantic consequence and equivalence]\ 
	\begin{itemize}
		\item The formula $\phi'\in\pFormulas$ is \emph{a semantic consequence of} $\phi\in\pFormulas$, 
		written $\phi\Limp\phi'$, :iff 
			for all models $(\aModalFrame,\mathcal{V})$ and states $s\in\states$, 
				if $(\aModalFrame,\mathcal{V}),s\models\phi$ then $(\aModalFrame,\mathcal{V}),s\models\phi'$.
	\item $\phi'\in\pFormulas$ is \emph{semantically equivalent to} $\phi\in\pFormulas$, written $\phi\Lequiv\phi'$, :iff 
		$\phi\Limp\phi'$ and $\phi'\Limp\phi$.
	\end{itemize}
\end{definition}

\begin{fact}\label{fact:MaterialConditionalVSSemanticConsequence}%[Material conditional \& Semantic consequence]
	$\text{$\models\phi\limp\phi'$ if and only if $\phi\Limp\phi'$}$
\end{fact}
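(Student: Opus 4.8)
The plan is to reduce the global biconditional to a purely local (single-state) truth condition for the material conditional, and then to quantify over all models and states. First I would establish the local claim that for every model $(\aModalFrame,\mathcal{V})$ and every state $s\in\states$, we have $(\aModalFrame,\mathcal{V}),s\models\phi\limp\phi'$ if and only if ($(\aModalFrame,\mathcal{V}),s\models\phi$ implies $(\aModalFrame,\mathcal{V}),s\models\phi'$). This is the heart of the argument; everything else is bookkeeping over the two outermost universal quantifiers.

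To prove the local claim I would unfold the macro-definitions $\phi\limp\phi'\defeq\neg\phi\lor\phi'$ and $\psi\lor\psi'\defeq\neg(\neg\psi\land\neg\psi')$, obtaining $\phi\limp\phi'\defeq\neg(\neg\neg\phi\land\neg\phi')$. Applying the satisfaction clauses for $\neg$ and $\land$ from Definition~\ref{definition:KripkeModel} then yields that $(\aModalFrame,\mathcal{V}),s\models\phi\limp\phi'$ holds exactly when it is not the case that both $(\aModalFrame,\mathcal{V}),s\models\phi$ and not $(\aModalFrame,\mathcal{V}),s\models\phi'$. A single appeal to classical reasoning in the metalanguage (the equivalence of ``not ($A$ and not $B$)'' with ``$A$ implies $B$'') turns this negated-conjunction form into the desired implicational form.

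Finally I would assemble the global statement. Prefixing the local biconditional with the quantifier ``for all models $(\aModalFrame,\mathcal{V})$ and all $s\in\states$'' turns its left-hand side into $\models\phi\limp\phi'$ by Definition~\ref{definition:TruthValidity} (validity is global truth in every model, and global truth is truth at every state), and turns its right-hand side into $\phi\Limp\phi'$ by the definition of semantic consequence. Since universal quantification distributes over a biconditional whose two sides share the same quantified variables, the two global statements coincide, which is precisely the claim.

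I expect the only subtlety---hardly an obstacle---to be the careful unfolding of the nested macros for $\limp$ and $\lor$ together with the explicit use of classical metalogic to pass between the negated-conjunction form produced by the satisfaction clauses and the implicational form appearing in the definition of $\Limp$. No modal reasoning about the accessibility relations $\access{M}{a}{\community}$ is required here, because $\phi\limp\phi'$ is a propositional combination that is evaluated pointwise at the single state $s$.
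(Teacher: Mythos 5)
Your proposal is correct and takes essentially the same route as the paper, whose entire proof is ``By expansion of definitions'': you simply spell out that expansion in full---unfolding the macros for $\limp$ and $\lor$, applying the satisfaction clauses pointwise, and then commuting the universal quantification over models and states with the resulting local biconditional. Nothing more is needed, and your observation that no modal reasoning about $\access{M}{a}{\community}$ is involved is accurate.
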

\begin{proof}
	By expansion of definitions.
\end{proof}

\subsection{Epistemic explication}\label{section:EpistemicExplication}
As announced,
	our interactive proofs have an \emph{epistemic explication} in terms of the epistemic impact that 
			they effectuate with their intended interpreting agents 
				(\ie the knowledge of their proof goals).
To see this, 
	consider that the elementary definition of proof accessibility on Page~\pageref{page:ProofAccessibility} can 
		be transformed by applying elementary-logical rules so that\label{page:ProofAccessibilityExpansion}
		$$\boxed{\begin{array}{@{}l@{}}
			(\aModalFrame, \mathcal{V}), s\models\proves{M}{\phi}{a}{\community}\quad\text{if and only if}\\ 
			\qquad\text{for all $\check{s}\in\states$,
					if $s\preorder{\community\cup\set{a}}\check{s}$ then\quad
						(data $\check{s}$ and peer $\community\cup\set{a}$ persistent)}\\
			\qquad\quad\text{$(\aModalFrame, \mathcal{V}), \check{s}\models\knows{a}{\hspace{-6ex}\underbrace{M}_{\text{sufficient evidence}}}\hspace{-5.5ex}\limp\K{a}(\hspace{-10ex}\underbrace{\phi}_{\hspace{9ex}\text{induced knowledge}}\hspace{-10ex})$\quad(epistemic impact),}
		\end{array}}$$
	with the standard epistemic modality $\K{a}$ being defined as
		 $$\begin{array}{@{}l@{}}
		 	(\aModalFrame, \mathcal{V}), \check{s}\models\K{a}(\phi)\quad\text{:iff}
		 	\quad\text{for all $s'\in\states$, if $\indist{a}{\check{s}}{s'}$ then 
				$(\aModalFrame, \mathcal{V}), s'\models\phi$.}
			\end{array}$$
As required,
	$\K{a}$---being defined by means of an equivalence relation---is S5, that is,  
		S4 plus the property $\models\neg\K{a}(\phi)\limp\K{a}(\neg\K{a}(\phi))$ of
			negative introspection \cite{Epistemic_Logic,MultiAgents}.
Hence, spelled out, 
	the epistemic explication is: 
	\begin{quote}
		\fbox{\parbox{0.825\textwidth}{%
		A proof effectuates
			a persistent epistemic impact in its intended community of peer reviewers that
					consists in the induction of the (propositional) knowledge of the proof goal 
						by means of the (individual) knowledge of the proof with the interpreting reviewer.
		}}
	\end{quote}

Observe that
	our notion of \emph{knowledge induction} (impact effectuation) is an instance of a \emph{parameterised persistent implication}, which:
		\begin{enumerate}
			\item is compatible with C.I.~Lewis relevant implication (\aka \emph{strict implication}), which
				does not stipulate any constraint on the accessibility relation of the implication (here $\preorder{\community\cup\set{a}}$)
			\item is \emph{intuitionistic implication} in Kripke's interpretation
					when the preorder $\preorder{\community\cup\set{a}}$ happens to be partial, \eg
						when $\community\cup\set{a}=\set{\CM}$ (total knowledge).
		\end{enumerate}
D.~Lewis relevant implication \label{page:DLewis} however
		(and \emph{a fortiori} Stalnaker's) is insufficient for capturing the induction.
Recall that  
	a statement $\phi$ implies $\phi'$ in a state $s$, by definition of D.~Lewis,
		if and only if $\phi\limp\phi'$ is true at all states closest to $s$ (here with respect to $\preorder{\community\cup\set{a}})$.
(Stalnaker required that there be a \emph{single} closest state.)
Order-theoretically,
	``closest to $s$ with respect to $\preorder{\community\cup\set{a}}$'' means 
	``that are atomic (\ie if minored then only by bottom) in the up-set 
		${\uparrow_{\preorder{\community\cup\set{a}}}}(s)\defeq\setst{s'\in\states}{s\preorder{\community\cup\set{a}}s'}$ of $s$ with respect to $\preorder{\community\cup\set{a}}$''.
Yet we do need to stipulate 
	truth at all states \emph{close} to $s$ (\ie \emph{all} states in ${\uparrow_{\preorder{\community\cup\set{a}}}}(s)$), 
	not just truth at all states close\emph{st} (\ie all \emph{atomic} states).
Otherwise persistency, which is
	essential to obtaining intuitionistic logic,  
may fail (\cf \cite[Section~2]{ModalLogicInformation} and \cite{InformationIL}).

Still,
	we believe that
		D.~Lewis relevant implication could be suitable for defining induction of 
			\emph{belief} (to be enshrined in a \emph{Logic of Evidence}) and even 
			\emph{false belief} (to be enshrined in a \emph{Logic of Deception}).\label{page:EvidenceDeception}
For belief,
	it does not make sense to insist on (peer) persistency, except perhaps for \emph{religious belief} 
		(among sectarian peers), and
	so quantifying over all closest states could be preferable over
	quantifying over all close states.
This is to be explored in future work.

We close this section with the statement of five epistemic interaction laws.
The first law---to be used as a lemma for the second---describes a reflexive interaction between
	individual and propositional knowledge in the following sense.
\begin{proposition}[Self-knowledge]\label{proposition:SelfKnowledge}
	$$\models\K{a}(\knows{a}{M})\lequiv\knows{a}{M}$$
\end{proposition}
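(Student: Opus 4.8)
The plan is to unfold the two relevant definitions from Definition~\ref{definition:KripkeModel} and observe that the atomic proposition $\knows{a}{M}$ has a truth value that is constant on each $\indist{a}{}{}$-equivalence class. Recall that $(\aModalFrame,\mathcal{V}),s\models\knows{a}{M}$ holds precisely when $s\in\mathcal{V}(\knows{a}{M})$, i.e.\ when $M\in\clo{a}{s}(\emptyset)$, so the truth of $\knows{a}{M}$ at $s$ depends only on $a$'s individual-knowledge base $\clo{a}{s}(\emptyset)$. Meanwhile, by Definition~\ref{definition:SemanticIngredients}, $\indist{a}{s}{s'}$ means $s\preorder{a}s'$ and $s'\preorder{a}s$, which by the definition of $\preorder{a}$ is exactly $\clo{a}{s}(\emptyset)\subseteq\clo{a}{s'}(\emptyset)$ and $\clo{a}{s'}(\emptyset)\subseteq\clo{a}{s}(\emptyset)$, hence $\clo{a}{s}(\emptyset)=\clo{a}{s'}(\emptyset)$. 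This equality is the single fact that drives the whole argument.

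First I would fix an arbitrary model $(\aModalFrame,\mathcal{V})$ and state $s\in\states$ and prove the biconditional at $s$; since $s$ and the model are arbitrary, validity follows. For the left-to-right direction, I would use only reflexivity of $\indist{a}{}{}$ (it is an equivalence relation by construction): if $(\aModalFrame,\mathcal{V}),s\models\K{a}(\knows{a}{M})$, then instantiating the universal quantifier in the definition of $\K{a}$ at $s'\defeq s$ together with $\indist{a}{s}{s}$ yields $(\aModalFrame,\mathcal{V}),s\models\knows{a}{M}$. This direction in fact holds for any formula in place of $\knows{a}{M}$, being just the S5 reflexion (T) axiom.

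For the right-to-left direction---the only one using the specific shape of $\knows{a}{M}$---I would assume $(\aModalFrame,\mathcal{V}),s\models\knows{a}{M}$, i.e.\ $M\in\clo{a}{s}(\emptyset)$, and then check $\K{a}(\knows{a}{M})$ directly against its definition. Let $s'\in\states$ with $\indist{a}{s}{s'}$; by the observation above $\clo{a}{s}(\emptyset)=\clo{a}{s'}(\emptyset)$, so $M\in\clo{a}{s'}(\emptyset)$, which gives $(\aModalFrame,\mathcal{V}),s'\models\knows{a}{M}$. Since $s'$ was an arbitrary $\indist{a}{}{}$-neighbour of $s$, this establishes $(\aModalFrame,\mathcal{V}),s\models\K{a}(\knows{a}{M})$, completing the biconditional.

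There is no real obstacle here; the proof is essentially a definitional unfolding. The one point worth isolating cleanly---and which I would state as the crux---is the $\indist{a}{}{}$-invariance of the valuation of $\knows{a}{M}$: because $\mathcal{V}(\knows{a}{M})$ is defined purely in terms of $\clo{a}{s}(\emptyset)$ and $\indist{a}{}{}$ identifies exactly those states sharing the same $\clo{a}{\cdot}(\emptyset)$, the atom $\knows{a}{M}$ is a ``local'' (fully introspected) proposition, which is precisely what makes propositional knowledge of one's own individual knowledge collapse back to that individual knowledge.
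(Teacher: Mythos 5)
Your proof is correct and follows essentially the same route as the paper's: the $\limp$-direction by reflexivity of $\indist{a}{}{}$, and the $\leftarrow$-direction from the fact that $\indist{a}{}{}$ identifies exactly the states with the same individual-knowledge base $\clo{a}{\cdot}(\emptyset)$, on which the valuation of $\knows{a}{M}$ depends. You merely spell out the definitional unfolding that the paper's two-line proof leaves implicit.
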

\begin{proof}
	The $\limp$-direction follows from the reflexivity of $\indist{a}{}{}$, and
	the $\leftarrow$-direction from the definition of $\indist{a}{}{}$ as state indistinguishability with respect to individual knowledge.
\end{proof}

The second law describes an important interaction between
	individual and propositional knowledge by means of their respective languages $\messages$ and $\pFormulas$.
For the sake of stating the law succinctly, we recall the following standard definition.
\begin{definition}[Language equivalence]
	Let $L\subseteq\pFormulas$ designate a sublanguage of $\pFormulas$. 
	Then two pointed models $(\aModalFrame,\mathcal{V}),s$ and $(\aModalFrame,\mathcal{V}),s'$ are \emph{$L$-equivalent,} 
		written $\indist{L}{(\aModalFrame,\mathcal{V}),s}{(\aModalFrame,\mathcal{V}),s'}$, :iff 
			for all $\phi\in L$, $(\aModalFrame,\mathcal{V}),s\models\phi$ iff 
				$(\aModalFrame,\mathcal{V}),s'\models\phi$.
			(The relation $\indist{\pFormulas}{}{}$ is called \emph{elementary equivalence}.) 
\end{definition}
\noindent
The law says that 
	state indistinguishability with respect to individual knowledge equals 
	state indistinguishability with respect to propositional knowledge.
\begin{proposition}[Indistinguishability]\label{proposition:Indistinguishability}
	Let $a\in\agents$ and
		\begin{eqnarray*}
			\mathit{Re} &\defeq& \setst{\knows{a}{M}}{M\in\messages}\\
			\mathit{Dicto} &\defeq& \setst{\K{a}(\phi)}{\phi\in\pFormulas}.
		\end{eqnarray*}
	Then,  
		\begin{eqnarray*}
			{\indist{\mathit{Re}}{}{}} &=& {\indist{\mathit{Dicto}}{}{}}\,.
		\end{eqnarray*}
\end{proposition}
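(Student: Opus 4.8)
The plan is to factor the desired identity through the semantic state-indistinguishability relation $\indist{a}{}{}$ of Definition~\ref{definition:SemanticIngredients}, establishing the chain ${\indist{\mathit{Re}}{}{}} = {\indist{a}{}{}} = {\indist{\mathit{Dicto}}{}{}}$. The two halves are of very different character: the left identity is pure unfolding of definitions, whereas the right identity carries the genuine content and is where self-knowledge must be brought in.

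First I would dispatch the left identity ${\indist{\mathit{Re}}{}{}} = {\indist{a}{}{}}$ by expanding definitions. Fixing a model $(\aModalFrame,\mathcal{V})$ and states $s,s'$, the predefined valuation gives $(\aModalFrame,\mathcal{V}),s\models\knows{a}{M}$ iff $M\in\clo{a}{s}(\emptyset)$; hence $\indist{\mathit{Re}}{s}{s'}$ holds iff $s$ and $s'$ agree on membership of every $M\in\messages$, i.e. iff $\clo{a}{s}(\emptyset)=\clo{a}{s'}(\emptyset)$. On the other side, $s\preorder{a}s'$ is defined as $\clo{a}{s}(\emptyset)\subseteq\clo{a}{s'}(\emptyset)$, so ${\indist{a}{}{}}={\preorder{a}\cap(\preorder{a})^{-1}}$ relates $s$ and $s'$ under exactly the same condition $\clo{a}{s}(\emptyset)=\clo{a}{s'}(\emptyset)$. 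The two relations therefore coincide with no real work.

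Next I would prove ${\indist{a}{}{}}\subseteq{\indist{\mathit{Dicto}}{}{}}$. Assume $\indist{a}{s}{s'}$. Because the clause for $\K{a}(\phi)$ quantifies over the $\indist{a}{}{}$-class of the evaluation state, its truth depends only on that class; and since $\indist{a}{}{}$ is an equivalence relation, $\indist{a}{s}{s'}$ forces $[s]_{\indist{a}{}{}}=[s']_{\indist{a}{}{}}$. Hence $s$ and $s'$ range over the same states in evaluating any $\K{a}(\phi)$ and so agree on all of $\mathit{Dicto}$, i.e. $\indist{\mathit{Dicto}}{s}{s'}$.

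The remaining inclusion ${\indist{\mathit{Dicto}}{}{}}\subseteq{\indist{\mathit{Re}}{}{}}$ is the main obstacle, precisely because individual knowledge $\knows{a}{M}$ is \emph{de re} while $\mathit{Dicto}$ only constrains the \emph{de dicto} modality $\K{a}$; agreement on the latter must be shown to force agreement on the former. Here I would invoke self-knowledge, Proposition~\ref{proposition:SelfKnowledge}, namely $\models\K{a}(\knows{a}{M})\lequiv\knows{a}{M}$, as the bridge. Assume $\indist{\mathit{Dicto}}{s}{s'}$ and fix $M\in\messages$. Since $\knows{a}{M}\in\pFormulas$, the formula $\K{a}(\knows{a}{M})$ lies in $\mathit{Dicto}$, so $s$ and $s'$ agree on it; applying Proposition~\ref{proposition:SelfKnowledge} at $s$ and again at $s'$ rewrites each side as $\knows{a}{M}$, yielding $s\models\knows{a}{M}$ iff $s'\models\knows{a}{M}$. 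As $M$ was arbitrary, $\indist{\mathit{Re}}{s}{s'}$ follows, closing the chain and hence the proof.
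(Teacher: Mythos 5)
Your proof is correct and takes essentially the same route as the paper's: the ${\subseteq}$-direction via identifying ${\indist{\mathit{Re}}{}{}}$ with ${\indist{a}{}{}}$ (the definition of state indistinguishability with respect to individual knowledge) together with transitivity of ${\indist{a}{}{}}$, and the ${\supseteq}$-direction via the fact that $(\knows{a}{M})\in\pFormulas$ combined with Proposition~\ref{proposition:SelfKnowledge}. The only difference is presentational: you spell out the equivalence-class argument that the paper compresses into a bare citation of transitivity.
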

\begin{proof}
	The $\subseteq$-direction follows from 
		the definition of $\indist{a}{}{}$ as state indistinguishability with respect to individual knowledge, and
		the transitivity of $\indist{a}{}{}$; and 
	the $\supseteq$-direction from 
		the fact that for all $M\in\messages$, $(\knows{a}{M})\in\pFormulas$ and
		Proposition~\ref{proposition:SelfKnowledge}.
\end{proof}

The third law---to be used as a lemma for the fourth---describes 
	an important interaction between individual and propositional knowledge by means of message signing.
		(It is an epistemic expansion of Theorem~\ref{theorem:SomeUsefulDeducibleLogicalLaws}.20.)
The law also gives an example of interpreted communication:  
	how to induce propositional knowledge with 
		a certain piece of individual knowledge (\ie a signed message).
\begin{proposition}[The purpose of signing]\label{proposition:ThePurposeOfSigning} 
	$$\models\knows{a}{\sign{M}{b}}\limp\K{a}(\knows{b}{\sign{M}{b}})$$
\end{proposition}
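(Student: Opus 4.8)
The plan is to unfold the definition of validity and reduce the claim to a structural property of the data-mining closure $\clo{a}{}$. Fixing an arbitrary model $(\aModalFrame,\mathcal{V})$ and state $s$, I would assume $(\aModalFrame,\mathcal{V}),s\models\knows{a}{\sign{M}{b}}$, that is $\sign{M}{b}\in\clo{a}{s}(\emptyset)$, and aim to show $\sign{M}{b}\in\clo{b}{s'}(\emptyset)$ for every $s'$ with $\indist{a}{s}{s'}$; by the definition of $\K{a}$ this is exactly $(\aModalFrame,\mathcal{V}),s\models\K{a}(\knows{b}{\sign{M}{b}})$. I would first dispatch the degenerate case $b=a$ directly from Proposition~\ref{proposition:SelfKnowledge}: there the consequent is $\K{a}(\knows{a}{\sign{M}{a}})$, which is equivalent to its antecedent $\knows{a}{\sign{M}{a}}$. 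The remaining, substantive case is $b\neq a$.

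For $b\neq a$, fix $s'$ with $\indist{a}{s}{s'}$. Since $a$-indistinguishability means $\clo{a}{s}(\emptyset)=\clo{a}{s'}(\emptyset)$, the hypothesis transports to $\sign{M}{b}\in\clo{a}{s'}(\emptyset)=\clo{a}{}(\msgs{a}(s'))$. The crux is then a structural lemma, proved by induction on the closure stage $n$ in the definition of $\clo{a}{n}$: for every $N\in\clo{a}{n}(\data)$, any foreign signature $\sign{M}{b}$ (owner $b\neq a$) occurring as a subterm of $N$ already occurs as a subterm of some element of $\set{a}\cup\data$. The only constructors available to $\clo{a}{}$ are pairing and \emph{personal} signing $\sign{\cdot}{a}$, and by the grammar of $\messages$ neither can build a fresh $b$-signature (a pair is not a signature, and $\sign{\cdot}{a}\neq\sign{M}{b}$ since $a\neq b$); the destructors (unpairing, universal analysis) only expose already-present subterms. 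Applied with $\data=\msgs{a}(s')$, and noting that the atom $a$ contributes no $b$-signature, the lemma yields that $\sign{M}{b}$ occurs as a subterm of some message in $a$'s data base $\msgs{a}(s')$.

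I would close by invoking the standing modeling assumption (Page~\pageref{page:RawData}, Page~\pageref{page:UnforgeableSignatures}) that only $b$ can generate $b$'s signature: a $b$-signature present in any agent's raw data must have been produced by $b$, hence $\sign{M}{b}\in\msgs{b}(s')\subseteq\clo{b}{s'}(\emptyset)$, so $(\aModalFrame,\mathcal{V}),s'\models\knows{b}{\sign{M}{b}}$. As $s'$ ranged over all $a$-indistinguishable states, this delivers $\K{a}(\knows{b}{\sign{M}{b}})$ at $s$. The main obstacle is the structural lemma together with this final invocation: one must argue carefully, from the grammar of $\messages$, that the two constructors of $\clo{a}{}$ cannot forge a foreign signature, so that every $b$-signature known to $a$ is a pre-existing subterm of $a$'s raw data---and only there does the unforgeability assumption convert syntactic occurrence into the owner's individual knowledge.
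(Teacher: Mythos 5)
Your proof is correct and follows the same skeleton as the paper's: fix a model and a state, transport the hypothesis along $\indist{a}{}{}$ (which, as you note, is exactly equality of closures, $\clo{a}{s}(\emptyset)=\clo{a}{s'}(\emptyset)$), and then turn $a$'s knowledge of the foreign signature into $b$'s via unforgeability. The difference is where unforgeability enters. The paper applies it in one stroke to $a$'s \emph{derived} knowledge: from $(\aModalFrame,\mathcal{V}),s'\models\knows{a}{\sign{M}{b}}$ it concludes $(\aModalFrame,\mathcal{V}),s'\models\knows{b}{\sign{M}{b}}$ outright, citing the signing mechanism of Page~\pageref{page:UnforgeableSignatures}. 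You instead interpolate a structural lemma (induction on the stages $\clo{a}{n}$) showing that $a$'s closure cannot manufacture a foreign signature---pairing and personal signing $\sign{\cdot}{a}$ never create a $b$-signature when $b\neq a$, and the destructors only expose existing subterms---so that $\sign{M}{b}$ must already occur as a subterm of $a$'s raw data base $\msgs{a}(s')$, and only there do you invoke the generation convention of Page~\pageref{page:RawData}. This is a genuine refinement: it cleanly separates the purely syntactic fact (derivation cannot forge) from the extra-formal modelling convention (a $b$-signature in anyone's raw data was generated by $b$, hence lies in $\clo{b}{s'}(\emptyset)$). Note that this convention is unavoidable in both proofs---it is not a formal constraint imposed in Definition~\ref{definition:SemanticIngredients} or~\ref{definition:KripkeModel}---so your argument does not eliminate the informal step, it just localises it sharply where it belongs. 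Your case split is also apt: your lemma genuinely needs $b\neq a$, and the degenerate case $b=a$ via Proposition~\ref{proposition:SelfKnowledge} is handled correctly, whereas the paper's blanket appeal to unforgeability needs no split because it covers $b=a$ trivially.
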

\begin{proof}
	Let 
		$(\aModalFrame,\mathcal{V})$ designate an arbitrary LiP-model, and
	let $s\in\states$, $a,b\in\agents$, and $M\in\messages$.
	Further, 
		suppose that $(\aModalFrame,\mathcal{V}),s\models\knows{a}{\sign{M}{b}}$ and
		let $s'\in\states$ such that $\indist{a}{s'}{s}$.
	Hence,
		$(\aModalFrame,\mathcal{V}),s'\models\knows{a}{\sign{M}{b}}$ by definition of $\indist{a}{}{}$ as  
			state indistinguishability with respect to individual knowledge, and thus
		$(\aModalFrame,\mathcal{V}),s'\models\knows{b}{\sign{M}{b}}$ due to 
			the unforgeability of signatures
				(only $b$ can generate $\sign{M}{b}$).
\end{proof}
The fourth law describes an important interaction between knowledge and interactive proofs, again by means of message signing.
The law also gives an explication of the epistemic impact of \emph{signed} interactive proofs.
\begin{theorem}[Proofs of Knowledge]\label{theorem:ProofsOfKnowledge}
	Signed interactive proofs are peer-re\-viewable \emph{proofs of knowledge}\footnote{This terminology is inspired by \cite[Page~262]{FoundationsCryptographyOne}, where 
	such proofs are defined as ``[\ldots\negthinspace] proofs in which the prover [here $a$] asserts 
		``knowledge'' of some object [\ldots\negthinspace] and 
			not merely its existence [\ldots\negthinspace]'' by means of 
				probabilistic polynomial-time interactive Turing machines.}
		in the following formal sense:
		$$\models(\proves{M}{\phi}{a}{\community})\limp
			\bigwedge_{b\in\community\cup\set{a}}\proves{\sign{M}{a}}{(\underbrace{\knows{a}{M}\land\K{a}(\phi)}_{\mathrm{induced\ knowledge}})}{b}{\community\cup\set{a}}.$$
\end{theorem}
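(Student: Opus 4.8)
The plan is to argue semantically with the constructed accessibility of Definition~\ref{definition:SemanticIngredients}, which is exactly where Proposition~\ref{proposition:SemanticInterface} locates this result. Fix an arbitrary model $(\aModalFrame,\mathcal{V})$ and a state $s$ with $(\aModalFrame,\mathcal{V}),s\models\proves{M}{\phi}{a}{\community}$, and fix an arbitrary peer $b\in\community\cup\set{a}$; the goal is $(\aModalFrame,\mathcal{V}),s\models\proves{\sign{M}{a}}{(\knows{a}{M}\land\K{a}(\phi))}{b}{\community\cup\set{a}}$. The engine of the whole argument is the epistemic-impact reformulation of proof accessibility on Page~\pageref{page:ProofAccessibilityExpansion}, which turns the hypothesis into the single persistent implication $(\ast)$: for every $\check{s}$ with $s\preorder{\community\cup\set{a}}\check{s}$ one has $(\aModalFrame,\mathcal{V}),\check{s}\models\knows{a}{M}\limp\K{a}(\phi)$. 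Everything reduces to transporting $(\ast)$ to the right states.

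First I would unfold the goal through the concrete accessibility. Since $b\in\community\cup\set{a}$ we have $(\community\cup\set{a})\cup\set{b}=\community\cup\set{a}$, so it suffices to take an arbitrary $s'$ with $s\pAccess{\sign{M}{a}}{b}{\community\cup\set{a}}s'$ and prove $(\aModalFrame,\mathcal{V}),s'\models\knows{a}{M}\land\K{a}(\phi)$. Unfolding the accessibility yields some $\check{s}$ with $s\preorder{\community\cup\set{a}}\check{s}$, with $\sign{M}{a}\in\clo{b}{\check{s}}(\emptyset)$, and with $\indist{b}{\check{s}}{s'}$. For the first conjunct, $\sign{M}{a}\in\clo{b}{\check{s}}(\emptyset)$ reads $(\aModalFrame,\mathcal{V}),\check{s}\models\knows{b}{\sign{M}{a}}$; applying Proposition~\ref{proposition:ThePurposeOfSigning} with the signer and knower roles interchanged gives $(\aModalFrame,\mathcal{V}),\check{s}\models\K{b}(\knows{a}{\sign{M}{a}})$, whence $(\aModalFrame,\mathcal{V}),s'\models\knows{a}{\sign{M}{a}}$ because $\indist{b}{\check{s}}{s'}$, and then universal signature analysis inside $\clo{a}{s'}(\emptyset)$ delivers $(\aModalFrame,\mathcal{V}),s'\models\knows{a}{M}$.

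Next comes the reachability bookkeeping feeding the second conjunct. Because $b\in\community\cup\set{a}$, the edge $\indist{b}{\check{s}}{s'}$ gives $\check{s}\preorder{\community\cup\set{a}}s'$, so transitivity yields $s\preorder{\community\cup\set{a}}s'$. To show $(\aModalFrame,\mathcal{V}),s'\models\K{a}(\phi)$, take any $s''$ with $\indist{a}{s'}{s''}$; since $a\in\community\cup\set{a}$ this gives $s'\preorder{\community\cup\set{a}}s''$ and hence $s\preorder{\community\cup\set{a}}s''$, so $(\ast)$ applies at $s''$. As $\indist{a}{s'}{s''}$ forces equality of the individual-knowledge bases $\clo{a}{s'}(\emptyset)=\clo{a}{s''}(\emptyset)$, the already-established $(\aModalFrame,\mathcal{V}),s'\models\knows{a}{M}$ transfers to $(\aModalFrame,\mathcal{V}),s''\models\knows{a}{M}$; then $(\ast)$ gives $(\aModalFrame,\mathcal{V}),s''\models\K{a}(\phi)$, so $(\aModalFrame,\mathcal{V}),s''\models\phi$ by reflexivity of $\indist{a}{}{}$. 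Since $s''$ ranged over the entire $\indist{a}{}{}$-class of $s'$, this is precisely $(\aModalFrame,\mathcal{V}),s'\models\K{a}(\phi)$, completing both conjuncts.

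The hard part is not any single lemma but the quantifier-and-persistence bookkeeping: the lone hypothesis $(\ast)$, stated only about $s$ and its $\preorder{\community\cup\set{a}}$-successors, must be pushed out to the doubly displaced $s''$, reached from $s$ by a $\preorder{\community\cup\set{a}}$-step into $\check{s}$, then a $\indist{b}{}{}$-edge to $s'$, then a $\indist{a}{}{}$-edge to $s''$; keeping all three moves inside the \emph{single} preorder $\preorder{\community\cup\set{a}}$ (via $a,b\in\community\cup\set{a}$ and transitivity) and using the $\indist{a}{}{}$-invariance of $\knows{a}{M}$ are the two places where the argument could silently fail, and the signature transfer via Proposition~\ref{proposition:ThePurposeOfSigning} is the only genuinely cryptographic ingredient. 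Finally, the same reachability $s\preorder{\community\cup\set{a}}s'$ also discharges the deferred case~2.c of Proposition~\ref{proposition:SemanticInterface}: the proof modality $\proves{M}{\phi}{a}{\community}$ persists along $\preorder{\community\cup\set{a}}$, so $(\aModalFrame,\mathcal{V}),s'\models\proves{M}{\phi}{a}{\community}$ as well, yielding the $\proves{M}{\phi}{a}{\community}$-variant of the induced conjunct in place of $\K{a}(\phi)$.
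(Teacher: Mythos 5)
Your proof is correct and runs on essentially the same semantic machinery as the paper's own proof: unfolding the proof modality through the concrete accessibility of Definition~\ref{definition:SemanticIngredients}, invoking Proposition~\ref{proposition:ThePurposeOfSigning} with the agent roles swapped to transfer $\knows{a}{\sign{M}{a}}$ across $\indist{b}{}{}$, extracting $\knows{a}{M}$ by universal signature analysis, and keeping every displacement inside the single preorder $\preorder{\community\cup\set{a}}$ via $a,b\in\community\cup\set{a}$ and transitivity. The only structural difference is the order of decomposition: the paper first proves the \emph{stronger} peer-review fact with induced formula $\knows{a}{M}\land\proves{M}{\phi}{a}{\community}$ (working under $\K{b}$ at $\check{s}$, transporting $\proves{M}{\phi}{a}{\community}$ by peer persistency, and thereby discharging case~2.c of Proposition~\ref{proposition:SemanticInterface}), and only afterwards weakens the inner conjunct to $\K{a}(\phi)$ via the strengthened epistemic truthfulness $\models(\proves{M}{\phi}{a}{\community})\limp(\knows{a}{M}\limp\K{a}(\phi))$; you instead prove the $\K{a}(\phi)$ form directly, pointwise at $s'$, and recover the stronger fact as a by-product in your closing paragraph --- same lemmas, inverted order, equally valid. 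One small economy you missed: the detour through $s''$ is unnecessary, for once you have $s\preorder{\community\cup\set{a}}s'$ and $(\aModalFrame,\mathcal{V}),s'\models\knows{a}{M}$, your hypothesis $(\ast)$ applies at $\check{s}\defeq s'$ itself and yields $(\aModalFrame,\mathcal{V}),s'\models\K{a}(\phi)$ by a single \emph{modus ponens}.
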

\begin{proof}
	We first prove the stronger fact that
		$$\models(\proves{M}{\phi}{a}{\community})\limp
			\bigwedge_{b\in\community\cup\set{a}}\proves{\sign{M}{a}}{(\knows{a}{M}\land\proves{M}{\phi}{a}{\community})}{b}{\community\cup\set{a}}.$$
	Let 
		$(\aModalFrame,\mathcal{V})$ designate an arbitrary LiP-model, and
	let $s\in\states$, $a\in\agents$, $\community\subseteq\agents$, $b\in\community\cup\set{a}$, and $M\in\messages$.
	Further, 
		suppose that $(\aModalFrame,\mathcal{V}),s\models\proves{M}{\phi}{a}{\community}$, 
		let $\check{s}\in\states$ such that $s\preorder{\community\cup\set{a}\cup\set{b}}\check{s}$, and
		suppose that $(\aModalFrame,\mathcal{V}),\check{s}\models\knows{b}{\sign{M}{a}}$.
	Hence, 
		$(\aModalFrame,\mathcal{V}),\check{s}\models\K{b}(\knows{a}{\sign{M}{a}})$ by 
				Proposition~\ref{proposition:ThePurposeOfSigning}, and thus
		$(\aModalFrame,\mathcal{V}),\check{s}\models\K{b}(\knows{a}{M})$ by 
			\emph{modus ponens} of  
				$\models\K{b}(\knows{a}{\sign{M}{a}}\limp\knows{a}{M})$ (epistemic necessitation of signature analysis) and
				$\models\K{b}(\knows{a}{\sign{M}{a}}\limp\knows{a}{M})\limp
					(\K{b}(\knows{a}{\sign{M}{a}})\limp\K{b}(\knows{a}{M}))$ (Kripke's law).
	Now,  
		let $\tilde{s}\in\states$ such that $\indist{b}{\check{s}}{\tilde{s}}$.
	Thus,
		$\check{s}\preorder{\community\cup\set{a}\cup\set{b}}\tilde{s}$, thus
		$s\preorder{\community\cup\set{a}\cup\set{b}}\tilde{s}$ by transitivity, and thus
		$s\preorder{\community\cup\set{a}}\tilde{s}$ by the hypothesis that $b\in\community\cup\set{a}$.
	Hence, 
		$(\aModalFrame,\mathcal{V}),\tilde{s}\models\proves{M}{\phi}{a}{\community}$ by 
			peer persistency, 
		$(\aModalFrame,\mathcal{V}),\check{s}\models\K{b}(\proves{M}{\phi}{a}{\community})$ by 
			discharge of the last hypothesis, and thus
		$(\aModalFrame,\mathcal{V}),\check{s}\models\K{b}(\knows{a}{M}\land\proves{M}{\phi}{a}{\community})$.
		
	Our theorem now follows from 
		a stronger version of epistemic truthfulness, that is, 
			$\models(\proves{M}{\phi}{a}{\community})\limp(\knows{a}{M}\limp\K{a}(\phi))$, which 
			in turn follows from the expansion of the truth condition of $\proves{M}{\phi}{a}{\community}$.
\end{proof}

The fifth law describes an important interaction between 
	\emph{common knowledge} \cite{Epistemic_Logic,MultiAgents} and 
	purported interactive proofs, namely
		their falsifiability in a \emph{communal} sense of \emph{Popper's critical rationalism}.
More precisely,
	we refer to Popper's dictum that 
		a hypothesis (here, that a purported interactive proof is indeed a proof) should be falsifiable in the sense that 			\textbf{\emph{if the hypothesis is false} then its falsehood should be cognisable} (here, commonly knowable).
In the present paper,
	we restrict the relation between 
		Popper's \emph{\oe uvre} and 
		our work 
	to this succinct dictum.
Recall from \cite{Epistemic_Logic,MultiAgents} that
	common knowledge among a community $\community$ can be captured with a modality $\CK{\community}$ defined as \label{page:CommonKnowledge}
		 $$\begin{array}{@{}l@{}}
		 	(\aModalFrame, \mathcal{V}), s\models\CK{\community}(\phi)\quad\text{:iff}
		 	\quad\text{for all $s'\in\states$, if $\indist{\community}{s}{s'}$ then 
				$(\aModalFrame, \mathcal{V}), s'\models\phi$,}
			\end{array}$$
where ${\indist{\community}{}{}}\defeq(\bigcup_{a\in\community}\indist{a}{}{})^{*}$.
The intuition is that
a statement $\phi$ is common knowledge in a community $\community$ of agents when:  
	all agents know that $\phi$ is true (call this new statement $\phi'$), 
	all agents know that $\phi'$ is true (call this new statement $\phi''$), 
	all agents know that $\phi''$ is true (call this new statement $\phi'''$), etc.
Note that depending on the properties of the employed communication lines, 
common knowledge may have to be pre-established off those lines along other lines \cite{CommonKnowledge},
which is also why there is no built-in common-knowledge operator in LiP.

\begin{theorem}[Falsifiability of interactive ``proofs'']\label{theorem:Falsifiability}
Interactive ``proofs'' are falsifiable in a communal sense of Popper's, that is,  
	if a datum $M\in\messages$ is not a $\community\cup\set{a}$-reviewable proof of a statement $\phi\in\pFormulas$
	then this fact is communally cognisable as such by $\community\cup\set{a}$ in terms of 
		the common knowledge among $\community\cup\set{a}$ of that fact. Formally,
				$$\models(\neg\thinspace\proves{M}{\phi}{a}{\community})\limp
					\CK{\community\cup\set{a}}(\neg\thinspace\proves{M}{\phi}{a}{\community}).$$
\end{theorem}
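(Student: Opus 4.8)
The plan is to reduce the statement to a single invariance property of the proof modality. Unfolding the definition of $\CK{\community\cup\set{a}}$, the consequent $\CK{\community\cup\set{a}}(\neg\thinspace\proves{M}{\phi}{a}{\community})$ holds at a state $s$ exactly when $\neg\thinspace\proves{M}{\phi}{a}{\community}$ holds at every $s'$ with $\indist{\community\cup\set{a}}{s}{s'}$. Since $\indist{\community\cup\set{a}}{}{}$ is by construction the reflexive-transitive closure of a union of symmetric relations, it is itself symmetric, so it suffices to prove that the truth value of $\proves{M}{\phi}{a}{\community}$ is constant along $\indist{\community\cup\set{a}}{}{}$; the implication, and hence validity, then follows at once.

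First I would record two order-theoretic facts about the relevant relations. Each $\indist{b}{}{}$ satisfies ${\indist{b}{}{}}\subseteq{\preorder{b}}\subseteq{\preorder{\community\cup\set{a}}}$ for every $b\in\community\cup\set{a}$, and since $\preorder{\community\cup\set{a}}$ is reflexive and transitive (being a Kleene closure) it absorbs the Kleene closure of the left-hand union, yielding ${\indist{\community\cup\set{a}}{}{}}\subseteq{\preorder{\community\cup\set{a}}}$. Combining this containment with the symmetry of $\indist{\community\cup\set{a}}{}{}$ gives that $s\indist{\community\cup\set{a}}{}{}t$ implies both $s\preorder{\community\cup\set{a}}t$ and $t\preorder{\community\cup\set{a}}s$.

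The central step is then to show that concrete accessibility has identical successor sets at $\indist{\community\cup\set{a}}{}{}$-related states, i.e. that $s\indist{\community\cup\set{a}}{}{}t$ implies $\setst{s'}{s\pAccess{M}{a}{\community}s'}=\setst{s'}{t\pAccess{M}{a}{\community}s'}$. For one inclusion, assume $s\pAccess{M}{a}{\community}s'$, so there is $\check{s}$ with $s\preorder{\community\cup\set{a}}\check{s}$, $M\in\clo{a}{\check{s}}(\emptyset)$, and $\indist{a}{\check{s}}{s'}$; from $t\preorder{\community\cup\set{a}}s$ and transitivity of $\preorder{\community\cup\set{a}}$ I obtain $t\preorder{\community\cup\set{a}}\check{s}$, so the same witness $\check{s}$ certifies $t\pAccess{M}{a}{\community}s'$. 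The reverse inclusion is symmetric. Because $\access{M}{a}{\community}={\pAccess{M}{a}{\community}}$ and the truth condition of $\proves{M}{\phi}{a}{\community}$ is a universal quantification over this successor set, equality of successor sets forces $s$ and $t$ to agree on $\proves{M}{\phi}{a}{\community}$, hence on its negation. This establishes the required invariance and closes the argument.

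I would expect the main obstacle to be the containment ${\indist{\community\cup\set{a}}{}{}}\subseteq{\preorder{\community\cup\set{a}}}$: it is the conceptual heart of the proof---the data preorder governing proof persistence subsumes the common-knowledge indistinguishability relation---and it must be argued at the level of Kleene closures rather than by a tempting but unjustified step-by-step descent through individual generators. Everything else is bookkeeping with the explicit definition of $\pAccess{M}{a}{\community}$ and the transitivity of $\preorder{\community\cup\set{a}}$.
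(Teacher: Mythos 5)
Your proof is correct and takes essentially the same route as the paper's: both rest on the containment ${\indist{\community\cup\set{a}}{}{}}\subseteq{\preorder{\community\cup\set{a}}}$ (with symmetry supplying the relation in both directions) and on the persistence of $\proves{M}{\phi}{a}{\community}$ along $\preorder{\community\cup\set{a}}$, which the paper invokes as ``peer persistency'' and which your successor-set equality (reusing the witness $\check{s}$ via transitivity of $\preorder{\community\cup\set{a}}$) simply makes explicit. The only difference is presentational: the paper finishes with a one-line proof by contradiction, whereas you establish the invariance of the modality's truth value along $\indist{\community\cup\set{a}}{}{}$ directly.
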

\begin{proof}
	Let 
		$(\aModalFrame,\mathcal{V})$ designate an arbitrary LiP-model, and
	let $s\in\states$, $a\in\agents$, $\community\subseteq\agents$, and $M\in\messages$.
	Further, 
		suppose that $(\aModalFrame,\mathcal{V}),s\models\neg\thinspace\proves{M}{\phi}{a}{\community}$,
		let $s'\in\states$ such that $\indist{\community\cup\set{a}}{s}{s'}$ (thus $s'\preorder{\community\cup\set{a}}s$), and
		suppose by contradiction that 
			$(\aModalFrame,\mathcal{V}),s'\models\proves{M}{\phi}{a}{\community}$.
	Hence $(\aModalFrame,\mathcal{V}),s\models\proves{M}{\phi}{a}{\community}$ by peer persistency---contradiction!
\end{proof}
\noindent
Note also the following simpler fact, which
	asserts that what is commonly accepted as proof constitutes common knowledge.
\begin{fact}[Common proof knowledge]\label{fact:CommonProofKnowledge}
	$$\models(\proves{M}{\phi}{a}{\community})\limp
		\CK{\community\cup\set{a}}(\proves{M}{\phi}{a}{\community})$$
\end{fact}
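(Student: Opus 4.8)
The plan is to mirror the structure of the proof of Theorem~\ref{theorem:Falsifiability}, exploiting the same peer-persistency property of the proof modality, but now in the positive (rather than the contrapositive) direction. First I would unfold the target: by the definition of $\CK{\community\cup\set{a}}$ on Page~\pageref{page:CommonKnowledge}, it suffices to fix an arbitrary LiP-model $(\aModalFrame,\mathcal{V})$ and a state $s\in\states$ with $(\aModalFrame,\mathcal{V}),s\models\proves{M}{\phi}{a}{\community}$, take any $s'\in\states$ with $\indist{\community\cup\set{a}}{s}{s'}$, and establish $(\aModalFrame,\mathcal{V}),s'\models\proves{M}{\phi}{a}{\community}$.

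The key step is the observation that community indistinguishability is contained in the community preorder, that is, ${\indist{\community\cup\set{a}}{}{}}\subseteq{\preorder{\community\cup\set{a}}}$. This holds because, for each agent $b$, the defining equation ${\indist{b}{}{}}={\preorder{b}\cap(\preorder{b})^{-1}}$ gives ${\indist{b}{}{}}\subseteq{\preorder{b}}$, hence $\bigcup_{b}\indist{b}{}{}\subseteq\bigcup_{b}\preorder{b}$, and the Kleene-closure operation is monotone with respect to $\subseteq$. Consequently the hypothesis $\indist{\community\cup\set{a}}{s}{s'}$ yields $s\preorder{\community\cup\set{a}}s'$.

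It then remains to invoke peer persistency, namely that the truth of $\proves{M}{\phi}{a}{\community}$ propagates upward along $\preorder{\community\cup\set{a}}$. This is read off directly from the boxed expansion of the truth condition on Page~\pageref{page:ProofAccessibilityExpansion}: since $s\preorder{\community\cup\set{a}}s'$ and $\preorder{\community\cup\set{a}}$ is transitive (being a Kleene closure), every $\check{s}$ with $s'\preorder{\community\cup\set{a}}\check{s}$ also satisfies $s\preorder{\community\cup\set{a}}\check{s}$; so the epistemic-impact condition $\knows{a}{M}\limp\K{a}(\phi)$, which holds at all such $\check{s}$ by the hypothesis at $s$, equally witnesses $\proves{M}{\phi}{a}{\community}$ at $s'$. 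This gives $(\aModalFrame,\mathcal{V}),s'\models\proves{M}{\phi}{a}{\community}$ and closes the argument.

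I do not expect a genuine obstacle: the statement is the positive counterpart of Theorem~\ref{theorem:Falsifiability}, and once peer persistency together with the containment ${\indist{\community\cup\set{a}}{}{}}\subseteq{\preorder{\community\cup\set{a}}}$ are in hand, the conclusion is immediate. The only point demanding minimal care is the orientation of the preorder: the falsifiability proof used $s'\preorder{\community\cup\set{a}}s$ (reachability \emph{into} $s$), whereas here the symmetry of each $\indist{b}{}{}$ lets me extract instead the orientation $s\preorder{\community\cup\set{a}}s'$ that upward persistency requires.
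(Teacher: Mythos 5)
Your proposal is correct and is essentially the paper's own argument: the paper states this Fact without a separate proof precisely because it is the positive counterpart of Theorem~\ref{theorem:Falsifiability}, whose proof rests on the same two ingredients you identify, namely the containment ${\indist{\community\cup\set{a}}{}{}}\subseteq{\preorder{\community\cup\set{a}}}$ and peer persistency of $\proves{M}{\phi}{a}{\community}$ along $\preorder{\community\cup\set{a}}$ (via its transitivity). Your closing remark on orientation is also apt: here one needs $s\preorder{\community\cup\set{a}}s'$ to push the proof modality forward, whereas the falsifiability proof extracts the converse direction $s'\preorder{\community\cup\set{a}}s$ to derive its contradiction.
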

\noindent
This however does \emph{not} mean that $M$ is known by everybody in $\community\cup\set{a}$!

\subsection{Oracle-computational explication}\label{section:OracleComputationalExplication}
As announced,
	our interactive proofs have also an \emph{oracle-computational explication} in terms of 
		a computation oracle that acts as a hypothetical provider 
			and thus as an imaginary epistemic source of our interactive proofs.		
To see this, 
	consider that the elementary definition of 
		proof accessibility in Definition~\ref{definition:SemanticIngredients} can be \emph{redefined} (for the time being) such that
			for all $s,s'\in\states$, \label{page:ProofAccessibilityBis}
			\begin{eqnarray} 
				s\pAccess{M}{a}{\community}s' &\defiff&
					s'\in\hspace{-5.5ex}
					\bigcup_{\scriptsize
					\mbox{
					$\begin{array}{@{}c@{}}
						\text{$s<_{\community\cup\set{a}}^{M}\tilde{s}$ and
						}\\[0.5\jot] 
						M\in\clo{a}{\tilde{s}}(\emptyset) 
					\end{array}$}
					}\hspace{-5ex}[\tilde{s}]_{\indist{a}{}{}}\\\notag	
				\text{for all $M\in\messages$ and $\community\subseteq\agents$, $<_{\community}^{M}$} &\defeq& (\bigcup_{a\in\community}{<_{a}^{M}})^{++}\\\notag 
				s<_{a}^{M}s' &\defiff& \clo{a}{s}(\set{M})=\clo{a}{s'}(\emptyset),
			\end{eqnarray}
where `$^{++}$' designates the closure operation of so-called \emph{generalised transitivity}
				in the sense that 
					 ${<_{\community}^{M}}\circ{<_{\community}^{M'}}\subseteq{<_{\community}^{\pair{M}{M'}}}$.
Note that when $s<_{a}^{M}s'$ for some states $s,s'\in\states$, 
	agent $a$ can conceive of $s'$ as 
		$s$ yet \emph{minimally enriched} with the
			information token $M$, for which 
				$a$ could imagine invoking an \emph{oracle agent}.
In other words, 
	if $a$ knew $M$ (\eg if $a$ \emph{received} $M$ from the oracle) 
	then $a$ could not distinguish $s$ from $s'$ in the sense of $\indist{a}{}{}$.
This hypothetical knowledge was
	called \emph{adductive knowledge} in \cite{skramerIMLA2008}---from now on 
		also \emph{oracle knowledge}---and 
	implemented with a concrete message reception \emph{event} for $a$ that
		carries the information of $M$ in $s'$.
Now, similarly to Page~\pageref{page:ProofAccessibilityExpansion}, 
	our above-redefined proof-accessibility relation can be 
		transformed and then
		used for redefining (again, for the time being) the proof modality as follows:
		$$\boxed{\begin{array}{@{}l@{}}
			(\aModalFrame, \mathcal{V}), s\models\proves{M}{\phi}{a}{\community}\quad\defiff\\ 
			\qquad\text{for all $s'\in\states$,
					if $s<_{\community\cup\set{a}}^{M}s'$ then\quad(peer $\community\cup\set{a}$ persistent)}\\
			\qquad\quad\text{$(\aModalFrame, \mathcal{V}), s'\models\knows{a}{M}\limp\K{a}(\phi)$\quad(epistemic impact).}
		\end{array}}$$
The new notion of proof resulting from Accessibility~Relation~2 
	is obviously weaker than 
	our original notion resulting from Accessibility~Relation~1 on Page~\pageref{page:ProofAccessibility}, in the sense that
		the epistemic impact of Notion~1 is \emph{data persistent}, \eg 
				is the case even when more messages than just the proof are learnt, whereas 
			the one of Notion~2 is not necessarily so, that is,  
				is the case possibly only at the instant of learning the proof.
(Still, both notions induce knowledge and not only belief!)
Therefore,  
	we call 
		interactive proofs in the sense of Notion~1 \emph{persistent} or \emph{extant} and
		those in the sense of Notion~2 \emph{instant} interactive proofs.
For multi-agent distributed systems, instant interactive proofs are interesting, \eg 
	for \emph{accountability} (\cf \cite{MMFAAMAS} and \cite{MMMSRTETEVVVS}, both based on \cite{skramerIMLA2008}).
In accountable multi-agent distributed systems, 
	an agent may prove her correct past behaviour in the present state to some judge, \eg 
		with a signed logfile \cite{MMFAAMAS}, but
			may well then cease behaving correctly in the future.
Hence her correctness proof is instant but may well not be persistent.
The epistemic explication for Notion~2 is, spelled out: 
	\begin{quote}
		\fbox{\parbox{0.89\textwidth}{%
		An \emph{instant} proof effectuates
			an \emph{instant} epistemic impact in its intended community of peer reviewers that
					consists in the induction of the (propositional) knowledge of the proof goal 
						by means of the (individual) knowledge of the proof with the interpreting reviewer \cite{KramerICLA2013}.
		}}
	\end{quote}
Observe that
	our notion of knowledge induction (impact effectuation) for instant interactive proofs is 
		a parameterised instant implication, which
			\emph{is} compatible with D.~Lewis relevant implication	
				(\cf our corresponding discussion on Page~\pageref{page:DLewis}).
That is, 
	$\knows{a}{M}\limp\K{a}(\phi)$ is true at all states $s'$ closest to $s$ with respect to $\preorder{\community\cup\set{a}}$, that is, 
	for which $s<_{\community\cup\set{a}}^{M}s'$.
The token $M$ represents the minimal difference.
Of course, $a$ may in fact know $M$ in $s$; so the conditional is \emph{not necessarily counter-}factual.

Our above definitions can be related to our original ones as follows.
\begin{proposition}\label{proposition:MessageDifference} 
For all $s,s'\in\states$:
\begin{enumerate}
	\item $s\preorder{a}s'$ if and only if there is $M\in\messages$ such that $s<_{a}^{M}s'$ 
	\item $s\preorder{\community}s'$ if and only if
			there is $M\in\messages$ such that $s<_{\community}^{M}s'$
\end{enumerate}
\end{proposition}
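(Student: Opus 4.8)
The plan is to prove part~1 as the substantive core and then to lift it to part~2 by exploiting the closure structure of $\preorder{\community}$ and $<_{\community}^{M}$. Each part splits into an easy ``if'' direction, settled by monotonicity of $\clo{a}{}$, and a harder ``only if'' direction, whose whole content is to manufacture a single witnessing message $M$ out of purely relational data.

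For part~1 the backward direction is immediate: if $s<_{a}^{M}s'$, \ie $\clo{a}{s}(\set{M})=\clo{a}{s'}(\emptyset)$, then $\clo{a}{s}(\emptyset)=\clo{a}{}(\msgs{a}(s))\subseteq\clo{a}{}(\msgs{a}(s)\cup\set{M})=\clo{a}{s}(\set{M})=\clo{a}{s'}(\emptyset)$ by monotonicity of $\clo{a}{}$, which is exactly $s\preorder{a}s'$. For the forward direction I would exploit that the raw-data extractor yields a \emph{finite} set $\msgs{a}(s')$: I take $M$ to be the pairing (in any fixed order, using associativity) of all elements of the finite nonempty set $\msgs{a}(s')\cup\set{a}$. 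The unpairing closure rule then gives $\msgs{a}(s')\subseteq\clo{a}{}(\set{M})$, hence $\clo{a}{s'}(\emptyset)\subseteq\clo{a}{}(\set{M})\subseteq\clo{a}{s}(\set{M})$ by monotonicity and idempotency; conversely, since $M\in\clo{a}{s'}(\emptyset)$ (each component lies in $\clo{a}{s'}(\emptyset)$ and pairing is a closure rule) and $\msgs{a}(s)\subseteq\clo{a}{s}(\emptyset)\subseteq\clo{a}{s'}(\emptyset)$ by the hypothesis $s\preorder{a}s'$, monotonicity and idempotency yield $\clo{a}{s}(\set{M})=\clo{a}{}(\msgs{a}(s)\cup\set{M})\subseteq\clo{a}{s'}(\emptyset)$. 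The two inclusions give $\clo{a}{s}(\set{M})=\clo{a}{s'}(\emptyset)$, \ie $s<_{a}^{M}s'$.

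For part~2 the two directions hinge on the interplay between the Kleene closure $(\cdot)^{*}$ defining $\preorder{\community}$ and the generalised-transitivity closure $(\cdot)^{++}$ defining $<_{\community}^{M}$. For the backward direction I would show $<_{\community}^{M}\subseteq\preorder{\community}$ for every $M$ by minimality of $(\cdot)^{++}$: the constant family $(T^{M})_{M}$ with $T^{M}\defeq\preorder{\community}$ contains every base relation (since $s<_{a}^{M}s'$ implies $s\preorder{a}s'$ by part~1, and $\preorder{a}\subseteq\preorder{\community}$ for $a\in\community$) and is closed under generalised transitivity (because $\preorder{\community}$ is genuinely transitive, so $T^{M}\circ T^{M'}=\preorder{\community}\circ\preorder{\community}\subseteq\preorder{\community}=T^{\pair{M}{M'}}$); hence the least such family $(<_{\community}^{M})_{M\in\messages}$ is contained in it. For the forward direction I would decompose a witness of $s\preorder{\community}s'$ into a finite chain $s=s_{0},s_{1},\dots,s_{n}=s'$ with $s_{i}\preorder{a_{i}}s_{i+1}$ and $a_{i}\in\community$, obtain $M_{i}$ with $s_{i}<_{a_{i}}^{M_{i}}s_{i+1}$ from part~1, note $<_{a_{i}}^{M_{i}}\subseteq<_{\community}^{M_{i}}$, and then collapse the composition by repeated generalised transitivity into $s<_{\community}^{M}s'$ with $M$ the left-nested pairing of $M_{0},\dots,M_{n-1}$. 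The degenerate zero-length chain ($s=s'$, $\community\neq\emptyset$) is witnessed by $M\defeq a$ for any $a\in\community$, since $a\in\clo{a}{s}(\emptyset)$ forces $\clo{a}{s}(\set{a})=\clo{a}{s}(\emptyset)$, hence $s<_{a}^{a}s$ and so $s<_{\community}^{a}s$.

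The main obstacle is the construction of the single message $M$ in the two forward directions. In part~1 everything rests on finiteness of $\msgs{a}(s')$, which lets an arbitrary inclusion of individual-knowledge bases be realised by one concrete pairing term; the delicate point is verifying that this term reproduces $\clo{a}{s'}(\emptyset)$ \emph{exactly}, not merely up to the ambient closure, so that both inclusions are genuinely needed and the closure-operator laws (extensivity, monotonicity, idempotency) are applied in the right order. In part~2 the subtlety is the bookkeeping between the two closures: one must check that $\preorder{\community}$ is precisely the kind of family of which $(\cdot)^{++}$ is the least instance, and that the message accumulated along a reflexive-transitive chain is assembled by the \emph{same} pairing operation that generalised transitivity uses, so that the chain collapses to a single $<_{\community}^{M}$-edge.
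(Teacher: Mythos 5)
Your proof is correct and takes essentially the same route as the paper's: for part~1 the paper likewise uses finiteness of the raw data to obtain a finite $\data$ with $\clo{a}{s}(\data)=\clo{a}{s'}(\emptyset)$ and then a single pairing term $M$ with $\clo{a}{s}(\set{M})=\clo{a}{s}(\data)$, while declaring the converse direction of part~1 obvious and part~2 as ``obviously following from (1)''. Your additional detail---the two-inclusion verification via extensivity, monotonicity and idempotency, the minimality argument for the $(\cdot)^{++}$-closure, the chain decomposition, and the reflexive case $s<_{a}^{a}s$---is precisely the filling-in of those elided steps, so there is no divergence in method.
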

\begin{proof}
	We prove the if-direction of 
		(1)---the only-if-direction being obvious, and 
		(2) obviously following from (1).
	Let $s,s'\in\states$ and
	suppose that $s\preorder{a}s'$.
	Hence there is a finite $\data\subseteq\messages$ such that 
			$\clo{a}{s}(\data)=\clo{a}{s'}(\emptyset)$, because
				$\msgs{a}(s)$ and $\msgs{a}(s')$ are finite (\cf Page~\pageref{page:RawData}).
	Hence there is $M\in\messages$ such that
		$\clo{a}{s}(\set{M})=\clo{a}{s}(\data)$.
	(For example, 
		choose $M=(M_{1},\ldots,M_{n})$ where $\data=\{M_{1},\ldots,M_{n}\}$.)
	Thus, 
		$\clo{a}{s}(\set{M})=\clo{a}{s'}(\emptyset)$ by transitivity, and
		$s<_{a}^{M}s'$ by definition.
\end{proof}	
Hence, 
	Notion~1 can be recovered from Notion~2 by \emph{redefining}
			the proof accessibility on Page~\pageref{page:ProofAccessibility} such that
				for all $s,s'\in\states$, 
					\begin{eqnarray} 
						s\pAccess{M}{a}{\community}s' &\defiff& 
							s'\in\hspace{-7.5ex}\bigcup_{\scriptsize
								\mbox{
									$\begin{array}{@{}c@{}}
										s\mathrel{(\bigcup_{M'\in\messages}<_{\community\cup\set{a}}^{M'})}\check{s}\\[0.5\jot] 
									\text{and $M\in\clo{a}{\check{s}}(\emptyset)$} 
								\end{array}$}
							}\hspace{-7.5ex}[\check{s}]_{\indist{a}{}{}},
					\end{eqnarray}
					and thus Notion~3 and Notion~1 are equivalent.

\begin{proposition}\label{proposition:Instancy}
When the proof modality is interpreted with Notion~2, 
	$$\models\proves{a}{\phi}{a}{\emptyset}\lequiv\K{a}(\phi).$$
\end{proposition}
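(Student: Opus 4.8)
The plan is to unfold the Notion~2 truth condition for the specific parameters $M=a$, prover $a$, and community $\emptyset$, and then reduce the resulting accessibility relation to plain state indistinguishability. First I would note that $\community\cup\set{a}=\emptyset\cup\set{a}=\set{a}$, so the relevant label is $<_{\set{a}}^{a}=(<_{a}^{a})^{++}$. The crucial computation is that $<_{a}^{a}={\indist{a}{}{}}$: since $a\in\clo{a}{s}(\emptyset)$ holds at every state (the base clause of $\clo{a}{}$ already contributes $a$), enriching by the already-known token $a$ changes nothing, \ie $\clo{a}{s}(\set{a})=\clo{a}{s}(\emptyset)$; hence $s<_{a}^{a}s'$ iff $\clo{a}{s}(\emptyset)=\clo{a}{s'}(\emptyset)$, which is exactly $\indist{a}{s}{s'}$.

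Next I would argue that the generalised-transitive closure $(\cdot)^{++}$ contributes nothing to the $a$-indexed slot. Generalised transitivity produces new pairs only in a slot indexed by a \emph{pair}, via ${<_{\set{a}}^{M}}\circ{<_{\set{a}}^{M'}}\subseteq{<_{\set{a}}^{\pair{M}{M'}}}$; since the name $a$ is an atomic term and admits no decomposition $a=\pair{M}{M'}$, no composition ever feeds into $<_{\set{a}}^{a}$. Moreover $\indist{a}{}{}$ is already an equivalence relation, so it is reflexive and transitive and no ordinary closure enlarges it either. Therefore $<_{\set{a}}^{a}={\indist{a}{}{}}$. I expect this step to be the main obstacle: one must be sure that generalised transitivity cannot smuggle extra pairs into the atomic slot $<_{\set{a}}^{a}$, since that is precisely what licenses collapsing the instant-proof accessibility to the S5 indistinguishability relation.

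With this in hand the proof finishes by elementary epistemic bookkeeping. Because $\knows{a}{a}$ is valid (semantically $a\in\clo{a}{s'}(\emptyset)$ at every $s'$, by the predefined valuation of $\knows{a}{\cdot}$), the impact conditional $\knows{a}{a}\limp\K{a}(\phi)$ is equivalent to $\K{a}(\phi)$ at every state. Substituting $<_{\set{a}}^{a}={\indist{a}{}{}}$ into the Notion~2 clause then shows that $(\aModalFrame,\mathcal{V}),s\models\proves{a}{\phi}{a}{\emptyset}$ holds iff $\K{a}(\phi)$ is true at every $s'$ with $\indist{a}{s}{s'}$, \ie iff $(\aModalFrame,\mathcal{V}),s\models\K{a}(\K{a}(\phi))$. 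Finally, since $\K{a}$ is an S5 (hence S4) modality, because $\indist{a}{}{}$ is an equivalence relation, the S4 laws give $\K{a}(\K{a}(\phi))\lequiv\K{a}(\phi)$ (the forward implication being the truth-axiom instance $\K{a}(\K{a}(\phi))\limp\K{a}(\phi)$ from reflexivity, the converse the positive-introspection instance $\K{a}(\phi)\limp\K{a}(\K{a}(\phi))$ from transitivity). Chaining these equivalences yields $\models\proves{a}{\phi}{a}{\emptyset}\lequiv\K{a}(\phi)$, as claimed; everything downstream of the accessibility computation is routine normal-modal-logic reasoning.
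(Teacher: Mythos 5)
Your proposal is correct and follows exactly the paper's route: the paper's entire proof is the one-line observation that ${<_{\set{a}}^{a}}={\indist{a}{}{}}$, which is precisely the key fact you establish (via $\clo{a}{s}(\set{a})=\clo{a}{s}(\emptyset)$ and the inertness of generalised transitivity on the atomic slot $a$), after which your remaining steps—eliminating the valid antecedent $\knows{a}{a}$ and collapsing $\K{a}(\K{a}(\phi))$ to $\K{a}(\phi)$ via the S4 laws of the equivalence relation $\indist{a}{}{}$—are the routine unfolding the paper leaves implicit.
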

\begin{proof}
		By the fact that ${<_{\set{a}}^{a}}={\indist{a}{}{}}$.
\end{proof}

We leave the further study of instant interactive proofs for future work.

\subsection{More results}\label{section:ImportantProperties}
\begin{theorem}[Adequacy]\label{theorem:Adequacy} 
	$\LiPded$ is \emph{adequate} for $\models$, that is:
	\begin{enumerate}
		\item if $\LiPded\phi$ then $\models\phi$\quad(axiomatic soundness)
		\item if $\models\phi$ then $\LiPded\phi$\quad(semantic completeness).
	\end{enumerate}
\end{theorem}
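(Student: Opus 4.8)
The plan is to treat the two directions separately, with axiomatic soundness being routine and semantic completeness being the substantive part.

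For soundness I would argue by induction on the stage $n$ of the closure operator, exploiting the definition $\LiP=\bigcup_{n\in\mathbb{N}}\Clo{}{n}(\emptyset)$ from Definition~\ref{definition:AxiomsRules}. The base case requires that every member of $\Gamma_{1}$ be valid: the classical propositional axioms are valid in the usual way, and the eight LiP-specific axioms are exactly items~1--8 of Proposition~\ref{proposition:AxiomAndRuleAdmissibility}. The inductive step requires that the three closure rules preserve validity: \emph{modus ponens} does so because $\models$ is defined statewise and respects material implication (Fact~\ref{fact:MaterialConditionalVSSemanticConsequence}), while necessitation and epistemic antitonicity preserve validity by items~9 and~10 of Proposition~\ref{proposition:AxiomAndRuleAdmissibility}. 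Hence every $\phi\in\LiP$ is valid, i.e.\ $\LiPded\phi$ implies $\models\phi$; essentially all the work has already been discharged by Proposition~\ref{proposition:AxiomAndRuleAdmissibility}.

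For completeness I would argue contrapositively and build a canonical model, which is legitimate because LiP is a \emph{normal} modal logic (Corollary~\ref{corollary:Normality}), so the standard apparatus applies to each operator $\proves{M}{\cdot}{a}{\community}$. Supposing $\not\LiPded\phi$, the set $\set{\neg\phi}$ is LiP-consistent and so, by compactness of $\Clo{}{}$ (Proposition~\ref{proposition:Hilbert}) and a Lindenbaum construction, extends to a maximal consistent set. I would take $\states$ to be the set of all maximal consistent sets, define the canonical accessibility by $s\access{M}{a}{\community}s'$ iff $\setst{\psi}{(\proves{M}{\psi}{a}{\community})\in s}\subseteq s'$, and pin down the predefined valuation by setting $\clo{a}{s}(\emptyset)\defeq\setst{M\in\messages}{(\knows{a}{M})\in s}$, so that $\mathcal{V}(\knows{a}{M})=\setst{s}{(\knows{a}{M})\in s}$ agrees automatically with the constraint of Definition~\ref{definition:KripkeModel}; from this $\preorder{a}$ and $\indist{a}{}{}$ are read off as in Definition~\ref{definition:SemanticIngredients}. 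The Truth Lemma---$(\aModalFrame,\mathcal{V}),s\models\psi$ iff $\psi\in s$---then follows by induction on $\psi$, the modal case being the standard normal-modal argument (maximality, Kripke's law from Theorem~\ref{theorem:SomeUsefulDeducibleLogicalLaws}, and an existence lemma from necessitation). Applying it to the maximal consistent set containing $\neg\phi$ refutes $\phi$, whence $\not\models\phi$.

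The main obstacle is verifying that this canonical structure is a \emph{genuine} LiP-model in the sense of Definition~\ref{definition:KripkeModel}, i.e.\ that the canonical accessibility meets all the \emph{a priori} and \emph{a posteriori} constraints and that the proposed individual-knowledge sets are honestly closed. For the closure check I would use the $\knows{a}{}$-axioms of $\Gamma_{1}$ (own name, signature synthesis/analysis, pairing) to show $\setst{M}{(\knows{a}{M})\in s}$ is closed under exactly the defining operations of $\clo{a}{}$, so that it equals its own $\clo{a}{}$-closure and may legitimately serve as $\clo{a}{s}(\emptyset)$. For the frame constraints I would match each to a derivable law: conditional reflexivity to epistemic truthfulness, transitivity to modal idempotency, communal monotonicity to group decomposition, and the three \emph{a posteriori} constraints to epistemic antitonicity, the GK axiom, and peer review, respectively, translating each set-theoretic condition on $\denotation{\cdot}{s}{a}{\community}$ into the corresponding membership statement via the Truth Lemma. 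The subtlest point I anticipate is the monotonicity constraint~2.a, whose hypothesis quantifies globally over states ($M\in\clo{a}{s'}(\emptyset)$ implies $M'\in\clo{a}{s'}(\emptyset)$ for all $s'$): I expect this to correspond to the \emph{derivability} of $\knows{a}{M}\limp\knows{a}{M'}$ and thus, through epistemic antitonicity, to the required inclusion of denotations, but reconciling a global semantic quantifier with a syntactic derivability condition is where the argument needs the most care. I would also note that, unlike in Definition~\ref{definition:SemanticIngredients}, the canonical $\clo{a}{s}(\emptyset)$ need not be finitely generated, which is harmless since finiteness of $\msgs{a}$ is nowhere used in the definition of satisfaction.
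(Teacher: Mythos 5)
Your proposal is correct and follows essentially the same route as the paper's own proof: soundness is discharged onto the admissibility proposition, and completeness proceeds via a canonical model of maximally LiP-consistent sets with the same canonical accessibility, a Truth Lemma by induction, and then a verification that the canonical structure is a genuine LiP-model by matching each frame constraint to an axiom or derived law exactly as you list (epistemic truthfulness for conditional reflexivity, group decomposition for communal monotonicity, the peer-review/self-signing route---i.e.\ modal idempotency---for transitivity, and epistemic antitonicity, GK, and peer review for the \emph{a posteriori} constraints). The subtle point you flag about the monotonicity constraint is resolved in the paper precisely as you anticipate: the global semantic hypothesis passes through the Truth Lemma into membership of $\knows{a}{M}\limp\knows{a}{M'}$ in \emph{every} maximally consistent set, hence by universality of the canonical model into derivability, to which the rule of epistemic antitonicity then applies.
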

\begin{proof}
	Soundness follows from
		the admissibility of axioms and rules (\cf Proposition~\ref{proposition:AxiomAndRuleAdmissibility}), and
	completeness by means of the classical construction of canonical models,
		using Lindenbaum's construction of maximally consistent sets (\cf Appendix~\ref{appendix:Proofs}).
\end{proof}
We leave the study of \emph{strong adequacy} \cite[Section~3]{ModalProofTheory} for future work.

\begin{corollary}[Consistency]\label{corollary:Consistency}\ 
	\begin{enumerate}
		\item If $\LiPded\phi$ then $\not\LiPded\neg\phi$.
		\item $\not\LiPded\false$
	\end{enumerate}
\end{corollary}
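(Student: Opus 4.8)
The plan is to reduce consistency to the soundness half of the adequacy theorem (Theorem~\ref{theorem:Adequacy}.1), exploiting the fact that a logic sound with respect to a class of models that contains at least one non-empty pointed model can prove neither $\false$ nor both a formula and its negation.

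First I would establish Part~2. By Proposition~\ref{proposition:AxiomAndRuleAdmissibility}.1 we have $\models\knows{a}{a}$, that is $\models\true$; hence in every model $(\aModalFrame,\mathcal{V})$ and at every state $s$ we have $(\aModalFrame,\mathcal{V}),s\models\true$ and therefore $(\aModalFrame,\mathcal{V}),s\not\models\false$ by the satisfaction clause for negation. Provided there exists at least one model with a non-empty state space, it follows that $\not\models\false$, and the contrapositive of soundness (Theorem~\ref{theorem:Adequacy}.1) yields $\not\LiPded\false$.

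Next I would derive Part~1 from Part~2 using the classical propositional base. Since $\Gamma_{0}\subseteq\Gamma_{1}$ is an adequate axiom set for classical propositional logic and $\Clo{}{}$ is closed under \emph{modus ponens}, the propositional tautology $\phi\limp(\neg\phi\limp\false)$ lies in $\LiP$; so if both $\LiPded\phi$ and $\LiPded\neg\phi$ held, two applications of \emph{modus ponens} would give $\LiPded\false$, contradicting Part~2. Symmetrically, one could instead prove Part~1 directly from soundness---$\models\phi$ and $\models\neg\phi$ cannot both hold at a state of a non-empty pointed model---and then read off Part~2 as the instance $\phi\defeq\true$.

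The only step needing genuine care is the existence of a non-empty pointed model, without which validity would be vacuous and the whole argument would collapse. I expect this to be routine rather than a real obstacle: a single-state frame $\states\defeq\set{s_{0}}$ with, say, $\msgs{a}(s_{0})\defeq\emptyset$ for every $a$ already meets all the requirements of Definition~\ref{definition:KripkeModel}---the induced relations $\preorder{a}$, $\indist{a}{}{}$, and $\pAccess{M}{a}{\community}$ are reflexive by construction (\cf Proposition~\ref{proposition:ConcreteAccessibility}.1), the distinguished atoms are fixed by $\mathcal{V}(\knows{a}{M})\defeq\setst{s}{M\in\clo{a}{s}(\emptyset)}$, and the remaining propositional variables may be valued arbitrarily. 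This supplies the witnessing model and completes both parts.
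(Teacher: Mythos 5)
Your proof is correct, and its core is the same as the paper's: both reduce consistency to the soundness half of Theorem~\ref{theorem:Adequacy}. The decomposition is reversed, though. The paper proves Part~1 first, purely semantically---from $\LiPded\phi$ it infers $\models\phi$, hence $\not\models\neg\phi$ by the definition of $\models$, hence $\not\LiPded\neg\phi$ by contraposition of soundness---and then reads off Part~2 as exactly the instance you mention in your closing ``symmetric'' remark: $\phi\defeq\true$ together with the axiom $\LiPded\knows{a}{a}$ and the macros $\true\defeq\knows{a}{a}$, $\false\defeq\neg\true$. You go the other way round: Part~2 semantically (via Proposition~\ref{proposition:AxiomAndRuleAdmissibility}.1), then Part~1 syntactically from Part~2 via the tautology $\phi\limp(\neg\phi\limp\false)$ and \emph{modus ponens}, which is legitimate since $\Gamma_{0}\subseteq\LiP$ and $\Clo{}{}$ is closed under \emph{modus ponens} and substitution. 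Two things distinguish the routes, both in your favour. First, your explicit singleton-state model, with the concrete accessibility whose admissibility is precisely the content of Propositions~\ref{proposition:ConcreteAccessibility} and~\ref{proposition:SemanticInterface}, discharges an assumption that both arguments need but the paper leaves tacit: if no model with a non-empty state space existed, $\models\phi$ and $\models\neg\phi$ would hold vacuously, and the inference from $\models\phi$ to $\not\models\neg\phi$ would collapse. Second, you invoke only soundness; the paper's printed proof justifies the step from $\LiPded\phi$ to $\models\phi$ by ``semantic completeness'', but the implication actually used there is axiomatic soundness (Theorem~\ref{theorem:Adequacy}.1)---a mislabel your formulation avoids. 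The only thing the paper's ordering buys is brevity: Part~2 becomes a one-line instantiation of Part~1, with no need for the syntactic detour through $\phi\limp(\neg\phi\limp\false)$.
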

\begin{proof}
	As usual: suppose that $\LiPded\phi$.
		Hence $\models\phi$ by semantic completeness.
		Hence $\not\models\neg\phi$ by the definition of $\models$\,.
		Hence $\not\LiPded\neg\phi$ by contraposition of axiomatic soundness; and 
	(2) follows jointly from 
			the instance of (1) where $\phi\defeq\true$, 
			the axiom $\LiPded\knows{a}{a}$, and  
			the macro-definitions of 
				$\true$ as $\knows{a}{a}$ and
				$\false$ as $\neg\true$.
\end{proof}

\begin{corollary}[Stateful proof equality]\label{corollary:ProofEquality}
Let $(\aModalFrame, \mathcal{V})$ designate an arbitrary LiP-model, and
let $s\in\states$, $M\in\messages$, $a\in\agents$, $\community\subseteq\agents$, and $\phi\in\pFormulas$.
Further let:
\begin{eqnarray*}
	\denotation{M}{s}{a}{\community} &\defeq& \setst{\phi}{(\aModalFrame, \mathcal{V}),s\models\proves{M}{\phi}{a}{\community}}\quad\text{(\textbf{message meaning})}\\
	\provesEq{s}{a}{\community} &\defeq& \setst{(M,M')\in\messages\times\messages}{\denotation{M}{s}{a}{\community}=\denotation{M'}{s}{a}{\community}}\\
	0 &\defeq& \equivclass{a}{\provesEq{s}{a}{\community}}\\
	\equivclass{M}{\provesEq{s}{a}{\community}} + \equivclass{M'}{\provesEq{s}{a}{\community}} &\defeq& 
		\equivclass{\pair{M}{M'}}{\provesEq{s}{a}{\community}}.
\end{eqnarray*}

Then, $$\langle\messages/_{\provesEq{s}{a}{\community}},0,+\rangle$$ is an \emph{idempotent commutative monoid,} that is, 
	for all $\boldsymbol{M},\boldsymbol{M}',\boldsymbol{M}''\in\messages/_{\provesEq{s}{a}{\community}}:$
	\begin{enumerate}
		\item $\boldsymbol{M}+(\boldsymbol{M}'+\boldsymbol{M}'')=(\boldsymbol{M}+\boldsymbol{M}')+\boldsymbol{M}''$
				\quad(associativity)
		\item $\boldsymbol{M}+\boldsymbol{M}'=\boldsymbol{M}'+\boldsymbol{M}$\quad(commutativity)
		\item $\boldsymbol{M}+\boldsymbol{M}=\boldsymbol{M}$\quad(idempotency)
		\item $\boldsymbol{M}+0=\boldsymbol{M}$\quad(neutral element).
	\end{enumerate}
\end{corollary}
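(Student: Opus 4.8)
The plan is to split the statement into the easy part (the four monoid equations) and the real work (well-definedness of $+$). For the equations, I would simply push the correspondingly named laws of Theorem~\ref{theorem:SomeUsefulDeducibleStructuralLaws} through axiomatic soundness (Theorem~\ref{theorem:Adequacy}.1). For instance proof commutativity gives the validity $\models(\proves{\pair{M}{M'}}{\phi}{a}{\community})\lequiv\proves{\pair{M'}{M}}{\phi}{a}{\community}$; evaluating at the fixed $s$ and expanding the definition of $\denotation{\cdot}{s}{a}{\community}$ yields $\denotation{\pair{M}{M'}}{s}{a}{\community}=\denotation{\pair{M'}{M}}{s}{a}{\community}$, i.e. $\equivclass{\pair{M}{M'}}{\provesEq{s}{a}{\community}}=\equivclass{\pair{M'}{M}}{\provesEq{s}{a}{\community}}$, which is commutativity of $+$. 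Proof idempotency, proof associativity, and self-neutral proof element give idempotency, associativity, and the neutral-element law $\equivclass{\pair{M}{a}}{\provesEq{s}{a}{\community}}=\equivclass{M}{\provesEq{s}{a}{\community}}$ (that is $\boldsymbol{M}+0=\boldsymbol{M}$ with $0=\equivclass{a}{\provesEq{s}{a}{\community}}$) in exactly the same way.

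The genuine content is well-definedness of $+$: from $\denotation{M_1}{s}{a}{\community}=\denotation{M_2}{s}{a}{\community}$ and $\denotation{N_1}{s}{a}{\community}=\denotation{N_2}{s}{a}{\community}$ I must conclude $\denotation{\pair{M_1}{N_1}}{s}{a}{\community}=\denotation{\pair{M_2}{N_2}}{s}{a}{\community}$. I would argue geometrically through the accessibility region $R_M\defeq\setst{s'\in\states}{s\access{M}{a}{\community}s'}$, working with the constructed accessibility of Definition~\ref{definition:SemanticIngredients} (licensed by Proposition~\ref{proposition:SemanticInterface}), via two lemmas. First, the region of a pair is the intersection of the component regions, $R_{\pair{M}{N}}=R_M\cap R_N$: the inclusion $\subseteq$ is immediate because $\pair{M}{N}\in\clo{a}{\check{s}}(\emptyset)$ forces $M,N\in\clo{a}{\check{s}}(\emptyset)$, and $\supseteq$ holds because if $s'\in R_M\cap R_N$ with witnesses $\check{s}_1,\check{s}_2$, then $\indist{a}{\check{s}_1}{s'}$ and $\indist{a}{\check{s}_2}{s'}$ give $\clo{a}{\check{s}_1}(\emptyset)=\clo{a}{s'}(\emptyset)=\clo{a}{\check{s}_2}(\emptyset)$, so the single witness $\check{s}_2$ already contains both tokens and hence, by pairing closure, their pair. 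Second, equal meaning forces equal region: every state of $R_M$ satisfies $\knows{a}{M}$ (immediate from the construction, since any witness $\check{s}$ has $M\in\clo{a}{\check{s}}(\emptyset)$ and is $\indist{a}{}{}$-related to the target), so $\denotation{M}{s}{a}{\community}=\denotation{M'}{s}{a}{\community}$ lets each $M$-witness be reused verbatim as an $M'$-witness, giving $R_M=R_{M'}$. Combining the two, $R_{\pair{M_1}{N_1}}=R_{M_1}\cap R_{N_1}=R_{M_2}\cap R_{N_2}=R_{\pair{M_2}{N_2}}$, and equal regions yield equal meanings since a meaning is just the theory of its region.

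The step I expect to be the real obstacle is this second lemma, equal meaning implies equal region. A priori $\denotation{M}{s}{a}{\community}$ is only the theory of $R_M$, and distinct regions can share a theory, so meaning-equality does not by itself transport witnesses; what rescues the argument is precisely that the distinguished atoms $\knows{a}{M}$ sit inside the meaning and pin the region down to states knowing $M$, which is where the predefined valuation $\mathcal{V}(\knows{a}{M})=\setst{s\in\states}{M\in\clo{a}{s}(\emptyset)}$ is indispensable. I would also be careful about the phrase ``arbitrary LiP-model'': rather than extracting $R_{\pair{M}{N}}=R_M\cap R_N$ from the abstract a posteriori constraints alone, I would run the region argument on the constructed accessibility and invoke Proposition~\ref{proposition:SemanticInterface} to identify it with the model's abstract accessibility. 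Once well-definedness is in hand, the assignment $\equivclass{M}{\provesEq{s}{a}{\community}}\mapsto R_M$ exhibits $\langle\messages/_{\provesEq{s}{a}{\community}},0,+\rangle$ as a meet-subsemilattice of $\powerset{\states}$ whose relative top is $R_a$ (the full reachable $\indist{a}{}{}$-saturated region, as $a\in\clo{a}{\check{s}}(\emptyset)$ always holds and hence $R_M\subseteq R_a$), and any such structure is visibly an idempotent commutative monoid, which also re-establishes the four equations independently of the structural laws.
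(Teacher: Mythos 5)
For the four monoid equations your first paragraph is precisely the paper's own proof: the paper cites the soundness (Theorem~\ref{theorem:Adequacy}.1) of proof associativity, commutativity, idempotency, and the self-neutral proof element (Theorem~\ref{theorem:SomeUsefulDeducibleStructuralLaws}) and stops there. The genuinely new part of your proposal is the well-definedness of $+$, which the paper's one-line proof passes over in silence; and there your two region lemmas are indeed correct \emph{for the constructed accessibility} ${\pAccess{M}{a}{\community}}$ of Definition~\ref{definition:SemanticIngredients}: $\indist{a}{\check{s}}{s'}$ literally means $\clo{a}{\check{s}}(\emptyset)=\clo{a}{s'}(\emptyset)$, so witnesses transfer exactly as you describe, and the predefined valuation of $\knows{a}{M}$ together with self-knowledge pins regions down from meanings.

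The gap is your last step, the bridge from the constructed accessibility to an ``arbitrary LiP-model''. Proposition~\ref{proposition:SemanticInterface}.2 does not identify the two: it is the conditional statement that \emph{if} one sets ${\access{M}{a}{\community}}\defeq{\pAccess{M}{a}{\community}}$ \emph{then} the a posteriori constraints hold, i.e.\ it certifies the constructed relation as one admissible accessibility among possibly many, not as the accessibility of every model. A model in the sense of Definition~\ref{definition:KripkeModel} carries an arbitrary abstractly constrained relation, and for such a relation neither of your lemmas survives: no constraint relates the accessibility relations of \emph{distinct} terms, so nothing forces $R_{\pair{M}{N}}=R_M\cap R_N$ (the GK constraint and the semantic antitonicity constraint only yield \emph{lower} bounds on the meaning of a pair in terms of the meanings of its components), and nothing forces equal meanings to yield equal regions. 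The canonical model of the completeness proof (Appendix~\ref{appendix:Proofs}), whose accessibility is defined from maximally consistent sets rather than from $\clo{a}{}$, is exactly such a model; congruence there would amount to a cut-style law, for instance $(\proves{M}{\phi}{a}{\community})\limp((\proves{M'}{\knows{a}{M}}{a}{\community})\limp\proves{M'}{\phi}{a}{\community})$, which holds on the constructed semantics but is neither among the paper's deducible laws nor derivable from the abstract constraints, since per-term transitivity cannot be composed across the distinct terms $M$ and $M'$. So as written your proof establishes the corollary for models with the constructed accessibility only; for arbitrary models the well-definedness of $+$ remains unproven --- a lacuna you have correctly exposed (the paper's own proof never addresses it), but not one that your appeal to Proposition~\ref{proposition:SemanticInterface} closes.
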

\begin{proof}
	By the soundness of 
		proof associativity, commutativity, and idempotency, and 
		the law of a self-neutral proof element, respectively 
			(\cf Theorem~\ref{theorem:SomeUsefulDeducibleStructuralLaws}).
\end{proof}

\begin{theorem}[Finite-model property]\label{theorem:FMP}
	For any LiP-model $\mathfrak{M}$,
	if $\mathfrak{M}, s\models\phi$ 
	then there is a finite LiP-model $\mathfrak{M}_{\mathrm{fin}}$ such that 
		$\mathfrak{M}_{\mathrm{fin}}, s\models\phi$.
\end{theorem}
\begin{proof}
	By the fact that 
		the \emph{minimal filtration} \cite{ModalModelTheory}
			$$\mathfrak{M}_{\mathrm{flt}}^{\mathrm{min},\Gamma}\defeq
				(\states/_{\sim_{\Gamma}},
					\set{(\access{M}{a}{\community})^{\mathrm{min},\Gamma}}_{M\in\messages,a\in\agents,\community\subseteq\agents},\mathcal{V}_{\Gamma})$$ of 
			any LiP-model $\mathfrak{M}\defeq(\states,\set{\access{M}{a}{\community}}_{M\in\messages,a\in\agents,\community\subseteq\agents},\mathcal{V})$ 
			through a finite $\Gamma\subseteq\pFormulas$ is a finite LiP-model such that 
				for all $\gamma\in\Gamma$, 
					$\mathfrak{M},s\models\gamma$ if and only if 
					$\mathfrak{M}_{\mathrm{flt}}^{\mathrm{min},\Gamma},[s]_{\sim_{\Gamma}}\models\gamma$.
			Following \cite{ModalModelTheory} for our setting, 
				we define 
				\begin{eqnarray*}
					{\sim_{\Gamma}}&\defeq&
						\setst{(s,s')\in\states\times\states}{\text{for all $\gamma\in\Gamma$, 
							$\mathfrak{M},s\models\gamma$ iff $\mathfrak{M},s'\models\gamma$}}\\ 
					{(\access{M}{a}{\community})^{\mathrm{min},\Gamma}} &\defeq&
						\setst{([s]_{\sim_{\Gamma}},[s']_{\sim_{\Gamma}})}{(s,s')\in{\access{M}{a}{\community}}}\\
					\mathcal{V}_{\Gamma}(P)&\defeq&
							\setst{[s]_{\sim_{\Gamma}}}{s\in\mathcal{V}(P)}\,.
				\end{eqnarray*}
			We further fix 
					$M\in\clo{a}{[s]_{\sim_{\Gamma}}}(\emptyset)$ as  
						$[s]_{\sim_{\Gamma}}\in\mathcal{V}_{\Gamma}(\knows{a}{M})$
				and choose $\Gamma$ to be the (finite) sub-formula closure of $\phi$.
			Hence, we are left to prove that 
				$\mathfrak{M}_{\mathrm{flt}}^{\mathrm{min},\Gamma}$ is indeed an LiP-model, which
				means that we are left to prove that 
						$(\access{M}{a}{\community})^{\mathrm{min},\Gamma}$ has 
					all the properties of ${\access{M}{a}{\community}}$.
			Simply apply definitions back and forth.
\end{proof}

\begin{corollary}[Algorithmic decidability]\label{corollary:Decidability}
	LiP is algorithmically decidable.
\end{corollary}
\begin{proof}
	In order to algorithmically decide whether or not $\phi\in\LiP$ 
	(that is, $\LiPded\phi$) for some $\phi\in\pFormulas$ (and the current choice of $\messages$), 
		axiomatic adequacy allows us to check whether or not $\neg\phi$ is locally satisfiable (That is,
			whether or not $\mathfrak{M},s\models\neg\phi$ for 
				some LiP-model $\mathfrak{M}$ and state $s$.
					Also, $M\in\clo{a}{s}(\emptyset)$ on the currently chosen  
						message language $\messages$ is obviously decidable;
						for other, more complex message languages including cryptographic messages, 
							see for example \cite{TGJ} and \cite{BaskarJamSuresh2010}).
	But then, the finite-model property of LiP allows us 
		to enumerate all finite LiP-models $\mathfrak{M}_{\mathrm{fin}}$ up to a size of at most 2 to the power 
			of the size $n$ of the sub-formula closure of $\neg\phi$ and
		to check whether or not $\mathfrak{M}_{\mathrm{fin}},s\models\neg\phi$.
	(First, there are at most $2^n$ equivalence classes for $n$ formulas.
	 Second, 
		checking classical negation within a finite model is also a finite task.)
\end{proof}
Note that 
	the algorithmic complexity of LiP will depend on 
		the specific choice of $\messages$ and the correspondingly chosen term axioms.

\section{Interactive programs from interactive proofs}\label{section:TiCLfromLiP}
In this section,
	we present the remaining interactive formalisms mentioned in Figure~\ref{figure:Methodology}, namely: 
		(1) an interactive multi-agent S4-modal logic (iS4),
		(2) interactive Intuitionistic Logic (iIL), and  
		(3) (typed) interactive Combinatory Logic (TiCL) providing \emph{(typed) interactive programs as 
				agent-specific combinators.}

\subsection{Interactive S4 (iS4)}\label{section:iS4}
An interactive multi-agent classical normal S4-modal logic (iS4) can be obtained as 
	a fragment of LiP with epistemically guarded quantifiers over term variables (egFOLiP) by
		interpreting \emph{the iS4-necessity modality as 
			the existence of the individual knowledge of an interactive proof in the sense of LiP.}

We define iS4 and its necessity modality by the following seven laws
			\begin{itemize}
				\item for all axioms $\phi$ of classical propositional logic,
						$\vdash_{\mathrm{iS4}}\phi$
				\item $\vdash_{\mathrm{iS4}}\Box_{a}^{\community}(\phi\limp\phi')\limp
						(\Box_{a}^{\community}(\phi)\limp\Box_{a}^{\community}(\phi'))$
						\quad(K)
				\item $\vdash_{\mathrm{iS4}}\Box_{a}^{\community}(\phi)\limp
							\Box_{a}^{\community}(\Box_{a}^{\community}(\phi))$
						\quad(4)
				\item $\vdash_{\mathrm{iS4}}\Box_{a}^{\community}(\phi)\limp\phi$
						\quad(T)
				\item $\vdash_{\mathrm{iS4}}\Box_{a}^{\community\cup\community'}(\phi)\limp
						\Box_{a}^{\community}(\phi)$
						\quad(group decomposition)
				\item $\set{\phi}\vdash_{\mathrm{iS4}}\Box_{a}^{\community}(\phi)$
						\quad(N)
				\item $\set{\phi,\phi\limp\phi'}\vdash_{\mathrm{iS4}}\phi$
						\quad(\emph{modus ponens})
			\end{itemize}
and its egFOLiP-interpretation map as
		\begin{eqnarray*}
			\Box_{a}^{\community}(\phi) &\mapsto& \exists m(\knows{a}{m}\land\proves{m}{\phi}{a}{\community})
				\qquad\text{(epistemic provability)}
		\end{eqnarray*}
(S4- and iS4-syntax are identical modulo iS4-modality parameters $a$ and $\community$.)
The corresponding iS4-possibility modality can then be macro-defined as usual and interpreted in egFOLiP accordingly:
		\begin{eqnarray*}
			\Diamond_{a}^{\community}(\phi) &
				\defeq& \neg\Box_{a}^{\community}(\neg\phi)\\
				&\mapsto& \neg\exists m(\knows{a}{m}\land\proves{m}{\neg\phi}{a}{\community})\\
				&\lequiv& \forall m(\knows{a}{m}\limp\neg(\proves{m}{\neg\phi}{a}{\community}))\\
				&\defeq& \forall m(\knows{a}{m}\limp\proofdiamond{M}{\phi}{a}{\community})
		\end{eqnarray*}
The following proposition asserts that 
	our egFOLiP-interpretation of our $\Box_{a}^{\community}$-modality is adequate in
		the sense of satisfying all the required modal laws.
\begin{proposition}
	The following iS4-interpretations are valid egFOLiP-laws:
	\begin{enumerate}
		\item $\exists m(\knows{a}{m}\land\proves{m}{(\phi\limp\phi')}{a}{\community})\limp
					(\exists m(\knows{a}{m}\land\proves{m}{\phi}{a}{\community})\limp
						\exists m(\knows{a}{m}\land\proves{m}{\phi'}{a}{\community}))$
		\item $\exists m(\knows{a}{m}\land\proves{m}{\phi}{a}{\community})\limp
				\exists m(\knows{a}{m}\land\proves{m}{(\exists m(\knows{a}{m}\land\proves{m}{\phi}{a}{\community}))}{a}{\community})$
		\item $\exists m(\knows{a}{m}\land\proves{m}{\phi}{a}{\community})\limp\phi$
		\item $\exists m(\knows{a}{m}\land\proves{m}{\phi}{a}{\community\cup\community'})\limp
				\exists m(\knows{a}{m}\land\proves{m}{\phi}{a}{\community})$
		\item if $\phi$ is valid in egFOLiP 
				then $\exists m(\knows{a}{m}\land\proves{m}{\phi}{a}{\community})$ is valid in egFOLiP
	\end{enumerate}
\end{proposition}
\begin{proof}
	We reason rigorously but only semi-formally, as 
		we have not explicitly stipulated all axioms for egFOLiP, by
			appealing only to obviously valid principles.
	These principles are the LiP-laws as well as usual FOL-laws.
	No special first-order modal laws or models \cite{FOModalLogic} are required for the proof.
	
	For (1), suppose that 
		$\exists m(\knows{a}{m}\land\proves{m}{(\phi\limp\phi')}{a}{\community})$ is locally true,
			that is, at an arbitrary location.
	Thus there is $M\in\messages$ such that
		$\knows{a}{M}\land\proves{M}{(\phi\limp\phi')}{a}{\community}$ is true at the same location.
	Further suppose that 
		$\exists m(\knows{a}{m}\land\proves{m}{\phi}{a}{\community})$ is true there too.
	Thus there is $M'\in\messages$ such that
		$\knows{a}{M'}\land\proves{M'}{\phi}{a}{\community}$ is true there.
	Hence,
		$\knows{a}{\pair{M}{M'}}\land\proves{\pair{M}{M'}}{\phi'}{a}{\community}$ is true there, by
			message pairing and the generalised Kripke-law, respectively.
	In conclusion,
		$\exists m(\knows{a}{m}\land\proves{m}{\phi'}{a}{\community})$ is locally true.
	
	For (2), suppose that 
		$\exists m(\knows{a}{m}\land\proves{m}{\phi}{a}{\community})$ is locally true.
	Thus there is $M\in\messages$ such that 
		$\knows{a}{M}\land\proves{M}{\phi}{a}{\community}$ is true at the same location.
	Hence,
		$\knows{a}{\sign{M}{a}}\land\proves{\sign{M}{a}}{(\knows{a}{M}\land\proves{M}{\phi}{a}{\community})}{a}{\community}$ is true there too,
			by signature synthesis as well as authentic knowledge and peer review, respectively.
	Thus there is $M'\in\messages$ such that
		$\knows{a}{M'}\land\proves{M'}{(\knows{a}{M}\land\proves{M}{\phi}{a}{\community})}{a}{\community}$ is true there.
	In conclusion,
		$\exists m(\knows{a}{m}\land\proves{m}{(\exists m(\knows{a}{m}\land\proves{m}{\phi}{a}{\community}))}{a}{\community})$ is locally true.
	
	For (3), suppose that 
		$\exists m(\knows{a}{m}\land\proves{m}{\phi}{a}{\community})$ is locally true.
	Thus there is $M\in\messages$ such that 
		$\knows{a}{M}\land\proves{M}{\phi}{a}{\community}$ is true at the same location.
	Hence and in conclusion,
		$\phi$ is true there too, by epistemic truthfulness.
	
	For (4), suppose that 
		$\exists m(\knows{a}{m}\land\proves{m}{\phi}{a}{\community\cup\community'})$ is locally true.
	Thus there is $M\in\messages$ such that
		$\knows{a}{M}\land\proves{M}{\phi}{a}{\community\cup\community'}$ is true at the same location.
	Hence,
		$\knows{a}{M}\land\proves{M}{\phi}{a}{\community}$ is true there too, by
			group decomposition.
	In conclusion,	
		$\exists m(\knows{a}{m}\land\proves{m}{\phi}{a}{\community})$ is locally true.
	
	For (5), suppose that 
		$\phi$ is valid in egFOLiP.
	Hence,
		$\proves{a}{\phi}{a}{\community}$ is valid in egFOLiP, by
			self-truthfulness \emph{bis}.
	On the other hand,
		$\knows{a}{a}$ (knowledge of one's own name string) is valid in egFOLiP too.
	Hence $\knows{a}{a}\land\proves{a}{\phi}{a}{\community}$ is valid in egFOLiP.
	In conclusion,
		$\exists m(\knows{a}{m}\land\proves{m}{\phi}{a}{\community})$ is valid in egFOLiP.
\end{proof}
\noindent
Note that the stronger principle 
	\begin{center}
		``\,for all $b\in\community\cup\set{a}$,
			$\vdash_{\mathrm{iS4}}\Box_{a}^{\community}(\phi)\limp
				\Box_{b}^{\community\cup\set{a}}(\Box_{a}^{\community}(\phi))$\,''
	\end{center}
cannot be valid due to 
	$a$ knowing $M$ not necessarily implying that $b$ know $\sign{M}{a}$ (at the same location),
		at least not without $a$ communicating $\sign{M}{a}$ to $b$ (and thus possibly inducing a successor and thus different location),
			as the reader can verify herself by
				trying (but necessarily failing) to prove this stronger principle.
Further note that a more general interactive provability modality interpreted as 
	\begin{eqnarray*}
		\Box_{\pair{a}{b}}^{\community}(\phi) &\mapsto& \exists m(\knows{a}{m}\land\proves{m}{\phi}{b}{\community})
	\end{eqnarray*}
cannot satisfy the T-law, thus is not S4, and thus does not embed (i)IL.

In summary,
	the multi-agency and thus interactivity of iS4 cannot be strong, which 
		is a strong argument for stronger, more explicit modalities than those of iS4 
			in the sense of modalities with an added proof parameter, like those of LiP.
Even less strong in this sense must be interactive Intuitionistic Logic (iIL).

\subsection{Interactive Intuitionistic Logic (iIL)}\label{section:iIL}
As demonstrated in \cite[Remark~2.8]{LIiP:ACMTOCL},
	\emph{the interactive intuitionistic truths are 
		the truths of the communication medium,}
			the adversarial agent \emph{par excellence}, and 
				only of that.
No other agent can access the same interactive truths.
This is so, because
	only the communication medium has a view of 
		the communication network, thus also of itself, that 
			is sufficiently global to 
				induce the necessary and sufficient partial order 
					(rather than only an insufficient preorder) for 
						a Kripke-semantics of 
							interactive Intuitionistic Logic (iIL),
								the non-modal intuitionistic fragment of 
									the intuitionistic modal Logic of intuitionistic interactive Proofs (LIiP) 
										\cite[Remark~2.12]{LIiP:ACMTOCL}.
Recall that Intuitionistic Logic (IL) can only be defined by means of 
	a Kripke-semantics that does require a partial order.

In other words,
	iIL is isomorphic to IL.
\begin{fact}\label{fact:iILisoIL}
		$\mathrm{iIL}\cong\mathrm{IL}$
\end{fact}
\noindent
Thus all connectives of IL can be viewed as pertaining to the communication medium,
	the adversary, and only to that.

Hence,
	iIL embeds into iS4 like 
		IL does into S4 \emph{\`{a} la} G\"odel-McKinsey-Tarski, by
			prefixing intuitionistic subformulas with 
				an S4-necessity modality like
					$\Box_{a}^{\community}$\,.
In the other direction, 
	iS4 embeds into iIL like S4 does into IL \emph{\`{a} la} \cite{S4mapstoIL}.

\subsection{Typed interactive Combinatory Logic (TiCL)}\label{section:TiCL}
In order to define typed interactive Combinatory Logic (TiCL),
	we first need to define untyped interactive Combinatory Logic (iCL), just as with  
		typed and untyped non-interactive Combinatory Logic (TCL and CL, respectively) 
			\cite{LambdaCalculusAndCombinators}.
\begin{definition}[Interactive Combinatory Logic (iCL)]
Let $\mathcal{X}$ designate a countably infinite set of (term) variables $x$.
Then, 
	$$\mathcal{T}\ni T\bnfeq x\bnfor\Kcomb{a}\bnfor\Scomb{a}\bnfor\pair{T}{T}$$
shall designate the language of \emph{(pure) interactive combinatory terms} (iCL-terms) $T$ with 
	basic combinators $\Kcomb{a}$ and $\Scomb{a}$ for all $a\in\agents$, and 
	${\reduc{a}}\subseteq\mathcal{T}\times\mathcal{T}$ such that
		\begin{itemize}
			\item $\pair{\pair{\Kcomb{a}}{T}}{T'}\reduc{a}T$
			\item $\pair{\pair{\pair{\Scomb{a}}{T}}{T'}}{T''}\reduc{a}\pair{\pair{T}{T''}}{\pair{T'}{T''}}$
			\item if $T\reduc{a}T'$ then $\pair{T}{T''}\reduc{a}\pair{T'}{T''}$ and $\pair{T''}{T}\reduc{a}\pair{T''}{T'}$
		\end{itemize}
shall designate \emph{(local) reduction at agent $a$} on iCL-terms.

Further let 
	$\alpha\in\agents^{*}\cup\agents^{\omega}$ designate 
		a finite or infinite word over $\agents$ with 
	$\epsilon\in\agents^{*}$ the empty word.
Then, 
	${\reduc{\alpha}}\subseteq\mathcal{T}\times\mathcal{T}$ such that 
		\begin{eqnarray*}
			{\reduc{\epsilon}}&\defeq&\emptyset\\
			{\reduc{a\cdot\alpha'}}&\defeq&{\reduc{a}}\circ{\reduc{\alpha'}}
		\end{eqnarray*}
shall designate \emph{reduction (tout court)} on iCL-terms.

A (finite or an infinite) reduction $T\reduc{\alpha}T'$ such that 
	$\alpha\in\community^{*}\cup\community^{\omega}$ for some $\emptyset\neq\community\subseteq\agents$ shall be called 
	a \emph{local,} \emph{single-agent,} or \emph{non-interactive computation} 
		when $|\community|=1$, and 
	a \emph{global} or 
		\emph{multi-agent}---and only possibly but not necessarily an interactive (\cf Definition~\ref{definition:interactiveLambda})---\emph{computation} 
			when $|\community|>1$.
		
\emph{iCL-combinators} are defined to be variable-free iCL-terms.
\end{definition}
\noindent
The reason for our introduction of explicit parentheses (pairing) in iCL-syntax is 
	the resulting possibility of interactive combinators to be messages (\emph{mobile code}).
Further, like for CL, 
	\emph{applied variants of iCL} can be obtained by 
		including other terms, such as agent names and others, into the language of pure iCL.

In fact, each agent has her own local copy of CL bearing her name.
More precisely, the agent with name $a$ \emph{is,} or \emph{is identified with,} 
	the relation $\reduc{a}$ and thus indirectly a Turing-machine labelled $a$.
Hence each such copy, and thus iCL as a whole (containing all those CL-copies), 
	is Turing-powerful, as CL is.
\begin{fact}\label{fact:iCLTuringStrength}
	iCL is at least Turing-powerful.
\end{fact}

Further like CL, iCL has the important Church-Rosser Property.
\begin{theorem}[Church-Rosser Property of iCL]\label{theorem:ChurchRosser}
	Reduction on iCL-terms is confluent (at most finitely diverging):
		for all 
			$T,T_{1},T_{2}\in\mathcal{T}$ and 
			$\alpha_{1},\alpha_{2}\in\agents^{*}$, 
	\begin{multline*}
		\text{if $T\reduc{\alpha_{1}}T_{1}$ and $T\reduc{\alpha_{2}}T_{2}$ then}\\ 
		\text{there are $T'\in\mathcal{T}$ and $\alpha_{1}',\alpha_{2}'\in\agents^{*}$ such that}\\
		\text{$T_{1}\reduc{\alpha_{1}'}T'$ and 
				$T_{2}\reduc{\alpha_{2}'}T'$.}
	\end{multline*}
\end{theorem}
\begin{proof}
	See Appendix~\ref{theorem:ChurchRosser}.
\end{proof}

Based on iCL, an \emph{interactive (agent-centric) lambda-operator} can be defined.
\begin{definition}[The interactive lambda-operator]\label{definition:interactiveLambda}
	Let $a\in\agents$, $x\in\mathcal{X}$, and $T\in\mathcal{T}$.
	Then in analogy with \cite{LambdaCalculusAndCombinators}, 
		 we define $\lambda_{a}x.T$, 
			the \emph{interactive lambda-operator in $x$ for $a$ and with scope $T$}, inductively on 
				the structure of iCL-terms such that:
		$$\begin{array}{l@{}l@{\ \ }l@{\ \ }l}
			\lambda_{a}x.&T &\defeq& \pair{\Kcomb{a}}{T}\qquad\text{if $x$ does not occur in $T$}\\[\jot]
			\lambda_{a}x.&x &\defeq& \pair{\pair{\Scomb{a}}{\Kcomb{a}}}{\Kcomb{a}}\qquad\text{($a$'s identity combinator)}\\[2\jot]
			\lambda_{a}x.&\pair{T}{x} &\defeq& T\qquad\text{if $x$ does not occur in $T$}\\[\jot]
			\lambda_{a}x.&\pair{T}{T'} &\defeq& \pair{\pair{\Scomb{a}}{\lambda_{a}x.T}}{\lambda_{a}x.T'}\qquad\text{if $x$ occurs in $\pair{T}{T'}$}
		\end{array}$$
	For $a\neq b$,
		a macro $\lambda_{a}x.T$ with 
			another macro $\lambda_{b}y.T'$ occurring in $T$ and $x$ occurring in $T'$ 
				shall be called a \emph{communication channel from agent $a$ to agent $b$} and
		its reduction an \emph{interactive} and thus a \emph{global and multi-agent computation}.
\end{definition}
\noindent
(Interactivity implies multi-agency, but not necessarily vice versa, because 
	multiple agents can compute locally for themselves, without communicating with each other.)
For example,
	the macro $\lambda_{a}x.\lambda_{b}y.\pair{y}{x}$ for the interactive program of 
		$a$ being able to send a message (through her port) $x$ to $b$ and thus 
		$b$ being able to receive $x$ through her port $y$, 
		expanding to $\lambda_{a}x.\pair{\pair{\Scomb{b}}{\lambda_{b}y.y}}{\lambda_{b}y.x}$), in turn
			expanding to $\lambda_{a}x.\pair{\pair{\Scomb{b}}{\pair{\pair{\Scomb{b}}{\Kcomb{b}}}{\Kcomb{b}}}}{\pair{\Kcomb{b}}{x}}$), \etc, 
				is such a channel.
From this, it should be obvious that 
	our interactive lambda-operator is useful syntactic sugar for 
		the agent- and channel-oriented programming of communication protocols.
Of course, we could extend our definition of reduction to iCL-terms with lambda-operators, and
	define each communication reduction to generate a concrete input-history event in the sense of
		Definition~\ref{definition:SemanticIngredients}.

Note that 
	due to its Turing-powerful interactivity,
		iCL can be appreciated as a 
			bound-variable free, grammatically much simpler alternative to 
				the popular Pi-calculus \cite{PiCalculus} and its derivatives (including its applied variants).

We postpone the further independent study of iCL to future work and
	continue to present the connection of iCL to iIL and LiP via TiCL.

\begin{definition}[Typed iCL (TiCL)]\label{definition:TiCL}
	Let $\mathcal{P}$ designate a set of type variables 
		(for example the set of propositional variables in Definition~\ref{definition:LiPLanguage}).
	Then,
		$P\in\mathcal{P}$ designates a type and
		if $\varphi$ and $\varphi'$ designate types 
		then so does $\varphi\limp\varphi'$, and
	\begin{itemize}
		\item $\Gamma\cup\set{x:\varphi}\vdash_{\mathrm{TiCL}}x:\varphi$
		\item $\Gamma\vdash_{\mathrm{TiCL}}\mathtt{K}_{a}:(\varphi\limp(\varphi'\limp\varphi))$
				\quad($a\in\agents$)
		\item $\Gamma\vdash_{\mathrm{TiCL}}\mathtt{S}_{a}:((\varphi\limp(\varphi'\limp\varphi''))\limp((\varphi\limp\varphi')\limp(\varphi\limp\varphi'')))$
				\quad($a\in\agents$)
		\item if $\Gamma\vdash_{\mathrm{TiCL}}T:(\varphi\limp\varphi')$ and
				$\Gamma\vdash_{\mathrm{TiCL}}T':\varphi$ then
				$\Gamma\vdash_{\mathrm{TiCL}}\pair{T}{T'}:\varphi'$
	\end{itemize}
	shall designate Typed iCL (TiCL).
\end{definition}
\noindent
TiCL is defined 
	in Curry's style of typing, where
		type formulas label proof terms, which  
	is more general than Church's style, where 
		type formulas are parts of the proof terms themselves.
For more details on these two typing styles, see \cite{LambdaCalculusAndCombinators}.

\begin{theorem}[Termination and Type Invariance of TiCL]\label{theorem:TerminationSubjectReduction}
	In TiCL, 
		all computations local or global are finite (terminating), and 
		preserve their typing (types are computational invariants, that is, 
			they are invariant under reduction):
		\begin{multline*}
			\text{for all $T,T'\in\mathcal{T}$,}\\ 
			\text{if 
						$T\reduc{\alpha}T'$ and 
						$\Gamma\vdash_{\mathrm{TiCL}}T:\varphi$
				  then $\alpha\in\agents^{*}$ and $\Gamma\vdash_{\mathrm{TiCL}}T':\varphi$.}
		\end{multline*}
\end{theorem}
\begin{proof}
	See Appendix~\ref{theorem:TerminationSubjectReduction}.
\end{proof}
\noindent
As usual, 
	the \emph{termination} of computations (executing programs) corresponds to 
		the \emph{totality} of the corresponding computed functions.
That types are computational invariants is natural, because 
	(1) types are intuitionistic formulas, which
			are (forward) invariant under computations (see for example \cite{LIiP:ACMTOCL}), and 
	(2) types and interactive programs are interconnected via 
			the following Curry-Howard isomorphism
				(formulation adapted from \cite{LecturesOnTheCurryHowardIsomorphism}).
\begin{proposition}[Interactive Curry-Howard Isomorphism]\label{proposition:iCH}
	Let $\mathcal{N}\subsetneq\mathbb{N}$.
	Then, 
	\begin{enumerate}
		\item if $\set{x_{i}:\varphi_{i}}_{i\in\mathcal{N}}\vdash_{\mathrm{TiCL}}T:\varphi$
				then $\set{\varphi_{i}}_{i\in\mathcal{N}}\vdash_{\mathrm{iIL}}\varphi$\,, and
		\item if $\set{\varphi_{i}}_{i\in\mathcal{N}}\vdash_{\mathrm{iIL}}\varphi$
				then there is $T\in\mathcal{T}$ such that 
					$\set{x_{i}:\varphi_{i}}_{i\in\mathcal{N}}\vdash_{\mathrm{TiCL}}T:\varphi$\,.
	\end{enumerate}
\end{proposition}
\begin{proof}
	Immediate:
		for (1) by induction on the derivation of 
					$\set{x_{i}:\varphi_{i}}_{i\in\mathcal{N}}\vdash_{\mathrm{TiCL}}T:\varphi$, and 
		for (2) by induction on the derivation of 
					$\set{\varphi_{i}}_{i\in\mathcal{N}}\vdash_{\mathrm{iIL}}\varphi$.
	Just inspect the (corresponding) axioms and definitions of (i)IL and TiCL, respectively.
\end{proof}

The following proposition asserts that 
	the modal logic LiP can be viewed as 
		including the type system that is categorical TiCL, where 
			categorical\footnote{For more details on categoricity, see for example \cite{TheFunctionalInterpretationOfLogicalDeduction}.} TiCL is TiCL without variables.
Of course, adding variables to LiP would be trivial, and
	so LiP with variables could be viewed as including (full) TiCL!
\begin{proposition}[LiP as including categorical TiCL] For all $a,b\in\agents:$ 
	\begin{enumerate}
		\item $\LiPded\proves{\mathtt{K}_{b}}{(\phi\limp(\phi'\limp\phi))}{a}{\community}$
		\item $\LiPded\proves{\mathtt{S}_{b}}{((\phi\limp(\phi'\limp\phi''))\limp((\phi\limp\phi')\limp(\phi\limp\phi'')))}{a}{\community}$
		\item $\set{\proves{M}{(\phi\limp\phi')}{a}{\community},\proves{M'}{\phi}{a}{\community}}\LiPded
				\proves{\pair{M}{M'}}{\phi'}{a}{\community}$
	\end{enumerate}
\end{proposition}
\begin{proof}
	(1) and (2) hold by necessitation (N), and
	(3) by the generalised Kripke-law (GK).
\end{proof}
\noindent
Note however that the converse view ``categorical TiCL as including LiP'' is incorrect.
Categorical TiCL can only be viewed as a proper fragment of LiP, since
	type systems can only talk about atomic type statements.
		(Type systems have no logical connectives that could form compound type statements.)
So, modal logics of combinators like LiP are more general than simple-type systems.

\begin{theorem}[Fundamental Isomorphism of Interactive Computation]\label{theorem:ITIC}
	$$\mathrm{TiCL}\cong\mathrm{TCL}$$
\end{theorem}
\begin{proof}
	By 
		the isomorphisms 
			$\mathrm{TiCL}\cong\mathrm{iIL}$ (Proposition~\ref{proposition:iCH}),
			$\mathrm{iIL}\cong\mathrm{IL}$ (Fact~\ref{fact:iILisoIL}), and
			$\mathrm{IL}\cong\mathrm{TCL}$ (Curry-Howard), and 
		the transitivity property of isomorphy.
\end{proof}
\noindent
The simple typing of iCL can of course be extended to more complex and thus powerful typings,
	like the simple typing of CL has been.
Thus the question of whether or not interactive computation is strictly more powerful (recall Fact~\ref{fact:iCLTuringStrength}) than 
	non-interactive computation \cite{InteractiveComputation} may well be settled in stages of typings capturing 
		levels of gradually increasing computational strength.
\begin{corollary}[Equipotency of Simply-Typed Interactivity and Non-Interactivity]\label{corollary:Equipotency}
	Interactive and non-interactive computation
		are equipotent at the level of simple types (as defined by TiCL and TCL, respectively), and 
			capture the so-called extended polynomial functions over the natural numbers 
				\cite{LambdaCalculusAndCombinators,DefinableFunctionsSimplyTypedlambdaCalculus}.
\end{corollary}
\begin{proof}
	By Theorem~\ref{theorem:ITIC} and \cite{DefinableFunctionsSimplyTypedlambdaCalculus}, respectively.
\end{proof}
\noindent
This result as well as the strategy of settling computability and complexity questions by type stages could have 
	a profound impact on communication complexity and distributed (and thus concurrent and parallel) computation research.
Ultimately, they enable a comparative approach to the Church-Turing Thesis.

\section{Related work}\label{section:RelationToLP}
In this section,
	we relate 
		our Logic of interactive Proofs (LiP) 
			to Art\"{e}mov's Logic of Proofs (LP) \cite{LP} and 
			to a generalised variant thereof, namely
				his Symmetric Logic of Proofs (SLP) \cite{SymmetricLP}.
We also relate LiP to two extensions of LP with multi-agent character, namely 
	Yavorskaya's LP$^{2}$ \cite{YavorskayaSidon} and Renne's UL \cite{Renne2}.
The general aim of this section is to give a detailed description of 
	crucial design decisions for interactive and non-interactive systems on
		the example of related works.
Essentially, we argue that,  
	first, LP and LiP can be related but 
		have typically different (not always) but complementary scopes, namely 
			non-interactive computation and universal truths, and 
			interactive computation and local truths, respectively; and, 
	second, LiP improves LP-like systems with respect to interactivity.
That LP and LiP can indeed be related is evidenced   
		to some extent by Theorem~\ref{theorem:EmbeddingLPintoLiP} and  
		proved by the example following it, which 
			happens to be formalisable in both LP and LiP.
That we discuss multi-agent extensions of LP is justified by
	the fact that 
		LP$^{2}$ and UL are intended to be interactive but
			inherit the lack of message-passing interactivity from LP.
As a matter of fact, the example with \emph{signing} is formalisable only in LiP.

\subsection{Concepts}\label{section:Concepts}
In (S)LP, 
	$p{:}F$ stands for an atomic concept.
Whereas in LiP, 
	$\proves{M}{\phi}{a}{\community}$ stands for a compound concept 
		analysable into epistemic constituents (\cf Section~\ref{section:EpistemicExplication}), 
			\emph{nota bene} thanks to a constructive semantics defined
				in terms of the proof terms themselves (\cf Page~\pageref{page:ProofAccessibility}).
In that,
	our construction is reminiscent of the canonical-model construction, which
		like ours is a constructive semantics defined in terms of syntax, but 
		unlike ours not in terms of terms but in terms of formulas (\cf Appendix~\ref{appendix:Proofs}).

\subsubsection{Interactivity}\label{section:Interactivity} 
(S)LP proofs are non-interactive, whereas 
	LiP proofs are interactive (knowledge-inducing).
(S)LP proofs are non-interactive also due to (S)LP's reflection axiom, which
	stipulates that provability imply truth\footnote{(S)LP (and LiP) has a semantics, 
		so we may use the word `truth' here.}.
However, 
	in a truly interactive setting, 
		(S)LP's reflection axiom is unsound.
By a truly interactive setting, 
	we mean a multi-agent distributed system 
		where not all proofs are known by all agents, that is,  
			a setting with a non-trivial distribution of information 
				in the sense of Scott (\cf Proposition~\ref{proposition:PropertiesOfDerivability}), 
					in which $\not\models\knows{a}{M}$.
In other words, 
	in truly interactive settings,
		agents are not omniscient with respect to messages. Otherwise, why communicate?
As proof, consider the following, self-referential counter-example: 
	$\models\proves{M}{(\knows{a}{M})}{a}{\emptyset}$ (self-knowledge) but 
		$\not\models\proves{M}{(\knows{a}{M})}{a}{\emptyset}\limp\knows{a}{M}$.
In truly interactive settings, 
	there being a proof does not imply knowledge of that proof.
When 
	an agent $a$ does not know the proof and 
	the agent cannot generate the proof \emph{ex nihilo} herself by guessing it,
		only \emph{communication} from a peer, who thus acts as an oracle, can entail
			the knowledge of the proof with $a$.
In sum, 
	\textbf{\emph{provability and truth are necessarily concomitant in the non-interactive setting, whereas 
	in interactive settings they are not necessarily so.}}

\subsubsection{Proof terms}\label{section:ProofTerms}
(S)LP needs three proof-term constructors, namely sum, application, and proof checker.
Whereas LiP only needs two, namely pairing and signing, but 
	as opposed to LP can even handle interaction (with signing).
Incidentally, 
	G\"odel conjectured that two proof-term constructors were sufficient for proofs \cite{ArtemovBSL}.
In LiP,
	pairing plays a pair of roles, namely the two roles played by sum and application in LP, and
	thanks to Fact~\ref{fact:CommonProofKnowledge} the agents themselves within their own communities 
		may---not a term constructor like `!' in (S)LP must---play the proof-checker role!
In sum,
	first, LiP-agents play a pair of roles, namely the two roles of proof as well as signature checker, and, 
	second, signatures can be conceived as proof-checker-apposed, communally verifiable seals of check.

\subsubsection{Formulas}
(S)LP's proof modality `${:}$' has no parameters, whereas LiP's `$\proves{}{}{a}{\community}$' has two.
The advantage of LiP's parametric modality is agent-centricity and thus greater generality.
As a nice side effect, 
	LiP's proof terms have neutral elements.

\subsection{Laws}
\subsubsection{Structural laws (\cf Theorem~\ref{theorem:SomeUsefulDeducibleStructuralLaws})}\label{section:StructuralLaws}
In LP,
	the proof-sum operation `$+$' is 
		neither commutative 
		nor idempotent, but in SLP, it is both, like `$\pair{\cdot}{\cdot}$' in LiP.
In (S)LP,
	`$+$' has no neutral element, whereas
in LiP the corresponding `$\pair{\cdot}{\cdot}$' has.
As said previously,
	LiP's `$\pair{\cdot}{\cdot}$' can simulate not only LP's proof sum but also (S)LP's proof application.
However,
	LiP's `$\pair{\cdot}{\cdot}$' \emph{cannot} simulate \emph{S}LP's sum.
To see why,
	consider that 
		if (S)LP were defined analogously to LiP by means of a separate term theory using 
			atomic propositions `$\knows{\negthinspace}{p}$' (for ``$p$ is known'') and 
			an analog of epistemic antitonicity 
		then the structural modal laws of (S)LP could be (partially) generated from the structural term laws,  
			analogously to LiP.
			
\paragraph{LP} From the term axiom schema 
$$\knows{\negthinspace}{p{+}q}\limp(\knows{\negthinspace}{p}\land\knows{\negthinspace}{q})$$
	generate the corresponding characteristic law $$((p{:}F)\lor q{:}F)\limp (p{+}q){:}F.$$
							
\paragraph{SLP}
		\begin{enumerate}
			\item From the term axiom schema
					$$\knows{\negthinspace}{p{+}q}\limp(\knows{\negthinspace}{p}\land\knows{\negthinspace}{q})$$
				generate the corresponding characteristic law $$((p{:}F)\lor q{:}F)\limp (p{+}q){:}F.$$
			\item Add the axiom schema $$((p{+}q){:}F)\limp((p{:}F)\lor q{:}F),$$ and
					(disregarding SLP's proof application)  
					obtain the characteristic law $$((p{:}F)\lor q{:}F)\lequiv (p{+}q){:}F$$ of SLP's sum,
							which subsumes LP's sum law.
		\end{enumerate}
				However in the case of LiP, 
						$$\not\models(\proves{\pair{M}{M'}}{\phi}{a}{\community})\limp
								((\proves{M}{\phi}{a}{\community})\lor\proves{M'}{\phi}{a}{\community}),$$ due to
									the obvious counter-example 
										(recall that $\models\proves{\pair{M}{M'}}{\knows{a}{\pair{M}{M'}}}{a}{\emptyset}$)
									$$\not\models(\proves{\pair{M}{M'}}{\knows{a}{\pair{M}{M'}}}{a}{\emptyset})\limp
								((\proves{M}{\knows{a}{\pair{M}{M'}}}{a}{\emptyset})\lor\proves{M'}{\knows{a}{\pair{M}{M'}}}{a}{\emptyset}).$$				
				That is, it is not generally true that 
					single projections prove pair knowledge.

\subsubsection{Logical laws (\cf Theorem~\ref{theorem:SomeUsefulDeducibleLogicalLaws})}\label{section:LogicalLaws}
(S)LP does not obey 
	Kripke's law K, 
	the law of necessitation, nor 
	a law of modal idempotency.
Whereas LiP does obey 
	K as well as the generalised Kripke-law GK,
	necessitation, and 
	the law of modal idempotency.

Note that for resource-bounded agents, 
	restricting the (resource-unbounded) pairing axiom would be desirable in order 
		to prevent the (resource-unbounded) K from being deducible in LiP.
Incidentally, (S)LP can be understood as being reconstructed 
	only from the (resource-bounded) \emph{un}pairing axiom and
	not from the (resource-unbounded) pairing axiom (\cf Section~\ref{section:StructuralLaws}).

The justification for choosing (plain) necessitation instead of LP's constant specification for LiP is that
	in the interactive setting,
		validities, and thus \emph{a fortiori} tautologies 
			(in the strict sense of validities of the propositional fragment), 
				are in some sense trivialities.
To see why,
	recall from Definition~\ref{definition:TruthValidity} that
		validities are true in \emph{all} pointed models, and thus
			not worth being communicated from one point to another in a given model, \eg 
				by means of specific interactive proofs.
(Nothing is logically more embarrassing than  
	talking in tautologies.)
Therefore, 
	validities deserve \emph{arbitrary} messages as proof.
What is worth being communicated are
	truths weaker than validities, namely 
		local truths in the sense of Definition~\ref{definition:TruthValidity}, which
			do not hold universally (\cf Table~\ref{table:InterestingTruths}).
\begin{table}[t]\centering
	\caption{Interesting truths}
	\smallskip
	\begin{tabular}{@{}|c|c|@{}}
		\hline
		\textbf{Computation} & \textbf{Truth}\\
		\hline
		\hline
		interactive & local\\
		\hline
		non-interactive & universal\\
		\hline
	\end{tabular}
	\label{table:InterestingTruths}
\end{table}
Note that
	our choice is not forced but free: 	
		we could have chosen constant specification for LiP too 
			(\eg ``$\LiPded\proves{a}{\phi}{a}{\community}$, for $\phi\in\Gamma_{1}$'') and thus
			kept a closer relationship between LP and LiP, but that would have, first,  
				put unnecessarily strong proof obligations on validities
						as far as interactivity is concerned, as explained; and,  
				second, unfaithfully modelled resource-unbounded interacting agents, which 
						already know all universal truths or validities, 
							though of course not all local truths, which 
								is the whole point of interacting with each other!
In sum, while LP weakens necessitation,
	LiP weakens truthfulness (without risking falsehood) by conditioning it on proof knowledge 
		(\cf Section~\ref{section:Interactivity}).

(S)LP does not obey the law of modal idempotency, because
		it does not have agents that could act as proof checkers and 
		thus needs a term constructor for proof-checking.
Whereas LiP does obey modal idempotency, because
	LiP does have agents that can act as proof checkers (\cf Section~\ref{section:ProofTerms}) and
		thus does not need a term constructor for proof-checking.
Observe that 
	modal idempotency is deducible in LiP due to 
		the law of self-signing elimination, which in turn 
			is deducible in LiP due to 
		the axiom of personal signature synthesis
				(\cf Section~\ref{appendix:LogicalProofs}).
Note that for resource-bounded agents, 
	restricting (resource-unbounded) personal signature synthesis could be desirable in order   
		to prevent (resource-unbounded) modal idempotency from being deducible in LiP.
Incidentally, (S)LP can be understood as being reconstructed 
	from no term axioms involving the proof checker `$!$' (\cf Section~\ref{section:StructuralLaws}).

\subsubsection{Meta-logical properties}
(S)LP is not a normal modal logic, 
	because (S)LP does not obey Kripke's law. 
Whereas LiP is a normal logic (\cf Fact~\ref{fact:Normality}).
LP is in $\Sigma_{2}^{p}$
	\cite{Kuznets}, but the decidability and thus complexity of SLP is unknown \cite{SymmetricLP}.
A lower complexity bound for LiP is EXPTIME, which follows 
		from the complexity of the logic of common knowledge, which is EXPTIMEcomplete 
			\cite{ComplexityKnowledgeBelief}, and
		from the fact that 
			the concrete accessibility relation $\pAccess{M}{a}{\community}$ for LiP requires 
				$\preorder{\community\cup\set{a}}$, which contains
				the one for common knowledge $\indist{\community\cup\set{a}}{}{}$ (\cf Page~\pageref{page:CommonKnowledge}).
As mentioned at the end of Section~\ref{section:ImportantProperties},
	complexity and decidability depend on term axioms.

\subsection{Formal relation}\label{section:LPmapstoLiP}
In order to establish a formal relation between LP and LiP,
	we consider LiP 
		over a singleton society and
		over the term forms suggested on Page~\pageref{page:TermForms}.
So without loss of generality let $\agents=\set{a}$ and $\messages\setminus\set{\Kcomb{a},\Scomb{a}}$.
Further, 
	fix LP's set of specification constants to consist of $\set{a}$, and
	consider the mapping $h$ over LP-formulas that maps LP's 
	\begin{itemize}
		\item proof-sum `$+$' and proof-application `$\cdot$' to LiP's proof-pair constructor `$\pair{\cdot}{\cdot}$'
		\item proof checker `$!$' to LiP's proof-signature constructor `$\sign{\cdot}{a}$'
		\item proof modality `${:}$' to LiP's proof modality `$\proves{}{}{a}{\emptyset}$'.
	\end{itemize}

\begin{lemma}[Admissibility of LP-laws for LiP]\label{lemma:LPLawsAdmissibility}
When $\agents=\set{a}:$
	\begin{enumerate}\setcounter{enumi}{-1}
		\item $\LiPded\varphi$, for any axiom $\varphi$ of classical propositional logic
		\item $\LiPded((\proves{M}{\phi}{a}{\emptyset})\lor\proves{M'}{\phi}{a}{\emptyset})\limp\proves{\pair{M}{M'}}{\phi}{a}{\emptyset}$
		\item $\LiPded(\proves{M}{(\phi\limp\phi')}{a}{\emptyset})\limp
						((\proves{M'}{\phi}{a}{\emptyset})\limp\proves{\pair{M}{M'}}{\phi'}{a}{\emptyset})$
		\item $\LiPded (\proves{M}{\phi}{a}{\emptyset})\limp\phi$
		\item $\LiPded (\proves{M}{\phi}{a}{\emptyset})\limp
						\proves{\sign{M}{a}}{(\proves{M}{\phi}{a}{\emptyset})}{a}{\emptyset}$
		\item $\set{\phi\limp\phi',\phi'}\LiPded\phi'$
		\item $\LiPded\proves{a}{\phi}{a}{\emptyset}$, for any formula $\phi$ 
					for which $\LiPded\phi$ in Item~0--4.
\end{enumerate}
\end{lemma}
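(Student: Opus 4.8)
The plan is to verify the seven clauses separately, observing that under the hypothesis $\agents=\set{a}$ each clause is the $h$-image of a defining clause of LP and that, with $\community$ instantiated to $\emptyset$, most of them coincide verbatim with an axiom of $\Gamma_{1}$ or with a law already established earlier. Concretely, I would dispatch the easy ones first: Clause~0 holds because $\Gamma_{0}\subseteq\Gamma_{1}\subseteq\LiP$; Clause~1 is the law of proof extension (Theorem~\ref{theorem:SomeUsefulDeducibleStructuralLaws}) at $\community\defeq\emptyset$; Clause~2 is the axiom GK at $\community\defeq\emptyset$; Clause~3 is the singleton-society law of truthfulness (Theorem~\ref{theorem:SomeUsefulDeducibleLogicalLaws}); Clause~5 is the defining \emph{modus ponens} closure condition of $\Clo{}{}$; and Clause~6 is an immediate instance of necessitation with proof term $a$ and $\community\defeq\emptyset$, so that $\LiPded\phi$ yields $\LiPded\proves{a}{\phi}{a}{\emptyset}$ directly for each $\phi$ coming from Clauses~0--4. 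Each of these is a one-line appeal.

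The only clause requiring genuine work is Clause~4, the image of LP's proof-checker axiom, and I expect it to be the main obstacle. Here I would start from the peer-review axiom of $\Gamma_{1}$ instantiated at $\community\defeq\emptyset$, in which case $\community\cup\set{a}=\set{a}$ and the indexed conjunction collapses to its single conjunct $b=a$, yielding
$$\LiPded(\proves{M}{\phi}{a}{\emptyset})\limp\proves{\sign{M}{a}}{(\knows{a}{M}\land\proves{M}{\phi}{a}{\emptyset})}{a}{\set{a}}.$$
It then remains to rewrite the consequent into $\proves{\sign{M}{a}}{(\proves{M}{\phi}{a}{\emptyset})}{a}{\emptyset}$. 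I would do this in two moves: first, the law of a self-neutral group element lets me replace the modality community $\set{a}=\emptyset\cup\set{a}$ by $\emptyset$; second, since $\agents=\set{a}$ the law of total knowledge gives $\LiPded\knows{a}{M}$, whence $\LiPded(\knows{a}{M}\land\proves{M}{\phi}{a}{\emptyset})\lequiv\proves{M}{\phi}{a}{\emptyset}$ by propositional reasoning, and feeding this equivalence as the premise of regularity \emph{bis} strips the redundant conjunct $\knows{a}{M}$ from inside the modality operand.

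Chaining these two equivalences establishes
$$\LiPded\proves{\sign{M}{a}}{(\knows{a}{M}\land\proves{M}{\phi}{a}{\emptyset})}{a}{\set{a}}\lequiv\proves{\sign{M}{a}}{(\proves{M}{\phi}{a}{\emptyset})}{a}{\emptyset},$$
and composing it with the peer-review instance above by transitivity of implication (plain propositional logic inside the closure system) yields Clause~4. The delicate points to get right are the bookkeeping of the community parameter under the self-neutral group element and the correct direction of regularity \emph{bis}; both are routine once the peer-review axiom is taken as the starting point, which is why I would treat identifying that axiom---rather than the simpler GK or truthfulness---as the crux of the whole argument.
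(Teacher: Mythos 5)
Your proposal is correct, and on Clauses~0--3 and 5--6 it coincides with the paper's proof (proof extension, GK, singleton-society truthfulness, the closure conditions of $\Clo{}{}$, and particularised necessitation, respectively). The genuine divergence is Clause~4. The paper dispatches it in one line from two previously established \emph{general} laws: modal idempotency (Theorem~\ref{theorem:SomeUsefulDeducibleLogicalLaws}.25) gives $(\proves{M}{\phi}{a}{\emptyset})\limp\proves{M}{(\proves{M}{\phi}{a}{\emptyset})}{a}{\emptyset}$, and self-signing idempotency (Theorem~\ref{theorem:SomeUsefulDeducibleStructuralLaws}.19) then trades the outer $M$ for $\sign{M}{a}$. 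You instead re-derive the clause from scratch: peer review at $\community\defeq\emptyset$, self-neutral group element to drop the community to $\emptyset$, and then---crucially---total knowledge ($\LiPded\knows{a}{M}$, valid only because $\agents=\set{a}$) feeding regularity \emph{bis} to strip the conjunct $\knows{a}{M}$ inside the modality. Both derivations are sound under the lemma's hypothesis, and in fact they share an ancestor, since the paper's own proof of modal idempotency also runs through the peer-review machinery. But note what each buys: the paper's route shows that Clause~4 is a theorem of LiP for \emph{arbitrary} $\agents$ (the singleton hypothesis enters the lemma only through Clause~3 and the $h$-mapping), whereas your use of total knowledge makes your argument for Clause~4 essentially singleton-specific. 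Your route is more self-contained if one has not yet proved modal idempotency; the paper's is shorter and more general given the earlier theorems. Also, your reading of Clause~4 as the image of LP's proof-checker axiom and of peer review as the natural source for it is the right structural intuition---it is exactly how the paper's modal-idempotency law is obtained one level up.
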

\begin{proof}
	(0) holds by definition of LiP.
	For the rest, 
		set $\community=\emptyset$.
	Then (1) is LiP's law of proof extension (\cf Theorem~\ref{theorem:SomeUsefulDeducibleStructuralLaws}.11);
	(2) is LiP's generalised Kripke-law;
	(3) is, given that $\agents=\set{a}$ and $\messages\setminus\set{\Kcomb{a},\Scomb{a}}$, 
		LiP's law of truthfulness 
			(\cf Theorem~\ref{theorem:SomeUsefulDeducibleLogicalLaws}.28.a);
	(4) is LiP's laws of peer review; 
	(5) holds by definition of LiP; and 
	(6) follows by particularising LiP-necessitation.
\end{proof}
	
\begin{theorem}[Homomorphism from LP into LiP]\label{theorem:EmbeddingLPintoLiP}
	For $\agents$ a singleton, 
		for all LP-formulas $F$,
			$$\text{if $\LPded F$ then $\LiPded h(F)$.}$$
\end{theorem}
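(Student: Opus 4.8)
The plan is to prove the implication by induction on the structure of the derivation witnessing $\LPded F$, that is, on the clauses $0$--$6$ of the proof system that defines LP. The translation $h$ is a homomorphism on the propositional skeleton (so $h(\neg F)=\neg h(F)$, $h(F\limp G)=h(F)\limp h(G)$, and likewise for the other connectives) and on proof polynomials it acts by $h(p{+}q)=h(p{\cdot}q)=\pair{h(p)}{h(q)}$, $h(!p)=\sign{h(p)}{a}$, and $h(p{:}F)=\proves{h(p)}{h(F)}{a}{\emptyset}$, sending LP-formulas to genuine LiP-formulas. The whole point of Lemma~\ref{lemma:LPLawsAdmissibility} is that it pre-packages, under exactly this translation, a LiP-theorem (or admissible LiP-rule) for each of LP's axiom schemas and rules, so the induction will reduce to matching each clause with the correspondingly numbered lemma item.

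For the base cases I would argue as follows. If $F$ is a classical propositional axiom (clause~$0$), then because $h$ preserves the propositional skeleton, $h(F)$ is again an instance of a classical propositional axiom, whence $\LiPded h(F)$ by item~$0$. If $F$ is an instance of the sum axiom (clause~$1$) then $h(F)$ is an instance of item~$1$; here it is essential that $h$ collapses both `${+}$' and `${\cdot}$' to pairing, which is precisely why items~$1$ and~$2$ of the lemma are phrased with `$\pair{\cdot}{\cdot}$'. Likewise the application axiom (clause~$2$), the reflection axiom (clause~$3$), and the proof-checker axiom (clause~$4$) map under $h$ to instances of items~$2$, $3$, and~$4$, respectively, each of which yields $\LiPded h(F)$.

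For the inductive step, \emph{modus ponens} (clause~$5$) is discharged by the two induction hypotheses $\LiPded h(G)\limp h(H)$ and $\LiPded h(G)$ together with LiP's own \emph{modus ponens}, using $h(G\limp H)=h(G)\limp h(H)$. The only case needing genuine care is constant specification (clause~$6$): since LP's specification constants have been fixed to $\set{a}$, every such theorem is of the form $a{:}A$ with $A$ an LP-axiom drawn from clauses~$0$--$4$, so $h(a{:}A)=\proves{a}{h(A)}{a}{\emptyset}$. The already-treated base cases yield $\LiPded h(A)$ with $h(A)$ established via items~$0$--$4$, and item~$6$ of the lemma then lifts this to $\LiPded\proves{a}{h(A)}{a}{\emptyset}$; no regress arises because the axioms invoked here never themselves come from clause~$6$. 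The main obstacle is therefore not mathematical depth but bookkeeping: verifying that $h$ is a homomorphism on the full (possibly nested) formula and proof-polynomial structure, that LP's proof variables map coherently to LiP term forms, and that each LP clause lines up with its lemma item, with the constant-specification case resting on the prior discharge of the axiom cases.
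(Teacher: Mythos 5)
Your proposal is correct and takes essentially the same route as the paper: the paper's proof consists of the single remark that the theorem follows ``from the admissibility of LP-axioms and LP-rules for LiP (Lemma~\ref{lemma:LPLawsAdmissibility})'', and your induction on LP-derivations, matching each of LP's clauses $0$--$6$ with the correspondingly numbered item of that lemma, is exactly the spelled-out form of this argument (the paper also mentions a second, semantic route, which you do not need). Your handling of constant specification via item~6, with the observation that no regress arises, is the right way to close the one case the paper leaves implicit.
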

\begin{proof}
	By the admissibility of LP-axioms and -rules for LiP (\cf Lemma~\ref{lemma:LPLawsAdmissibility}).
\end{proof}
\noindent
However the converse is not true, and thus $h$ is only a homomorphism and not an embedding.
As counter example consider Kripke's law, which holds in LiP, but
		does not hold in LP (\cf Section~\ref{section:LogicalLaws}).
In sum,
	while plain propositional logic can be viewed as a modal logic interpreted over a singleton universe,  
	LP can be viewed only to a limited extent as LiP over a singleton society.
The extent is limited because LiP does not mathematically contain LP, as   
	LP does not embed (injectively homomorph) but 
		only non-injectively homomorph into LiP, which
			we believe reflects the essential difference between 
				their scopes.
We stress that 
	LP and LiP have typically different, complementary scopes, namely 
		non-interactive computation and universal truths, and
		interactive computation and local truths, respectively.
Nevertheless:
	\begin{enumerate}
		\item LP and LiP have a non-empty intersection, as 
				the following example proves, which 
					happens to be formalisable in both LP \cite{JustificationLogic} and LiP, which
						is also why we have chosen it (comparative explanatory power).
		\item LiP is richer than S4, since LiP 
				generalises S4 with agent centricity and
					refines S4 with explicit, transmittable proofs.
	\end{enumerate}
The example involves two elementary formal proofs, which for clarity 
	we present in the style of Frederic Fitch, 
		justified by the following definition and facts.
So its purpose is not mathematical (to reason about) but elementary (to use) logic.
The example makes all things used as explicit and thus as formal as possible.
So the mathematical (meta-)logician may want to simply skip it.
\begin{definition}[Local hypotheses]\label{definition:LocalHypotheses}
Let 
	$\Lambda\subsetneq\pFormulas$ such that $\Lambda$ is finite, and 
\begin{eqnarray*}
	\Gamma;\Lambda\LiPded\phi 
		&\text{:iff}& \Gamma\LiPded(\bigwedge\Lambda)\limp\phi
\end{eqnarray*}
	(\cf Proposition~\ref{proposition:Hilbert}), where
		$\Lambda$ is understood as a finite set of \emph{local} hypotheses.
\end{definition}
\begin{fact}[\emph{A fortiori} true, persistently provable and known as true] 
	$$\Gamma,\phi;\Lambda\LiPded\phi\land(\proves{M}{\phi}{a}{\community})\land\proves{a}{\phi}{a}{\emptyset},$$
	where $\Gamma,\phi$ means $\Gamma\cup\set{\phi}$.
\end{fact}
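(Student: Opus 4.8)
The plan is to unfold the two layers of notation and then observe that everything rides on the availability of necessitation for \emph{global} premises. By Definition~\ref{definition:LocalHypotheses}, the asserted judgement $\Gamma,\phi;\Lambda\LiPded\phi\land(\proves{M}{\phi}{a}{\community})\land\proves{a}{\phi}{a}{\emptyset}$ means precisely that
$$\Gamma\cup\set{\phi}\LiPded(\textstyle\bigwedge\Lambda)\limp\bigl(\phi\land(\proves{M}{\phi}{a}{\community})\land\proves{a}{\phi}{a}{\emptyset}\bigr),$$
so it suffices to establish this single $\LiPded$-consequence, with $\phi$ sitting on the \emph{global} side $\Gamma\cup\set{\phi}$. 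Writing $L\defeq\bigwedge\Lambda$ (well defined since $\Lambda$ is finite), extensivity of $\Clo{}{}$ (Proposition~\ref{proposition:Hilbert}) gives $\phi\in\Clo{}{}(\Gamma\cup\set{\phi})$, i.e.\ $\Gamma\cup\set{\phi}\LiPded\phi$; this is the seed from which all three conjuncts grow.

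First I would derive the three conjuncts separately. The conjunct $\phi$ is the seed itself. For $\proves{M}{\phi}{a}{\community}$ I would invoke necessitation, which in the present closure-operator set-up is a genuine closure condition and hence applies to global premises, not merely to theorems: from $\phi\in\Clo{}{}(\Gamma\cup\set{\phi})$ the necessitation clause of $\Clo{}{n+1}$ yields $\proves{M}{\phi}{a}{\community}$ for every choice of term, agent, and community, in particular for the given $M,a,\community$ and, taking $M\defeq a$ and $\community\defeq\emptyset$, the third conjunct $\proves{a}{\phi}{a}{\emptyset}$ (alternatively this is self-truthfulness \emph{bis}, Theorem~\ref{theorem:SomeUsefulDeducibleLogicalLaws}.12). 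Thus $\Gamma\cup\set{\phi}\LiPded\phi$, $\Gamma\cup\set{\phi}\LiPded\proves{M}{\phi}{a}{\community}$, and $\Gamma\cup\set{\phi}\LiPded\proves{a}{\phi}{a}{\emptyset}$.

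It then remains to assemble these into the required implication by purely propositional reasoning, which is legitimate because $\Gamma_{0}$ is an adequate axiom set for classical propositional logic and \emph{modus ponens} is a closure condition. Conjunction introduction (a derivable propositional rule) combines the three consequences into $\Gamma\cup\set{\phi}\LiPded\phi\land(\proves{M}{\phi}{a}{\community})\land\proves{a}{\phi}{a}{\emptyset}$, and one final application of the tautology $\psi\limp(L\limp\psi)$ together with \emph{modus ponens} prepends the antecedent $L$, delivering exactly the displayed consequence and hence the Fact.

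The only point requiring care---and the one I would flag explicitly---is the global/local bookkeeping forced by Definition~\ref{definition:LocalHypotheses}: necessitation is sound on the premises in $\Gamma\cup\set{\phi}$ precisely because those are \emph{global} premises over which $\Clo{}{}$ is taken, whereas the \emph{local} hypotheses $\Lambda$ are never necessitated, being discharged only at the very end through the material implication $L\limp(\cdot)$. Keeping $\phi$ on the global side throughout is what licenses necessitating it; everything else is routine propositional bookkeeping.
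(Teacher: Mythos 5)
Your proof is correct, and its core ingredients coincide with the paper's one-line proof: the first conjunct is immediate, the second is necessitation, and the third is again necessitation instantiated with $M\defeq a$ and $\community\defeq\emptyset$ (the paper cites self-truthfulness \emph{bis} here, but the forward direction of that law is itself proved by necessitation, so it is the same step), followed by routine propositional assembly. The genuine difference lies in how you read the consequence relation. The paper defines $\Phi\LiPded\psi$ as ``if $\Phi\subseteq\LiP$ then $\psi\in\LiP$'' (Proposition~\ref{proposition:Hilbert}), so its proof assumes the global premise $\phi$ is a theorem of the base theory and applies necessitation as a rule about theorems; you instead work inside the LiP-theory $\Clo{}{}(\Gamma\cup\set{\phi})$, treating the premises as added axioms and necessitation as a closure condition over them. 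These two relations are not identical: your ``i.e.'' between $\phi\in\Clo{}{}(\Gamma\cup\set{\phi})$ and $\Gamma\cup\set{\phi}\LiPded\phi$ overstates an equivalence, since the paper's relation also holds vacuously whenever some premise is a non-theorem, whereas membership in $\Clo{}{}(\Gamma\cup\set{\phi})$ then need not. Your argument survives because the direction you actually need is valid: if the target formula lies in $\Clo{}{}(\Gamma\cup\set{\phi})$ and $\Gamma\cup\set{\phi}\subseteq\LiP$, then monotonicity and idempotency of $\Clo{}{}$ (both listed in Proposition~\ref{proposition:Hilbert}) give $\Clo{}{}(\Gamma\cup\set{\phi})\subseteq\Clo{}{}(\Clo{}{}(\emptyset))\subseteq\Clo{}{}(\emptyset)=\LiP$; adding this one bridging sentence would make your proof exact under the paper's official definition. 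As for what each route buys: the paper's conditional reading makes necessitation on global premises automatic (they are theorems by hypothesis), while your closure reading establishes the strictly stronger fact that the conclusion is derivable in the extended theory even when the premises are not theorems---more informative about LiP-theories in general, and the natural setting for your correct closing observation that the local hypotheses $\Lambda$, unlike the global premises, are never necessitated but only discharged through the material implication.
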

\begin{proof} 
	From Proposition~\ref{proposition:Hilbert} 
	by
	 	the fact that $\LiPded\phi$ implies $\LiPded\phi$,
		necessitation, and
		self-truthfulness \emph{bis} for the above case 
			$\phi$, 
			$\proves{M}{\phi}{a}{\community}$, and 
			$\proves{a}{\phi}{a}{\emptyset}$,
				respectively.
\end{proof}
\noindent
Recall from Section~\ref{section:EpistemicExplication}, 
	that $\LiPded\proves{a}{\phi}{a}{\emptyset}$ can be read
		as ``$a$ persistently knows that $\phi$ is true''
			(unless interpreted defeasibly, \cf 
				Proposition~\ref{proposition:Instancy}).
\begin{fact}[Fitting-style deduction ``theorems'' \cite{ModalProofTheory}]  
	$$\mathrm{LDT}\ \begin{array}{@{}c@{}}
		\Gamma;\Lambda,\phi\LiPded\phi'\\
		\hline
		\hline
		\Gamma;\Lambda\LiPded\phi\limp\phi'
	\end{array}\qquad
	\mathrm{MP}\ \begin{array}{@{}c@{}}
		\Gamma;\Lambda,\phi\LiPded\phi'\\
		\hline
		\Gamma,\phi;\Lambda\LiPded\phi'
	\end{array}$$
	Here,
		``LDT'' abbreviates ``Local Deduction Theorem'',
		``MP'' abbreviates ``\emph{modus ponens}'',
		$\Lambda,\phi$ means $\Lambda\cup\set{\phi}$,  
		the double horizontal bar means ``if and only if'', and 
		the simple horizontal bar reads ``if \ldots then \ldots'' from top to bottom.
\end{fact}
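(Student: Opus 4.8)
The plan is to reduce both rules to classical propositional reasoning by unfolding Definition~\ref{definition:LocalHypotheses}, since the local hypotheses in $\Lambda$ are there absorbed into the succedent as a conjunctive antecedent and therefore behave purely propositionally rather than modally. Throughout I would use only that $\LiPded$ is a (substitution-invariant compact) closure operator---in particular extensive and monotone---and that it contains classical propositional logic and is closed under \emph{modus ponens}, both of which are immediate from Proposition~\ref{proposition:Hilbert} together with $\Gamma_{0}\subseteq\Gamma_{1}$.

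First I would record the auxiliary propositional fact that, writing $\psi$ for $\bigwedge\Lambda$, both implications of the exportation/importation biconditional
$$\LiPded((\psi\land\phi)\limp\phi')\lequiv(\psi\limp(\phi\limp\phi'))$$
are derivable, together with the remark that $\bigwedge(\Lambda\cup\set{\phi})$ and $(\bigwedge\Lambda)\land\phi$ are propositionally interderivable, so that the choice of conjunction ordering and the degenerate case $\Lambda=\emptyset$ (where $\bigwedge\emptyset=\true$) are immaterial. From this, LDT is immediate: unfolding the double bar, its two sides are exactly $\Gamma\LiPded(\bigwedge(\Lambda\cup\set{\phi}))\limp\phi'$ and $\Gamma\LiPded(\bigwedge\Lambda)\limp(\phi\limp\phi')$, and these are interderivable because both implications of the displayed biconditional lie in $\Clo{}{}(\emptyset)\subseteq\Clo{}{}(\Gamma)$ by monotonicity and so transport either antecedent to the other by \emph{modus ponens}.

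For MP I would unfold the two judgements to $\Gamma\LiPded(\bigwedge\Lambda\land\phi)\limp\phi'$ (hypothesis) and $\Gamma,\phi\LiPded(\bigwedge\Lambda)\limp\phi'$ (goal). Monotonicity carries the hypothesis into the larger premise set $\Gamma\cup\set{\phi}$, extensivity gives $\Gamma,\phi\LiPded\phi$, and then a purely propositional step---assume $\bigwedge\Lambda$, conjoin with $\phi$ to form $\bigwedge\Lambda\land\phi$, and discharge the implication---yields $(\bigwedge\Lambda)\limp\phi'$ under $\Gamma\cup\set{\phi}$, as required.

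I expect no deep obstacle; the only points demanding care are bookkeeping ones. The first is the treatment of $\bigwedge$ over a \emph{set} $\Lambda$, which is well defined only up to propositional equivalence and so must be handled uniformly (including the empty and the $\phi\in\Lambda$ cases). The second, more conceptual point---worth flagging explicitly---is \emph{why} MP carries only a single bar: its converse would amount to discharging a \emph{global} premise $\phi\in\Gamma$ into the conclusion, i.e. a deduction theorem at the level of $\Gamma$, which fails in LiP precisely because $\Gamma$-premises feed necessitation (one has $\set{\phi}\LiPded\proves{M}{\phi}{a}{\community}$ but not $\LiPded\phi\limp(\proves{M}{\phi}{a}{\community})$). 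It is exactly this asymmetry that motivates the separate local context $\Lambda$ and makes the biconditional LDT available only there.
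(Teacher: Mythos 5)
Your proposal is correct and takes essentially the same route as the paper's own (one-line) proof: LDT is exactly the unfolding of Definition~\ref{definition:LocalHypotheses} modulo propositional exportation/importation, and MP is underwritten by LiP's closure under \emph{modus ponens}. The paper merely leaves implicit the bookkeeping you spell out (the conjunction-over-a-set convention, weakening of premises, and your correct closing observation that a \emph{global} deduction theorem fails because global premises feed necessitation).
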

\begin{proof}
	The validity of 
		the LDT rule schema is warranted by Definition~\ref{definition:LocalHypotheses}, and
		the one of the MP rule schema by the \emph{modus ponens} rule schema of LiP.
\end{proof}
\newcommand{\Smith}{\mathtt{Smith}}
\newcommand{\Jones}{\mathtt{Jones}}
\newcommand{\job}{\mathsf{job}}
\newcommand{\ten}{\mathsf{10}}
\newcommand{\HR}{\mathtt{HR}}
\noindent
Following \cite{JustificationLogic},
	we now present the more difficult Case I of Gettier's Case I and II,
		which according to \cite{JustificationLogic} 
			``were supposed to provide examples of justified true beliefs 
				which should not be considered knowledge.''
\begin{example}[Gettier, from \cite{JustificationLogic}]
	Suppose that Smith and Jones have applied for a certain job.
	And suppose that Smith has strong evidence for the following conjunctive proposition:
	(d) Jones is the man who will get the job, and
		Jones has ten coins in his pocket.
	Proposition (d) entails:
	(e) The man who will get the job has ten coins in his pocket.
	Let us suppose that Smith sees the entailment from (d) to (e), and
	accepts (e) on the grounds of (d), for which he has strong evidence.
	In this case, Smith is clearly justified in believing that 
	(e) is true.
	But imagine, further, that unknown to Smith, he himself, not Jones,
	will get the job. And also, unknown to Smith, he himself has ten coins in his pocket.
	Then, all of the following are true:
	1) (e) is true, 
	2) Smith believes that (e) is true.
		But it is equally clear that Smith does not know that (e) is true.
\end{example}
\noindent
Interpreting ``strong evidence'' in Gettier's example as ``proof'' in our sense,
Gettier's Case I can be formalised in LiP as follows.
Let:
\begin{itemize}
	\item $a\in\agents\defeq\set{\Smith,\Jones};$
	\item for all $a\in\agents$, $\job(a),\ten(a)\in\mathcal{P}.$
\end{itemize}
Then Gettier's assumptions stated in his example are contradictory,
	as asserted by Proposition~\ref{proposition:GettierExample} and proved by
		jointly Lemma~\ref{lemma:Gettier} and the proof in Table~\ref{table:GettierExample}.
Lemma~\ref{lemma:Gettier} corresponds to the assertion that 
	(d) entails (e) in Gettier's example.
\begin{lemma}[Gettier example]\label{lemma:Gettier}
	$$\begin{array}{@{}l@{}}
		\set{(\job(\Smith)\land\job(\Jones))\limp\false};\emptyset\LiPded\\
		(\job(\Jones)\land\ten(\Jones))\limp\bigwedge_{a\in\agents}(\job(a)\limp\ten(a))
	\end{array}$$
\end{lemma}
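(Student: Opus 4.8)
The plan is to observe that, despite the modal packaging, Lemma~\ref{lemma:Gettier} is a purely \emph{propositional} consequence and hence provable within the propositional fragment $\Gamma_{0}\subseteq\LiP$. First I would unfold the local-hypothesis notation of Definition~\ref{definition:LocalHypotheses}. Since $\Lambda=\emptyset$, the empty conjunction $\bigwedge\emptyset$ is $\true$, so the claim $\set{(\job(\Smith)\land\job(\Jones))\limp\false};\emptyset\LiPded\psi$ reduces to $\set{(\job(\Smith)\land\job(\Jones))\limp\false}\LiPded\psi$, where $\psi$ abbreviates the target implication $(\job(\Jones)\land\ten(\Jones))\limp\bigwedge_{a\in\agents}(\job(a)\limp\ten(a))$. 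As every atom occurring in $\psi$ and in the hypothesis is an ordinary propositional variable ($\job(\cdot)$ and $\ten(\cdot)$ being members of $\mathcal{P}$), neither the proof modality nor a $\knows{}{}$-atom is involved; it therefore suffices to exhibit a classical propositional deduction of $\psi$ from the single hypothesis.

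Next I would expand the conclusion over the two-member society. Since $\agents=\set{\Smith,\Jones}$,
$$\bigwedge_{a\in\agents}(\job(a)\limp\ten(a))\ =\ (\job(\Smith)\limp\ten(\Smith))\land(\job(\Jones)\limp\ten(\Jones)),$$
so $\psi$ becomes $(\job(\Jones)\land\ten(\Jones))\limp\big((\job(\Smith)\limp\ten(\Smith))\land(\job(\Jones)\limp\ten(\Jones))\big)$. I would then move the antecedent $\job(\Jones)\land\ten(\Jones)$ into the local-hypothesis list via the Local Deduction Theorem (LDT), reducing the goal to deriving the two conjuncts of the consequent separately.

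The core step is elementary. The Jones-conjunct $\job(\Jones)\limp\ten(\Jones)$ holds at once because its consequent $\ten(\Jones)$ is among the assumptions. For the Smith-conjunct $\job(\Smith)\limp\ten(\Smith)$, I would invoke the hypothesis $(\job(\Smith)\land\job(\Jones))\limp\false$, which is classically equivalent to $\neg\job(\Smith)\lor\neg\job(\Jones)$; since $\job(\Jones)$ is assumed, this forces $\neg\job(\Smith)$, so $\job(\Smith)\limp\ten(\Smith)$ holds \emph{vacuously}. Conjoining the two conjuncts and discharging the assumed antecedent by LDT yields $\psi$. I expect no genuine obstacle: the content is a one-line truth-table check, and the only care needed is the bookkeeping---applying the convention $\bigwedge\emptyset=\true$ to dispose of $\Lambda$, expanding $\bigwedge_{a\in\agents}$ over the two-element society, and using LDT/MP to shuttle the antecedent in and out of the hypothesis list. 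Conceptually, the lemma just captures Gettier's remark that (d), ``Jones gets the job and has ten coins,'' entails (e), ``the man who will get the job has ten coins,'' once the background fact that the job has a single holder is encoded by the contradiction hypothesis.
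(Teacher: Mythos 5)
Your proposal is correct and follows essentially the same route as the paper's own proof: assume the antecedent $\job(\Jones)\land\ten(\Jones)$ as a local hypothesis (discharged at the end by LDT), get the Jones-conjunct immediately from $\ten(\Jones)$, derive $\neg\job(\Smith)$ from the global hypothesis together with $\job(\Jones)$ so that the Smith-conjunct holds vacuously, and conjoin---all within the propositional fragment, exactly as in the paper's Fitch-style deduction. The only cosmetic difference is that you make the $\bigwedge\emptyset=\true$ bookkeeping explicit up front, whereas the paper absorbs it into the final ``by definition'' step.
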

\begin{proof}\ 
	\begin{enumerate}
		\item\quad$\LiPded(\job(\Smith)\land\job(\Jones))\limp\false$\hfill global hypothesis
		\item\begin{enumerate}
				\item$\job(\Jones)\land\ten(\Jones)$\hfill local hypothesis 
				\item$\job(\Jones)\limp\ten(\Jones)$\hfill 2.a, PL
				\item$(\job(\Smith)\land\job(\Jones))\limp\false$\hfill 1, \emph{a fortiori}
				\item$\neg\job(\Smith)$\hfill 2.a, 2.c, PL
				\item$\job(\Smith)\limp\ten(\Smith)$\hfill 2.d, PL
				\item$\underbrace{(\job(\Jones)\limp\ten(\Jones))\land(\job(\Smith)\limp\ten(\Smith))}_{%
						\text{the man who will get the job has 10 coins in his pocket}}$\hfill 2.b, 2.e, PL
			\end{enumerate}
		\item\quad$\LiPded(\job(\Jones)\land\ten(\Jones))\limp
					\bigwedge_{a\in\agents}(\job(a)\limp\ten(a))$\hfill 2.a--2.f, LDT
		\item if $\LiPded(\job(\Smith)\land\job(\Jones))\limp\false$\hfill 1--3, PL\\  
				then $\LiPded(\job(\Jones)\land\ten(\Jones))\limp
					\bigwedge_{a\in\agents}(\job(a)\limp\ten(a))$
		\item $\set{(\job(\Smith)\land\job(\Jones))\limp\false};\emptyset\LiPded\\(\job(\Jones)\land\ten(\Jones))\limp
					\bigwedge_{a\in\agents}(\job(a)\limp\ten(a))$\hfill 4, definition.
	\end{enumerate}
\end{proof}

\begin{proposition}[Gettier example]\label{proposition:GettierExample}
	$$\begin{array}{@{}l@{}}
		\set{(\job(\Smith)\land\job(\Jones))\limp\false};\emptyset\LiPded\\
			(\knows{\Smith}{M}\land\proves{M}{(\job(\Jones)\land\ten(\Jones))}{\Smith}{\emptyset})\limp\\
				((\job(\Smith)\land\ten(\Smith))\limp\false)
		\end{array}$$
\end{proposition}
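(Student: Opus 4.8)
The plan is to argue in the Fitch style already licensed by the Local Deduction Theorem (LDT) and \emph{modus ponens} facts, discharging the antecedent of the target implication as a local hypothesis and reducing everything to one genuinely modal move plus propositional (PL) bookkeeping under the global hypothesis $H\defeq(\job(\Smith)\land\job(\Jones))\limp\false$. Write $\psi\defeq\job(\Jones)\land\ten(\Jones)$ for Gettier's proposition (d). The only LiP-specific ingredient will be the axiom of \emph{epistemic truthfulness} from Definition~\ref{definition:AxiomsRules}, namely $(\proves{M}{\psi}{\Smith}{\emptyset})\limp(\knows{\Smith}{M}\limp\psi)$, whose whole purpose is to let the interpreting agent Smith extract the actual truth of what a proof he \emph{knows} establishes; note that the empty reviewer community keeps this step free of any peer-review side conditions.

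First I would open the derivation under $H$ and take the whole antecedent $\knows{\Smith}{M}\land\proves{M}{\psi}{\Smith}{\emptyset}$ as a local hypothesis, to be discharged at the end via LDT. From that local hypothesis, epistemic truthfulness together with PL yields $\psi$, hence in particular $\job(\Jones)$. At this point Lemma~\ref{lemma:Gettier} can be invoked to record that $\psi$ entails $\bigwedge_{a\in\agents}(\job(a)\limp\ten(a))$, the formal counterpart of Gettier's ``(d) entails (e)'' step; but the contradiction itself does not need the full strength of (e). Instead I would feed $\job(\Jones)$ into $H$, which by PL is equivalent to $\job(\Jones)\limp\neg\job(\Smith)$, and conclude $\neg\job(\Smith)$.

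Finally, $\neg\job(\Smith)$ gives $(\job(\Smith)\land\ten(\Smith))\limp\false$ by pure PL, closing the body of the Fitch box; discharging the local hypothesis via LDT then delivers exactly the stated implication, still under $H$, i.e. $\set{H};\emptyset\LiPded(\knows{\Smith}{M}\land\proves{M}{\psi}{\Smith}{\emptyset})\limp((\job(\Smith)\land\ten(\Smith))\limp\false)$. The whole argument fits comfortably into the kind of annotated table already used for Lemma~\ref{lemma:Gettier}, with \emph{a fortiori} supplying $H$ inside the box.

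I do not expect a hard technical obstacle: once epistemic truthfulness has converted Smith's \emph{individual} knowledge of the proof term $M$ into the \emph{propositional} fact $\psi$, the remainder is elementary classical reasoning with $H$. The only point demanding care is conceptual rather than computational: in a non-singleton society truthfulness fails (provability does not imply truth, as discussed in Section~\ref{section:Interactivity}), so $\proves{M}{\psi}{\Smith}{\emptyset}$ on its own would be too weak; the conjunct $\knows{\Smith}{M}$ is indispensable and must be kept paired with the proof modality throughout, precisely so that epistemic truthfulness becomes applicable. This is also what makes the result a faithful rendering of Gettier's scenario, where the over-determination stems exactly from Smith's \emph{known} proof of a proposition about \emph{Jones}.
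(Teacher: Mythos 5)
Your proof is correct, and its one genuinely modal move --- the epistemic-truthfulness axiom applied to the known proof $M$ of $\job(\Jones)\land\ten(\Jones)$ inside the Fitch box, with the global hypothesis imported \emph{a fortiori} --- is exactly the move the paper's own proof (Table~\ref{table:GettierExample}) makes in its innermost box (steps 5.j.iii--5.j.viii). The difference is economy: the paper spends most of Table~\ref{table:GettierExample} (steps 2--3 and 5.b--5.i, using Lemma~\ref{lemma:Gettier}, necessitation, GK, the pairing axiom, and a second application of epistemic truthfulness) deriving Gettier's proposition (e), $\bigwedge_{a\in\agents}(\job(a)\limp\ten(a))$, as provable by the term $\pair{\Smith}{M}$ --- formalising Smith's justified acceptance of (e) --- yet never uses that conjunction when deriving the contradiction; you correctly observe that (e) is dispensable for the stated conclusion and omit its derivation. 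You also flatten the paper's nested inner box (hypothesise $\job(\Smith)\land\ten(\Smith)$, derive $\false$, discharge by LDT) into a direct propositional derivation of $\neg\job(\Smith)$ from $\job(\Jones)$ and the global hypothesis, an equivalent PL manoeuvre. What the paper's longer route buys is fidelity to Gettier's narrative (Smith sees and accepts the entailment from (d) to (e)) and a demonstration of the modal machinery of necessitation, GK, and pairing in action; what your route buys is the sharper observation that the inconsistency of Gettier's assumptions needs nothing beyond epistemic truthfulness and classical reasoning.
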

\begin{proof}
	See Table~\ref{table:GettierExample}.
\end{proof}

\begin{sidewaystable}
%\vspace{35\baselineskip}
\caption{Gettier example (proof of Proposition~\ref{proposition:GettierExample})}
	\begin{enumerate}
		\item\quad$\LiPded(\job(\Smith)\land\job(\Jones))\limp\false$\hfill global hypothesis
		\item\quad$\LiPded(\job(\Jones)\land\ten(\Jones))\limp\bigwedge_{a\in\agents}(\job(a)\limp\ten(a))$\hfill 1, Lemma~\ref{lemma:Gettier}, PL
		\item\quad$\LiPded\proves{\Smith}{((\job(\Jones)\land\ten(\Jones))\limp\bigwedge_{a\in\agents}(\job(a)\limp\ten(a)))}{\Smith}{\emptyset}$\hfill 2, N
		\item\quad$\LiPded\knows{\Smith}{\Smith}$\hfill knowledge of own's own name string
		\item\begin{enumerate}
				\item$\knows{\Smith}{M}\land\proves{M}{(\job(\Jones)\land\ten(\Jones))}{\Smith}{\emptyset}$\hfill local hypothesis
				\item$\proves{M}{(\job(\Jones)\land\ten(\Jones))}{\Smith}{\emptyset}$\hfill 5.a, PL
		\item$\proves{\Smith}{((\job(\Jones)\land\ten(\Jones))\limp\bigwedge_{a\in\agents}(\job(a)\limp\ten(a)))}{\Smith}{\emptyset}$\hfill 3, \emph{a fortiori}
				\item$\proves{\pair{\Smith}{M}}{(\bigwedge_{a\in\agents}(\job(a)\limp\ten(a)))}{\Smith}{\emptyset}$\hfill 5.b, 5.c, GK
				\item$\knows{\Smith}{\Smith}$\hfill 4, \emph{a fortiori}
				\item$\knows{\Smith}{M}$\hfill 5.a, PL
				\item$\knows{\Smith}{\Smith}\land\knows{\Smith}{M}$\hfill 5.e, 5.f, PL
				\item$\knows{\Smith}{\pair{\Smith}{M}}$\hfill 5.g, pairing
				\item$\bigwedge_{a\in\agents}(\job(a)\limp\ten(a))$\hfill 5.d, 5.h, epistemic truthfulness
				\item \begin{enumerate}
						\item$\job(\Smith)\land\ten(\Smith)$\hfill local hypothesis
						\item$\job(\Smith)$\hfill 5.j.i, PL
						\item$\knows{\Smith}{M}\land\proves{M}{(\job(\Jones)\land\ten(\Jones))}{\Smith}{\emptyset}$\hfill 5.a, \emph{a fortiori}
						\item$\job(\Jones)\land\ten(\Jones)$\hfill 5.j.iii, epistemic truthfulness
						\item$\job(\Jones)$\hfill 5.j.iv, PL
						\item$\job(\Smith)\land\job(\Jones)$\hfill 5.j.ii, 5.j.v, PL
						\item$(\job(\Smith)\land\job(\Jones))\limp\false$\hfill 1, \emph{a fortiori}
						\item$\false$\hfill 5.j.vi, 5.j.vii, PL
					\end{enumerate}
				\item$(\job(\Smith)\land\ten(\Smith))\limp\false$\hfill 5.j.i--5.j.viii, LDT
			\end{enumerate}
		\item\quad$\LiPded(\knows{\Smith}{M}\land\proves{M}{(\job(\Jones)\land\ten(\Jones))}{\Smith}{\emptyset})\limp
				((\job(\Smith)\land\ten(\Smith))\limp\false)$\hfill 5.a--5.k, LDT
		\item if $\LiPded(\job(\Smith)\land\job(\Jones))\limp\false$\\  
				then $\LiPded(\knows{\Smith}{M}\land\proves{M}{(\job(\Jones)\land\ten(\Jones))}{\Smith}{\emptyset})\limp
				((\job(\Smith)\land\ten(\Smith))\limp\false)$\hfill 1--6, PL
		\item $\set{(\job(\Smith)\land\job(\Jones))\limp\false};\emptyset\LiPded$\\
					$(\knows{\Smith}{M}\land\proves{M}{(\job(\Jones)\land\ten(\Jones))}{\Smith}{\emptyset})\limp
				((\job(\Smith)\land\ten(\Smith))\limp\false)$\hfill 7, definition.
	\end{enumerate}\label{table:GettierExample}
\end{sidewaystable}

In order to illustrate the
	working of signatures and the 
	application of the other logical laws of LiP,
		we now refine Gettier's example with signing.
That is,
	we identify 
		the proof $M$ in Proposition~\ref{proposition:GettierExample} with
		a term pair $\pair{C}{\sign{W}{\HR}}$ consisting of, 
			first, a proof $C$ for the fact $\ten(\Jones)$ and, 
			second, a work contract $\sign{W}{\HR}$ for $\Jones$ signed by the HR department dealing with
				the job application.
\begin{lemma}[Gettier-example with signing]\label{lemma:GettierSigning}
Given $\agents\defeq\set{\Smith,\Jones,\HR}$,
	$$\LiPded\begin{array}{@{}l@{}}
				\left(\begin{array}{@{}l@{}}
				\knows{\Smith}{\sign{W}{\HR}}\land
				\proves{W}{\job(\Jones)}{\HR}{\agents}\\
				\land\;\knows{\Smith}{C}\land
				\proves{C}{\ten(\Jones)}{\Smith}{\emptyset}
				\end{array}\right)\limp\\
							\left(\begin{array}{@{}l@{}}
								\knows{\Smith}{\pair{C}{\sign{W}{\HR}}}\;\land\\
								\proves{\pair{C}{\sign{W}{\HR}}}{(\job(\Jones)\land\ten(\Jones))}{\Smith}{\emptyset}
							\end{array}\right)
							\end{array}$$
\end{lemma}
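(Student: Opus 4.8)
The plan is to prove the two conjuncts of the consequent separately and then combine them by propositional reasoning, discharging the four antecedent conjuncts as local hypotheses via the Local Deduction Theorem, in the Fitch style of the preceding examples. The first conjunct, $\knows{\Smith}{\pair{C}{\sign{W}{\HR}}}$, is immediate: I would conjoin the two knowledge hypotheses $\knows{\Smith}{C}$ and $\knows{\Smith}{\sign{W}{\HR}}$ and apply the $\limp$-direction of the [un]pairing axiom of Definition~\ref{definition:AxiomsRules} (with $a\defeq\Smith$, $M\defeq C$, $M'\defeq\sign{W}{\HR}$) to obtain it directly.

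For the second conjunct the real work is aligning verifier and community. The hypothesis $\proves{W}{\job(\Jones)}{\HR}{\agents}$ is stated for the verifier $\HR$ and community $\agents$, whereas the goal is stated for the verifier $\Smith$ and the empty community. I would bridge this in two moves. First, apply simple peer review (Theorem~\ref{theorem:SomeUsefulDeducibleLogicalLaws}.19): since $\HR\in\agents$ we have $\agents\cup\set{\HR}=\agents$, so the law delivers $\proves{\sign{W}{\HR}}{\job(\Jones)}{b}{\agents}$ for every $b\in\agents$, and instantiating at $b\defeq\Smith$ gives $\proves{\sign{W}{\HR}}{\job(\Jones)}{\Smith}{\agents}$. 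This is exactly the step that both re-targets HR's proof to the verifier $\Smith$ and introduces the signature $\sign{W}{\HR}$ that appears in the conclusion's proof term. Second, apply the group-decomposition axiom of Definition~\ref{definition:AxiomsRules} with $\community\defeq\emptyset$ and $\community'\defeq\agents$ to shrink the community, obtaining $\proves{\sign{W}{\HR}}{\job(\Jones)}{\Smith}{\emptyset}$.

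Now both $\proves{C}{\ten(\Jones)}{\Smith}{\emptyset}$ (a hypothesis) and $\proves{\sign{W}{\HR}}{\job(\Jones)}{\Smith}{\emptyset}$ share the verifier $\Smith$ and community $\emptyset$, so I would conjoin them and apply proof conjunctions (Theorem~\ref{theorem:SomeUsefulDeducibleLogicalLaws}.6) to get $\proves{\pair{C}{\sign{W}{\HR}}}{(\ten(\Jones)\land\job(\Jones))}{\Smith}{\emptyset}$, whose proof term already matches the goal. The conjuncts are in the opposite order to the goal, so a final appeal to regularity \emph{bis} (Theorem~\ref{theorem:SomeUsefulDeducibleLogicalLaws}.3) with the propositional tautology $(\ten(\Jones)\land\job(\Jones))\lequiv(\job(\Jones)\land\ten(\Jones))$ rewrites the proof goal to the required $\proves{\pair{C}{\sign{W}{\HR}}}{(\job(\Jones)\land\ten(\Jones))}{\Smith}{\emptyset}$. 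Conjoining this with the first conjunct finishes the deduction.

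The main obstacle is not any single law but the community/verifier bookkeeping: recognising that simple peer review is precisely the mechanism that both re-targets the HR-proof $W$ to the verifier $\Smith$ and accounts for the signature $\sign{W}{\HR}$ in the conclusion's term, and that group decomposition is then needed to reconcile the community $\agents$ with the empty community in which $C$ proves $\ten(\Jones)$. Once those two moves are in place, pairing, proof conjunctions and the commutativity of $\land$ are entirely routine.
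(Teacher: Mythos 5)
Your proposal is correct, and its spine is exactly the paper's: simple peer review to re-target HR's proof $W$ as $\sign{W}{\HR}$ to the verifier $\Smith$ (using $\HR\in\agents$, hence $\agents\cup\set{\HR}=\agents$), then group decomposition to shrink the community from $\agents$ to $\emptyset$, and the pairing axiom for the knowledge conjunct. The only divergence is the endgame. The paper first applies proof extension twice, lifting both $\proves{C}{\ten(\Jones)}{\Smith}{\emptyset}$ and $\proves{\sign{W}{\HR}}{\job(\Jones)}{\Smith}{\emptyset}$ to the common term $\pair{C}{\sign{W}{\HR}}$, and only then conjoins them with the same-term law of proof conjunctions, choosing the order so that $\job(\Jones)\land\ten(\Jones)$ comes out directly. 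You instead apply the two-term law of proof conjunctions (Theorem~\ref{theorem:SomeUsefulDeducibleLogicalLaws}.6) in one shot, which produces the right term $\pair{C}{\sign{W}{\HR}}$ but the transposed goal $\ten(\Jones)\land\job(\Jones)$, and you repair the order via regularity \emph{bis} applied to the commutativity tautology. This is sound: the premise of regularity \emph{bis} is a propositional tautology, hence a theorem, so the resulting equivalence of proof statements is itself a theorem and can be invoked \emph{a fortiori} inside the local-hypothesis context. Your route saves the two proof-extension steps at the cost of one rewriting step; the paper's route buys the correct conjunct order for free. (Had you instantiated proof conjunctions with $\sign{W}{\HR}$ as first argument, you would have obtained the right goal but the transposed term, requiring proof commutativity, Theorem~\ref{theorem:SomeUsefulDeducibleStructuralLaws}.13, instead---on your route exactly one such commutativity fix is unavoidable.)
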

\begin{proof}\ 
	\begin{enumerate}
		\item\quad$\knows{\Smith}{\sign{W}{\HR}}$\hfill local hypothesis
		\item\qquad$\proves{W}{\job(\Jones)}{\HR}{\agents}$\hfill local hypothesis
		\item\qquad$(\proves{W}{\job(\Jones)}{\HR}{\agents})\limp
					\proves{\sign{W}{\HR}}{\job(\Jones)}{\Smith}{\agents}$\hfill simple peer review
		\item\qquad$\proves{\sign{W}{\HR}}{\job(\Jones)}{\Smith}{\agents}$\hfill 2, 3, PL
		\item\qquad$(\proves{\sign{W}{\HR}}{\job(\Jones)}{\Smith}{\agents})\limp
						\proves{\sign{W}{\HR}}{\job(\Jones)}{\Smith}{\emptyset}$\hfill group decomp.
		\item\qquad$\proves{\sign{W}{\HR}}{\job(\Jones)}{\Smith}{\emptyset}$\hfill 4, 5, PL
		\item\qquad\quad$\knows{\Smith}{C}$\hfill local hypothesis
		\item\qquad\qquad$\proves{C}{\ten(\Jones)}{\Smith}{\emptyset}$\hfill local hypothesis
		\item\qquad\qquad$(\proves{C}{\ten(\Jones)}{\Smith}{\emptyset})\limp
							\proves{\pair{C}{\sign{W}{\HR}}}{\ten(\Jones)}{\Smith}{\emptyset}$\hfill proof ext.
		\item\qquad\qquad$\proves{\pair{C}{\sign{W}{\HR}}}{\ten(\Jones)}{\Smith}{\emptyset}$\hfill 8, 9, PL
		\item\qquad\qquad$\knows{\Smith}{\sign{W}{\HR}}$\hfill 1, \emph{a fortiori}
		\item\qquad\qquad$\knows{\Smith}{C}$\hfill 7, \emph{a fortiori}
		\item\qquad\qquad$\knows{\Smith}{C}\land
					\knows{\Smith}{\sign{W}{\HR}}$\hfill 11, 12, PL
		\item\qquad\qquad$\knows{\Smith}{\pair{C}{\sign{W}{\HR}}}$\hfill 13, pairing
		\item\qquad\qquad$\proves{\sign{W}{\HR}}{\job(\Jones)}{\Smith}{\emptyset}$\hfill 6, \emph{a fortiori}
		\item\qquad\qquad$(\proves{\sign{W}{\HR}}{\job(\Jones)}{\Smith}{\emptyset})\limp
							\proves{\pair{C}{\sign{W}{\HR}}}{\job(\Jones)}{\Smith}{\emptyset}$\hfill p.\ ext.
		\item\qquad\qquad$\proves{\pair{C}{\sign{W}{\HR}}}{\job(\Jones)}{\Smith}{\emptyset}$\hfill 15, 16, PL
		\item\qquad\qquad$\proves{\pair{C}{\sign{W}{\HR}}}{(\job(\Jones)\land\ten(\Jones))}{\Smith}{\emptyset}$\hfill 10, 17, proof conj.\
		\item\qquad\qquad$\begin{array}{@{}l@{}}
							\knows{\Smith}{\pair{C}{\sign{W}{\HR}}}\;\land\\
							\proves{\pair{C}{\sign{W}{\HR}}}{(\job(\Jones)\land\ten(\Jones))}{\Smith}{\emptyset}
							\end{array}$\hfill 14, 18, PL
		\item\qquad\quad$\begin{array}{@{}l@{}}
							\proves{C}{\ten(\Jones)}{\Smith}{\emptyset}\limp\\
							\left(\begin{array}{@{}l@{}}
								\knows{\Smith}{\pair{C}{\sign{W}{\HR}}}\;\land\\
								\proves{\pair{C}{\sign{W}{\HR}}}{(\job(\Jones)\land\ten(\Jones))}{\Smith}{\emptyset}
							\end{array}\right)
							\end{array}$\hfill 8--19, LDT
		\item\qquad$\begin{array}{@{}l@{}}
					\knows{\Smith}{C}\limp\\
						\left(\begin{array}{@{}l@{}}		
							\proves{C}{\ten(\Jones)}{\Smith}{\emptyset}\limp\\
							\left(\begin{array}{@{}l@{}}
								\knows{\Smith}{\pair{C}{\sign{W}{\HR}}}\;\land\\
								\proves{\pair{C}{\sign{W}{\HR}}}{(\job(\Jones)\land\ten(\Jones))}{\Smith}{\emptyset}
							\end{array}\right)
							\end{array}\right)
							\end{array}$\hfill 7--20, LDT
		\item\quad$\begin{array}{@{}l@{}}
					\proves{W}{\job(\Jones)}{\HR}{\agents}\limp\\
					\left(\begin{array}{@{}l@{}}
					\knows{\Smith}{C}\limp\\
						\left(\begin{array}{@{}l@{}}		
							\proves{C}{\ten(\Jones)}{\Smith}{\emptyset}\limp\\
							\left(\begin{array}{@{}l@{}}
								\knows{\Smith}{\pair{C}{\sign{W}{\HR}}}\;\land\\
								\proves{\pair{C}{\sign{W}{\HR}}}{(\job(\Jones)\land\ten(\Jones))}{\Smith}{\emptyset}
							\end{array}\right)
							\end{array}\right)
							\end{array}\right)
							\end{array}$\hfill 2--21, LDT
		\item$\LiPded\begin{array}{@{}l@{}}
				\knows{\Smith}{\sign{W}{\HR}}\limp\\
				\left(\begin{array}{@{}l@{}}
					\proves{W}{\job(\Jones)}{\HR}{\agents}\limp\\
					\left(\begin{array}{@{}l@{}}
					\knows{\Smith}{C}\limp\\
						\left(\begin{array}{@{}l@{}}		
							\proves{C}{\ten(\Jones)}{\Smith}{\emptyset}\limp\\
							\left(\begin{array}{@{}l@{}}
								\knows{\Smith}{\pair{C}{\sign{W}{\HR}}}\;\land\\
								\proves{\pair{C}{\sign{W}{\HR}}}{(\job(\Jones)\land\ten(\Jones))}{\Smith}{\emptyset}
							\end{array}\right)
							\end{array}\right)
							\end{array}\right)
							\end{array}\right)
							\end{array}$\hfill 1--22, LDT
		\item$\LiPded\begin{array}{@{}l@{}}
				\left(\begin{array}{@{}l@{}}
				\knows{\Smith}{\sign{W}{\HR}}\land
				\proves{W}{\job(\Jones)}{\HR}{\agents}\\
				\land\;\knows{\Smith}{C}\land
				\proves{C}{\ten(\Jones)}{\Smith}{\emptyset}
				\end{array}\right)\limp\\
							\left(\begin{array}{@{}l@{}}
								\knows{\Smith}{\pair{C}{\sign{W}{\HR}}}\;\land\\
								\proves{\pair{C}{\sign{W}{\HR}}}{(\job(\Jones)\land\ten(\Jones))}{\Smith}{\emptyset}
							\end{array}\right)
							\end{array}$\hfill 23, PL.
	\end{enumerate}
\end{proof}
\noindent
In the preceding proof,
	observe the use of the law of proof extension, 
		deducible by means of epistemic antitonicity, and
		expressing the monotonicity of LiP-proofs.
Like Art\"{e}mov, 
	who interprets Lehrer and Paxson's indefeasibility condition for justified true belief 
	as possibly corresponding to LP's sum-axiom (\cf \cite{JustificationLogic}), 
we could thus interpret this condition 
	as corresponding to LiP's proof extension.

\begin{corollary}[Gettier-example with signing]\label{corollary:GettierSigning}
Given $\agents\defeq\set{\Smith,\Jones,\HR}$,
	$$\begin{array}{@{}l@{}}
		\set{(\job(\Smith)\land\job(\Jones))\limp\false};\emptyset\LiPded\\
			\left(\begin{array}{@{}l@{}}
			\knows{\Smith}{\sign{W}{\HR}}\land
			(\proves{W}{\job(\Jones)}{\HR}{\agents})\\
			\land\;\knows{\Smith}{C}\land
			\proves{C}{\ten(\Jones)}{\Smith}{\emptyset}
			\end{array}\right)\limp\\
				((\job(\Smith)\land\ten(\Smith))\limp\false)
		\end{array}$$
\end{corollary}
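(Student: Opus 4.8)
The plan is to obtain the Corollary as a one-step composition of the two results immediately preceding it, Proposition~\ref{proposition:GettierExample} and Lemma~\ref{lemma:GettierSigning}, using nothing beyond propositional reasoning (transitivity of implication). First I would instantiate Proposition~\ref{proposition:GettierExample} with the concrete compound proof term $M\defeq\pair{C}{\sign{W}{\HR}}$. This is a legitimate specialisation, since that proposition is schematic in $M\in\messages$ and $\pair{C}{\sign{W}{\HR}}\in\messages$. The instantiation yields, under the same global hypothesis $\set{(\job(\Smith)\land\job(\Jones))\limp\false}$ and empty local hypotheses, the derivability of the implication whose antecedent is $\knows{\Smith}{\pair{C}{\sign{W}{\HR}}}\land\proves{\pair{C}{\sign{W}{\HR}}}{(\job(\Jones)\land\ten(\Jones))}{\Smith}{\emptyset}$ and whose consequent is $(\job(\Smith)\land\ten(\Smith))\limp\false$.

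Next I would bring in Lemma~\ref{lemma:GettierSigning}, which is a theorem of the base theory (an unconditional $\LiPded$) and is therefore available \emph{a fortiori} under the global hypothesis of the Corollary: by the monotonicity of $\Clo{}{}$ (\cf Proposition~\ref{proposition:Hilbert}), everything in $\Clo{}{}(\emptyset)$ lies in $\Clo{}{}(\set{(\job(\Smith)\land\job(\Jones))\limp\false})$. The crucial observation is then purely syntactic: the consequent of the Lemma's implication, namely $\knows{\Smith}{\pair{C}{\sign{W}{\HR}}}\land\proves{\pair{C}{\sign{W}{\HR}}}{(\job(\Jones)\land\ten(\Jones))}{\Smith}{\emptyset}$, is \emph{identical} to the antecedent of the instantiated Proposition. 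The two implications thus chain: from $A\limp B$ (the Lemma) and $B\limp D$ (the instantiated Proposition) one derives $A\limp D$ by the propositional tautology $((A\limp B)\land(B\limp D))\limp(A\limp D)$ together with \emph{modus ponens}, all within $\Clo{}{}(\set{(\job(\Smith)\land\job(\Jones))\limp\false})$. Here $A$ is the Corollary's antecedent $\knows{\Smith}{\sign{W}{\HR}}\land(\proves{W}{\job(\Jones)}{\HR}{\agents})\land\knows{\Smith}{C}\land\proves{C}{\ten(\Jones)}{\Smith}{\emptyset}$ and $D$ is $(\job(\Smith)\land\ten(\Smith))\limp\false$, so $A\limp D$ is exactly the claimed judgment. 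If one prefers the Fitch style used in Table~\ref{table:GettierExample}, the same chaining can be staged by assuming $A$ as a local hypothesis, applying \emph{modus ponens} twice, and discharging via the Local Deduction Theorem.

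I do not expect any genuine obstacle here: the substantive content has already been discharged in the two cited results, and what remains is routine propositional bookkeeping. The only two points deserving a moment's care are (i) the passage from the unconditional $\LiPded$ of Lemma~\ref{lemma:GettierSigning} to its availability under the Corollary's global hypothesis, which is immediate from monotonicity of the consequence operator, and (ii) the free specialisation of the Proposition's schematic $M$ to the structured term $\pair{C}{\sign{W}{\HR}}$, which is what makes the Lemma's conclusion dovetail with the Proposition's premise. Once these are recorded, the hypothetical syllogism closes the argument.
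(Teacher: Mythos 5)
Your proposal is correct and is precisely the paper's own argument: the paper's proof of Corollary~\ref{corollary:GettierSigning} reads simply ``From Proposition~\ref{proposition:GettierExample} and Lemma~\ref{lemma:GettierSigning}'', i.e., the instantiation of the Proposition's schematic $M$ at $\pair{C}{\sign{W}{\HR}}$ followed by exactly the hypothetical syllogism you describe. You merely spell out the bookkeeping the paper leaves implicit (monotonicity of $\Clo{}{}$ to import the unconditional Lemma under the global hypothesis, and the propositional chaining of $A\limp B$ with $B\limp D$).
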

\begin{proof}
	From 
		Proposition~\ref{proposition:GettierExample} and 
		Lemma~\ref{lemma:GettierSigning}.
\end{proof}

\subsection{Multi-agent LP-like systems}
By their quality of being conservative extensions of non-interactive LP-like systems, 
	the following logical systems with multi-agent character inherit the lack of 
		message-passing interactivity of LP in 
			the following senses: namely the lack of 
			(1) a sound truth axiom for message passing 
				(\cf Section~\ref{section:Interactivity}), 
			(2) the transferability of local truths by means of messages 
				(\cf Section~\ref{section:LogicalLaws}), and 
			(3) signature checking that could act as proof checking of claimed local truths 
				(\cf Section~\ref{section:ProofTerms}).
In our understanding,
	these lacks of LP-like systems \emph{without} message passing are reflected by  
		the fact that LP can only homomorph but not embed into interactive-proof systems 
			\emph{with} message passing like LiP.

\subsubsection{LP$^{2}$}
Yavorskaya's LP$^{2}$ \cite{YavorskayaSidon} is an extension of LP with multi-agent character in the sense that
	LP$^{2}$ extends LP with a \emph{2-agent} view such that each one of the two agents
		\begin{enumerate}
			\item has her own proof-sum, proof-application, and proof-checker constructor
			\item may have a constructor for 
				\begin{enumerate}
					\item checking the other agent's proofs, that is,  peer proofs
					\item converting peer proofs into proofs of her own.
				\end{enumerate}
		\end{enumerate}
LP$^{2}$ being an extension of LP,
	our criticism of LP also applies to LP$^{2}$.
Also,
	LiP can manage an $n$-agent view for arbitrary $n\in\mathbb{N}$ with
		only $n+1$ (transmittable) proof-term constructors ($n$ signature constructors plus 1 pair constructor).
This feature is the fruit of our design decision to equip LiP with 
	proof-term signature constructors and
	an agent-parametric proof modality, which
		allows the association as proof of 
			arbitrary data to 
				arbitrary verifying agents within 
					arbitrary peer communities.
Whereas an extension of LP$^{2}$ to LP$^{n}$ for a fixed $n\in\mathbb{N}$ would require
	$3n+2n(n-1)=2n^{2}+n$ constructors 
		($n$ proof-sum plus $n$ proof-application plus $n$ proof-checker plus 
			$n(n-1)$ peer-proof-checker plus
			$n(n-1)$ peer-proof-conversion constructors), and 
	still not allow the free association of proofs to agents.
In sum, 
	LiP seems more appropriate for interactivity and is even simpler than would be LP$^{n}$.
However,
	it could be interesting to parametrise 
		Yavorskaya's agent-centric proof converters 
			with agent \emph{communities} so that
				two communities that do not share their respective common knowledge of 
					what should constitute a proof could 
						communicate with each other thanks 
							to such communal proof converters.

\subsubsection{UL}
Renne's UL \cite{Renne2} is an extension of Art\"{e}mov's Justification Logic, JL \cite{JustificationLogic} 
	(including Art\"{e}mov's LP) with multi-agent character in the sense that
		UL combines JL with (multi-agent) Dynamic Epistemic Logic \cite{DynamicEpistemicLogic}.
Of course, dynamic extensions of static logics are interesting.
The sophisticated language of UL 
	is defined by staged mutual recursion on the structure of terms and formulas, and
	has a semantic interface in the style of LP but only over finite Kripke-models.
The mutual recursion arises in the application term constructor of UL, which 
	has a formula parameter meant to indicate the relevance of 
		the second constructor argument to the constructor parameter in UL's application axiom.
Given that sum and application can be subsumed by pairing in LiP (\cf Section~\ref{section:ProofTerms}),
	it would be interesting to experiment with a formula-parametrised pair constructor in UL 
		intended to subsume sum and formula-parametrised application.
The justification terms in UL do not provide evidence for knowledge but only for belief, which 
	is expressed with a K4-modality. 
(Usually, belief is expressed with a KD45-modality \cite{MultiAgents}.)

\section{Conclusion}

\subsection{Assessment}
We have proposed a logic of interactive proofs with 
	as main contributions those described in Section~\ref{section:Contribution}.
Our resulting notion of proofs has the advantage of being
	not only operational thanks to a proof-theoretic definition 
	but also \emph{declarative} thanks to a complementary model-theoretic definition that 
		gives a \emph{constructive epistemic semantics} to proofs in the sense of
			explicating \emph{what}---\emph{knowledge,} as well as \emph{skill} by means of mobile code---proofs effect in agents, complementing thereby 
				the (operational) axiomatics, which explicates \emph{how} proofs do so. 
In particular, 
	first, \emph{interactive} computation is \emph{semantic} computation: 
		we not only compute result values (syntax), but 
			(knowledge) equivalence classes of them (semantics); and,  
	second, our definition of interactive proofs reflects the impact of mathematical proofs in a social sense (\cf Section~\ref{section:Solution}): 
		if my peer knew my proof for her of a given statement then she would know that the statement is true. 
(Notice the different kinds of knowledge and the conditional mode!) 
In contrast, the traditional definition of (mathematical) proofs is only operational in the sense that 
	proofs are defined purely in terms of the deductive operations that are used to construct them.
Their \emph{pragmatics,} that is,  their (epistemic) impact in proof-checking agents, was left unformalised, and
	their operational definition risks restricting their generality.
However now thanks to our formalisation, 
	we as a community have the formal common knowledge that 
	\begin{itemize}
		\item agents in distributed systems are at the same time 
			computation oracles,
			data miners,
			knowledge processors, 
			meaning interpreters, 
			message-passing communicators, 
			interactive provers, and
			logical combinators;
		\item \textbf{a proof is \emph{that which 
				if known to one of our peer members 
				would induce the knowledge of its proof goal with that member.}}
	\end{itemize}

\subsection{Future work}
Our future lines of research for LiP are the following: 
\begin{enumerate}
	\item develop the proof theory of LiP (alternative calculi, proof complexity);
	\item extend LiP with guarded quantifiers (gFOLiP), dynamic modalities, and fixpoint operators 
			(Hennessy-Milner correspondence, characteristic formulas);
	\item extend LiP with the classical and the modern conception of cryptography mentioned in 
			Footnote~\ref{footnote:CryptographyConceptions} (requiring resource-bounded agents);
	\item apply LiP and its variants to the analysis and synthesis of communication protocols 
		(proof-carrying code correct by construction via 
			program extraction from constructive proofs of correctness, 
				on-line interactive algorithms);
	\item create the Logic of Evidence and the Logic of Deception suggested on Page~\pageref{page:EvidenceDeception}.
\end{enumerate}
Applying LiP means fixing four things if need be, namely, at the level of
	\begin{enumerate}
		\item \emph{terms:} 
			\begin{enumerate}
				\item the choice of term axioms,
				\item the application-specific base data $B$,
				\item the implementation of signing, \eg
						in terms of public-key cryptography;
			\end{enumerate}
		\item \emph{formulas:} 
				the set $\mathcal{P}$ of atomic propositions (those besides $\knows{a}{M}$) together with
				the axioms governing their intended meaning. 
	\end{enumerate}
This will instantiate LiP as a theory of the specific subject matter of the application, 
	such as, for example, Dolev-Yao cryptography (\cf Page~\pageref{page:DolevYao}).

\begin{acknowledgements}
I thank 
	Johan van Benthem and Larry Moss for their early encouragement and 
	Sergei Art\"{e}mov for his sympathetic consideration of my work.
I also thank  
	Eiji Okamoto for being my generous host during my post-doctoral fellowship from 
		the Japan Society for the Promotion of Science,  
	Jean-Luc Beuchat 
		for suggesting to apply for such a fellowship and 
		for helping me with and sharing with me the fascinating life of a \emph{gaijin,} and 
	Shihoko Sekiya for being our perfect secretary and daily office sunshine.
Last but not least,
	I thank 
		Denis Saveliev for giving me his favourable opinion on this paper, and
		Olga Grinchtein for spotting and informing me about a few typos.
\end{acknowledgements}

\bibliographystyle{alpha}
%\bibliography{/Users/simonkramer/Documents/Sources/TeX/bibliography}

\appendix
\section{Completeness proof}\label{appendix:Proofs}
\paragraph{Completeness}
	For all $\phi\in\pFormulas$, 
		if $\models\phi$ then $\LiPded\phi$.
\begin{proof}
	Let
	\newcommand{\canrel}[3]{\mathrel{_{#1}\negthinspace\mathrm{C}_{#2}^{#3}}}
	\newcommand{\canVal}{\mathcal{V}_{\mathsf{C}}}
		\begin{itemize}
			\item $\mathcal{W}$ designate the set of all maximally LiP-consistent sets\footnote{*
				A set $W$ of LiP-formulas is maximally LiP-consistent :iff 
					$W$ is LiP-consistent and 
					$W$ has no proper superset that is LiP-consistent.
				A set $W$ of LiP-formulas is LiP-consistent :iff 
					$W$ is not LiP-inconsistent.
				A set $W$ of LiP-formulas is LiP-inconsistent :iff 
					there is a finite $W'\subseteq W$ such that $((\bigwedge W')\limp\false)\in\text{LiP}$.
				Any LiP-consistent set can be extended to a maximally LiP-consistent set by means of  
					the Lindenbaum Construction \cite[Page~90]{ModalProofTheory}.
				A set is maximally LiP-consistent if and only if 
				the set of logical-equivalence classes of the set is an ultrafilter of
				the Lindenbaum-Tarski algebra of LiP \cite[Page~351]{AlgebrasAndCoalgebras}.
				The canonical frame is isomorphic to the ultrafilter frame of that Lindenbaum-Tarski algebra 
					\cite[Page~352]{AlgebrasAndCoalgebras}.}
			\item for all $w,w'\in\mathcal{W}$,
				$w\canrel{M}{a}{\community}w'$ :iff $\setst{\phi\in\pFormulas}{\proves{M}{\phi}{a}{\community}\in w}\subseteq w'$
			\item for all $w\in\mathcal{W}$, $w\in\canVal(P)$ :iff $P\in w$.
		\end{itemize}
	Then \newcommand{\canModel}{\mathfrak{M}_{\mathsf{C}}}
		$\canModel\defeq
			(\mathcal{W},\set{\canrel{M}{a}{\community}}_{M\in\messages,a\in\agents,\community\subseteq\agents},\canVal)$
		designates the \emph{canonical model} for LiP.
	Following Fitting \cite[Section~2.2]{ModalProofTheory}, 
	the following useful property of $\canModel$,  
		$$\boxed{$\text{for all $\phi\in\pFormulas$ and $w\in\mathcal{W}$,
			$\phi\in w$ if and only if $\canModel,w\models\phi$,}$}$$
	the so-called \emph{Truth Lemma}, can be proved by induction on the structure of $\phi$:
	\begin{enumerate}
		\item Base case ($\phi\defeq P$ for $P\in\mathcal{P}$). 
			For all $w\in\mathcal{W}$,
				$P\in w$ if and only if $\canModel,w\models P$, 
					by definition of $\canVal$.
		\item Inductive step ($\phi\defeq \neg\phi'$ for $\phi'\in\pFormulas$).
			Suppose that
				for all $w\in\mathcal{W}$,
					$\phi'\in w$ if and only if $\canModel,w\models\phi'$.
			Further let
				$w\in\mathcal{W}$.
			Then, 
				$\neg\phi'\in w$ if and only if $\phi'\not\in w$ --- $w$ is consistent ---
				if and only if $\canModel,w\not\models\phi'$ --- by the induction hypothesis ---
				if and only if $\canModel,w\models\neg\phi'$.
		\item Inductive step ($\phi\defeq \phi'\land\phi''$ for $\phi',\phi''\in\pFormulas$).
			Suppose that 
				for all $w\in\mathcal{W}$,
					$\phi'\in w$ if and only if $\canModel,w\models\phi'$, and that
				for all $w\in\mathcal{W}$,
					$\phi''\in w$ if and only if $\canModel,w\models\phi''$.
			Further let
				$w\in\mathcal{W}$.
			Then, 
				$\phi'\land\phi''\in w$ if and only if 
					($\phi'\in w$ and $\phi''\in w$), because $w$ is maximal.
			Now suppose that
				$\phi'\in w$ and $\phi''\in w$.
			Hence, 
				$\canModel,w\models\phi'$ and 
				$\canModel,w\models\phi''$, by the induction hypotheses, and 
			thus $\canModel,w\models\phi'\land\phi''$.
			Conversely, suppose that
				$\canModel,w\models\phi'\land\phi''$.
			Then,
				$\canModel,w\models\phi'$ and 
				$\canModel,w\models\phi''$.
			Hence,
				$\phi'\in w$ and $\phi''\in w$, by the induction hypotheses.
			Thus,
				($\phi'\in w$ and $\phi''\in w$) if and only if
				($\canModel,w\models\phi'$ and 
				$\canModel,w\models\phi''$).
			Whence
				$\phi'\land\phi''\in w$ if and only if
				($\canModel,w\models\phi'$ and 
				$\canModel,w\models\phi''$), by transitivity.
		\item Inductive step ($\phi\defeq\proves{M}{\phi'}{a}{\community}$ for 
				$M\in\messages$, $a\in\agents$, $\community\subseteq\agents$, and $\phi'\in\pFormulas$).
			\begin{flushleft}
			\nn{4.1} for all $w\in\mathcal{W}$,
						$\phi'\in w$ if and only if $\canModel,w\models\phi'$\hfill ind.\ hyp.\\[\jot]
			\nn{4.2}\quad $w\in\mathcal{W}$\hfill hyp.\\[2\jot]
			\nn{4.3}\qquad $\proves{M}{\phi'}{a}{\community}\in w$\hfill hyp.\\[\jot]
			\nn{4.4}\qquad\quad	$w'\in\mathcal{W}$\hfill hyp.\\[\jot]
			\nn{4.5}\qquad\qquad $w\canrel{M}{a}{\community}w'$\hfill hyp.\\[\jot]
			\nn{4.6}\qquad\qquad $\setst{\phi''\in\pFormulas}{\proves{M}{\phi''}{a}{\community}\in w}\subseteq w'$\hfill 4.5\\[\jot]
			\nn{4.7}\qquad\qquad $\phi'\in\setst{\phi''\in\pFormulas}{\proves{M}{\phi''}{a}{\community}\in w}$\hfill 4.3, 4.6\\[\jot]
			\nn{4.8}\qquad\qquad $\phi'\in w'$\hfill 4.6, 4.7\\[\jot]
			\nn{4.9}\qquad\qquad $\canModel,w'\models\phi'$\hfill 4.1, 4.4, 4.8\\[\jot]
			\nn{4.10}\qquad\quad if $w\canrel{M}{a}{\community}w'$ then $\canModel,w'\models\phi'$\hfill 4.5--4.9\\[\jot]
			\nn{4.11}\qquad for all $w'\in\mathcal{W}$, 
					if $w\canrel{M}{a}{\community}w'$ 
					then $\canModel,w'\models\phi'$\hfill 4.4--4.10\\[\jot]
			\nn{4.12}\qquad $\canModel,w\models\proves{M}{\phi'}{a}{\community}$\hfill 4.11\\[2\jot]
			\nn{4.13}\qquad $\proves{M}{\phi'}{a}{\community}\not\in w$\hfill hyp.\\[\jot]
			\nn{4.14}\qquad\quad $\mathcal{F}=\setst{\phi''\in\pFormulas}{\proves{M}{\phi''}{a}{\community}\in w}\cup\set{\neg\phi'}$\hfill hyp.\\[\jot]
			\nn{4.15}\qquad\qquad $\mathcal{F}$ is LiP-inconsistent\hfill hyp.\\[\jot]
			\nn{4.16}\qquad\qquad there is $\set{\proves{M}{\phi_{1}}{a}{\community},\ldots,\proves{M}{\phi_{n}}{a}{\community}}\subseteq w$ such that\\
			\nn{}\qquad\qquad $\LiPded(\phi_{1}\land\ldots\land\phi_{n}\land\neg\phi')\limp\false$\hfill 4.14, 4.15\\[\jot]
			\nn{4.17}\qquad\qquad\quad $\set{\proves{M}{\phi_{1}}{a}{\community},\ldots,\proves{M}{\phi_{n}}{a}{\community}}\subseteq w$ and\\
			\nn{}\qquad\qquad\quad $\LiPded(\phi_{1}\land\ldots\land\phi_{n}\land\neg\phi')\limp\false$\hfill hyp.\\[\jot]
			\nn{4.18}\qquad\qquad\quad $\LiPded(\phi_{1}\land\ldots\land\phi_{n})\limp\phi'$\hfill 4.17\\[\jot]
			\nn{4.19}\qquad\qquad\quad $\LiPded(\proves{M}{(\phi_{1}\land\ldots\land\phi_{n})}{a}{\community})\limp\proves{M}{\phi'}{a}{\community}$\hfill 4.18, R\\[\jot]
			\nn{4.20}\qquad\qquad\quad $\LiPded((\proves{M}{\phi_{1}}{a}{\community})\land\ldots\land(\proves{M}{\phi_{n}}{a}{\community}))\limp\proves{M}{\phi'}{a}{\community}$\hfill 4.19\\[\jot]
			\nn{4.21}\qquad\qquad\quad $\proves{M}{\phi'}{a}{\community}\in w$\hfill 4.17, 4.20, $w$ is maximal\\[\jot]
			\nn{4.22}\qquad\qquad\quad false\hfill 4.13, 4.21\\[\jot]
			\nn{4.23}\qquad\qquad false\hfill 4.16, 4.17--4.22\\[\jot]
			\nn{4.24}\qquad\quad $\mathcal{F}$ is LiP-consistent\hfill 4.15--4.23\\[\jot]
			\nn{4.25}\qquad\quad there is $w'\supseteq\mathcal{F}$ \st $w'$ is maximally LiP-consistent\hfill 4.24\\[\jot]
			\nn{4.26}\qquad\qquad $\mathcal{F}\subseteq w'$ and $w'$ is maximally LiP-consistent\hfill hyp.\\[\jot]
			\nn{4.27}\qquad\qquad $\setst{\phi''\in\pFormulas}{\proves{M}{\phi''}{a}{\community}\in w}\subseteq\mathcal{F}$\hfill 4.14\\[\jot]
			\nn{4.28}\qquad\qquad $\setst{\phi''\in\pFormulas}{\proves{M}{\phi''}{a}{\community}\in w}\subseteq w'$\hfill 4.26, 4.27\\[\jot]
			\nn{4.29}\qquad\qquad $w\canrel{M}{a}{\community}w'$\hfill 4.28\\[\jot]
			\nn{4.30}\qquad\qquad $w'\in\mathcal{W}$\hfill 4.26\\[\jot]
			\nn{4.31}\qquad\qquad $\neg\phi'\in\mathcal{F}$\hfill 4.14\\[\jot]
			\nn{4.32}\qquad\qquad $\neg\phi'\in w'$\hfill 4.26, 4.31\\[\jot]
			\nn{4.33}\qquad\qquad $\phi'\not\in w'$\hfill 4.26 ($w'$ is LiP-consistent), 4.32\\[\jot]
			\nn{4.34}\qquad\qquad $\canModel,w'\not\models\phi'$\hfill 4.1, 4.33\\[\jot]
			\nn{4.35}\qquad\qquad there is $w'\in\mathcal{W}$ \st 
				$w\canrel{M}{a}{\community}w'$ and $\canModel,w'\not\models\phi'$\hfill 4.29, 4.34\\[\jot]
			\nn{4.36}\qquad\qquad $\canModel,w\not\models\proves{M}{\phi'}{a}{\community}$\hfill 4.35\\[\jot]
			\nn{4.37}\qquad\quad $\canModel,w\not\models\proves{M}{\phi'}{a}{\community}$\hfill 4.25, 4.26--4.36\\[\jot]
			\nn{4.38}\qquad $\canModel,w\not\models\proves{M}{\phi'}{a}{\community}$\hfill 4.14--4.37\\[2\jot]
			\nn{4.39}\quad $\proves{M}{\phi'}{a}{\community}\in w$ if and only if $\canModel,w\models\proves{M}{\phi'}{a}{\community}$\hfill 4.3--4.12, 4.13--4.38\\[\jot]
			\nn{4.40} for all $w\in\mathcal{W}$,
				$\proves{M}{\phi'}{a}{\community}\in w$ if and only if $\canModel,w\models\proves{M}{\phi'}{a}{\community}$\hfill 4.2--4.39\\[\jot]
		\end{flushleft}
	\end{enumerate}

	With the Truth Lemma we can now prove that 
			for all $\phi\in\pFormulas$,
				if $\not\LiPded\phi$ then $\not\models\phi$.
	Let 
		$\phi\in\pFormulas$, and 
	suppose that 
		$\not\LiPded\phi$.
	Thus, 
		$\set{\neg\phi}$ 
			is LiP-consistent, and 
			can be extended to a maximally LiP-consistent set $w$, that is, 
				$\neg\phi\in w\in\mathcal{W}$.
	Hence 
		$\canModel,w\models\neg\phi$, by the Truth Lemma.
	Thus: 
		$\canModel,w\not\models\phi$,
		$\canModel\not\models\phi$, and
		$\not\models\phi$.
	That is,
			$\canModel$ is a 
				\emph{universal} (for \emph{all} $\phi\in\pFormulas$) 
				\emph{counter-model} (if $\phi$ is a non-theorem then $\canModel$ falsifies $\phi$).
	
	We are left to prove that 
		$\canModel$ is also an \emph{LiP-model}. 
	So 
		let us instantiate 
			our data mining operator $\clo{a}{}$ (\cf Page~\pageref{page:DataMining}) on $\mathcal{W}$ by 
				letting for all $w\in\mathcal{W}$
					$$\msgs{a}(w)\defeq\setst{M}{\knows{a}{M}\in w}.$$
	Then, let us prove that:
			\begin{enumerate}
				\item	if for all $b\in\community\cup\set{a}$, 
							$w\canrel{\sign{M}{a}}{b}{\community\cup\set{a}}w'$ 
						then $M\in\clo{a}{w'}(\emptyset)$
				\item if $M\in\clo{a}{w}(\emptyset)$
						then $w\canrel{M}{a}{\community}w$
				\item there is $w'\in\mathcal{W}$ such that 
						$w\canrel{M}{a}{\community}w'$
				\item for all $b\in\community\cup\set{a}$, 
						$({\canrel{\sign{M}{a}}{b}{\community\cup\set{a}}}\circ{\canrel{M}{a}{\community}})\subseteq{\canrel{M}{a}{\community}}$
				\item if $\community\subseteq\community'$ 
						then ${\canrel{M}{a}{\community}}\subseteq{\canrel{M}{a}{\community'}}$
				\item if $M\leq_{a}M'$ then ${\canrel{M}{a}{\community}}\subseteq{\canrel{M'}{a}{\community}}$
			\end{enumerate}
	For (1),
		let $w,w'\in\mathcal{W}$ and 
		suppose that for all $b\in\community\cup\set{a}$, $w\canrel{\sign{M}{a}}{b}{\community\cup\set{a}}w'$.
	That is,	
		for all $b\in\community\cup\set{a}$ and $\phi\in\pFormulas$, 
			if $\proves{\sign{M}{a}}{\phi}{b}{\community\cup\set{a}}\in w$
			then $\phi\in w'$.
	Since $w$ is maximal,
		$$\text{$(\bigwedge_{b\in\community\cup\set{a}}\proves{\sign{M}{a}}{\knows{a}{M}}{b}{\community\cup\set{a}})\in w$\quad(authentic knowledge).}$$
	Hence 
		$\knows{a}{M}\in w'$, by \emph{modus ponens,} and
	thus $M\in\clo{a}{w'}(\emptyset)$ by definition.
	
	For (2), 
		let $w\in\mathcal{W}$ and
		suppose that 
			$M\in\clo{a}{w}(\emptyset)$.
	Hence $\knows{a}{M}\in w$ due to the maximality of $w$, which 
		contains all the term axioms corresponding to the defining clauses of $\clo{a}{w}$.
	Further suppose that $\proves{M}{\phi}{a}{\community}\in w$.
	Since $w$ is maximal,
		$$\text{$(\proves{M}{\phi}{a}{\community})\limp(\knows{a}{M}\limp\phi)\in w$\quad(epistemic truthfulness).}$$
	Hence, 
		$\knows{a}{M}\limp\phi\in w$, and
		$\phi\in w$, by consecutive  \emph{modus ponens.}
	
	For (3),
		let $w\in\mathcal{W}$ and $\phi\in\pFormulas$, and
		suppose that $\proves{M}{\phi}{a}{\community}\in w$.
	For the sake of deriving the contrary, 
		further suppose that $\phi\not\in w$.
	Hence $\neg\phi\in w$ because $w$ is maximal, and 
	thus $\phi\limp\false\in w$. 
	Hence $(\proves{M}{\phi}{a}{\community})\limp\proves{M}{\false}{a}{\community}\in w$ by regularity.
	Hence $\proves{M}{\false}{a}{\community}\in w$ by the first supposition and \emph{modus ponens}.
	Hence $\neg(\proves{M}{\false}{a}{\community})\not\in w$ because $w$ is consistent.
	Yet since $w$ is maximal,
		$\neg(\proves{M}{\false}{a}{\community})\in w$ (proof consistency).
	Contradiction.
	Hence $w$ is actually a $w'$ such that $\phi\in w'$.
	
	For (4), 
		suppose that $b\in\community\cup\set{a}$ and 
		let $w,w',w''\in\mathcal{W}$.
	Further suppose that 
			$w\canrel{\sign{M}{a}}{b}{\community\cup\set{a}}w'$ 
				(\ie for all $\phi\in\pFormulas$,
					if $\proves{\sign{M}{a}}{\phi}{b}{\community\cup\set{a}}\in w$ 
					then $\phi\in w'$) and 
			$w'\canrel{M}{a}{\community}w''$
				(\ie for all $\phi\in\pFormulas$,
					if $\proves{M}{\phi}{a}{\community}\in w'$ 
					then $\phi\in w''$).
	Furthermore suppose that
		$\proves{M}{\phi}{a}{\community}\in w$.
	Since $w$ is maximal, 
		$$\text{$(\proves{M}{\phi}{a}{\community})\limp\bigwedge_{b\in\community\cup\set{a}}\proves{\sign{M}{a}}{(\proves{M}{\phi}{a}{\community})}{b}{\community\cup\set{a}}\in w$,}$$
	as a direct consequence of peer review and proof conjunctions.
	Hence, applying \emph{modus ponens} consecutively, 
		$\bigwedge_{b\in\community\cup\set{a}}\proves{\sign{M}{a}}{(\proves{M}{\phi}{a}{\community})}{b}{\community\cup\set{a}}\in w$ by the fourth supposition, 
		$\proves{M}{\phi}{a}{\community}\in w'$ by the second supposition, and finally 
		$\phi\in w''$ by the third supposition.
		
	For (5),
		let $\community'\subseteq\agents$ and
		suppose that $\community\subseteq\community'$.
	That is,
		$\community\cup\community'=\community'$.
	Further,
		let $w,w'\in\mathcal{W}$ and 
		suppose that $w\canrel{M}{a}{\community}w'$.
	That is,
		for all $\phi\in\pFormulas$,
			if $\proves{M}{\phi}{a}{\community}\in w$ 
			then $\phi\in w'$.
	Furthermore,	
		let $\phi\in\pFormulas$ and
		suppose that $\proves{M}{\phi}{a}{\community'}\in w$.
	Thus $\proves{M}{\phi}{a}{\community\cup\community'}\in w$ by the first supposition.
	Since $w$ is maximal,
		$$\text{$(\proves{M}{\phi}{a}{\community\cup\community'})\limp\proves{M}{\phi}{a}{\community}\in w$\quad(group decomposition).}$$
	Hence $\proves{M}{\phi}{a}{\community}\in w$ by \emph{modus ponens}, and 
	thus $\phi\in w'$ by the second supposition.
	
	For (6),
		suppose that 
			$M\leq_{a}M'$.
	That is,
		for all $w\in\mathcal{W}$, 
			if $M\in\clo{a}{w}(\emptyset)$ then $M'\in\clo{a}{w}(\emptyset)$.
	Hence for all $w\in\mathcal{W}$,
		if $\knows{a}{M}\in w$ then $\knows{a}{M'}\in w$
				due to the maximality of $w'$, which 
					contains all the term axioms corresponding to the defining clauses of $\clo{a}{w}$.
	Hence for all $w\in\mathcal{W}$,
			if $\canModel, w\models\knows{a}{M}$ then $\canModel, w\models\knows{a}{M'}$, by the Truth Lemma.
	Thus for all $w\in\mathcal{W}$, $\canModel, w\models\knows{a}{M}\limp\knows{a}{M'}$.
	Hence for all $w\in\mathcal{W}$, 
		$\knows{a}{M}\limp\knows{a}{M'}\in w$ by the Truth Lemma.
	Hence the following intermediate result, called IR, 
			$$\text{for all $w\in\mathcal{W}$ and $\phi\in\pFormulas$, 
			$(\proves{M}{\phi}{a}{\community})\limp\proves{M'}{\phi}{a}{\community}\in w$,}$$
		by 
			epistemic antitonicity.
	Further,
		let $w,w'\in\mathcal{W}$.
	Hence,
		\begin{itemize}
			\item $w\canrel{M}{a}{\community}w'$ by definition if and only if
			\item (for all $\phi\in\pFormulas$,
					if $\proves{M}{\phi}{a}{\community}\in w$ 
					then $\phi\in w'$), which by IR implies 
			\item (for all $\phi\in\pFormulas$,
					if $\proves{M'}{\phi}{a}{\community}\in w$ 
					then $\phi\in w'$) by definition if and only if 
			\item $w\canrel{M'}{a}{\community}w'$.
		\end{itemize}
\end{proof}

\section{Other proofs}
	Let 
		``PT'' abbreviate ``classical propositional tautology'' and
		``PL'' ``classical propositional logic'', and
	let PT and PL refer to the propositional fragment of LiP only.
\subsection{Proof of Theorem~\ref{theorem:SomeUsefulDeducibleStructuralLaws}}\label{appendix:StructuralProofs}
	\begin{enumerate}
		\item	
			\begin{enumerate}
				\item $\LiPded\knows{a}{\pair{M}{M'}}\limp(\knows{a}{M}\land\knows{a}{M'})$\hfill unpairing
				\item $\LiPded(\knows{a}{M}\land\knows{a}{M'})\limp\knows{a}{M}$\hfill PT
				\item $\LiPded\knows{a}{\pair{M}{M'}}\limp\knows{a}{M}$\hfill a, b, PL.
			\end{enumerate}
		\item Symmetrically to 1.
		\item
			\begin{enumerate}
				\item $\LiPded\knows{a}{\pair{M}{M}}\limp\knows{a}{M}$\hfill left or right projection
				\item $\LiPded\knows{a}{M}\limp(\knows{a}{M}\land\knows{a}{M})$\hfill PT
				\item $\LiPded(\knows{a}{M}\land\knows{a}{M})\limp\knows{a}{\pair{M}{M}}$\hfill pairing
				\item $\LiPded\knows{a}{M}\limp\knows{a}{\pair{M}{M}}$\hfill b, c, PL
				\item $\LiPded\knows{a}{\pair{M}{M}}\lequiv\knows{a}{M}$\hfill a, d, PL.
			\end{enumerate}
		\item
			\begin{enumerate}
				\item $\LiPded(\knows{a}{M}\land\knows{a}{M'})\lequiv\knows{a}{\pair{M}{M'}}$\hfill [un]pairing
				\item $\LiPded(\knows{a}{M'}\land\knows{a}{M})\lequiv\knows{a}{\pair{M'}{M}}$\hfill [un]pairing
				\item $\LiPded(\knows{a}{M}\land\knows{a}{M'})\lequiv(\knows{a}{M'}\land\knows{a}{M})$\hfill PT
				\item $\LiPded\knows{a}{\pair{M}{M'}}\lequiv\knows{a}{\pair{M'}{M}}$\hfill a, b, c, PL.
			\end{enumerate}
		\item 
			\begin{enumerate}
				\item $\LiPded(\knows{a}{M}\limp\knows{a}{M'})\lequiv(\knows{a}{M}\limp(\knows{a}{M}\land\knows{a}{M'}))$\hfill PT
				\item $\LiPded(\knows{a}{M}\land\knows{a}{M'})\lequiv\knows{a}{\pair{M}{M'}}$\hfill [un]pairing
				\item $\LiPded(\knows{a}{M}\limp\knows{a}{M'})\lequiv(\knows{a}{M}\limp\knows{a}{\pair{M}{M'}})$\hfill a, b, PL
				\item $\LiPded\knows{a}{\pair{M}{M'}}\limp\knows{a}{M}$\hfill left projection
				\item $\LiPded(\knows{a}{M}\limp\knows{a}{M'})\lequiv(\knows{a}{\pair{M}{M'}}\lequiv\knows{a}{M})$\hfill c, d, PL.
			\end{enumerate}
		\item
			\begin{enumerate}
				\item $\LiPded\knows{a}{a}$\hfill knowledge of one's own name
				\item $\LiPded\knows{a}{a}\limp(\knows{a}{M}\limp\knows{a}{a})$\hfill PT
				\item $\LiPded\knows{a}{M}\limp\knows{a}{a}$\hfill b, c, PL
				\item $\LiPded(\knows{a}{M}\limp\knows{a}{a})\lequiv(\knows{a}{\pair{M}{a}}\lequiv\knows{a}{M})$\hfill neutral pair elements
				\item $\LiPded\knows{a}{\pair{M}{a}}\lequiv\knows{a}{M}$\hfill c, d, PL.
			\end{enumerate}
		\item
			\begin{enumerate}
				\item $\LiPded(\knows{a}{M}\land\knows{a}{\pair{M'}{M''}})\lequiv\knows{a}{\pair{M}{\pair{M'}{M''}}}$\hfill [un]pairing
				\item $\LiPded(\knows{a}{M'}\land\knows{a}{M''})\lequiv\knows{a}{\pair{M'}{M''}}$\hfill [un]pairing
				\item $\LiPded(\knows{a}{M}\land(\knows{a}{M'}\land\knows{a}{M''}))\lequiv\knows{a}{\pair{M}{\pair{M'}{M''}}}$\hfill a, b, PL
				\item $\LiPded(\knows{a}{M}\land(\knows{a}{M'}\land\knows{a}{M''}))\lequiv((\knows{a}{M}\land\knows{a}{M'})\land\knows{a}{M''})$\hfill PT
				\item $\LiPded((\knows{a}{M}\land\knows{a}{M'})\land\knows{a}{M''})\lequiv\knows{a}{\pair{M}{\pair{M'}{M''}}}$\hfill c, d, PL
				\item $\LiPded(\knows{a}{M}\land\knows{a}{M'})\lequiv\knows{a}{\pair{M}{M'}}$\hfill [un]pairing
				\item $\LiPded(\knows{a}{\pair{M}{M'}}\land\knows{a}{M''})\lequiv\knows{a}{\pair{M}{\pair{M'}{M''}}}$\hfill e, f, PL
				\item $\LiPded(\knows{a}{\pair{M}{M'}}\land\knows{a}{M''})\lequiv\knows{a}{\pair{\pair{M}{M'}}{M''}}$\hfill [un]pairing
				\item $\LiPded\knows{a}{\pair{M}{\pair{M'}{M''}}}\lequiv\knows{a}{\pair{\pair{M}{M'}}{M''}}$\hfill g, h, PL.
			\end{enumerate}
		\item By propositional logic and epistemic antitonicity.
		\item[$\bullet$] 9--10 and 17 follow directly from epistemic antitonicity and the corresponding pairing laws and signature synthesis, respectively.
		\item[11.] By propositional logic directly from proof extension left and right.
		\item[$\bullet$] 12--13 and 15--16 follow directly from epistemic bitonicity and the corresponding pairing laws 
							by propositional logic.
		\item[14.] 
			\begin{enumerate}
				\item\quad$\LiPded\knows{a}{M}\limp\knows{a}{M'}$\hfill hypothesis
				\item\quad$\LiPded(\knows{a}{M}\limp\knows{a}{M'})\lequiv(\knows{a}{\pair{M}{M'}}\lequiv\knows{a}{M})$\hfill neutral pair elements
				\item\quad$\LiPded\knows{a}{\pair{M}{M'}}\lequiv\knows{a}{M}$\hfill a, b, PL
				\item\quad$\set{\knows{a}{\pair{M}{M'}}\lequiv\knows{a}{M}}\LiPded\proves{\pair{M}{M'}}{\phi}{a}{\community}\lequiv\proves{M}{\phi}{a}{\community}$\hfill epistemic bitonicity
				\item\quad$\LiPded\proves{\pair{M}{M'}}{\phi}{a}{\community}\lequiv\proves{M}{\phi}{a}{\community}$\hfill c, d, PL
				\item $\set{\knows{a}{M}\limp\knows{a}{M'}}\LiPded\proves{\pair{M}{M'}}{\phi}{a}{\community}\lequiv\proves{M}{\phi}{a}{\community}$\hfill a--e, definition.
			\end{enumerate}
		\item[18.] 
			\begin{enumerate}
				\item $\LiPded((\proves{M}{\phi}{a}{\community})\lor
						\proves{b}{\phi}{a}{\community})\limp\proves{\pair{M}{b}}{\phi}{a}{\community}$\hfill
							proof extension
				\item $\LiPded\knows{a}{\sign{M}{b}}\limp\knows{a}{\pair{M}{b}}$\hfill signature analysis
				\item $\LiPded(\proves{\pair{M}{b}}{\phi}{a}{\community})\limp\proves{\sign{M}{b}}{\phi}{a}{\community}$\hfill b, epistemic antitoncity
				\item $\LiPded((\proves{M}{\phi}{a}{\community})\lor
						\proves{b}{\phi}{a}{\community})\limp\proves{\sign{M}{b}}{\phi}{a}{\community}$\hfill
							a, c, PL.
			\end{enumerate}
		\item[19.] 
			\begin{enumerate}
				\item $\LiPded(\proves{\sign{M}{a}}{\phi}{a}{\community})\limp\proves{M}{\phi}{a}{\community}$\hfill self-signing elimination
				\item $\LiPded((\proves{M}{\phi}{a}{\community})\lor\proves{a}{\phi}{a}{\community})\limp\proves{\sign{M}{a}}{\phi}{a}{\community}$\hfill signing introduction
				\item $\LiPded(\proves{M}{\phi}{a}{\community})\limp\proves{\sign{M}{a}}{\phi}{a}{\community}$\hfill b, PL
				\item $\LiPded(\proves{\sign{M}{a}}{\phi}{a}{\community})\lequiv\proves{M}{\phi}{a}{\community}$\hfill a, c, PL.
			\end{enumerate}
		\item[20.] Suppose that $\agents=\set{a}$ and $\messages\setminus\set{\Kcomb{a},\Scomb{a}}$.
			\begin{enumerate}
				\item Let us proceed by induction over $M\in\messages$.
					\begin{enumerate} 
						\item \emph{base case,} that is,  $M\defeq b$, for $b\in\agents$.
								Hence $b=a$, and 
								thus $\LiPded\knows{a}{b}$ because $\LiPded\knows{a}{a}$.
						\item \emph{inductive step} for $M\defeq\sign{M'}{b}$, for $M'\in\messages$ and $b\in\agents$.
								Hence $b=a$, and
								thus $\LiPded\knows{a}{M'}\limp\knows{a}{\sign{M'}{b}}$ because 
									$\LiPded\knows{a}{M'}\limp\knows{a}{\sign{M'}{a}}$.
								Suppose that $\LiPded\knows{a}{M'}$.
								Hence $\LiPded\knows{a}{\sign{M'}{b}}$, by \emph{modus ponens}.
						\item \emph{inductive step} for $M\defeq\pair{M'}{M''}$, for $M',M''\in\messages$.
								Suppose that $\LiPded\knows{a}{M'}$ and $\LiPded\knows{a}{M''}$.
								Hence $\LiPded\knows{a}{M'}\land\knows{a}{M''}$, by propositional logic.
								Now, $\LiPded(\knows{a}{M'}\land\knows{a}{M''})\limp\knows{a}{\pair{M'}{M''}}$, and
								hence $\LiPded\knows{a}{\pair{M'}{M''}}$, by \emph{modus ponens}.
					\end{enumerate}
				\item \begin{enumerate}
						\item $\LiPded\knows{a}{\pair{M}{M'}}$\hfill total knowledge
						\item $\LiPded\knows{a}{M}\land\knows{a}{M'}$\hfill i, unpairing
						\item $\LiPded\knows{a}{M}\lequiv\knows{a}{M'}$\hfill ii, propositional logic.
						\end{enumerate}
				\item Jointly from b and epistemic bitonicity by propositional logic.
			\end{enumerate}
	\end{enumerate}

\subsection{Proof of Theorem~\ref{theorem:SomeUsefulDeducibleLogicalLaws}}\label{appendix:LogicalProofs}
	\begin{enumerate}			
		\item 	
			\begin{enumerate}
				\item$\LiPded(\proves{M}{(\phi\limp\phi')}{a}{\community})\limp
							\proves{\pair{M}{M'}}{(\phi\limp\phi')}{a}{\community}$\hfill proof extension
				\item$\LiPded(\proves{\pair{M}{M'}}{(\phi\limp\phi')}{a}{\community})\limp
							((\proves{\pair{M}{M'}}{\phi}{a}{\community})\limp\proves{\pair{M}{M'}}{\phi'}{a}{\community})$\hfill K 
				\item$\LiPded(\proves{M}{(\phi\limp\phi')}{a}{\community})\limp
							((\proves{\pair{M}{M'}}{\phi}{a}{\community})\limp\proves{\pair{M}{M'}}{\phi'}{a}{\community})$\hfill a, b, PL
				\item$\LiPded(\proves{M'}{\phi}{a}{\community})\limp
							\proves{\pair{M'}{M}}{\phi}{a}{\community}$\hfill proof extension
				\item$\LiPded(\proves{\pair{M'}{M}}{\phi}{a}{\community})\lequiv
							\proves{\pair{M}{M'}}{\phi}{a}{\community}$\hfill proof commutativity
				\item$\LiPded(\proves{M'}{\phi}{a}{\community})\limp
							\proves{\pair{M}{M'}}{\phi}{a}{\community}$\hfill d, e, PL
				\item$\LiPded(\proves{M}{(\phi\limp\phi')}{a}{\community})\limp
							((\proves{M'}{\phi}{a}{\community})\limp\proves{\pair{M}{M'}}{\phi'}{a}{\community})$\hfill c, f, PL.
			\end{enumerate}
		\item 
			\begin{enumerate}
				\item\quad$\set{\phi\limp\phi'}\subseteq\LiP$\hfill hypothesis
				\item\quad$(\phi\limp\phi')\in\LiP$\hfill a, definition
				\item\quad$\LiPded\phi\limp\phi'$\hfill b, definition
				\item\quad$\LiPded\proves{M}{(\phi\limp\phi')}{a}{\community}$\hfill c, N
				\item\quad$\LiPded(\proves{M}{(\phi\limp\phi')}{a}{\community})\limp
				((\proves{M}{\phi}{a}{\community})\limp\proves{M}{\phi'}{a}{\community})$
					\hfill K
				\item\quad$\LiPded(\proves{M}{\phi}{a}{\community})\limp\proves{M}{\phi'}{a}{\community}$
					\hfill d, e, PL
				\item\quad$((\proves{M}{\phi}{a}{\community})\limp\proves{M}{\phi'}{a}{\community})\in\LiP$
					\hfill f, definition
				\item $\set{\phi\limp\phi'}\LiPded
							(\proves{M}{\phi}{a}{\community})\limp\proves{M}{\phi'}{a}{\community}$
						\hfill a--g, definition.
			\end{enumerate}
		\item 
			\begin{enumerate}
				\item\quad$\set{\phi\lequiv\phi'}\subseteq\LiP$\hfill hypothesis
				\item\quad$(\phi\lequiv\phi')\in\LiP$\hfill a, definition
				\item\quad$\LiPded\phi\lequiv\phi'$\hfill b, definition
				\item\quad$\LiPded\phi\limp\phi'$\hfill c, PL
				\item\quad$\LiPded(\proves{M}{\phi}{a}{\community})\limp\proves{M}{\phi'}{a}{\community}$
					\hfill d, R
				\item\quad$\LiPded\phi'\limp\phi$\hfill c, PL
				\item\quad$\LiPded(\proves{M}{\phi'}{a}{\community})\limp\proves{M}{\phi}{a}{\community}$
					\hfill f, R
				\item\quad$\LiPded(\proves{M}{\phi}{a}{\community})\lequiv\proves{M}{\phi'}{a}{\community}$
					\hfill e, g, PL
				\item\quad$((\proves{M}{\phi}{a}{\community})\lequiv\proves{M}{\phi'}{a}{\community})\in\LiP$
					\hfill h, definition
				\item $\set{\phi\lequiv\phi'}\LiPded
							(\proves{M}{\phi}{a}{\community})\lequiv\proves{M}{\phi'}{a}{\community}$
						\hfill a--i, definition.
			\end{enumerate}
		\item By regularity, epistemic antitonicity, and the transitivity of `$\limp$'.
		\item By epistemic regularity and propositional logic.
		\item 
			\begin{enumerate} 
				\item$\LiPded(\proves{M}{(\phi'\limp(\phi\land\phi'))}{a}{\community})\limp((\proves{M'}{\phi'}{a}{\community})\limp\proves{\pair{M}{M'}}{(\phi\land\phi')}{a}{\community})$\hfill GK
				\item$\LiPded\phi\limp(\phi'\limp(\phi\land\phi'))$\hfill PT
				\item$\LiPded(\proves{M}{\phi}{a}{\community})\limp\proves{M}{(\phi'\limp(\phi\land\phi'))}{a}{\community}$\hfill b, R
				\item$\LiPded(\proves{M}{\phi}{a}{\community})\limp((\proves{M'}{\phi'}{a}{\community})\limp\proves{\pair{M}{M'}}{(\phi\land\phi')}{a}{\community})$\hfill a, c, PL
				\item$\LiPded((\proves{M}{\phi}{a}{\community})\land\proves{M'}{\phi'}{a}{\community})\limp
						\proves{\pair{M}{M'}}{(\phi\land\phi')}{a}{\community}$\hfill d, PL.
			\end{enumerate}
		\item 
			\begin{enumerate} 
				\item$\LiPded((\proves{M}{\phi}{a}{\community})\land\proves{M}{\phi'}{a}{\community})\limp
						\proves{\pair{M}{M}}{(\phi\land\phi')}{a}{\community}$\hfill proof conjunctions
				\item $\LiPded(\proves{\pair{M}{M}}{(\phi\land\phi')}{a}{\community})\lequiv\proves{M}{(\phi\land\phi')}{a}{\community}$\hfill proof idempotency
				\item$\LiPded((\proves{M}{\phi}{a}{\community})\land\proves{M}{\phi'}{a}{\community})\limp
						\proves{M}{(\phi\land\phi')}{a}{\community}$\hfill a, b, PL
				\item$\LiPded(\phi\land\phi')\limp\phi$\hfill PT
				\item$\LiPded\proves{M}{(\phi\land\phi')}{a}{\community}\limp\proves{M}{\phi}{a}{\community}$\hfill d, R
				\item$\LiPded(\phi\land\phi')\limp\phi'$\hfill PT
				\item$\LiPded(\proves{M}{(\phi\land\phi')}{a}{\community})\limp\proves{M}{\phi'}{a}{\community}$\hfill f, R
				\item$\LiPded(\proves{M}{(\phi\land\phi')}{a}{\community})\limp
						((\proves{M}{\phi}{a}{\community})\land\proves{M}{\phi'}{a}{\community})$\hfill e, g, PL
				\item$\LiPded((\proves{M}{\phi}{a}{\community})\land\proves{M}{\phi'}{a}{\community})\lequiv
						\proves{M}{(\phi\land\phi')}{a}{\community}$\hfill c, h, PL.				
			\end{enumerate}
		\item 
			\begin{enumerate}
				\item $\LiPded(\proves{M}{\phi}{a}{\community})\limp\proves{\pair{M}{M'}}{\phi}{a}{\community}$\hfill proof extension, right
				\item $\LiPded\phi\limp(\phi\lor\phi')$\hfill PT
				\item $\LiPded(\proves{\pair{M}{M'}}{\phi}{a}{\community})\limp\proves{\pair{M}{M'}}{(\phi\lor\phi')}{a}{\community}$\hfill b, R
				\item $\LiPded(\proves{M}{\phi}{a}{\community})\limp\proves{\pair{M}{M'}}{(\phi\lor\phi')}{a}{\community}$\hfill
						a, c, PL
				\item $\LiPded(\proves{M'}{\phi'}{a}{\community})\limp\proves{\pair{M}{M'}}{\phi'}{a}{\community}$\hfill proof extension, left
				\item $\LiPded\phi'\limp(\phi\lor\phi')$\hfill PT
				\item $\LiPded(\proves{\pair{M}{M'}}{\phi'}{a}{\community})\limp\proves{\pair{M}{M'}}{(\phi\lor\phi')}{a}{\community}$\hfill f, R
				\item $\LiPded(\proves{M'}{\phi'}{a}{\community})\limp\proves{\pair{M}{M'}}{(\phi\lor\phi')}{a}{\community}$\hfill
						e, g, PL
				\item $\LiPded((\proves{M}{\phi}{a}{\community})\lor(\proves{M'}{\phi'}{a}{\community}))\limp\proves{\pair{M}{M'}}{(\phi\lor\phi')}{a}{\community}$\hfill d, h, PL.
			\end{enumerate}
		\item 
			\begin{enumerate}			
				\item $\LiPded((\proves{M}{\phi}{a}{\community})\lor\proves{M}{\phi'}{a}{\community})\limp\proves{\pair{M}{M}}{(\phi\lor\phi')}{a}{\community}$\hfill proof disjunctions
				\item $\LiPded(\proves{\pair{M}{M}}{(\phi\lor\phi')}{a}{\community})\lequiv\proves{M}{(\phi\lor\phi')}{a}{\community}$\hfill proof idempotency
				\item $\LiPded((\proves{M}{\phi}{a}{\community})\lor\proves{M}{\phi'}{a}{\community})\limp\proves{M}{(\phi\lor\phi')}{a}{\community}$\hfill a, b, PL.
			\end{enumerate}
		\item 
			\begin{enumerate}
				\item$\LiPded\knows{a}{a}$\hfill knowledge of one's own name
				\item$\LiPded\true$\hfill a, definition
				\item$\LiPded\proves{M}{\true}{a}{\community}$\hfill b, N.
			\end{enumerate}
		\item
			\begin{enumerate}
				\item$\LiPded(\proves{a}{\phi}{a}{\community})\limp(\knows{a}{a}\limp\phi)$\hfill epistemic truthfulness
				\item$\LiPded\knows{a}{a}$\hfill knowledge of one's own name
				\item$\LiPded\knows{a}{a}\limp((\knows{a}{a}\limp\phi)\limp\phi)$\hfill PT 
				\item$\LiPded(\knows{a}{a}\limp\phi)\limp\phi$\hfill b, c, PL
				\item$\LiPded(\proves{a}{\phi}{a}{\community})\limp\phi$\hfill a, d, PL.
			\end{enumerate}
		\item 
			\begin{enumerate} 
				\item$\set{\phi}\LiPded\proves{a}{\phi}{a}{\community}$\hfill N
				\item\quad$\set{\proves{a}{\phi}{a}{\community}}\subseteq\LiP$\hfill hypothesis
				\item\quad$(\proves{a}{\phi}{a}{\community})\in\LiP$\hfill b, definition
				\item\quad$\LiPded\proves{a}{\phi}{a}{\community}$\hfill c, definition
				\item\quad$\LiPded(\proves{a}{\phi}{a}{\community})\limp\phi$\hfill self-truthfulness
				\item\quad$\LiPded\phi$\hfill d, e, PL
				\item$\set{\proves{a}{\phi}{a}{\community}}\LiPded\phi$\hfill b--f, definition
				\item$\phi\LiPdedBis\proves{a}{\phi}{a}{\community}$\hfill a, g, definition.
			\end{enumerate}
		\item
			\begin{enumerate} 
				\item$\LiPded\proves{M}{\true}{a}{\community}$\hfill anything can prove tautological truth
				\item$\LiPded(\proves{M}{\true}{a}{\community})\limp\neg(\proves{M}{\neg\true}{a}{\community})$\hfill proof consistency
				\item$\LiPded\neg(\proves{M}{\neg\true}{a}{\community})$\hfill a, b, PL
				\item$\LiPded\neg(\proves{M}{\false}{a}{\community})$\hfill c, definition.
			\end{enumerate}	
		\item 
			\begin{enumerate}
				\item$\LiPded(\proves{M}{\true}{a}{\community})\limp
								\bigwedge_{b\in\community\cup\set{a}}(\proves{\sign{M}{a}}{(\knows{a}{M}\land\proves{M}{\true}{a}{\community})}{b}{\community\cup\set{a}})$\hfill peer review
				\item$\LiPded\proves{M}{\true}{a}{\community}$\hfill anything can prove tautological truth
				\item$\LiPded\bigwedge_{b\in\community\cup\set{a}}(\proves{\sign{M}{a}}{(\knows{a}{M}\land\proves{M}{\true}{a}{\community})}{b}{\community\cup\set{a}})$\hfill a, b, PL
				\item$\LiPded(\proves{\sign{M}{a}}{(\knows{a}{M}\land\proves{M}{\true}{a}{\community})}{b}{\community\cup\set{a}})\limp\proves{\sign{M}{a}}{\knows{a}{M}}{b}{\community\cup\set{a}}$\hfill proof conjunction \emph{bis}
				\item $\LiPded\bigwedge_{b\in\community\cup\set{a}}(\proves{\sign{M}{a}}{\knows{a}{M}}{b}{\community\cup\set{a}})$\hfill c, d, PL.
			\end{enumerate} 
		\item
			\begin{enumerate}
				\item$\LiPded\proves{\sign{M}{a}}{\knows{a}{M}}{a}{\emptyset\cup\set{a}}$\hfill authentic knowledge
				\item$\LiPded(\proves{\sign{M}{a}}{\knows{a}{M}}{a}{\emptyset\cup\set{a}})\limp
						\proves{\sign{M}{a}}{\knows{a}{M}}{a}{\emptyset}$\hfill group decomposition
				\item$\LiPded\proves{\sign{M}{a}}{\knows{a}{M}}{a}{\emptyset}$\hfill a, b, PL
				\item$\LiPded(\proves{\sign{M}{a}}{\knows{a}{M}}{a}{\emptyset})\limp
							\proves{M}{\knows{a}{M}}{a}{\emptyset}$\hfill self-signing elimination
				\item$\LiPded\proves{M}{\knows{a}{M}}{a}{\emptyset}$\hfill c, d, PL.
			\end{enumerate}	
		\item 
			\begin{enumerate}
				\item \quad $\LiPded\knows{a}{M}\limp\phi$\hfill hypothesis 
				\item \quad $\LiPded(\proves{\sign{M}{a}}{\knows{a}{M}}{b}{\community\cup\set{a}})\limp
										\proves{\sign{M}{a}}{\phi}{b}{\community\cup\set{a}}$\hfill a, R
				\item \quad $\LiPded\bigwedge_{b\in\community\cup\set{a}}
								(\proves{\sign{M}{a}}{\knows{a}{M}}{b}{\community\cup\set{a}})$\hfill 
									authentic knowledge
				\item \quad $\LiPded\bigwedge_{b\in\community\cup\set{a}}
								(\proves{\sign{M}{a}}{\phi}{b}{\community\cup\set{a}})$\hfill b, c, PL
				\smallskip
				\item \quad $\LiPded\bigwedge_{b\in\community\cup\set{a}}
								(\proves{\sign{M}{a}}{\phi}{b}{\community\cup\set{a}})$\hfill hypothesis
				\item \quad $\LiPded\proves{\sign{M}{a}}{\phi}{a}{\community\cup\set{a}}$\hfill e, PL
				\item \quad $\LiPded\proves{\sign{M}{a}}{\phi}{a}{\community\cup\set{a}}\limp(\knows{a}{\sign{M}{a}}\limp\phi)$\hfill epistemic truthfulness
				\item \quad $\LiPded\knows{a}{\sign{M}{a}}\limp\phi$\hfill f, g, PL
				\item \quad $\LiPded\knows{a}{M}\limp\knows{a}{\sign{M}{a}}$\hfill signature synthesis
				\item \quad $\LiPded\knows{a}{M}\limp\phi$\hfill h, i, PL
				\smallskip
				\item $\LiPded\knows{a}{M}\limp\phi$ iff 
							$\LiPded\bigwedge_{b\in\community\cup\set{a}}
								(\proves{\sign{M}{a}}{\phi}{b}{\community\cup\set{a}})$\hfill a--d, e--j, PL
				\item $\knows{a}{M}\limp\phi\LiPdedBis
							\bigwedge_{b\in\community\cup\set{a}}
								(\proves{\sign{M}{a}}{\phi}{b}{\community\cup\set{a}})$\hfill k, definition.
			\end{enumerate}
		\item By instantiating $\phi$ with $\knows{a}{M'}$ in AEN.
		\item Like AEN but by invoking self-knowledge instead of authentic knowledge.
		\item By instantiating $\phi$ with $\knows{a}{M'}$ in SEN.
		\item 
			\begin{enumerate}
				\item $\LiPded\bigwedge_{a\in\community\cup\set{b}}
							(\proves{\sign{M}{b}}{\knows{b}{M}}{a}{\community\cup\set{b}})$\hfill authentic knowledge
				\item $\LiPded\knows{a}{\sign{M}{b}}\limp\knows{b}{M}$\hfill a, AEN \emph{bis}.
			\end{enumerate}
		\item 
			\begin{enumerate}
				\item $\LiPded(\proves{M}{\phi}{a}{\community})\limp\bigwedge_{b\in\community\cup\set{a}}(\proves{\sign{M}{a}}{(\knows{a}{M}\land\proves{M}{\phi}{a}{\community})}{b}{\community\cup\set{a}})$\hfill peer review
				\item$\LiPded(\proves{M}{\phi}{a}{\community})\limp(\knows{a}{M}\limp\phi)$\hfill epistemic truthfulness 
				\item$\LiPded(\knows{a}{M}\land\proves{M}{\phi}{a}{\community})\limp\phi$\hfill b, PL
				\item$\LiPded(\proves{\sign{M}{a}}{(\knows{a}{M}\land\proves{M}{\phi}{a}{\community})}{b}{\community\cup\set{a}})\limp\proves{\sign{M}{a}}{\phi}{b}{\community\cup\set{a}}$\hfill c, R
				\item $\LiPded(\proves{M}{\phi}{a}{\community})\limp\bigwedge_{b\in\community\cup\set{a}}(\proves{\sign{M}{a}}{\phi}{b}{\community\cup\set{a}})$\hfill a, d, PL.
			\end{enumerate}
		\item 
			\begin{enumerate}
				\item $\LiPded(\proves{M}{\phi}{a}{\community\cup\community'})\limp(\proves{M}{\phi}{a}{\community})$\hfill
						group decomposition
				\item $\LiPded(\proves{M}{\phi}{a}{\community\cup\community'})\lequiv
								(\proves{M}{\phi}{a}{\community'\cup\community})$\hfill $\community\cup\community'=\community'\cup\community$
				\item $\LiPded(\proves{M}{\phi}{a}{\community'\cup\community})\limp(\proves{M}{\phi}{a}{\community'})$\hfill
						group decomposition
				\item $\LiPded(\proves{M}{\phi}{a}{\community\cup\community'})\limp(\proves{M}{\phi}{a}{\community'})$\hfill
						b, c, PL
				\item $\LiPded(\proves{M}{\phi}{a}{\community\cup\community'})\limp
						((\proves{M}{\phi}{a}{\community})\land\proves{M}{\phi}{a}{\community'})$\hfill a, d, PL.
			\end{enumerate}
		\item 
			\begin{enumerate}
				\item $\LiPded(\proves{M}{\phi}{a}{\community\cup\set{a}})\limp(\proves{M}{\phi}{a}{\community})$\hfill group decomposition
				\item $\LiPded(\proves{M}{\phi}{a}{\community})\limp\bigwedge_{b\in\community\cup\set{a}}(\proves{\sign{M}{a}}{\phi}{b}{\community\cup\set{a}})$\hfill simple peer review
				\item $\LiPded(\proves{M}{\phi}{a}{\community})\limp\proves{\sign{M}{a}}{\phi}{a}{\community\cup\set{a}}$\hfill $a\in\community\cup\set{a}$, b, PL
				\item $\LiPded(\proves{\sign{M}{a}}{\phi}{a}{\community\cup\set{a}})\lequiv\proves{M}{\phi}{a}{\community\cup\set{a}}$\hfill self-signing idempotency
				\item $\LiPded(\proves{M}{\phi}{a}{\community})\limp\proves{M}{\phi}{a}{\community\cup\set{a}}$\hfill c, d, PL
				\item $\LiPded(\proves{M}{\phi}{a}{\community\cup\set{a}})\lequiv\proves{M}{\phi}{a}{\community}$\hfill a, e, PL.
			\end{enumerate}
		\item 
			\begin{enumerate}
				\item $\LiPded\knows{a}{M}\limp((\proves{M}{\phi}{a}{\community})\limp\phi)$\hfill epistemic truthfulness, PL
				\item $\LiPded(\proves{\sign{M}{a}}{\knows{a}{M}}{a}{\community})\limp\proves{\sign{M}{a}}{((\proves{M}{\phi}{a}{\community})\limp\phi)}{a}{\community}$\hfill a, R
				\item $\LiPded\proves{\sign{M}{a}}{\knows{a}{M}}{a}{\community\cup\set{a}}$\hfill authentic knowledge
				\item $\LiPded(\proves{\sign{M}{a}}{\knows{a}{M}}{a}{\community\cup\set{a}})\lequiv
								\proves{\sign{M}{a}}{\knows{a}{M}}{a}{\community}$\hfill self-neutral group element
				\item $\LiPded\proves{\sign{M}{a}}{\knows{a}{M}}{a}{\community}$\hfill c, d, PL
				\item $\LiPded\proves{\sign{M}{a}}{((\proves{M}{\phi}{a}{\community})\limp\phi)}{a}{\community}$\hfill b, e, PL
				\item $\LiPded(\proves{\sign{M}{a}}{((\proves{M}{\phi}{a}{\community})\limp\phi)}{a}{\community})\lequiv\proves{M}{((\proves{M}{\phi}{a}{\community})\limp\phi)}{a}{\community}$\hfill self-signing idempotency 
				\item $\LiPded\proves{M}{((\proves{M}{\phi}{a}{\community})\limp\phi)}{a}{\community}$\hfill f, g, PL.
			\end{enumerate}
		\item By the necessitation of the law that nothing can prove falsehood.
		\item
			\begin{enumerate}
				\item $\LiPded(\proves{M}{\phi}{a}{\community})\limp\bigwedge_{b\in\community\cup\set{a}}(\proves{\sign{M}{a}}{(\knows{a}{M}\land\proves{M}{\phi}{a}{\community})}{b}{\community\cup\set{a}})$\hfill peer review
				\item\quad$b\in\community\cup\set{a}$\hfill hypothesis
				\item\quad$\LiPded(\proves{M}{\phi}{a}{\community})\limp\bigwedge_{b\in\community\cup\set{a}}(\proves{\sign{M}{a}}{\phi}{b}{\community\cup\set{a}})$\hfill simple peer review
				\item\quad$\LiPded(\knows{a}{M}\land\proves{M}{\phi}{a}{\community})\limp\bigwedge_{b\in\community\cup\set{a}}(\proves{\sign{M}{a}}{\phi}{b}{\community\cup\set{a}})$\hfill c, PL
				\item\quad$\LiPded(\begin{array}[t]{@{}l@{}}
							\proves{\sign{M}{a}}{(\knows{a}{M}\land\proves{M}{\phi}{a}{\community})}{b}{\community\cup\set{a}})\limp\\
							\proves{\sign{M}{a}}{(\bigwedge_{b\in\community\cup\set{a}}(\proves{\sign{M}{a}}{\phi}{b}{\community\cup\set{a}}))}{b}{\community\cup\set{a}}
							\end{array}$\hfill d, R
				\item $\LiPded\bigwedge_{b\in\community\cup\set{a}}(\begin{array}[t]{@{}l@{}}
							(\proves{\sign{M}{a}}{(\knows{a}{M}\land\proves{M}{\phi}{a}{\community})}{b}{\community\cup\set{a}})\limp\\
							\proves{\sign{M}{a}}{(\bigwedge_{b\in\community\cup\set{a}}(\proves{\sign{M}{a}}{\phi}{b}{\community\cup\set{a}}))}{b}{\community\cup\set{a}})
							\end{array}$\hfill b--e, PL
				\item $\LiPded(\begin{array}[t]{@{}l@{}}
								\proves{M}{\phi}{a}{\community})\limp\\
								\bigwedge_{b\in\community\cup\set{a}}(\proves{\sign{M}{a}}{(\bigwedge_{b\in\community\cup\set{a}}(\proves{\sign{M}{a}}{\phi}{b}{\community\cup\set{a}}))}{b}{\community\cup\set{a}})
							\end{array}$\hfill a, f, PL
				\item $\LiPded(\begin{array}[t]{@{}l@{}}
								\proves{M}{\phi}{a}{\community})\limp\\
								\proves{\sign{M}{a}}{(\bigwedge_{b\in\community\cup\set{a}}(\proves{\sign{M}{a}}{\phi}{b}{\community\cup\set{a}}))}{a}{\community\cup\set{a}}
								\end{array}$\hfill $a\in\community\cup\set{a}$, g, PL
				\item $\LiPded\begin{array}[t]{@{}l@{}}
						\proves{\sign{M}{a}}{(\bigwedge_{b\in\community\cup\set{a}}(\proves{\sign{M}{a}}{\phi}{b}{\community\cup\set{a}}))}{a}{\community\cup\set{a}}\lequiv\\
						\proves{\sign{M}{a}}{(\bigwedge_{b\in\community\cup\set{a}}(\proves{\sign{M}{a}}{\phi}{b}{\community\cup\set{a}}))}{a}{\community}
						\end{array}$\hfill self-neutral group element
				\item $\LiPded(\proves{M}{\phi}{a}{\community})\limp\proves{\sign{M}{a}}{(\bigwedge_{b\in\community\cup\set{a}}(\proves{\sign{M}{a}}{\phi}{b}{\community\cup\set{a}}))}{a}{\community}$\hfill h, i, PL
				\item $\LiPded(\begin{array}[t]{@{}l@{}}\proves{\sign{M}{a}}{(\bigwedge_{b\in\community\cup\set{a}}(\proves{\sign{M}{a}}{\phi}{b}{\community\cup\set{a}}))}{a}{\community})\lequiv\\
							\proves{M}{(\bigwedge_{b\in\community\cup\set{a}}(\proves{\sign{M}{a}}{\phi}{b}{\community\cup\set{a}}))}{a}{\community}\end{array}$\hfill self-signing idempotency
				\item $\LiPded(\proves{M}{\phi}{a}{\community})\limp\proves{M}{(\bigwedge_{b\in\community\cup\set{a}}(\proves{\sign{M}{a}}{\phi}{b}{\community\cup\set{a}}))}{a}{\community}$\hfill j, k, PL.
			\end{enumerate}
		\item \begin{enumerate}
				\item $\LiPded(\proves{M}{\phi}{a}{\community})\limp\proves{M}{(\proves{\sign{M}{a}}{\phi}{a}{\community\cup\set{a}})}{a}{\community}$\hfill simple peer review \emph{bis}
				\item $\LiPded(\proves{\sign{M}{a}}{\phi}{a}{\community\cup\set{a}})\lequiv\proves{\sign{M}{a}}{\phi}{a}{\community}$\hfill self-neutral group element
				\item $\LiPded(\proves{M}{(\proves{\sign{M}{a}}{\phi}{a}{\community\cup\set{a}})}{a}{\community})\lequiv\proves{M}{(\proves{\sign{M}{a}}{\phi}{a}{\community})}{a}{\community}$\hfill b, R \emph{bis}
				\item $\LiPded(\proves{M}{\phi}{a}{\community})\limp\proves{M}{(\proves{\sign{M}{a}}{\phi}{a}{\community})}{a}{\community}$\hfill a, c, PL
				\item $\LiPded(\proves{\sign{M}{a}}{\phi}{a}{\community})\lequiv\proves{M}{\phi}{a}{\community}$\hfill self-signing idempotency
				\item $\LiPded(\proves{M}{(\proves{\sign{M}{a}}{\phi}{a}{\community})}{a}{\community})\lequiv\proves{M}{(\proves{M}{\phi}{a}{\community})}{a}{\community}$\hfill e, R \emph{bis}
				\item $\LiPded(\proves{M}{\phi}{a}{\community})\limp\proves{M}{(\proves{M}{\phi}{a}{\community})}{a}{\community}$\hfill d, f, PL
				\item $\LiPded\proves{M}{((\proves{M}{\phi}{a}{\community})\limp\phi)}{a}{\community}$\hfill 
						self-proof of truthfulness
				\item $\LiPded(\proves{M}{((\proves{M}{\phi}{a}{\community})\limp\phi)}{a}{\community})\limp((\proves{M}{(\proves{M}{\phi}{a}{\community})}{a}{\community})\limp
								\proves{M}{\phi}{a}{\community})$\hfill K 
				\item $\LiPded(\proves{M}{(\proves{M}{\phi}{a}{\community})}{a}{\community})\limp
								\proves{M}{\phi}{a}{\community}$\hfill h, i, PL
				\item $\LiPded(\proves{M}{(\proves{M}{\phi}{a}{\community})}{a}{\community})\lequiv
				\proves{M}{\phi}{a}{\community}$\hfill g, j, PL.
			\end{enumerate}
		\item Like the proofs of self-truthfulness and self-truthfulness \emph{bis,} 
				but by invoking the law of total knowledge.
	\end{enumerate}

\subsection{Proof of Theorem~\ref{theorem:ChurchRosser}}
	By induction on the structure of $T:$
	\begin{itemize}
		\item \emph{Base cases (\,$T\in\set{x}_{x\in\mathcal{X}}\cup\set{\Kcomb{a},\Scomb{a}}_{a\in\agents}$\,):}
				let $T_{1},T_{2}\in\mathcal{T}$ and $\alpha_{1},\alpha_{2}\in\agents^{*}$, and
				suppose that $T\reduc{\alpha_{1}}T_{1}$ and $T\reduc{\alpha_{2}}T_{2}$.
				Hence, 
					$\alpha_{1}=\alpha_{2}=\epsilon$ and
					$T_{1}=T_{2}=T$, because
						term variables $x$ and basic combinators $\Kcomb{a}$ and $\Scomb{a}$ 
					reduce only through the empty word $\epsilon$ --- to themselves only.
				Thus $T_{1}\reduc{\epsilon}T$ and $T_{2}\reduc{\epsilon}T$.
				Hence,
					there are $T'\in\mathcal{T}$ and $\alpha_{1}',\alpha_{2}'\in\agents^{*}$ such that
						$T_{1}\reduc{\alpha_{1}'}T'$ and 
						$T_{2}\reduc{\alpha_{2}'}T'$.
				
		\item \emph{Inductive step (\,$T=\pair{U}{V}$ for $U,V\in\mathcal{T}$):}
				let us make the induction hypotheses about $U$ and $V$ that 
					\begin{enumerate}
						\item for all $U_{1},U_{2}\in\mathcal{T}$ and $\beta_{1},\beta_{2}\in\agents^{*}$,
								if $U\reduc{\beta_{1}}U_{1}$ and $U\reduc{\beta_{2}}U_{2}$ 
								then there are 
									$U'\in\mathcal{T}$ and 
									$\beta_{1}',\beta_{2}'\in\agents^{*}$ such that
								$U_{1}\reduc{\beta_{1}'}U'$ and 
								$U_{2}\reduc{\beta_{2}'}U'$ 
						\item for all $V_{1},V_{2}\in\mathcal{T}$ and $\gamma_{1},\gamma_{2}\in\agents^{*}$,
								if $V\reduc{\gamma_{1}}V_{1}$ and $V\reduc{\gamma_{2}}V_{2}$ 
								then there are 
									$V'\in\mathcal{T}$ and 
									$\gamma_{1}',\gamma_{2}'\in\agents^{*}$ such that
								$V_{1}\reduc{\gamma_{1}'}V'$ and 
								$V_{2}\reduc{\gamma_{2}'}V'$.
					\end{enumerate}
				Now, 
					let $T_{1},T_{2}\in\mathcal{T}$ and $\alpha_{1},\alpha_{2}\in\agents^{*}$ and
					suppose that 
						$\pair{U}{V}\reduc{\alpha_{1}}T_{1}$ and 
						$\pair{U}{V}\reduc{\alpha_{2}}T_{2}$.
				Let us proceed by disjunctive case analysis and 
					\begin{enumerate}
						\item suppose 
							that $V$ is not consumed by $U$ in $\pair{U}{V}\reduc{\alpha_{1}}T_{1}$ and 
							that $V$ is not consumed by $U$ in $\pair{U}{V}\reduc{\alpha_{2}}T_{2}$.
							That is, $U$ and $V$ reduce independently in both cases.
							Hence,
								there are 
									$U_{1},V_{1},U_{2},V_{2}\in\mathcal{T}$ such that 
										$T_{1}=\pair{U_{1}}{V_{1}}$ and 
										$T_{2}=\pair{U_{2}}{V_{2}}$ as well as
									$\beta_{1},\gamma_{1},\beta_{2},\gamma_{2}\in\agents^{*}$ such that 
										$U\reduc{\beta_{1}}U_{1}$ and $V\reduc{\gamma_{1}}V_{1}$ and
										$U\reduc{\beta_{2}}U_{2}$ and $V\reduc{\gamma_{2}}V_{2}$.
						\item suppose 
							that $V$ is not consumed by $U$ in $\pair{U}{V}\reduc{\alpha_{1}}T_{1}$ but  
							that $V$ is consumed by $U$ in $\pair{U}{V}\reduc{\alpha_{2}}T_{2}$.
							That is, $U$ and $V$ reduce independently in the first but not in the second case.
							Hence on the one hand, there are 
									$U_{1},V_{1}\in\mathcal{T}$ such that 
										$T_{1}=\pair{U_{1}}{V_{1}}$ as well as
									$\beta_{1},\gamma_{1}\in\agents^{*}$ such that 
										$U\reduc{\beta_{1}}U_{1}$ and $V\reduc{\gamma_{1}}V_{1}$.
							On the other hand, $V$ can be consumed by $U$ 
								either as part of a term $\pair{\pair{\Kcomb{a}}{W}}{V}$
								or as part of a term $\pair{\pair{\pair{\Scomb{a}}{W}}{X}}{V}$.
							In both cases,
								there are 
									$U_{2}\in\mathcal{T}$ ($\pair{\Kcomb{a}}{W}$ and $\pair{\pair{\Scomb{a}}{W}}{X}$, respectively) and $\beta_{2}\in\agents^{*}$ such that 
										$U\reduc{\beta_{2}}U_{2}$ as well as
									$V_{2}=V\in\mathcal{T}$ and $\gamma_{2}=\epsilon\in\agents^{*}$ such that 
										$V\reduc{\gamma_{2}}V_{2}$.
						\item suppose 
									that $V$ is consumed by $U$ in $\pair{U}{V}\reduc{\alpha_{1}}T_{1}$ but 
									that $V$ is not consumed by $U$ in $\pair{U}{V}\reduc{\alpha_{2}}T_{2}$, and
							  proceed symmetrically to Case~2.
						\item suppose 
							that $V$ is consumed by $U$ in $\pair{U}{V}\reduc{\alpha_{1}}T_{1}$ and 
							that $V$ is consumed by $U$ in $\pair{U}{V}\reduc{\alpha_{2}}T_{2}$.
							On both hands,
								$V$ can be consumed by $U$ 
								either as part of a term $\pair{\pair{\Kcomb{a}}{W}}{V}$
								or as part of a term $\pair{\pair{\pair{\Scomb{a}}{W}}{X}}{V}$.
							In both cases of the first hand,
								there are 
									$U_{1}\in\mathcal{T}$ ($\pair{\Kcomb{a}}{W}$ and $\pair{\pair{\Scomb{a}}{W}}{X}$, respectively) and $\beta_{1}\in\agents^{*}$ such that 
										$U\reduc{\beta_{1}}U_{1}$ as well as
									$V_{1}=V\in\mathcal{T}$ and $\gamma_{1}=\epsilon\in\agents^{*}$ such that 
										$V\reduc{\gamma_{1}}V_{1}$.
							Similarly in both cases of the second hand,
								there are 
									$U_{2}\in\mathcal{T}$ ($\pair{\Kcomb{a}}{W}$ and $\pair{\pair{\Scomb{a}}{W}}{X}$, respectively) and $\beta_{2}\in\agents^{*}$ such that 
										$U\reduc{\beta_{2}}U_{2}$ as well as
									$V_{2}=V\in\mathcal{T}$ and $\gamma_{2}=\epsilon\in\agents^{*}$ such that 
										$V\reduc{\gamma_{2}}V_{2}$.
					\end{enumerate}
					Hence in all four cases,
						there are 
							$U'\in\mathcal{T}$ and 
							$\beta_{1}',\beta_{2}'\in\agents^{*}$ such that
								$U_{1}\reduc{\beta_{1}'}U'$ and 
								$U_{2}\reduc{\beta_{2}'}U'$, and
						there are 
							$V'\in\mathcal{T}$ and 
							$\gamma_{1}',\gamma_{2}'\in\agents^{*}$ such that
								$V_{1}\reduc{\gamma_{1}'}V'$ and 
								$V_{2}\reduc{\gamma_{2}'}V'$, by the induction hypothesis.
					Hence,
						$\pair{U'}{V'}\in\mathcal{T}$ and
						there are $\alpha_{1}',\alpha_{2}'\in\agents^{*}$ such that
							$\pair{U_{1}}{V_{1}}\reduc{\alpha_{1}'}\pair{U'}{V'}$ and 
							$\pair{U_{2}}{V_{2}}\reduc{\alpha_{2}'}\pair{U'}{V'}$.
					Thus,
						there are 
							$T'\in\mathcal{T}$ and 
							$\alpha_{1}',\alpha_{2}'\in\agents^{*}$ such that
								$T_{1}\reduc{\alpha_{1}'}T'$ and 
								$T_{2}\reduc{\alpha_{2}'}T'$,
									as required.
				(Note that in any case, $U$ cannot be consumed by $V$, by the definition of reduction.)
	\end{itemize}

\subsection{Proof of Theorem~\ref{theorem:TerminationSubjectReduction}}
	By induction on the (staged) definition of iCL-reduction $\reduc{\alpha}$\;:
	\begin{description}
		\item[(Stage $\alpha\in\agents$)] 
			thus there is $a\in\agents$ such that $a=\alpha$\,;
				thus $\underline{\alpha\in\agents^{*}}$. Further, 
			\begin{description}
				\item[(Base case $\Kcomb{a}$)] suppose that 
					$\pair{\pair{\Kcomb{a}}{T}}{T'}\reduc{a}T$ and 
					$\Gamma\vdash_{\mathrm{TiCL}}\linebreak\pair{\pair{\Kcomb{a}}{T}}{T'}:\varphi$.
					Further suppose that $\Gamma\vdash_{\mathrm{TiCL}}T':\varphi'$ for 
						some $\varphi'$ (without any restriction).
					Hence,
						$\Gamma\vdash_{\mathrm{TiCL}}\Kcomb{a}:\varphi\limp(\varphi'\limp\varphi)$ and
							$\underline{\Gamma\vdash_{\mathrm{TiCL}}T:\varphi}$, by TiCL-typing.
				\item[(Base case $\Scomb{a}$)] suppose that 
					$\pair{\pair{\pair{\Scomb{a}}{T}}{T'}}{T''}\reduc{a}\pair{\pair{T}{T''}}{\pair{T'}{T''}}$ 
					and 
					$\Gamma\vdash_{\mathrm{TiCL}}\pair{\pair{\pair{\Scomb{a}}{T}}{T'}}{T''}:\varphi$.
					Further suppose that 
						$\Gamma\vdash_{\mathrm{TiCL}}T'':\varphi''$ for 
							some $\varphi''$ (without any restriction).
					Hence:
						\begin{itemize}
							\item $\Gamma\vdash_{\mathrm{TiCL}}\mathtt{S}_{a}:((\varphi''\limp(\varphi'\limp\varphi))\limp((\varphi''\limp\varphi')\limp(\varphi''\limp\varphi)))$,
							\item $\Gamma\vdash_{\mathrm{TiCL}}T':\varphi''\limp\varphi'$, 
							\item $\Gamma\vdash_{\mathrm{TiCL}}T:\varphi''\limp(\varphi'\limp\varphi)$, and
							\item $\underline{\Gamma\vdash_{\mathrm{TiCL}}\pair{\pair{T}{T''}}{\pair{T'}{T''}}:\varphi}$, 
						\end{itemize} 
						by TiCL-typing.
				\item[(Inductive step)] suppose that for all $\varphi$, 
					if $T\reduc{a}T'$ and $\Gamma\vdash_{\mathrm{TiCL}}T:\varphi$ 
					then $\Gamma\vdash_{\mathrm{TiCL}}T':\varphi$ (induction hypothesis).
					Further suppose that 
						\begin{itemize}
							\item $\pair{T}{T''}\reduc{a}\pair{T'}{T''}$ and 
									$\Gamma\vdash_{\mathrm{TiCL}}\pair{T}{T''}:\varphi'$.
									Hence, 
										first,
											$T\reduc{a}T'$ by the definition of iCL-reduction, and,
										second, 
											$\Gamma\vdash_{\mathrm{TiCL}}T:\varphi''\limp\varphi'$ and
											$\underline{\Gamma\vdash_{\mathrm{TiCL}}T'':\varphi''}$ for 
												some $\varphi''$ (without any restriction), by
													TiCL-typing.
									Hence $\underline{\Gamma\vdash_{\mathrm{TiCL}}T':\varphi''\limp\varphi'}$, 
										by the induction hypothesis.
									Hence $\Gamma\vdash_{\mathrm{TiCL}}\pair{T'}{T''}:\varphi'$,
										by TiCL-typing.
							\item $\pair{T''}{T}\reduc{a}\pair{T''}{T'}$ and 
									$\Gamma\vdash_{\mathrm{TiCL}}\pair{T''}{T}:\varphi'$.
									Hence, 
										first,
											$T\reduc{a}T'$ by the definition of iCL-reduction, and,
										second,\linebreak 
											$\underline{\Gamma\vdash_{\mathrm{TiCL}}T'':\varphi''\limp\varphi'}$ and
											$\Gamma\vdash_{\mathrm{TiCL}}T:\varphi''$ for 
												some $\varphi''$ (without any restriction), by
													TiCL-typing.
									Hence $\underline{\Gamma\vdash_{\mathrm{TiCL}}T':\varphi''}$, 
										by the induction hypothesis.
									Hence $\Gamma\vdash_{\mathrm{TiCL}}\pair{T''}{T'}:\varphi'$,
										by TiCL-typing.
									
						\end{itemize}
			\end{description}
		\item[(Stage $\underline{\alpha\in\agents^{*}}$)]\ 
			\begin{description}
				\item[Base case ($\alpha=\epsilon$)] Suppose that 
					$T\reduc{\epsilon}T'$ and $\Gamma\vdash_{\mathrm{TiCL}}T:\varphi$.
					Hence $T=T'$, and thus $\underline{\Gamma\vdash_{\mathrm{TiCL}}T':\varphi}$.
				\item[Inductive step] Suppose that 
					for all $T,T'\in\mathcal{T}$, 
						if $T\reduc{\alpha}T'$ and $\Gamma\vdash_{\mathrm{TiCL}}T:\varphi$ then 
						$\Gamma\vdash_{\mathrm{TiCL}}T':\varphi$ (induction hypothesis).
					Further suppose that $T\reduc{a\cdot\alpha}T'$ and $\Gamma\vdash_{\mathrm{TiCL}}T:\varphi$.
					Hence there is $T''\in\mathcal{T}$ such that 
						$T\reduc{a}T''$ and 
						$T''\reduc{\alpha}T'$.
					Hence $\Gamma\vdash_{\mathrm{TiCL}}T'':\varphi$, by the previous stage.
					Hence $\underline{\Gamma\vdash_{\mathrm{TiCL}}T':\varphi}$, by the induction hypothesis.
			\end{description}
		\item[(Stage $\alpha\in\agents^{\omega}$)] 
			For proving our statement about infinite words $\alpha\in\agents^{\omega}$, say $P(\alpha)$, 
				it suffices to prove $P$ for 
					all finite approximations $\alpha_{n}$---the 
						(observable) finite prefixes of length $n\in\mathbb{N}$---of $\alpha$.
			Of course, 
				finite prefixes are finite words, and
				in the previous stage we have just proved that 
					$P$ is true for all finite words.
			Hence, $P$ is true for all infinite words $\alpha$ too.
	\end{description}

\section{Singleton society}\label{appendix:SingletonSociety}
For $\agents=\set{a}$ and $\messages\setminus\set{\Kcomb{a},\Scomb{a}}$, 
	bear in mind 
		that $\messages$ is a function of $\agents$, and
		that LiP has been designed for truly interactive cases, that is, cases where $|\agents|>1$ and 
			\emph{not} for non-interactive or degenerately interactive cases, that is, cases where $|\agents|=1$.	
So \textbf{\emph{when $\agents=\set{a}$ and $\messages\setminus\set{\Kcomb{a},\Scomb{a}}$,
	$\messages$ is actually strictly smaller than when $\agents\supsetneq\set{a}!$}}
In particular when $|\agents|>1$,
	obviously neither total knowledge nor epistemic indifference holds,
	nor does proof indifference hold.
For the fortunate failure of proof indifference when $|\agents|>1$, 
	consider the following doubly minimal counter-example.
Without loss of generality, 
	let $\agents\defeq\set{a,b}$ such that $a\neq b$.
Then $\LiPded\proves{b}{\knows{a}{b}}{a}{\emptyset}$ 
	(instance of self-knowledge, \cf Theorem~\ref{theorem:SomeUsefulDeducibleLogicalLaws}), but
		$\not\LiPded\proves{a}{\knows{a}{b}}{a}{\emptyset}$ intuitively, and
			also formally.
Just imagine a state in which $a$ does not know $b$'s name string, \cf 
	Definition~\ref{definition:SemanticIngredients} and \ref{definition:KripkeModel};
		then giving her her own name string, which 
			she already knows anyway, 
				will not make her know $b$'s; and
					then apply the contraposition of axiomatic soundness, \cf
						Theorem~\ref{theorem:Adequacy}.
The counter-example is doubly minimal in the sense that both 
	the involved proof terms ($a$ and $b$ are atomic terms) as well as  
	the involved proof goal ($\knows{a}{b}$ is an atomic proposition about atomic terms) are minimal. 
Note that 
	we could of course conceive LiP 
		without the $\knows{a}{a}$-axiom 
			for some or even all $a\in\agents$ and arbitrary $\agents$.
In particular when $\agents=\set{a}$,
	excluding $\knows{a}{a}$ from $\Gamma_{1}$ definitely makes sense,
		since agent names really make sense only for non-empty non-singleton societies.
In such a system, say $\LiP^{-}$,
	obviously none of the singleton-society laws of LiP would hold for $a$, and
	thus also non-interactive, singleton-society examples 
		(\eg Kripke's Red Barn Example in \cite{JustificationLogic}) could be faithfully formalised.
The price to pay for $\LiP^{-}$ would be, 
	first, the failure of the following laws:
		self-neutral pair element,
		self-neutral proof element, and, \cf Theorem~\ref{theorem:SomeUsefulDeducibleLogicalLaws}, 
		self-truthfulness, and
		the left implication of self-truthfulness \emph{bis}; and thus, 
	second, the impoverishment of the proof-term structure 
		from an idempotent commutative monoid (\cf Corollary~\ref{corollary:ProofEquality})
		to an idempotent commutative semigroup (loss of the neutral element).
(The failure of these laws does \emph{not} imply that their negation succeeds.)
However, how this price is appreciated    
	eventually depends on the considered application.
For example, 
	the failure of self-truthfulness  
		could even be considered desirable:
if we were to exclude $\knows{a}{a}$ from $\Gamma_{1}$, 
		we would actually exclude $(\proves{M}{\phi}{a}{\community})\limp\phi$ from being a theorem in 
			the resulting logical system $\LiP^{-}$ for \emph{all} $M\in\messages$, like in
					the G\"odel-L\"ob Logic of (non-interactive) Provability GL 
						\cite{LogicProvability,ProvabilityLogic}.

\section{Atomic propositions}\label{appendix:AtomicPropositions}
We partially 
	(pre-)instantiate the set of atomic propositions of our logic with our chosen propositional constants (primitives), and (as usual)  
	give the possibility of (further, post-)instantiating this set by means of propositional variables to the users of our logic.
Such a pre-instantiation (at definition time of the logic) 
	is not unusual (many authors choose at least `true' as a predefined primitive), and 
	is of course compatible with substitution invariance (Proposition~\ref{proposition:Hilbert}):
in particular, it does not matter 
	(1) when, at definition time (pre) or at use time (post),
	(2) how many (none, one, or more), nor 
	(3) by what naming mechanism (for example by means of our term constructors, 
			\emph{nota bene} even without term variables)
propositional variables of a logic are instantiated and thus fixed (made constant).
Otherwise, 
	(standard) propositional substitution would be a flawed abstraction mechanism.
What matters is that propositional constants and variables are always treated as such, that is, 
	not substitutable and substitutable, respectively, which 
		is of course what we do and our users must do.
By definition,
	the `substitution' in `substitution invariance' of course only applies to 
		what is propositionally substitutable (and properly changing this base 
			does of course not change the invariance property).
Violating this standard precept by mistreating propositional constants as (substitutable) variables 
	can be misused to wrongly prove the inconsistency of logical theories. 
For such a mistreatment example, 
	simply substitute 
		the propositional constant `false' for 
		the (whole) propositional constant and theorem `$\emptyset\in\{\emptyset\}$' in Set Theory, where 
			$\emptyset$ designates the empty set and 
			the curly brackets are set term constructors.
Similarly but even more simply and wrongly,
	substitute `false' for `true' in any logical theory.

The present, still one-sentence formulation of the fourth bullet in Definition~\ref{definition:LiPLanguage}
	replaces an earlier one that suffers from an accidental artefact that can create confusion.
The reader should mentally replace the corresponding formulation in 
	\cite{KramerICLA2013}, \cite{KramerIMLA2013}, and \cite{LIiP:ACMTOCL}.
We stress that 
	none of these papers nor the present one relies on 
		this unintended artefact, which thus is clearly accidental.
Furthermore, 
	we highlight for 
		all four papers (as already done for \cite{LIiP:ACMTOCL} therein)
			that they describe independent (though of course related) constructions in the sense of 
				not relying on each other for their technical existence.

\end{document}